\documentclass{LMCS}

\def\dOi{10(2:11)2014}
\lmcsheading%
{\dOi}
{1--50}
{}
{}
{Nov.~19, 2012}
{Jun.~18, 2014}
{}

\ACMCCS{[{\bf  Security and privacy}]:  Formal methods and theory of
  security---Formal security models}

\keywords{formal methods, security protocols, verification}

\usepackage[latin1]{inputenc}

\usepackage{amssymb}
\usepackage{proof}
\usepackage{url}
\usepackage{hyperref}

\newtheorem{corollary}{Corollary}

\newcommand{\size}{\mathit{size}}

%
%

\newcommand{\enc}{\mathsf{encs}}
\newcommand{\enca}{\mathsf{enca}}
\newcommand{\sign}{\mathsf{sign}}
\newcommand{\pub}{\mathsf{pub}}
\newcommand{\priv}{\mathsf{priv}}
\newcommand{\shk}{\mathsf{shk}}
\newcommand{\h}{\mathsf{h}}
\newcommand{\f}{\mathsf{f}}

\newcommand{\Nces}{\mathcal{N}}
\newcommand{\Xvars}{\mathcal{X}}
\newcommand{\Yvars}{\mathcal{Y}}

\newcommand{\vars}{\mathit{vars}}
\newcommand{\names}{\mathit{names}}

%
%

\newcommand{\true}{\mathsf{true}}
\newcommand{\false}{\mathsf{false}}
\newcommand{\learn}{\mathsf{learn}}

\newcommand{\Comp}{\mathsf{C}}
\newcommand{\Nocomp}{\mathsf{NC}}
\newcommand{\length}{\mathsf{length}}

%
%

\newcommand{\Pred}{{\mathcal P}}
\newcommand{\Prd}{\mathsf{P}}
\newcommand{\Secret}{\mathsf{Secret}}
\newcommand{\Start}{\mathsf{Start}}
\newcommand{\End}{\mathsf{End}}
\newcommand{\Sec}{\mathsf{S}}
\newcommand{\Aliv}{\mathsf{A}}
\newcommand{\Weak}{\mathsf{WA}}

\newcommand{\Q}{\mathsf{Q}}

\newcommand{\Tags}{\mathsf{Tags}}

\newcommand{\agent}{\mathit{agents}}

\newcommand{\lgAtom}{\mathit{lgKeys}}

\newcommand{\st}{\mathsf{St}}
\newcommand{\encst}{\mathsf{CryptSt}}

\newcommand{\dom}{\operatorname{dom}}

\newcommand{\mgu}{\operatorname{mgu}}
\newcommand{\pair}[2]{\langle{#1},{#2}\rangle}

\newcommand{\Roles}{\mathsf{Roles}}    

\def\init{\mathsf{init}}

\newcommand{\sce}{\mathsf{sc}}
\newcommand{\rcv}{\mathsf{rcv}}
\newcommand{\snd}{\mathsf{snd}}

\newcommand{\N}{\mathbb{N}}

\newcommand{\plaintext}{\mathit{plaintext}}

\newcommand{\Kcal}{\mathcal{K}}
\newcommand{\A}{\mathcal{A}}

\newcommand{\comp}{\mathsf{comp}}

\newcommand{\montag}{\mathsf{tag}}

\newcommand{\evt}{\mathsf{e}}
\newcommand{\seqevt}{\mathsf{seq}}
\newcommand{\tr}{\mathsf{tr}}
\newcommand{\exec}{\mathsf{exec}}
\newcommand{\K}{\mathsf{K}}

\newcommand{\W}{\mathsf{Ws}}
\newcommand{\Witness}{\mathsf{Ws}}

\newcommand{\sid}{\mathit{sid}}
\newcommand{\sess}{\mathit{sess}}

\newcommand{\Logic}{\mathcal{L}}

\newcommand{\tagt}{\mathsf{ExpectedTag}}
\newcommand{\tagst}{\mathsf{ExpectedTags}}
\newcommand{\tagh}{\mathsf{HeadTag}}
\newcommand{\sameTagas}{{\mathsf{sameTagAs}}}
\newcommand{\stal}{\mathsf{St_{alien}}}
\newcommand{\Annotate}{\mathsf{annotate}}

\title{Dynamic Tags for Security Protocols}

\author[M.~Arapinis]{Myrto Arapinis\rsuper a}
\address{{\lsuper a}School of Informatics, University of Edinburgh, UK}
\email{marapini@inf.ed.ac.uk}

\author[S.~Delaune]{Stéphanie Delaune\rsuper b}
\address{{\lsuper b}LSV, CNRS \& ENS Cachan, France}
\email{delaune@lsv.ens-cachan.fr}

\author[S.~Kremer]{Steve Kremer\rsuper c}
\address{{\lsuper c}Inria Nancy - Grand Est \& LORIA, France}
\email{Steve.Kremer@inria.fr}

\begin{document}
\maketitle

\begin{abstract}
  The design and verification of cryptographic protocols is a
  notoriously difficult task, even in symbolic models which take an
  abstract view of cryptography. This is mainly due to the fact that
  protocols may interact with an arbitrary attacker which yields a
  verification problem that has several sources of unboundedness (size
  of messages, number of sessions, {etc.}).

  In this paper, we characterize a class of protocols for which
  deciding security for an unbounded number of sessions is
  decidable. More precisely, we present a simple transformation which
  maps a protocol that is secure for a bounded number of protocol
  sessions (a decidable problem) to a protocol that is secure for an
  unbounded number of sessions. The precise number of sessions that
  need to be considered is a function of the security property and we
  show that for several classical security properties a single session
  is sufficient. {Therefore, in many cases our results yields a design
    strategy for security protocols:} (i) design a protocol intended
  to be secure for a {single session}; and (ii) apply our
  transformation to obtain a protocol which is secure for an unbounded
  number of sessions.

\end{abstract}

\section{Introduction}
\label{sec:intro}

Security protocols are distributed programs which aim at guaranteeing
properties such as confidentiality of data, authentication of
participants, etc. The security of these protocols relies on the one
hand on the security of cryptographic primitives, e.g. encryption and
digital signatures, and on the other hand on the concurrency-related
aspects of the protocols themselves. History has shown that even if
cryptography is supposed to be perfect, such as in the classical
Dolev-Yao model~\cite{dolev81security}, the correct design of security
protocols is notoriously error-prone.  See for
instance~\cite{ClarkJacob97} for an early survey on attacks.  These
difficulties come mainly from two sources of unboundedness: a protocol
may be executed several times (we need to consider several protocol
\emph{sessions}) and the attacker is allowed to build messages of
unbounded size.  Indeed, secrecy is known to be undecidable when an
unbounded number of sessions is allowed, even if the message size is
bounded~\cite{durgin99undecidability}. However, when the number of
sessions is bounded, and even without assuming a bounded message size,
the problem becomes co-NP-complete~\cite{RT03TCS}.  Moreover, special
purpose verification tools (e.g.~\cite{avispa2005}) exist which are
highly efficient when the number of sessions is small.

In this paper we propose a protocol transformation which maps a
protocol that is secure for a bounded number of sessions to a protocol
that is secure for an unbounded number of sessions. The exact number
of sessions that need to be considered depends on the security
property under study. We express security properties in a temporal
logic with past similar to the logics
of~\cite{TheseCorin06,CD-fmsd09}. This logic is expressive enough to
model security properties such as secrecy and {several
  flavors of non-injective authentication properties}. As we will see
for these classical security properties verifying a single session
will be sufficient {and our result provides a strategy to
  design secure protocols:} (i)
  design a protocol intended to be secure for a
    {single session}; and
  (ii) apply our transformation and obtain a protocol which is secure
    for an unbounded number of sessions.

\paragraph*{\bf Our transformation.} 
Suppose that~$\Pi$ is a protocol between~$k$ participants $A_1,
\ldots, A_k$. Our transformation adds to~$\Pi$ a preamble in which
each participant sends a freshly generated nonce~$N_i$ together with
his identity to all other participants. This allows each participant
to compute a dynamic, {session-dependent} \emph{tag}
$\langle A_1, N_1\rangle , \ldots, \langle A_k, N_k \rangle$ that will
be used to tag each encryption and signature in~$\Pi$. Our
transformation is surprisingly simple and does not require any
cryptographic protection of the preamble, i.e., an active attacker is
allowed to interfere with this preliminary phase.  Intuitively, the
security relies on the fact that the participant~$A_i$ decides on a
given tag for a given session which is ensured to be fresh as it
contains his own freshly generated nonce~$N_i$.  The transformation is
computationally light as it does not add any cryptographic
application; it may merely increase the size of messages to be
encrypted or signed.
The transformation applies to a large class of protocols, which may
use symmetric and asymmetric encryption, digital signature and hash
functions.

We may note that, \emph{en passant}, we identify a class of tagged protocols for
which security is decidable for an unbounded number of sessions. This directly
follows from our main result as it stipulates that verifying security for a
bounded number of protocol sessions is sufficient to conclude security for an
unbounded number of sessions.

\paragraph*{\bf Related Work.}

The kind of compiler we propose here has also been investigated in the
area of cryptographic design in computational models, especially for
the design of group key exchange protocols. For example, Katz and
Yung~\cite{KatzY03} proposed a compiler which transforms a key
exchange protocol secure against a passive eavesdropper into an
authenticated protocol which is secure against an active
attacker. Earlier work includes compilers for $2$-party protocols
(e.g.~\cite{BellareCK98}). In the symbolic model, recent
works~\cite{CortierWZ07,beauquier07} allow one to transform a protocol
which is secure in a weak sense (roughly no
attacker~\cite{CortierWZ07} or just a passive one~\cite{beauquier07}
and a single session) into a protocol secure in the presence of an
active attacker and for an unbounded number of sessions.  All of these
prior works share however a common drawback: the proposed
transformations make heavy use of cryptography. This is mainly due to
the fact that the security assumptions made on the input protocol are
rather weak. As already mentioned in~\cite{CortierWZ07}, it is
important, from an efficiency perspective to lighten the use of
cryptographic primitives. In this work, we succeed in doing so at the
price of requiring stronger security guarantees on the input
protocol. However, we argue that this is acceptable since efficient
automatic tools exist to decide this security criterion on the input
protocols.  Recently, our transformation has also been adapted to the
case of offline guessing attacks in {password-based}
protocols~\cite{CDK-fsttcs11}. On the one hand the result presented
in~\cite{CDK-fsttcs11} is more complicated as it considers a more
complex security property but, on the other hand, the proof is
simplified by the fact that the password is the only secret shared
between different sessions.

\smallskip{}

We can also compare our work with existing decidable protocol classes
for an unbounded number of sessions. An early result is the PTIME
complexity result by Dolev \emph{et al.}~\cite{dolev82crypto} for a
restricted class, called \emph{ping-pong} protocols.  Other classes
have been proposed by Ramanujam and Suresh~\cite{RS03,RSjournal05},
and Lowe~\cite{Lowejournaltowards}. However, in both cases, temporary
secrets, composed keys and ciphertext forwarding are not allowed which
discards protocols {(even their tagged version)}, such as the Yahalom protocol~\cite{ClarkJacob97}.

\smallskip{}

Different kinds of tags have also been considered
in~\cite{CC-csf10,AD-fsttcs07,CD-fmsd09,BlanchetPodelskiFOSSACS03,RS03}. However
these tags are \emph{static} and have a different aim. While our
dynamic tagging scheme avoids confusing messages from different
sessions, these static tags avoid confusing different messages inside
{the same} session and do not prevent that {the
  same} message is reused in two different sessions.  Under some
additional assumptions (e.g. no temporary secret, no ciphertext
forwarding), several decidability
results~\cite{RSjournal05,Lowejournaltowards} have been obtained by
showing that it is sufficient to consider one session per role. But
those results cannot deal with protocols {which rely on
  ciphertext forwarding and/or temporary secrets}. In the framework we
consider here, the question whether such static tags would be
sufficient to obtain decidability is still an open question
(see~\cite{AD-fsttcs07}). In a similar way, static tags have also been
used by Heather~et~al.~\cite{HeatherLoweScchneider00} to avoid type
confusion attacks.

Finally, we may note that our tags are reminiscent of session tags in
the UC framework~\cite{CanettiFOCS01} and in particular the method
proposed by Barak~et~al.~\cite{cryptoeprint:2004:006} for computing
them. However, in addition to the important differences in the models,
these works do not propose a general, systematic transformation which
guarantees (joint state) composition between sessions.

\smallskip

This paper can be seen as an extended and enriched version
of~\cite{ADK-lpar08}. In~\cite{ADK-lpar08}, our reduction result was
only established for the secrecy property whereas we consider here a
larger class of security properties that includes several levels of
authentication. Moreover, the proof of our main result is now
self-contained and does not rely anymore on the constraint solving
procedure presented in~\cite{CCZ10}.
 
\paragraph{\bf Outline of the paper.}
Our paper is organized in two parts: Part~I presents our result and
all the necessary background for the result to be formally stated and
Part~II is devoted to giving an overview of the proof of the result
(for readability some of the more technical proofs are only given in
{an appendix}).

In Part I we first introduce our abstract representation of protocol
messages (Section~\ref{sec:model}) and our formal models for security protocols
(Section~\ref{sec:protocol}) and properties (Section~\ref{sec:properties}). Next, in
Section~\ref{sec:transfo}, we formally define our protocol transformation and
state our main result which guarantees that attacks only require a
bounded number of sessions.

In Part II we give an overview of our proof. In
Section~\ref{sec:1-step} we define a transformation on protocol
executions and show that a transformed execution 
\begin{enumerate}[label=(\roman*)]
\item has several good properties (it is both valid and well-formed), and
\item preserves the satisfaction of attack formulas.
\end{enumerate}
In Section~\ref{sec:2-step} we show that we can restrict the sessions
that are involved in a valid, well-formed execution while preserving
\begin{enumerate}[label=(\roman*)]
\item validity and well-formedness, and
\item satisfaction of attack formulas.
\end{enumerate}
Finally, in Section~\ref{sec:3-step}, we use the results from the previous two
sections to prove our main result.


\null \hfill {\large\bf  --- PART I: \textsc{Presentation of our reduction result} ---} \hfill\null
\medskip{}

\section{Messages and intruder capabilities}
\label{sec:model}


\subsection{Messages}
\label{subsec:mesages}

We use an abstract term algebra to model the messages of a
protocol. For this we fix several disjoint sets.  We consider an
infinite set of \emph{agents} $\mathcal{A} = \{\epsilon, a,b\ldots\}$
with the special agent~$\epsilon$ standing for the attacker and an
infinite set of \emph{agent variables} $\Xvars = \{x_A, x_B,\ldots\}$.
We also need to consider an infinite set of \emph{names} $\mathcal{N}
= \{n,m\ldots\}$ and an infinite set of \emph{variables} $\Yvars =
\{y, z, \ldots\}$.  Among this set of names, we consider the infinite
set of names $\Nces_\epsilon =\{n^\epsilon, \ldots\}$ that corresponds
to names known initially by the attacker.  We consider the following
\emph{signature} $\mathcal{F} = \{\enc/2, \enca/2, \sign/2,
\langle\,\rangle/2, \h/1, \pub/1, \priv/1, \shk/2\}$.  These function
symbols model cryptographic primitives. The symbol $\langle\,\rangle$
represents pairing.  The term $\enc(m,k)$ (resp. $\enca(m,k)$)
represents the message $m$ encrypted with the symmetric
(resp. asymmetric) key $k$ whereas the term $\sign(m,k)$ represents
the message~$m$ signed by the key~$k$.  The function $\h$ models a
hash function whereas $\pub(a)$ and $\priv(a)$ are used to model the
public and the private key respectively of an agent~$a$, and $\shk(a,
b)$ (= $\shk(b,a)$) is used to model the long-term symmetric key
shared by agents $a$ and~$b$.  Names are used to model atomic data
such as nonces.  The set of \emph{terms} is defined inductively by the
following grammar:
\[
\begin{array}{lcl@{\hspace{0.4cm}}l}
  t, t_1, t_2, \ldots & ::= & & \hspace{-0.4cm} \text{term} \\
  & | & x & \text{agent variable } x \in \mathcal{X} \\
  & | & a & \text{agent } a \in \mathcal{A} \\
  & | & y & \text{variable } y \in \mathcal{Y} \\
  & | & n & \text{name } n \in \mathcal{N} \\
  & | & \pub(u) & \text{application of the symbol $\pub$  on $u \in
    \mathcal{A} \cup \mathcal{X}$} \\
  & | & \priv(u) & \text{application of the symbol $\priv$  on $u \in
    \mathcal{A} \cup \mathcal{X}$} \\
  & | & \shk(u_1,u_2) & \text{application of the symbol $\shk$ on
    $u_1, u_2 \in \mathcal{A} \cup \mathcal{X}$} \\
  & | & \h(t) & \text{application of $\h$} \\
  & | & \f(t_1,t_2) & \text{application of symbol }\f \in \{\enc,
  \enca, \sign, \langle\, \rangle\}
\end{array}
\]

We sometimes write $\langle t_1, \ldots, t_n\rangle$ instead of
writing $\langle t_1, \langle \ldots, \langle t_{n-1}, t_n\rangle
\ldots \rangle\rangle$.  We say that a term is \emph{ground} if it has
no variable.  We consider the usual notations for manipulating terms.
A position~$p$ in a term~$t$ is a sequence of integers.  The empty
sequence $\varepsilon$ denotes the top-most position. The subterm
of~$t$ at position~$p$ is written $t|_p$.
We write $\vars(t)$ (resp. $\names(t)$, $\agent(t)$) for the set of
variables (resp. names, agents) occurring in~$t$.  We write $\st(t)$
for the set of {\emph{syntactic subterms}} of a term~$t$ and
define the set of \emph{cryptographic subterms} of a term~$t$ as
$\encst(t) = \{\f(t_1,\dots,t_n)\in\st(t)\ |\
\f\in\{\enc,\enca,\sign,\h\}\}.$ Moreover, we define the set of
\emph{long-term keys} as $\lgAtom = \{\priv(a)~|~ a\in \A\} \cup
\{\shk(a,b)~|~a,b \in \A\}$ and the set of \emph{long-term keys of a
  term $t$} as
\[
\lgAtom(t) = \{\priv(u)~|\pub(u)\in\st(t) \text{ or } \priv(u)\in\st(t)\}\cup\{\shk(u_1,u_2)\in\st(t)\}.
\]

\noindent and we define $\Kcal_\epsilon = \{\priv(\epsilon)\} \cup
\{\shk(a,\epsilon)~|~a \in \A\}$. Intuitively $\Kcal_\epsilon$ represents  the
set of long-term keys of the attacker.
An \emph{atom} is a long-term key, a name or a variable.

We define the set of \emph{plaintexts} of a term~$t$ as the set of atoms that occur in
plaintext position, i.e.
\begin{itemize}
\item   $\plaintext(\h(u))=\plaintext(\f(u, v)) =
\plaintext(u)$ for $\f \in \{\enc, \enca, \sign\}$,
\item  $\plaintext(\langle u, v \rangle) = \plaintext(u)
\cup \plaintext(v)$, and
\item $\plaintext(u)
= \{u\}$ \; otherwise.
\end{itemize}

All these notions are extended to sets of terms and to
other kinds of term containers
as expected. 
We denote by $\#S$ the cardinality
  of a set~$S$.
Substitutions are written ${\sigma = \{x_1 \mapsto t_1 ,\ldots, x_n
  \mapsto {t_n} \}}$ where its $\emph{domain}$ is $\dom(\sigma) = \{
x_1,\ldots,x_n\}$.  The substitution~$\sigma$ is \emph{ground} if all
the $t_i$ are ground.  The application of a substitution $\sigma$ to a
term~$t$ is written $\sigma(t)$ or $t\sigma$. Two terms~$t_1$ and~$t_2$ 
are \emph{unifiable} if $t_1\sigma = t_2\sigma$ for some
substitution~$\sigma$, that is called a \emph{unifier}.  
We denote by $\mgu(t_1,t_2)$ the \emph{most general unifier} of~$t_1$
and~$t_2$.

\begin{exa}
  Let $t=\enc(\langle n, a \rangle, \shk(a,b))$. We have that
  $\vars(t)=\emptyset$, i.e.  $t$ is ground, $\names(t) = \{n\}$,
  $\agent(t) = \{a,b\}$, 
$\lgAtom(t) = \{\shk(a,b)\}$, $\plaintext(t)=\{n, a\}$, {and $\st(t) =
\{t,\,\langle n, a\rangle, \, \shk(a,b),\, n, \, a\}$}. The terms $\shk(a,b)$, $a$, $n$  and~$\priv(a)$ are
  atoms.
\end{exa}


\subsection{Intruder capabilities}
\label{subsec:intruder}

We model the intruder's abilities to construct new messages by the
deduction system given in Figure~\ref{fig:deduction}. The first line
describes the \emph{composition rules}. The second line describes the
\emph{decomposition rules}.
The intuitive meaning of these rules is that an intruder can compose new
messages by pairing, encrypting, signing and hashing previously known
messages provided he has the corresponding keys. Conversely, he can
decompose messages by projecting or decrypting provided he has the
decryption keys.
Our optional rule 
expresses that an intruder can retrieve the whole message from
its signature. Whether this property holds depends on the actual
signature scheme. Therefore we consider this rule to be optional. Our
results hold in both cases.


\begin{figure}[t]
\[
\begin{array}{c}
\prooftree
u\qquad v
\justifies
\pair{u}{v}
\endprooftree
\quad\quad 
 \prooftree
u\qquad v \justifies
\enc(u,v)
\endprooftree
\quad\quad 
 \prooftree u\qquad v \justifies
 \enca(u,v)
\endprooftree
\quad\quad 
\prooftree u\qquad v \justifies
\sign(u,v)
\endprooftree
\quad\quad 
\prooftree u \justifies
 \h(u)
\endprooftree
\\ \\
\prooftree
\pair{u}{v}
\justifies
u
\endprooftree
\quad\quad
\prooftree 
\pair{u}{v} 
\justifies 
v
\endprooftree
\quad\quad
\prooftree \enc(u,v)\quad v
\justifies u
\endprooftree
\quad\quad
\prooftree
\enca(u,\pub(v))\quad\priv(v)
\justifies
u
\endprooftree
\quad\quad
\prooftree 
\sign(u,v) 
\justifies
u 
\using 
\mbox{\it (optional)}
\endprooftree

\end{array}
\]
\caption{Intruder deduction system.}
\label{fig:deduction}
\end{figure}

\begin{defi}[deducible]~\label{def:deducibility}
  We say that a term $u$ is \emph{deducible} from a set of terms~$T$,
  denoted $T\vdash u$, if there exists a tree such that its root is
  labeled by $u$, its leaves are labeled with $v \in T \cup \A \cup
  {\Nces_\epsilon} \cup
  \Kcal_\epsilon \cup \{\pub(a) ~|~ a \in \A\}$ and for every  node labeled by $v$ having $n$
  sons labeled by $v_{1}, \dots , v_{n}$ we have that 
$\;\;\begin{prooftree}
v_{1} \; \ldots  \; v_{n}
\justifies
v
\end{prooftree}\;\;$
is an
  instance of one of the inference rules given in Figure~\ref{fig:deduction}.
\end{defi}

\begin{exa}
  The term $\pair{n}{\shk(a,b)}$ is deducible from $\{\enc(n,\shk(a,b)),\shk(a,b)\}$.
\end{exa}

We are now able to state the following lemma that can be easily proved by
induction on the proof tree witnessing $T \vdash t$.

\begin{lem}\label{lem:plaintext.deducibility}
 Let $T$ be a set of terms and $t$ be a term such that $T\vdash t$. We
 have that:
\[
\plaintext(t)\ \subseteq\ \plaintext(T)\cup \A \cup
{\Nces_\epsilon} \cup \Kcal_\epsilon \cup \{\pub(a) ~|~
a \in \A\}.
\]
\end{lem}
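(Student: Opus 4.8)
The plan is to prove the inclusion by induction on the proof tree $\pi$ witnessing $T\vdash t$, exactly as the remark preceding the statement suggests. Throughout, write $P = \plaintext(T)\cup \A \cup \Nces_\epsilon \cup \Kcal_\epsilon \cup \{\pub(a) \mid a \in \A\}$ for the right-hand side of the claimed inclusion; the goal is to show $\plaintext(t)\subseteq P$.

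First I would handle the base case, where $\pi$ consists of a single leaf, so $t\in T \cup \A \cup \Nces_\epsilon \cup \Kcal_\epsilon \cup \{\pub(a)\mid a\in\A\}$. If $t\in T$ then $\plaintext(t)\subseteq \plaintext(T)\subseteq P$ by the pointwise extension of $\plaintext$ to sets of terms. Otherwise $t$ is an agent, a name in $\Nces_\epsilon$, a long-term key in $\Kcal_\epsilon$, or a term $\pub(a)$; in each of these cases the head symbol of $t$ is not among $\{\enc,\enca,\sign,\h,\langle\,\rangle\}$, so the last clause defining $\plaintext$ applies and $\plaintext(t)=\{t\}$, which sits in the matching component of $P$.

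For the inductive step, the root of $\pi$ is an instance of one of the rules of Figure~\ref{fig:deduction} applied to subtrees establishing $T\vdash v_1,\dots,T\vdash v_n$, and by the induction hypothesis $\plaintext(v_i)\subseteq P$ for every $i$. I would then go through the rules. For the composition rules the conclusion is $\pair{v_1}{v_2}$, $\enc(v_1,v_2)$, $\enca(v_1,v_2)$, $\sign(v_1,v_2)$, or $\h(v_1)$, and the defining equations of $\plaintext$ give, respectively, $\plaintext(v_1)\cup\plaintext(v_2)$, $\plaintext(v_1)$, $\plaintext(v_1)$, $\plaintext(v_1)$, $\plaintext(v_1)$ --- in all cases a subset of $\plaintext(v_1)\cup\plaintext(v_2)\subseteq P$. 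For the decomposition rules the conclusion $u$ is obtained from a first premise $w$ that has $u$ as a pairing component (when $w=\pair{u}{v}$ or $w=\pair{v}{u}$) or as its plaintext body (when $w=\enc(u,v)$, $w=\enca(u,\pub(v))$, or, for the optional rule, $w=\sign(u,v)$); in each case the same equations yield $\plaintext(u)\subseteq\plaintext(w)\subseteq P$, using the induction hypothesis for $w$. This exhausts the rules, so $\plaintext(t)\subseteq P$.

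The argument is essentially routine, and I do not expect a serious obstacle: every inference rule is \emph{plaintext-monotone}, in the sense that the plaintexts of its conclusion are already among the plaintexts of its premises --- composition never creates a new plaintext atom because the freshly applied symbol is cryptographic or a pair, and decomposition only removes structure. The only point that needs a little attention is the base case: one must check that every piece of the attacker's initial knowledge ($\A$, $\Nces_\epsilon$, $\Kcal_\epsilon$, and the public keys $\pub(a)$) is its own unique plaintext, so that such a leaf contributes exactly to the corresponding component of $P$.
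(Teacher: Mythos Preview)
Your proof is correct and follows exactly the approach the paper indicates: induction on the proof tree witnessing $T\vdash t$. The paper does not spell out the details, merely noting that the lemma ``can be easily proved by induction on the proof tree witnessing $T \vdash t$''; your case analysis of leaves, composition rules, and decomposition rules is precisely what is needed to flesh this out.
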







\section{Model for security protocols}
\label{sec:protocol}

In this section, we give a language for specifying protocols and define their
execution in the presence of an active attacker.  Our model is similar to
existing ones (see e.g.~\cite{RT03TCS,CD-fmsd09}).


\subsection{Syntax}
\label{subsec:syntax-protocol}

We consider protocols specified in a language allowing parties to exchange messages built from identities and randomly generated nonces using pairing, public key, symmetric encryption, hashing and digital signatures.  The individual behavior of each protocol participant is defined by a \textit{role} describing a sequence of \emph{events}. The main events we consider are \emph{communication events} (i.e. message receptions and message transmissions) and \emph{status events} to mark different stages reached by the protocol. These status events will help us specify a large class of security properties (a logic of properties is given in Section~\ref{sec:properties}). These are issued by participants to denote their current state in the execution of a protocol role.

\begin{defi}[event]
  An \emph{event} is either
  \begin{itemize}
  \item a \emph{communication event}, i.e. a message reception,
    denoted by $\rcv(m)$ or a message transmission, denoted by
    $\snd(m)$, where $m$ is a term; or
  \item a \emph{status event} of the form $\Prd(t_1, \ldots, t_n)$ where
    each~$t_i$ is a term (not necessarily ground) and~$\Prd\in\Pred$ is a
    predicate symbol of arity~$n$.
  \end{itemize}
\end{defi}

\noindent Typically, status events give information about the state of the principal. For
instance, we will consider a status event that indicates that the principal has
started or finished a session. The set of variables of an event is defined as
expected, considering all the terms occurring in the event's specification.

\begin{defi}[roles]
  \label{def:role}
  \label{def:protocol}
  A role is of the form $\lambda x_1.\dots\lambda x_k.\nu y_1.\dots\nu
  y_p.\; \seqevt$, where:
  \begin{itemize}
  \item $X = \{x_1,\ldots,x_k\}$ is a set of agent variables, i.e. the
    parameters of the role corresponding to the~$k$ participants of
    the protocol,
  \item $Y = \{y_1,\ldots,y_p\}$ is a set of variables: the nonces
    generated by the role,
  \item $\seqevt=\evt_1;\evt_2; \ldots ;\evt_\ell$ is a sequence of
    events such that ${(\vars(\seqevt) \smallsetminus {X}) \subseteq
      \mathcal{Y}}$, i.e. {all agent variables} are parameters.
  \end{itemize}
  Moreover, we have that:
  \begin{enumerate}
  \item\label{def:role:or} $\seqevt$ satisfies the \emph{origination
      property}, that is for any send or status event $\evt_i$, for
    any variable {$x \in \vars(\evt_i) \smallsetminus (X \cup Y)$, we
      have that $x \in \vars(\evt_j)$ for some receive event~$\evt_j$
      where $j < i$}; and
  \item\label{def:role:plaintextor} $\seqevt$ satisfies the
    \emph{plaintext origination property}, that is for any send or
    status event~$\evt_i$, for any variable {$x \in \plaintext(\evt_i)
      \smallsetminus (X \cup Y)$}, we have that
    $x\in\plaintext(\evt_j)$ for some receive event~$\evt_j$ where $j
    < i$.
  \end{enumerate}
  
  \noindent The set of roles is denoted by~$\Roles$.  The
  \emph{length} of a role is the number of elements in its sequence of
  events.  A \emph{$k$-party protocol} is a mapping
  ${\Pi:[k]\to\Roles}$, where $[k] = \{1,2,\ldots,k\}$.
\end{defi}

The condition (\ref{def:role:or}) above ensures that each variable
which appears in a send or status event is a nonce, a parameter, or a
variable that has been introduced in a previously received
message. Condition  (\ref{def:role:plaintextor}) ensures that a key
used for encrypting or signing cannot be extracted and used as
plaintext, e.g. forbidding a sequence $\rcv(\enc(y,z));\snd(z)$.


\begin{exa}    
  \label{ex:syntax}
  We illustrate our protocol syntax on the familiar Needham-Schroeder public-key
  protocol~\cite{NS78}.  In our syntax this protocol is modeled as follows.
  \[
  \begin{array}{rl}
    \Pi(1)=& \lambda x_A.\lambda x_B.\nu y.\\
    & \snd(\enca(\pair{y}{x_A},\pub(x_B)));\\
    & \rcv(\enca(\pair{y}{z},\pub(x_A)));\\
    & \snd(\enca(z,\pub(x_B)))
  \end{array}\quad
  \begin{array}{rl}
    \Pi(2)=& \lambda x_A.\lambda x_B. \nu y'.\\
    & \rcv(\enca(\pair{z'}{x_A},\pub(x_B)));\\
    & \snd(\enca(\pair{z'}{y'},\pub(x_A)));\\
    & \rcv(\enca(y',\pub(x_B)))
  \end{array}
\]

The initiator, role $\Pi(1)$ played by~$x_A$, sends to the responder, role
$\Pi(2)$ played by~$x_B$, his identity together with a freshly generated
nonce~$y$, encrypted with the responder's public key. The {responder} replies by
copying the initiator's nonce and adds a fresh nonce~$y'$, encrypted by the
initiator's public key. The initiator acknowledges by forwarding the responder's
nonce encrypted by {his} public key.
\end{exa}

Clearly, not all protocols written using the syntax above are
meaningful. In particular, some of them might not be
\emph{executable}.  For instance, a $k$-party protocol where $\Pi(1)
:= \rcv(\h(x)); \snd(x)$ is not executable since an agent is not able
to extract the content of a hash.  A precise definition of
executability is not relevant for our result. We only need to consider
the weaker plaintext origination hypothesis
(Condition~\ref{def:role:plaintextor} stated in Definition~\ref{def:role}).
In particular, our result also holds for {non-executable} protocols such
as the one given above.


\subsection{Semantics}
\label{subsec:semantics-protocol}

In our model, a session corresponds to the instantiation of one
role. This means in particular that one ``normal execution'' of a
$k$-party protocol requires~$k$ sessions, one per role\footnote{In the
  literature, the word session is often used in an abusive way to
  represent an execution of the \emph{protocol}, i.e. one session per
  role, whereas we use it for the execution of a \emph{role}.}.  We
may want to consider several sessions corresponding to different
instantiations of a same role.  Since the adversary may block,
redirect and send new messages, all the sessions might be interleaved
in many ways. Such an interleaving is captured by the notion of a
\emph{scenario}.

\begin{defi}[scenario]
  A \emph{scenario for a protocol $\Pi:[k]\to\Roles$} is a sequence
  $\sce = (r_1,\sid_1)\cdots(r_n,\sid_n)$ where $r_i$ is a role and
  $\sid_i$ a session identifier such that $1\leq r_i\leq k$,
  $\sid_i\in\N\smallsetminus\{0\}$,
  the number of identical occurrences of a pair $(r,\sid)$ is smaller
  than the length of the role~$r$, and $\sid_i = \sid_j$ implies
  $r_i=r_j$.
\end{defi}

The condition on identical occurrences ensures that a role cannot
execute more events than it contains. The last condition ensures that
a session number is not reused by other roles.  We say that
$(r,s)\in\sce$ if $(r,s)$ is an element of the sequence $\sce$.

Given a scenario and an instantiation for the parameters, we define a \emph{symbolic trace}, that is a sequence of events that corresponds to the interleaving of the scenario, for which the parameters have been instantiated, fresh nonces are generated and variables are renamed to avoid name collisions between different sessions.

\begin{defi}[symbolic trace]\label{def:scenario}
  Let $\Pi$ be a $k$-party protocol with
  \begin{center}
    $\Pi(j) = \lambda x^j_1.\dots\lambda x^j_k.\nu y^j_1.\dots\nu
    y^j_{p_j}. \evt^j_1;\dots ;\evt^j_{\ell_j}$ \;\;\; for $1\le j\le
    k$.
  \end{center}
  Given a scenario $\sce = (r_1,\sid_1)\cdots(r_n,\sid_n)$ and a
  function $\alpha: \mathbb{N}\rightarrow\mathcal{A}^k$, the
  \emph{symbolic trace $\tr = \evt_1^{\sid_1};\ldots; \evt_n^{\sid_n}$
    associated to $\sce$ and~$\alpha$} is defined as follows.
 

 Let {$q_i = \# \{j~|~j  \leq i,  (r_j,\sid_j)\in\sce,
   \mbox{ and }  \sid_j =
  \sid_i\}$}, i.e. {the number of occurrences up to this point} in $\sce$ of the
  session $\sid_i$.  
We have that $q_i\leq \ell_{r_i}$ and
  ${\evt_i=(\evt^{r_i}_{q_i})\sigma_{r_i,\sid_i}}$, where
  $\dom(\sigma_{r,\sid}) = \vars(\Pi(r))$ and
  \begin{itemize}
 \item $\sigma_{r,\sid}(y) = n_y^\sid$ if
    $y\in\{y^r_1,\dots,y^r_{p_r}\}$, where $n_y^\sid$ is a fresh name
    {from $\mathcal{N}$};
  \item $\sigma_{r,\sid}(x^r_i) = a_i$ when $\alpha(\sid) = (a_1,
    \ldots, a_k)$;
  \item $\sigma_{r,\sid}(z) = z^\sid$ otherwise, where $z^\sid$ is a
    fresh variable.
  \end{itemize} {A session~$\sid$ is said to be \emph{dishonest}
    w.r.t.~$\alpha$ and a set of ground atoms~$T_0$ when $\alpha(\sid)
    = (a_1,\ldots,a_k)$ and $T_0 \vdash \priv(a_i)$ or $T_0 \vdash
    \shk(a_i,v)$ for some $v \neq \epsilon$ and $1\leq i \leq k$.}
\end{defi}

\noindent {Intuitively, a session $\sid$ is honest if all of its
  participants, from the point of view of the agent playing the
  session~$\sid$, are honest (i.e. they are neither the attacker
  $\epsilon$ nor did they disclose their long-term keys).}  Note that
since all agent variables occurring in a role, occur as parameters of
this role (see Definition~\ref{def:protocol}), a symbolic trace does
not contain agent variables.

{The notational conventions we use for names and variables
  occurring in a symbolic trace (\emph{e.g.} $n_y^\sid$ and $z^\sid$)
  are not really relevant to state our main result. However, we will
  rely on this notation in Part II when we prove our reduction
  result.}

\begin{exa}
\label{ex:symbolic-trace}
Consider again the Needham-Schroeder protocol. Let~$\Pi(1)$ and~$\Pi(2)$ be the
two roles introduced in Example~\ref{ex:syntax}. Let~$s_1$ and~$s_2$ be two sessions
numbers ($s_1 \neq s_2$), $\sce =
(1,s_1)(2,s_2)(2,s_2)(1,s_1)(1,s_1)$ and $\alpha$ the function such
that $\dom(\alpha) = \{s_1, s_2\}$, $\alpha(s_1) = (a,c)$, and
$\alpha(s_2) = (a,b)$. This is the scenario
allowing us to retrieve the famous attack due to Lowe~\cite{lowe96breaking}.
The symbolic trace associated to $\Pi$, $\sce$, and $\alpha$ is given below:
\[
\begin{array}{rcl}
\tr &=& 
\snd(\enca(\langle n_{y}^{s_1}, a\rangle, \pub(c)));\\
&&\rcv(\enca(\langle {z'}^{s_2}, a \rangle, \pub(b))); \,
\snd(\enca(\langle {z'}^{s_2}, n_{y'}^{s_2}\rangle, \pub(a)));\\
&&\rcv(\enca(\langle n_{y}^{s_1}, z^{s_1} \rangle, \pub(a))); \,
\snd(\enca(z^{s_1}, \pub(c)))
\end{array}
\]
\end{exa}

An \emph{execution trace} is an instance of such a symbolic trace.
Appending an event~$\evt$ to an execution trace~$\exec$ is
written~$\exec; \, \evt$. The function $\length$ has the usual
meaning: $\length([\,]) = 0$ and $\length(\exec; \,\evt) = 1 +
\length(\exec)$. The prefix of an execution trace consisting of the
first~$i$ events is denoted as $\exec_i$, with $\exec_0 =[\,]$ and
$\exec_n = \exec$ when $n \geq \length(\exec)$.

\begin{defi}[knowledge of an execution trace
  $\exec$]\label{dfn:intruder.knowledge} Let $\exec$ be an execution trace.
  The knowledge of~$\exec$ is the set of terms given by $\K(\exec) = \{ u ~|~
  \snd(u) \in \exec\}$.
\end{defi}

As usual, we are only interested in \emph{valid} execution traces - those traces
where the attacker only sends messages that he can compute from his
initial knowledge
and the messages he has seen on the network.

\begin{defi}[valid execution trace]\label{def:validexec}
  Let $T_0$ be a set of ground terms (intuitively $T_0$ represents the initial knowledge of the attacker).  A ground execution trace $\exec = \evt^{\sid_1}_1; \ldots; \evt^{\sid_\ell}_\ell$ is \emph{valid} w.r.t.~$T_0$ if for all $1 \leq i \leq \ell$, whenever $\evt_i = \rcv(m)$, we have that ${T_0 \cup \K(\exec_i) \vdash m}$.
\end{defi}

\begin{exa}
\label{ex:valid}
Let $T_0 = \{a, b, c, \priv(c)\}$.  Let~$\tr$ be the symbolic trace described in Example~\ref{ex:symbolic-trace} and $\sigma = \{z^{s_1} \mapsto n_{y'}^{s_2}, \; {z'}^{s_2} \mapsto n_y^{s_1} \}$.  The execution trace~$\tr\sigma$ is valid w.r.t.~$T_0$. Indeed, we have that
\begin{itemize}
\item $T_1 \stackrel{\mathsf{def}}{=} T_0 \cup  \{\enca(\langle n_y^{s_1}, a \rangle, \pub(c))\} \,\vdash\,
  \enca(\langle n_y^{s_1}, a \rangle, \pub(b))$, and 
\item $T_1 \cup \{\enca(\langle n_y^{s_1}, n_{y'}^{s_2} \rangle, \pub(a))\} 
\,\vdash\, \enca(\langle n_y^{s_1}, n_{y'}^{s_2} \rangle, \pub(a))$.
\end{itemize}
\end{exa}

The purpose of the following lemma is to characterize the terms that
occur in plaintext position in a valid execution.
{Intuitively, the lemma states that any plaintext occurring
  in a valid execution either occurs as a plaintext in the underlying
  symbolic trace, or was known by the attacker since the beginning,
  i.e., is part of the attacker's initial knowledge.}

\newsavebox{\lemplaintext}
\sbox{\lemplaintext}{\vbox{%
\begin{lem}\label{lem:plaintext}
  Let $\Pi$ be a $k$-party protocol and $\tr = [\evt\evt_1^{\sid_1}; \ldots; \evt\evt_\ell^{\sid_\ell}]$ be a symbolic trace associated to it. Let $T_0$ be a set of ground atoms, and $\exec = [\evt_1^{\sid_1}; \ldots; \evt_\ell^{\sid_\ell}]$ be a valid execution trace associated to $\tr$ (w.r.t. $T_0$). We have that:
  \[
  \plaintext(\exec)\ \subseteq\ \plaintext(\tr) \cup T_0 \cup \Nces_\epsilon \cup \mathcal{K}_\epsilon \cup \A \cup \{\pub(a) \mid a\in\A\}.
  \]
\end{lem}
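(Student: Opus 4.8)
The plan is to prove this by induction on the length $\ell$ of the execution trace, using the already-established Lemma~\ref{lem:plaintext.deducibility} to handle the receive events. The key observation is that an execution trace $\exec = \tr\sigma$ is obtained by applying a ground substitution $\sigma$ to the symbolic trace, so every event $\evt_i^{\sid_i}$ is either a send or status event $\evt$ with $\plaintext(\evt\sigma) \subseteq \plaintext(\evt) \cup \plaintext(\img(\sigma))$ (since substitution can only expose atoms already in the image), or a receive event whose plaintexts are controlled by deducibility from the current knowledge.

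**First I would** set up the induction. For the base case $\ell = 0$ the statement is trivial since $\plaintext([\,]) = \emptyset$. For the inductive step, assume the claim holds for $\exec_{\ell-1}$ and consider the last event $\evt_\ell^{\sid_\ell} = \evt\sigma$ where $\evt = \evt^{r_\ell}_{q_\ell}$ is the corresponding event of the role (up to the renaming $\sigma_{r_\ell,\sid_\ell}$ folded into the symbolic trace). There are two cases. If $\evt_\ell$ is a receive event $\rcv(m)$, then validity gives $T_0 \cup \K(\exec_\ell) \vdash m$, and since $\K(\exec_\ell) \subseteq \exec_{\ell-1}$ (the knowledge consists of previously sent terms), Lemma~\ref{lem:plaintext.deducibility} yields $\plaintext(m) \subseteq \plaintext(T_0) \cup \plaintext(\K(\exec_\ell)) \cup \A \cup \Nces_\epsilon \cup \Kcal_\epsilon \cup \{\pub(a) \mid a \in \A\}$; the term $\plaintext(\K(\exec_\ell))$ is absorbed by the induction hypothesis applied to $\exec_{\ell-1}$, and $\plaintext(T_0) = T_0$ since $T_0$ is a set of atoms.

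**The more delicate case** is when $\evt_\ell$ is a send or status event. Here I need that $\plaintext(\evt\sigma) \subseteq \plaintext(\evt) \cup (\text{stuff already accounted for})$. The crucial point is that any atom $x \in \plaintext(\evt\sigma) \setminus \plaintext(\evt)$ must arise as a plaintext atom of $\sigma(z)$ for some variable $z \in \plaintext(\evt)$ — this uses exactly the \emph{plaintext origination property} (Condition~\ref{def:role:plaintextor} of Definition~\ref{def:role}): such a $z$ occurs in plaintext position in a strictly earlier receive event $\evt_j = \rcv(m_j)$ with $z \in \plaintext(m_j)$, hence $\plaintext(\sigma(z)) \subseteq \plaintext(m_j\sigma) \subseteq \plaintext(\exec_{j}) \subseteq \plaintext(\exec_{\ell-1})$, which is covered by the induction hypothesis. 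The atoms $x$ that were already in $\plaintext(\evt)$ but are agent variables instantiated by $\sigma$ to agents land in $\A$; the nonce variables are instantiated to fresh names, but those appear in $\plaintext(\tr)$ by construction, so they are fine. I would note that $\plaintext(\tr)$ on the right-hand side already absorbs $\plaintext(\evt^{\sid_\ell})$ for the symbolic-trace version of the event (parameters mapped to agents, nonces to fresh names).

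**The main obstacle** I anticipate is the bookkeeping around the two layers of substitution (the role-renaming $\sigma_{r,\sid}$ built into the symbolic trace $\tr$, and the ground instantiation $\sigma$ with $\exec = \tr\sigma$) and making precise the claim that ``a new plaintext atom must come through a plaintext-position variable of an earlier receive.'' This requires a small auxiliary observation: for any term $t$ and ground substitution $\sigma$, $\plaintext(t\sigma) \subseteq (\plaintext(t) \setminus \dom(\sigma)) \cup \bigcup_{z \in \plaintext(t) \cap \dom(\sigma)} \plaintext(\sigma(z))$ — i.e. plaintext atoms of the instance are either plaintext atoms of $t$ surviving instantiation, or come from instantiating a variable that was itself in plaintext position. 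This is proved by a routine structural induction on $t$ following the three clauses defining $\plaintext$, and once it is in hand the plaintext origination property does exactly the work needed to route every new atom back to an earlier receive event and hence into the induction hypothesis.
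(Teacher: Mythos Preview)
Your proposal is correct and follows essentially the same approach as the paper: induction on the trace length, using Lemma~\ref{lem:plaintext.deducibility} for receive events and the plaintext origination property (Condition~\ref{def:role:plaintextor} of Definition~\ref{def:role}) for send and status events. The paper only sketches this, whereas you spell out the auxiliary fact $\plaintext(t\sigma) \subseteq (\plaintext(t)\setminus\dom(\sigma)) \cup \bigcup_{z \in \plaintext(t)\cap\dom(\sigma)} \plaintext(\sigma(z))$, which is exactly the missing glue; one small clean-up is that in the symbolic trace the agent and nonce variables have already been instantiated by $\sigma_{r,\sid}$, so the only variables left for $\sigma$ to act on are the renamed $z^{\sid}$, making your remark about ``agent variables instantiated by $\sigma$'' superfluous rather than wrong.
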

}}

\noindent\usebox{\lemplaintext}

This lemma can be shown by induction on the length of the underlying
symbolic trace. We rely on Lemma~\ref{lem:plaintext.deducibility} to
deal with the case of a receive event, and on the plaintext
origination property (Condition (\ref{def:role:plaintextor}) in
Definition~\ref{def:role}) to deal with the case of a status or a send
event.

\section{Security properties}
\label{sec:properties}

In this section, we propose a logic for specifying security
properties. Our logic is similar to existing ones (see
e.g~\cite{TheseCorin06,CD-fmsd09}). {In particular, it is
  expressive enough to specify security properties like secrecy and
  different forms of authentication including aliveness, weak
  agreement and non-injective agreement.}  Its semantics is defined as
usual on execution traces.


\subsection{A logic for security properties}
\label{subsec:logic}

As in~\cite{CD-fmsd09}, status events are used to specify security properties
while the other events describe the execution of the protocol. We only consider
one temporal operator and this operator should only concern status events. That
is why we divide the logic into two layers.

\begin{defi}
\label{def:syntax}
  A formula of $\Logic$ is an expression $\phi$ defined by the following grammar:
  \[
  \begin{array}{rcl}
    \phi, \phi_i &:=& \learn(u_0) \mid \neg
    \phi \mid \exists x. \phi \mid \phi_1 \vee \phi_2 \mid \Comp(u)
    \mid \Diamond \, \psi \mid  \psi \\[2mm]
    \psi, \psi_i  &:=& \true \mid \Prd(u_1,\ldots, u_n)  \mid \neg
    \psi \mid \psi_1 \vee \psi_2
  \end{array}
  \]
  \noindent where $u_0, u_1, \ldots, u_n$ are terms and $u \in \mathcal{A} \cup \mathcal{X}$. 
\end{defi}

Standard formulas $\true$, $\neg \phi$, and $\phi_1 \vee \phi_2$ carry
the usual meaning. The formula~$\learn(u_0)$ states that the attacker
knows the term~$u_0$, whereas $\Prd(u_1, \ldots, u_n)$ is a status
event. The formula $\Comp(u)$ states that the agent $u$ is compromised
(his secret keys are known to the attacker). The formula $\Diamond\,
\psi$ means that `$\psi$ held in the past'. When $x$ is a variable, we
write $\exists x. \phi$ to bind~$x$ in~$\phi$, with the quantifier
carrying the usual meaning. Other operators can be represented using
the above defined operators. For instance, the abbreviations
$\Nocomp(u)$, $\false$, $\wedge$, {$\forall$}, and
$\Rightarrow$ are defined by {$\Nocomp(u) \stackrel{\mathsf{def}}{=}
  \neg \Comp(u)$}, $\false \stackrel{\mathsf{def}}{=} \neg \true$,
$\phi_1 \wedge \phi_2 \stackrel{\mathsf{def}}{=} \neg (\neg\phi_1 \vee
\neg\phi_2)$, ${\forall x. \phi
  \stackrel{\mathsf{def}}{=} \neg\exists x.\neg\phi}$, and $\phi_1
\Rightarrow \phi_2 \stackrel{\mathsf{def}}{=} \neg\phi_1 \vee \phi_2$.

In the sequel, we assume that formulas are \emph{closed}, i.e. they
contain no free variables, and that each variable is quantified at
most once (this can be easily ensured by using renaming). We also
assume that the variables occurring in a formula~$\phi$ are disjoint
from the variables occurring in the considered symbolic
 trace.

\medskip{}

Formulas are interpreted at some position along an execution trace as stated in Definition~\ref{def:validity}.

\begin{defi}[concrete validity]
  \label{def:validity}                 
  Let $\phi$ be a closed formula in $\Logic$, $\exec$ be a ground execution trace and~$T_0$ be a set of ground terms. We define ${\langle \exec, T_0 \rangle \models \phi}$ as:
  \[
  \begin{array}{lcl}
    \langle \exec, T_0 \rangle \models \true && \\
    \langle \exec, T_0 \rangle \models \learn(m) & \;\; \mbox{iff}\;\;
    &T_0 \cup  \K(\exec) \vdash m \\
    \langle \exec, T_0 \rangle \models \neg \phi & \mbox{iff} & \langle
    \exec, T_0 \rangle \not\models \phi \\
    \langle \exec, T_0 \rangle \models \phi_1 \vee \phi_2 & \mbox{iff} &
    \langle \exec, T_0 \rangle \models \phi_1 ~\mbox{or else}~\langle \exec,
    T_0 \rangle \models \phi_2 \\
    \langle \exec, T_0 \rangle \models \exists x. \phi & \mbox{iff} &
    \mbox{there exists a ground term $t$ s.t.}~ \langle \exec, T_0
    \rangle \models \phi\{x \mapsto t\} \\
    \langle \exec, T_0 \rangle \models \Prd(t_1, \ldots, t_n) & \;\;
    \mbox{iff} \;\; & \exec = \exec'; \,  \Prd(t_1, \ldots, t_n) \\
    \langle \exec, T_0 \rangle \models \Comp(u) & \;\; \mbox{iff} \;\; &
    T_0 \vdash \priv(u) \mbox{ or } T_0 \vdash \shk(u,v) \mbox{ for
      some $v \neq \epsilon$}\\
    \langle \exec, T_0 \rangle \models \Diamond \, \psi & \mbox{iff} &
    \exists i \in [0, \length (\exec)] \mbox{ such that } \langle \exec_i,
    T_0\rangle \models \psi
  \end{array}
  \]
Given a protocol $\Pi$ and a set of ground terms~$T_0$, we say that $\Pi \models \phi$ w.r.t.~$T_0$, if $\langle \exec, T_0\rangle \models \phi$ for all valid execution traces~$\exec$ of~$\Pi$ w.r.t. $T_0$.
\end{defi}

We now define the subset of $\Logic$ for which our result holds. We
say a formula in $\Logic$ is \emph{quantifier-free} if it does not
contain any $\exists$. A formula is \emph{modality-free} if it does
not contain any $\Diamond$.  We will only consider \emph{attack
  formulas} of the form $\exists x_1. \ldots.  \exists x_n. \phi'$
where $\phi'$ is quantifier-free, and we consider also some additional
syntactic restrictions.  {Therefore, the security formulas
  we consider are of the form $\forall x_1, \ldots, \forall x_n. \neg
  \phi'$, i.e. the negation of an attack formula.}


\begin{defi}[attack formula]
  \label{def:attack}
  An attack formula is an expression of the form 
\[
\exists x_1. \dots. \exists x_n. \phi
\]

\noindent where all the variables $x_i$ are distinct and $\phi$ is a quantifier-free formula of $\Logic$ satisfying the following conditions:
  \begin{enumerate}
  \item\label{def:attack-cond1} all subterms of $\phi$ are atomic terms with no names, i.e. $\st(\phi) \subseteq \A \cup \Xvars \cup \Yvars$,
  \item for any term $t$, $\learn(t)$ can only occur positively in $\phi$, i.e. under an even number of negations, 
  \item any variable occurs at most once in a positive status event,
  \item \label{def:attack-cond4} if $\Diamond\psi$ is a subformula of $\phi$ that occurs negatively in $\phi$, then a status event can only occur positively in $\psi$.
  \end{enumerate}
\end{defi}

\noindent As we will see next this fragment is expressive enough to model
classical security properties.


\subsection{Some security properties}
\label{subsec:ex-properties}

We now show how classical security properties like secrecy and
{several flavors of non-injective authentication properties}
can be expressed in our logic.

\subsubsection{Secrecy}
\label{subsubsec:secrecy}

The secrecy property is the inability of the intruder to learn a
message {(e.g. a nonce, a key, or a compound term)} that is specified (using a status event) as confidential. 
We will show how to specify the secrecy property for a nonce for example with a formula in $\Logic$. Let $\Pi$ be a $k$-party protocol with
\[
\Pi(j) = \lambda x_1^j.\dots \lambda x_k^j.\nu y^j_1.\dots \nu y^j_{p_j}.\evt^j_1;\dots; \evt^j_{\ell_j} \quad \text{for } 1\le j\le k.
\]
and let $y^j_h$ ($1\le j\le k$ and $1\le h\le p_j$) be the nonce variable whose instantiations should remain confidential. In order to specify that all the instances of $y^j_h$ must remain secret we build from $\Pi$, a protocol $\Pi_\Sec$ as follows. Let $\Secret$ be a predicate not occurring in $\Pi$, then
\[
\Pi_\Sec(n) =
\left\{
  \begin{array}[c]{l}
    \Pi(n) \hfill \text{for } 1\le n\le k \text{ and } n\not=j \\
    \lambda x_1^j.\dots \lambda x_k^j.\nu y^j_1.\dots \nu y^j_{p_j}. \Secret(x_1^j, \dots, x_k^j, y^j_h); \evt^j_1;\dots; \evt^j_{\ell_j} \quad\quad\quad\quad\quad \text{for } n=j
  \end{array}
\right.
\]
During an execution, the predicate $\Secret$ will link each instance $n^\sid_{y^j_h}$ of $y^j_h$ to the participants of the corresponding session $\sid$. The following property expresses that the non-compromised instances of $y^j_h$ should remain confidential
\[
\phi_\Sec = \forall x_1.\dots.\forall x_k. \forall y.\ [((\Diamond\Secret(x_1, \dots, x_k, y))\ \wedge\ \Nocomp(x_1)\ \wedge\ \dots\ \wedge\ \Nocomp(x_k))\ \Rightarrow\ \neg\learn(y)].
\]
And the following formula is the corresponding attack formula
\[
\overline{\phi_\Sec} = \exists x_1.\dots.\exists x_k. \exists y.\ [(\Diamond\Secret(x_1, \dots, x_k, y))\ \wedge\ \Nocomp(x_1)\ \wedge\ \dots\ \wedge\ \Nocomp(x_k)\ \wedge\ \learn(y)].
\]
which satisfies the 4 conditions of the definition of an attack
formula (Definition~\ref{def:attack}).  {Note that the same
  construction can be used to model the secrecy of a compound term $t$
  as seen by the agent executing the role $\Pi(j)$. For this, we
  simply add a status event $\Secret(x_1^j, \dots, x_k^j, t)$ in
  $\Pi(j)$, and keep the attack formula unchanged. The 4 conditions
  stated in Definition~\ref{def:attack} are still satisfied.}

\begin{exa}
  \label{ex:secrecy}
  Let us come back to the Needham-Schroeder protocol as presented in
  Example~\ref{ex:syntax} to illustrate this property, and let's
  specify that the nonce $y'$ generated by the responder is
  confidential. In order to do so, we build the 2-party protocol
  $\Pi_\Sec$ following the above mentioned construction, i.e.
  such that 
$\Pi_\Sec(1) = \Pi(1)$, and
  \[
  \begin{array}{rl}
    \Pi_\Sec(2) = & \lambda x_A.\lambda x_B. \nu y'.\\
                 & \Secret(x_A, x_B, y') \\
                 & \rcv(\enca(\pair{z'}{x_A},\pub(x_B)));\\
                 & \snd(\enca(\pair{z'}{y'},\pub(x_A)));\\
                 & \rcv(\enca(y',\pub(x_B)))
  \end{array}
  \]
  An attack on the secrecy of $y'$, is any valid execution trace of $\Pi_\Sec$
  that reveals to the intruder an honest instance of $y'$ (i.e. generated
  by an honest session of $\Pi_\Sec(2)$).  Formally, an attack on the secrecy of
  $y'$ is a valid execution trace of $\Pi_\Sec$ that satisfies the following
  attack formula
  \[
  \overline{\phi_\Sec} = \exists y_A.\exists y_B. \exists x.\ [(\Diamond\Secret(y_A, \dots, y_B, x))\ \wedge\ \Nocomp(y_A)\ \wedge\ \Nocomp(y_B)\ \wedge\ \learn(x)].
  \]
  
  Let us consider as initial intruder knowledge $T_0 = \{a, b, c,
  \priv(c)\}$, the scenario $\sce =
  (1,s_1)(2,s_2)(2,s_2)(2,s_2)(1,s_1)(1,s_1)$, and the function $\alpha$
  such that $\dom(\alpha) = \{s_1, s_2\}$,\break $\alpha(s_1) = (a,c)$, and
  $\alpha(s_2) = (a,b)$. We denote by $\tr$  the  symbolic trace
  associated to $\Pi_\Sec$, $\sce$, and $\alpha$.
Let $\sigma$ be the substitution such that  $\sigma = \{z^{s_1}
  \mapsto n_{y'}^{s_2}, \; {z'}^{s_2} \mapsto n_y^{s_1} \}$. The
  execution trace~$\tr\sigma$ is valid w.r.t.~$T_0$.
This execution
  corresponds to the famous attack due to Lowe~\cite{lowe96breaking},
  and formally satisfies $\overline{\phi_\Sec}$, i.e. $\langle
  T_0, \tr\sigma\rangle\ \models\ \overline{\phi_\Sec}$, 
and thus {$ \Pi_\Sec\ \not\models\ \phi_\Sec$ w.r.t. $T_0$}.
\end{exa}

We are now going to look at how to formally express authentication properties. 
\subsubsection{Aliveness}
\label{subsubsec:aliveness}

We start with the weakest notion of authentication in the hierarchy of
Lowe~\cite{Lowe97}, namely aliveness. Informally, a protocol~$\Pi$
satisfies aliveness if and only if each time a participant~$a$
finishes an honest session involving participant~$b$ (of any of the
roles of~$\Pi$), $b$ has at least partially executed one session (of
any of the roles of~$\Pi$), 
and in that sense~$b$ is alive.

In order to express this property, we need to detect in the executions of~$\Pi$, each time a session starts and ends. This can be achieved by adding status events at the beginning and the end of each role. More precisely, if we consider the $k$-party protocol $\Pi$ with
\[
\Pi(j) = \lambda x_1^j.\dots \lambda x_k^j.\nu y^j_1.\dots \nu y^j_{p_j}.\evt^j_1;\dots; \evt^j_{\ell_j} \quad \text{for } 1\le j\le k.
\]
We build the protocol $\Pi_\Aliv$ by inserting new status events as follows:
\[
\Pi_\Aliv(j) = \lambda x_1^j.\dots \lambda x_k^j.\nu y^j_1.\dots \nu y^j_{p_j}. \Start(x^j_j); \evt^j_1;\dots; \evt^j_{\ell_j}; \End(x^j_1, \dots, x^j_k) \quad \text{for } 1\le j\le k.
\]
where the predicates $\Start$ and $\End$ will mark in an execution the beginning and the end of each session, and will link together the effective participants of each session. Aliveness can then be modelled by the following formula
\[
\phi_\Aliv = 
\left\{ 
\begin{array}{c} 
 \forall y_1.\dots \forall y_k. \big[ \End(y_1, \dots, y_k)\ \wedge\ \Nocomp(y_1)\ \wedge\ \dots\ \wedge\
  \Nocomp(y_k)\ \hspace{3cm}\\
\hfill  \Rightarrow\ \Diamond \Start(y_1) \wedge \ldots \wedge
\Diamond \Start(y_k)\big]
\end{array}
\right.
\]
An attack on protocol $\Pi$ w.r.t. aliveness is thus a trace of $\Pi_\Aliv$ satisfying the following attack formula
\[
\overline{\phi_\Aliv} =
\left\{
\begin{array}{c}
\exists y_1.\dots \exists y_k.\big[ \End(y_1, \dots, y_k)\ \wedge\
\Nocomp(y_1)\ \wedge\ \dots\ \wedge\ \Nocomp(y_k)\ \hspace{3cm} \\
\hfill \wedge\ \big(\neg\Diamond \Start(y_1) \vee \ldots \vee
\neg\Diamond \Start(y_k) \big) \big]
\end{array}
\right.
\]

\begin{exa}
  \label{ex:aliveness}
  Let us come back to the Needham-Schroeder protocol as presented in Example~\ref{ex:syntax} to illustrate this property. In order to do so, we build the 2-party protocol $\Pi_\Aliv$ following the above mentioned construction, i.e. such that
  \[
  \begin{array}{rl}
    \Pi_\Aliv(1)=& \lambda x_A.\lambda x_B.\nu y.\\
    & \Start(x_A) \\
    & \snd(\enca(\pair{y}{x_A},\pub(x_B)));\\
    & \rcv(\enca(\pair{y}{z},\pub(x_A)));\\
    & \snd(\enca(z,\pub(x_B))) \\
    & \End(x_A, x_B)
  \end{array}\quad
  \begin{array}{rl}
    \Pi_\Aliv(2)=& \lambda x_A.\lambda x_B. \nu y'.\\
    & \Start(x_B) \\
    & \rcv(\enca(\pair{z'}{x_A},\pub(x_B)));\\
    & \snd(\enca(\pair{z'}{y'},\pub(x_A)));\\
    & \rcv(\enca(y',\pub(x_B))) \\
    & \End(x_A, x_B)
  \end{array}
  \]
  Now this protocol satisfies aliveness, if in every valid execution trace of $\Pi_\Aliv$ during which an agent $a$ executing an honest session of role $\Pi_\Aliv(1)$ (\emph{resp.} $\Pi_\Aliv(2)$) with agent $b$, $b$ has also initiated a session of the protocol. Formally, $\Pi$ satisfies aliveness if every valid execution trace of $\Pi_\Aliv$ satisfies the following formula; 
  \[
\phi_\Aliv = 
  \begin{array}{c}
     \forall x_A. \forall x_B. \;
        \End(x_A, x_B)\ \wedge\ \Nocomp(x_A)\ \wedge\ \Nocomp(x_B)\
        \Rightarrow\ [\Diamond \Start(x_A) \wedge \Diamond \Start(x_B)] \\
  \end{array}
  \]
Consider the symbolic trace $\tr$ associated to the scenario 
\[
\sce_\Aliv =
  (1,s_1)(1,s_1)(2,s_2)(2,s_2)(2,s_2)(1,s_1)(1,s_1)(1,s_1)(2,s_2)(2,s_2)
\]
  and the function $\alpha$ as defined in
  Example~\ref{ex:secrecy}. Actually, we have that $\langle T_0,
  \tr\sigma \rangle \models \phi_\Aliv$ using the set $T_0$ and the substitution
  $\sigma$ as defined in Example~\ref{ex:secrecy}. More generally, 
 using an automatic tool such as ProVerif~\cite{blanchet01}, one can
 prove that the 
Needham-Schroeder protocol satisfies aliveness w.r.t. the
initial intruder knowledge $T_0 = \{a, b, c, \priv(c)\}$, i.e.
$\Pi_\Aliv\models \phi_\Aliv$ w.r.t. $T_0$.
\end{exa}

\subsubsection{Weak agreement}
\label{subsubsec:weakagreement}

Weak agreement is slightly stronger than aliveness. Informally, a
protocol $\Pi$ satisfies weak agreement, if and only if each time a
participant $a$ finishes an honest session involving participant $b$ 
(of any of the roles of $\Pi$), $b$ has at least initiated a session involving $a$ (of any of the roles of $\Pi$).

Again, in order to express this property, we need to detect in the executions of $\Pi$, each time a session starts and ends, but also which participants are involved in each session that is initiated. This can be achieved by adding status events at the beginning and the end of each role. More precisely, if we consider the $k$-party protocol $\Pi$ with
\[
\Pi(j) = \lambda x_1^j.\dots \lambda x_k^j.\nu y^j_1.\dots \nu y^j_{p_j}.\evt^j_1;\dots; \evt^j_{\ell_j} \quad \text{for } 1\le j\le k.
\]
We build the protocol $\Pi_\Weak$ by inserting new status events as follows:
\[
\begin{array}{l}
\Pi_\Weak(j) = \lambda x_1^j.\dots \lambda x_k^j.\nu y^j_1.\dots \nu
y^j_{p_j}. \Start(x^j_j, x^j_1); \dots; \Start_{j,k}(x^j_j, x^j_k); \hspace{2cm}\\
\hfill \evt^j_1;\dots; \evt^j_{\ell_j}; \End_j(x^j_1, \dots, x^j_k)
\quad\quad \text{for } 1\le j\le k.
\end{array}
\]
where the predicates $\Start$ and $\End$ will mark in an execution the beginning and the end of each session, and will link together the effective participants of each session both at the beginning and the end of the session. Weak agreement can then be modelled by the following formula
\[
\begin{array}{c}
  \phi_\Weak = \forall y^1_1.\dots \forall y^1_k. \dots \forall y^k_1.\dots \forall y^k_k. \\[2mm]
  \underset{j\in \{1,\ldots, k\}}{\bigwedge}
  \left[
    \begin{array}[c]{l}
      \End_j(y^j_1, \dots, y^j_k)\ \wedge\ \Nocomp(y^j_1)\ \wedge\
      \dots\ \wedge\ \Nocomp(y^j_k)\ \Rightarrow\ \underset{i\in
        \{1,\ldots, k\}, i\not=j}{\bigwedge}\ \Diamond \Start(y^j_i, y^j_j)
    \end{array}
  \right]
\end{array}
\]
An attack on protocol $\Pi$ w.r.t. aliveness is thus a trace of $\Pi_\Weak$ satisfying the following attack formula
\[
\begin{array}{c}
  \overline{\phi_\Weak} \equiv \exists y_1.\dots \exists y_k.\ \End_j(y_1, \dots, y_k)\ \wedge\ \Nocomp(y_1)\ \wedge\ \dots\ \wedge\ \Nocomp(y_k)\ \wedge\  \neg\Diamond \Start(y_i, y_j)
\end{array}
\]
for some $j, i\in \{1,\ldots, k\}$ with $i\not=j$.

\begin{exa}
  \label{ex:weakagreement}
  Let us come back to the Needham-Schroeder protocol as presented in Example~\ref{ex:syntax} to illustrate this property. In order to do so, we build the 2-party protocol $\Pi_\Weak$ following the above mentioned construction, i.e. such that:
  \[
  \begin{array}{rl}
    \Pi_\Weak(1)=& \lambda x_A.\lambda x_B.\nu y.\\
    & \Start(x_A, x_A) \\
    & \Start(x_A, x_B) \\
    & \snd(\enca(\pair{y}{x_A},\pub(x_B)));\\
    & \rcv(\enca(\pair{y}{z},\pub(x_A)));\\
    & \snd(\enca(z,\pub(x_B))) \\
    & \End_1(x_A, x_B)
  \end{array}\quad
  \begin{array}{rl}
    \Pi_\Weak(2)=& \lambda x_A.\lambda x_B. \nu y'.\\
    & \Start(x_B, x_A) \\
    & \Start(x_B, x_B) \\
    & \rcv(\enca(\pair{z'}{x_A},\pub(x_B)));\\
    & \snd(\enca(\pair{z'}{y'},\pub(x_A)));\\
    & \rcv(\enca(y',\pub(x_B))) \\
    & \End_2(x_A, x_B)
  \end{array}
  \]
  
  Now this protocol satisfies weak agreement, if in every valid
  execution trace of~$\Pi_\Weak$ during which an agent~$a$ executing
  an honest session of role~$\Pi_\Weak(1)$ (\emph{resp.}
  $\Pi_\Weak(2)$) with agent~$b$, $b$ has also initiated a session of
  the protocol involving agent~$a$. In other words, $\Pi$ admits an
  attack w.r.t. weak agreement if there exists a valid
  execution trace of $\Pi_\Weak$ that satisfies the 
following formula:
  \[
  \begin{array}{c}
    \overline{\phi_\Weak} \equiv \exists x^1_A. \exists x^1_B. \exists x^2_A. \exists x^2_B. \\ [2mm]
    \left[
      \begin{array}[c]{c}
        \End_1(x^1_A, x^1_B)\ \wedge\ \Nocomp(x^1_A)\ \wedge\ \Nocomp(x^1_B)\ \wedge\ \neg\Diamond \Start(x^1_B, x^1_A) \\
        \vee \\
        \End_2(x^2_A, x^2_B)\ \wedge\ \Nocomp(x^2_A)\ \wedge\ \Nocomp(x^2_B)\ \wedge\ \neg\Diamond \Start(x^2_A, x^2_B)
  \end{array}
    \right]
  \end{array}
  \]

  \noindent Let's consider as initial intruder knowledge $T_0 = \{a, b, c,
  \priv(c)\}$, the scenario
  \[
  \sce =
  (1,s_1)(1,s_1)(1,s_1)(2,s_2)(2,s_2)(2,s_2)(2,s_2)(1,s_1)(1,s_1)(1,s_1)(2,s_2)(2,s_2)(2,s_2),
  \]
  the function~$\alpha$ such that $\dom(\alpha) = \{s_1, s_2\}$, $\alpha(s_1) =
  (a,c)$, and $\alpha(s_2) = (a,b)$, and the substitution $\sigma = \{z^{s_1}
  \mapsto n_{y'}^{s_2}, \; {z'}^{s_2} \mapsto n_y^{s_1} \}$. The execution
  trace~$\tr\sigma$ is valid w.r.t.~$T_0$ with $\tr$ the symbolic trace
  associated to $\sce$ and $\alpha$. This execution corresponds to the famous
  attack due to Lowe~\cite{lowe96breaking}, and formally satisfies
  $\overline{\phi_\Weak}$, i.e. $\langle T_0, \tr\sigma\rangle\ \models\
  \overline{\phi_\Weak}$, and thus $\Pi_\Weak\ \not\models\ \phi_\Weak$
  w.r.t.~$T_0$.

\end{exa}

\section{Transformation of protocols}
\label{sec:transfo}

In Section~\ref{subsec:transfo} we define our transformation before we
state our main result in Section~\ref{subsec:result} whose proof is
postponed to Part~II.  

%
%

\subsection{Our transformation}
\label{subsec:transfo}

Given an input protocol $\Pi$, our transformation will compute a new
protocol~$\widetilde{\Pi}$ which {consists in} two phases. During the first
phase, the protocol participants try to agree on some common,
dynamically generated, session identifier~$\tau$. For this, each
participant sends a freshly generated nonce~$N_i$ together with his
identity~$A_i$ to all other participants. (Note that if broadcast is
not practical or if not all identities are known to each participant,
the message can be sent to some of the participants who forwards the
message.) At the end of this preamble, each participant computes a
session identifier: $\tau = \langle \pair{A_1}{N_1}, \ldots,
\pair{A_k}{N_k} \rangle $.  Note that an active attacker may interfere
with this initialization phase and may intercept and replace some of
the nonces. Hence, the protocol participants do not necessarily agree
on the same session identifier~$\tau$ after this preamble. In fact,
each participant computes his own 
session identifier, say~$\tau_j$.
During the second phase, each participant~$j$ executes the original protocol in
which the dynamically computed identifier is used for tagging each application
of a cryptographic primitive.  In this phase, when a participant opens an
encryption, he checks that the tag is in accordance with the nonces he received
during the initialization phase. In particular, he can test the presence of his
own nonce.

The transformation, using the informal Alice-Bob notation, is described below and relies on the tagging operation that is formally defined in Definition~\ref{def:tagging}. 

\begin{minipage}{3.5cm}
  \[
  \Pi = 
  \left\{
    \begin{array}{ll}
      A_{i_1} \to A_{j_1}:& m_1\\
      \multicolumn{2}{c}\vdots\\
      A_{i_\ell} \to A_{j_\ell}:& m_\ell 
    \end{array}
  \right.
  \]
\end{minipage}
\quad\quad\quad\quad
\begin{minipage}{5cm}
\[
\widetilde{\Pi} = 
\left\{
\begin{array}{llcll}
\multicolumn{2}{l}{\mbox{Phase 1}}
&\;\;\;\;\;&
\multicolumn{2}{l}{\mbox{Phase 2}}\\[2mm]
A_1 \to \mathit{All}:& \langle A_1, N_1\rangle
&& 
A_{i_1} \to A_{j_1}:& \lbrack m_1 \rbrack_{\tau}\\
\multicolumn{2}{c}\vdots & \multicolumn{3}{c}\vdots\\
A_k \to \mathit{All}:& \langle A_k, N_k \rangle
&&
A_{i_\ell} \to A_{j_\ell}:& \lbrack m_\ell \rbrack_\tau\\[3mm]
\multicolumn{5}{l}{\mbox{where}\; \tau = \langle \montag_1, \ldots, \montag_k 
   \rangle \; \mbox{with $\montag_i = \langle A_i, N_i
    \rangle$}}
\end{array}
\right.
\]
\end{minipage}

\medskip{}

Note that, the Alice-Bob notation only represents what happens in a normal execution, i.e. with no intervention of the attacker. Of course, in such a situation, the participants agree on the same session identifier~$\tau$ used in the second phase.

\begin{defi}[$k$-tag, $k$-tagging]
  \label{def:tagging} 
 A \emph{$k$-tag} is a term $\langle \langle a_1, v_1 \rangle, \ldots, \langle
  a_k, v_k\rangle \rangle$ where each $a_i \in \A$ and each $v_i$ is a term.
Let $u$ be a term and $\montag$ be a $k$-tag.  The $k$-tagging of~$u$ with~$\montag$,
  denoted~$[u]_\montag$, is inductively defined as follows:
  \[
  \begin{array}{lclr}
    \lbrack\pair{u_1}{u_2}\rbrack_\montag &\; =\; & \pair{\lbrack u_1\rbrack_\montag}{\lbrack u_2\rbrack_\montag} \\
    \lbrack \f(u_1,u_2)\rbrack_\montag & = & \f(\pair{\montag}{\lbrack u_1\rbrack_\montag},\lbrack u_2\rbrack_\montag) & \;\;\;\;\;\;\;~\mbox{for}~\f\in\{\enc,\enca,\sign\} \\
    \lbrack \h(u_1)\rbrack_\montag & = & \h(\pair{\montag}{\lbrack u_1\rbrack_\montag}) \\
    \lbrack u\rbrack_\montag & = & u & \mbox{otherwise} \\
  \end{array}
  \]
  We say that a term $t$ is \emph{$k$-tagged} if $u|_{1.1}$ is a $k$-tag for any $u \in \encst(t)$.
\end{defi}

These notions are extended to events and sequences of events as expected. We are now able to formally define our transformation.

\begin{defi}[protocol transformation]
  \label{def:transfo}
  Let~$\Pi$ be a $k$-party protocol such that 
  \[
  \Pi(j) = \lambda x^j_1\dots\lambda x^j_k.\nu y^j_1\dots\nu y^j_{p_j}.\seqevt^j \hspace{0.5cm} \mbox{for}\; 1\leq j\leq k.
  \]
  \noindent and the variables $z_i^j$ ($1 \leq i,j \leq k$) do not appear in~$\Pi$ (which can always be ensured by renaming variables in $\Pi$).  The transformed protocol $\widetilde{\Pi}$ is a $k$-party protocol defined as follows:
  \[
  \widetilde{\Pi}(j) =  \lambda x^j_1\dots\lambda x^j_k.\nu y^j_1\dots\nu y^j_{p_j}. \nu z^j_j. \widetilde{\Pi}^{\init}(j) ; \lbrack\seqevt^j\rbrack_{\tau_j}\hspace{0.5cm} \mbox{for}\; 1\leq j \leq k
  \]
  \noindent where   $$\widetilde{\Pi}^{\init}(j) = \rcv(u_1^j); \ldots ; \rcv(u_{j-1}^j); \snd(u_j^j) ; \rcv(u_{j+1}^j) ; \ldots ; \rcv(u_k^j)$$ and
  $\tau_j = \langle u_1^j, \ldots, u_k^j\rangle \mbox{ with }u_i^j = \langle x_i^j, z_i^j \rangle$.
\end{defi}

In the above definition, the protocol $\widetilde{\Pi}^{\init}$ models the
initialization phase and the variables~$z_i^j$ correspond to the
nonces that are generated and exchanged during this phase. In
particular for the role~$j$, the variable~$z_j^j$ is a freshly
generated nonce while the other variables~$z_i^j$,~${i \not = j}$, are
expected to be bound to the other participant's nonces in the receive
events. Remember also that the variables~$x_i^j$ are the role
parameters which correspond to the agents. The tag computed by the
$j\textsuperscript{th}$ role in our transformation consists in the
concatenation of the $k$ names of the agents involved in the protocol,
together with the $k-1$ terms received during the initialization phase
as well as the fresh nonce generated by the role~$j$ itself,
i.e.~$z_j^j$. We illustrate this transformation on the
Needham-Schroeder protocol introduced in Section~\ref{sec:model}.

\begin{exa} 
  \label{ex:ns}
  Consider the Needham-Schroeder protocol described in Example~\ref{ex:syntax}. Applying our transformation we obtain a 2-party protocol $\widetilde{\Pi}$.  The role $\widetilde{\Pi}(2)$ is described below. The role $\widetilde{\Pi}(1)$ can be obtained in a similar way.
  \[
  \begin{array}{rl}
    \widetilde\Pi(2)= & \lambda x_A\lambda x_B. \nu y'. \nu z_B. \rcv(\langle x_A, z_A \rangle);  \snd(\langle x_B, z_B\rangle); \\
    & \rcv(\enca(\pair{\tau}{\pair{z'}{x_A}}, \pub(x_B))); \\
    & \snd(\enca(\pair{\tau}{\pair{z'}{y'}}, \pub(x_A))); \\
    & \rcv(\enca(\pair{\tau}{y'}, \pub(x_B)))
  \end{array}
  \]
  
  \noindent where $\tau = \langle \pair{x_A}{z_A},
  \pair{x_B}{z_B}\rangle$. Note that Lowe's famous man-in-the-middle
  attack~\cite{lowe96breaking} described in
  Example~\ref{ex:weakagreement} does not exist anymore on
  $\widetilde{\Pi}$. 
\end{exa}


\subsection{Main theorem}
\label{subsec:result}

Roughly, our result states that if the compiled protocol admits an
attack that may involve several sessions, then there exists an attack
which only requires a bounded number of sessions of each role, and the
bound only depends on the security formula under study. More formally,
we define the size of a formula as follows:

\begin{defi}[size of a formula]\label{def:sizePhi}
  Let $\phi$ be a  formula. The \emph{size} of $\phi$, denoted $\|\phi\|$, is defined as follows:
  \[
  \begin{array}{rcl c rcl}
    \|\true\| & \stackrel{\mathsf{def}}{=} & 0 &    \ \ \ \ & \|\true\|^- & \stackrel{\mathsf{def}}{=} & 0 \\
    \|\Prd(t_1, \dots, t_n)\| & \stackrel{\mathsf{def}}{=} & 1 &    & \|\Prd(t_1, \dots, t_n)\|^- & \stackrel{\mathsf{def}}{=} & 1 \\
    \|\learn(t)\| & \stackrel{\mathsf{def}}{=} & 0 & &    \|\learn(t) \|^- & \stackrel{\mathsf{def}}{=} & 0 \\
    \|\mathsf{C}(t)\| & \stackrel{\mathsf{def}}{=} & 0 & &    \|\mathsf{C}(t)\|^- & \stackrel{\mathsf{def}}{=} & 0 \\
    \|\neg\phi\| & \stackrel{\mathsf{def}}{=} &    \|\phi\|^- & & \|\neg\phi\|^-    & \stackrel{\mathsf{def}}{=} & \|\phi\| \\
    \|\phi_1 \vee \phi_2\| & \stackrel{\mathsf{def}}{=} &    \max\{\|\phi_1\|, \|\phi_2\|\} & &
    \|\phi_1 \vee \phi_2\|^- & \stackrel{\mathsf{def}}{=} &     \|\phi_1\|^- + \|\phi_2\|^- \\
   \|\exists x.\ \phi\| & \stackrel{\mathsf{def}}{=} &    \|\phi\| & & \|\exists x.\ \phi\|^-    & \stackrel{\mathsf{def}}{=} & \|\phi\|^- \\
    \|\Diamond\phi\| & \stackrel{\mathsf{def}}{=} &    \|\phi\| & & \|\Diamond\phi\|^-  & \stackrel{\mathsf{def}}{=} & 0
  \end{array}
  \]
\end{defi}

{Intuitively, when an attack trace involves several sessions
  of each role, not all the sessions are necessary to mount the
  attack. We only need to keep those sessions that witness the
  satisfiability of the attack formula $\phi$.  By definition of an
  attack formula (see Definition~\ref{def:attack}), we know that each
  variable occurring in $\phi$ also occurs in a positive status
  events. Thus, there is no need to take into account the number of
  occurrences of $\learn(t)$ in the previous definition.}
 
\medskip{}

\begin{exa}
\label{ex:size}
{Note that $\|\phi_1 \wedge \phi_2\| = \|\phi_1\| +
\|\phi_2\|$.} Considering the attack formulas $\overline{\phi_\Sec}$, $\overline{\phi_\Aliv}$,
and $\overline{\phi_\Weak}$ as defined in Section~\ref{subsec:ex-properties}, we
have that $\|\overline{\phi_\Sec}\| = \|\overline{\phi_\Aliv}\| =  \|\overline{\phi_\Weak}\| = 1$.
\end{exa}

\noindent We are now able to state our main transference result.

\newsavebox{\theotransfoun}
\sbox{\theotransfoun}{\vbox{%
\begin{thm}
  \label{theo:transfo1}
  Let~$\Pi$ be a $k$-party protocol, $\widetilde{\Pi}$ be its
  corresponding transformed protocol and $T_0$ be a set of ground
  atoms such that $\lgAtom(\Pi) \cap \plaintext(\Pi) \subseteq T_0
  \cup \Kcal_\epsilon$. Let $\phi$ be an attack formula such that
  $\widetilde{\Pi} \models \phi$ w.r.t. $T_0$.  There exists a
  valid execution trace~$\exec$ of~$\widetilde{\Pi}$ such that:
  \begin{center}
    $\langle \exec,T_0 \rangle \models \phi$ and $\exec$ involves at
    most $\|\phi\|$ sessions of each role.
  \end{center} 
\end{thm}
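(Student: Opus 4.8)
The plan is to follow the two-step strategy sketched in the outline. By the hypothesis there is a valid execution trace $\exec$ of $\widetilde\Pi$ w.r.t.\ $T_0$ with $\langle\exec,T_0\rangle\models\phi$; this trace may interleave arbitrarily many sessions, and the goal is to extract from it an attack that uses at most $\|\phi\|$ sessions of each role.

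First I would put $\exec$ into a canonical shape, as done in Section~\ref{sec:1-step}. That section defines a transformation on executions, rewriting the exchanged messages so that every cryptographic subterm carries a syntactically well-behaved $k$-tag, and the first task is to prove that the resulting trace $\exec_1$ is (i)~again a valid execution trace of $\widetilde\Pi$ w.r.t.\ $T_0$; (ii)~\emph{well-formed}, which intuitively means that the dynamic tags genuinely separate sessions --- a ciphertext produced by an honest session of role~$j$ is tagged by that session's $\tau_j$, which contains the fresh nonce $n_{z^j_j}^{\sid}$ generated there, so it can only be opened by a session agreeing on exactly that tag, hence only within one and the same ``protocol run'' (one session per role, all sharing the tag); and (iii)~still a witness of $\phi$, since the rewriting leaves the status events and the derivability judgements behind $\learn(\cdot)$ and $\Comp(\cdot)$ untouched.

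Next I would prune $\exec_1$, which is the content of Section~\ref{sec:2-step}. Since $\phi$ is an attack formula, $\langle\exec_1,T_0\rangle\models\phi$ is witnessed by an instantiation of the existential variables together with, for every positive status subformula used in the evaluation, one occurrence of that status event in $\exec_1$; by the definition of $\|\cdot\|$ the number of these obligations is at most $\|\phi\|$ --- a disjunction needs only one branch (the $\max$), a conjunction needs both (the sum), and a negatively occurring $\Diamond\psi$ contributes nothing, which is exactly why Condition~(\ref{def:attack-cond4}) of Definition~\ref{def:attack} is imposed. Each obligation lives in a single session; I would keep that session together with the at most one companion session of every other role that, by well-formedness, belongs to its run, and discard everything else, obtaining a sub-execution $\exec_2$. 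One then checks that $\exec_2$ is still valid: a receive event of a kept honest session either consumes a ciphertext which, by well-formedness, was produced by a kept session of the same run, or consumes a term the attacker can already build from $T_0$ and the messages sent by the kept sessions --- here Lemma~\ref{lem:plaintext} together with the hypothesis $\lgAtom(\Pi)\cap\plaintext(\Pi)\subseteq T_0\cup\Kcal_\epsilon$ ensures that the pruning removes no long-term key the attacker would have needed. One also checks $\langle\exec_2,T_0\rangle\models\phi$: the retained status events keep the positive obligations true, deleting sessions can only make subformulas $\neg\Diamond\psi$ ``more true'', and $\Comp$, $\Nocomp$ and $\learn$ depend only on $T_0$ and on knowledge that has been preserved. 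Since each of the at most $\|\phi\|$ obligations contributes at most one session to any fixed role, $\exec_2$ involves at most $\|\phi\|$ sessions of each role; putting these facts together is the job of Section~\ref{sec:3-step}.

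The hard part is the interface between the two steps: well-formedness must be strong enough that the pruning preserves validity, yet weak enough that \emph{every} attack trace can be brought into that shape by the canonicalization of the first step. Concretely, the crux is to show that whenever a kept session opens an honestly produced ciphertext, the producing session is itself kept; this is precisely where the freshness of $z^j_j$ inside $\tau_j$ is used, ruling out that a ciphertext of one run be replayed, with a matching tag, into a different run, while dishonest sessions and attacker-synthesised ciphertexts have to be treated separately, the latter via the plaintext analysis of Lemmas~\ref{lem:plaintext.deducibility} and~\ref{lem:plaintext}.
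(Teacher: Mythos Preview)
Your proposal is correct and follows essentially the same approach as the paper: the two-step strategy of (i) canonicalizing into a well-formed trace via the $\overline{\,\cdot\,}$ transformation (Propositions~\ref{prop:bar.valid} and~\ref{prop:bar.attack}) and then (ii) restricting to the witness sessions closed under $\sameTagas$ (Propositions~\ref{pro:validity-restriction} and~\ref{pro:attacktildepibis}, with Lemma~\ref{lem:boundingWitness} giving the bound) is exactly what the paper does. Your identification of the crux --- that well-formedness plus the freshness of $z^j_j$ forces same-role sessions to carry distinct tags, so the $\sameTagas$-closure of at most $\|\phi\|$ witness sessions contributes at most $\|\phi\|$ sessions per role --- and of where the plaintext hypothesis is used (to rule out long-term key leakage, the Claim inside Proposition~\ref{pro:attacktildepibis}) are both spot on.
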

}}

\noindent\usebox{\theotransfoun}

Applying our result, we can now establish that if a protocol built
according to our transformation admits an attack on secrecy
(resp. aliveness, weak agreement), then it admits an attack that
involves at most one session of each role.  The situation is however
slightly more complicated than it may seem at first sight. As we have
an infinite number of agent names there is an infinite number of
sessions, which one would need to verify separately. Actually we can
avoid this combinatorial explosion thanks to the following well-known
result~\cite{esop03journal}: when verifying secrecy properties it is
sufficient to consider two agents (an honest agent and a dishonest
one). Hence, using this result, we can instantiate all the parameters
using only two agent names. Similar reduction results also exist for
authentication properties (see~\cite{esop03journal}).

{Note that we only consider protocols whose long-term secret
  keys do not occur in plaintext position. This assumption is required
  to ensure that the ``small scenario'', (\emph{i.e.}, the one that
  involves only $\| \phi \|$ sessions of each role) will violate the
  same security property~$\phi$. We may actually relax this assumption
  if we consider an execution trace that reveals such a long-term key
  as a violation of the security property as well. The result is
  stated in this way in~\cite{arapinis-these2008}.}

Actually, for the security properties presented in the previous
section, we can go even further and only consider one \emph{honest}
session of each role.



\newsavebox{\corsecrecy}
\sbox{\corsecrecy}{\vbox{%
    \begin{corollary}
      \label{cor:secrecy}
      Let~$\Pi$ be a $k$-party protocol, $\Pi_\Sec$ (respectively,
      $\Pi_\Aliv$, $\Pi_\Weak$) be the annotated protocol for modeling
      secrecy (respectively aliveness and weak agreement) as defined
      in Section~\ref{subsubsec:secrecy}, and $\widetilde\Pi_\Sec$
      (respectively, $\widetilde\Pi_\Aliv$, $\widetilde\Pi_\Weak$) the
      corresponding transformed protocol. Let $T_0$ be a set of ground
      atoms such that $\lgAtom(\Pi) \cap \plaintext(\Pi) \subseteq T_0
      \cup \Kcal_\epsilon$ and $\overline{\phi_\Sec}$ (respectively
      $\overline{\phi_\Aliv}$, $\overline{\phi_\Weak}$) an attack
      formula against secrecy (respectively aliveness and weak
      agreement) as defined in Section~\ref{subsubsec:secrecy}.  For
      ${\sf X} \in \{ \Sec, \Aliv, \Weak \}$ we have that if
      $\widetilde{\Pi}_{\sf X} \models \overline{\phi_{\sf X}}$
      w.r.t. $T_0$ then there exists a valid execution trace
      $\exec$ of $\widetilde\Pi_{\sf X}$ such that:
      \begin{center}
        $\langle \exec,T_0 \rangle \models \overline{\phi_{\sf X}}$
        and $\exec$ involves at most one \emph{honest} session of each
        role.
      \end{center} 
    \end{corollary}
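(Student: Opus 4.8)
The plan is to obtain the corollary by specialising Theorem~\ref{theo:transfo1} to the attack formulas of Section~\ref{subsubsec:secrecy}, using the size computation of Example~\ref{ex:size}, and then adding one reduction step that discards dishonest sessions. First I would check that Theorem~\ref{theo:transfo1} applies to $\Pi_{\sf X}$ and $\widetilde\Pi_{\sf X}$ for ${\sf X}\in\{\Sec,\Aliv,\Weak\}$: the annotations of Section~\ref{subsubsec:secrecy} only add status events whose arguments are agent variables and nonce variables, so they introduce no new cryptographic subterm and no new long-term key in plaintext position; hence $\lgAtom(\Pi_{\sf X})\cap\plaintext(\Pi_{\sf X}) = \lgAtom(\Pi)\cap\plaintext(\Pi)\subseteq T_0\cup\Kcal_\epsilon$, and each $\overline{\phi_{\sf X}}$ is an attack formula in the sense of Definition~\ref{def:attack}. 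Since $\|\overline{\phi_\Sec}\| = \|\overline{\phi_\Aliv}\| = \|\overline{\phi_\Weak}\| = 1$ by Example~\ref{ex:size}, Theorem~\ref{theo:transfo1} gives a valid execution trace $\exec$ of $\widetilde\Pi_{\sf X}$ with $\langle\exec,T_0\rangle\models\overline{\phi_{\sf X}}$ that involves at most one session of each role.

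Next I would observe that the session which issues the status event witnessing the $\Diamond$ of $\overline{\phi_{\sf X}}$ is necessarily honest. By construction, in role $j$ of $\Pi_{\sf X}$ the arguments of $\Secret$ (resp.\ $\End_j$) are exactly the role parameters $x^j_1,\dots,x^j_k$ (together, in the secrecy case, with the protected nonce variable); so in $\exec$ this event is instantiated by the participants $\alpha(\sid)=(a_1,\dots,a_k)$ of the session $\sid$ that emitted it. As $\overline{\phi_{\sf X}}$ contains the conjuncts $\Nocomp(a_1),\dots,\Nocomp(a_k)$, we get $T_0\not\vdash\priv(a_i)$ and $T_0\not\vdash\shk(a_i,v)$ for every $i$ and every $v\neq\epsilon$, which is exactly the statement that $\sid$ is honest w.r.t.\ $\alpha$ and $T_0$.

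Finally --- and this is the step I expect to be the main obstacle --- I would argue that the at most $k-1$ remaining sessions of $\exec$ can be removed, leaving a trace that contains only honest sessions, at most one per role, and still satisfies $\overline{\phi_{\sf X}}$. The intuition is that a dishonest session is dispensable: its status events are not the witness (that session is honest), and deleting them can only remove $\Start$ events, which makes each conjunct $\neg\Diamond\Start(\cdot)$ easier to satisfy; moreover its send events carry only messages the attacker could already obtain, either by simulating the session with the disclosed long-term keys, or because in the transformed protocol these messages are tagged with that session's own dynamic tag and hence can be consumed by no other session. Concretely I would process dishonest sessions one at a time, delete their events, and replace the nonces they generated by fresh attacker names from $\Nces_\epsilon$ so that the residual knowledge is unchanged up to renaming; one then checks, with the help of Lemma~\ref{lem:plaintext} and Lemma~\ref{lem:plaintext.deducibility}, that validity (Definition~\ref{def:validexec}) is preserved and that $\langle\exec',T_0\rangle\models\overline{\phi_{\sf X}}$ still holds. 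The delicate point is to make ``the attacker could already obtain these messages'' precise for an arbitrary transformed protocol, i.e.\ to rule out that a dishonest session feeds an honest one something it genuinely needs; this is where the hypothesis on $T_0$ and, crucially, the tagging discipline of $\widetilde\Pi$ (an honest session accepts only ciphertexts carrying its own session-dependent tag, which contains its freshly generated nonce $z^j_j$) are used.
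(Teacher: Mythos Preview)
Your first two paragraphs are fine and match the paper. The divergence is in the third paragraph: you apply Theorem~\ref{theo:transfo1} as a black box and then try to \emph{remove} dishonest sessions by a simulation argument. The paper does something simpler and structurally different: it does not invoke Theorem~\ref{theo:transfo1} as a statement but \emph{reopens its proof}. In that proof the small trace is $\overline{\exec}|_S$ where
\[
S=\{\sid\mid \sid\in\sameTagas(\overline{\exec},\sid_0)\}
\]
and $\sid_0$ is the unique witness session. The crucial observation, which you never use, is that the \emph{agent identities are part of the tag}: by Definition~\ref{def:transfo} the expected tag of any session has the form $\langle\langle a_1,v_1\rangle,\ldots,\langle a_k,v_k\rangle\rangle$ where $a_1,\ldots,a_k$ are that session's parameters. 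Hence every $\sid\in S$ has the \emph{same} participant tuple as $\sid_0$, and since $\sid_0$ is honest (your second paragraph), every session in $S$ is honest automatically. No removal step is needed at all.

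Your proposed simulation argument is therefore unnecessary, and as you yourself flag, it is also the weak point: making ``the attacker could already obtain these messages'' precise for an arbitrary dishonest session would force you to redo a variant of the well-formedness and restriction machinery of Sections~\ref{sec:1-step}--\ref{sec:2-step}. Moreover, once you treat Theorem~\ref{theo:transfo1} as a black box you lose the information that the surviving sessions all share the same expected tag, which is exactly what makes the honesty conclusion immediate. The fix is short: instead of your third paragraph, follow the proof of Theorem~\ref{theo:transfo1}, exhibit $S$ explicitly via $\sameTagas$, and conclude honesty of all of $S$ from the presence of the identities in the tag.
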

  }}

\noindent\usebox{\corsecrecy}

\subsection{{Alternative} ways of tagging protocols.}

{Our transformation is computationally light as it does not add any cryptographic application. However, it increases significantly the size of messages to be encrypted and signed. As an alternative, we may choose to hash the tags. Our results still hold in this setting.}

We have also considered an alternative, slightly different transformation that does not include the identities in the tag, i.e., the tag is simply the sequence of nonces. Our main result, Theorem~\ref{theo:transfo1}, still holds as the proof does not use the presence of the identities. However, the stronger results presented on particular properties stated in Corollary~\ref{cor:secrecy}, do not hold anymore, as 
the proof crucially relies on the presence of the agent names in the tag. When omitting identities, even for secrecy, we need to additionally check for 
attacks that involve a session engaged with the attacker. Indeed, on the example of the Needham-Schroeder protocol the man-in-the-middle attack is not prevented by this weaker tagging scheme. However, the result requires one to also consider one dishonest session for each role, hence including the attack scenario.  In both cases, it is important for the tags to be \emph{collaborative}, i.e. all participants do contribute by adding a 
fresh nonce.

\newpage

\null \hfill {\large\bf  --- PART II: \textsc{Proof of our reduction result} ---} \hfill\null
\bigskip{}

In this part, we give an overview of the proof of our reduction result stated in
Theorem~\ref{theo:transfo1}. Assume that our protocol $\widetilde{\Pi}$ admits
an attack.
\begin{enumerate}
\item We first show that there is an attack on a \emph{well-formed} execution
  trace (Section~\ref{sec:1-step}). In a well-formed execution trace (see
  Definition~\ref{def:wfexec}), terms are necessarily tagged with the expected
  tag, i.e. the tag computed during the initialization phase. Moreover,
  only names coming from sessions tagged in the same way can be used in the
  events of those sessions.  In order to prove this, we define a transformation
  $\overline{\; \cdot \;}$ that transforms an execution trace {to} a well-formed
  one by abstracting some subterms (those that are not tagged properly using the
  expected tag) by fresh nonces. We show that this transformation preserves the
  validity of the trace (Proposition~\ref{prop:bar.valid}) as well as the
  satisfiability of the attack formula under study
  (Proposition~\ref{prop:bar.attack}).

\item Then, given a set of sessions $S$ and a valid and well-formed execution
  $\exec$ that satisfies the attack formula, we show that $\exec|_S$,
  i.e. the restriction of $\exec$ to the events coming from a session in
  $S$ is still an execution satisfying the attack formula. Since messages coming
  from one session can be used to build a message for another session, this can
  only be achieved by requiring some conditions on $S$. Basically, to ensure the
  validity of the execution $\exec|_S$, we have to ensure that sessions that
  share the same tag are either all in $S$ or none of them is in~$S$ (see
  Proposition~\ref{pro:validity-restriction}). Then, to ensure the
  satisfiability of the attack formula, we have to keep enough sessions but we
  can bound a priori the number of sessions that is needed to mount an attack
  (see Proposition~\ref{pro:attacktildepibis}).
\end{enumerate}

\section{First step: towards a well-formed execution trace}
\label{sec:1-step}

In this section, we formally define our notion of \emph{well-formedness} and we
propose a transformation that allows us to transform a trace $\exec$ into a
well-formed one $\overline{\exec}$ (Section~\ref{subsec:1-step-transfo})
preserving its validity (Section~\ref{subsec:validity}) and the satisfiability
of the attack formula (Section~\ref{subsec:sat}).

\subsection{Well-formed}

The idea behind our notion of well-formedness is to ensure that each
term will be properly tagged. Basically, this means that each term has to be
tagged with its expected tag, i.e. the one computed during the
initialization phase of the protocol (phase~$1$). 
From now on, when we
consider a trace $\exec$ issued from a protocol~$\widetilde{\Pi}$, we assume that
the events occurring in $\exec$ are annotated with their session
identifier, and we write $\exec = [\evt_1^{\sid_1}; \ldots ;
\evt_\ell^{\sid_\ell}]$ when we want to refer to these annotations explicitly.

{The transformation that we consider will abstract some
  subterms by fresh names from the intruder's knowledge (\emph{i.e.}
  names in~$\Nces_\epsilon$). Those names will be denoted by
  $n^{\epsilon,S}_t$ where $S$ is set of session identifiers, and $t$
  is a term. Intuitively, such a name will be used to abstract the
  subterm $t$ when used in an event from a session $\sid \in S$.  We
  assume that those names (which constitute an infinite subset of
  $\Nces_\epsilon$) are not used anywhere else. In particular, they do
  not occur in the execution trace before applying our
  transformation.}

\begin{defi}[$\tagt(\exec, \sid)$]\label{def:tagt}
  Let $\Pi$ be a $k$-party protocol and let $\exec = [\evt_1^{\sid_1};
  \ldots ; \evt_\ell^{\sid_\ell}]$ be an execution trace of
  $\widetilde{\Pi}$. Let $\sid$ be a session identifier and
  $[\evt_{i_1}^{\sid}; \ldots ; \evt_{i_h}^{\sid}]$ (with $1 \leq i_1
  < \ldots < i_h \leq \ell$) be the sequence of communication events in
  $\exec$ that are annotated with $\sid$. We define the \emph{expected
    tag of a session~$\sid$ in $\exec$} as
  \begin{itemize}
  \item $\tagt(\exec, \sid) = \bot$ when $h < k$,
  \item $\tagt(\exec, sid) = \langle m_1, \dots, m_k\rangle$
    otherwise, where for all $j \in \{1, \dots, k\}$, $m_j$ is such
    that $\evt^{\sid}_{i_j} = \rcv(m_j)$ or $\evt^{\sid}_{i_j} =
    \snd(m_j)$.
  \end{itemize}
\end{defi}


Roughly, the expected tag associated to a session $\sid$ is the one
obtained by putting together the messages that occur in the $k$ first
communication events annotated with $\sid$ that occur in $\exec$. When
those events do not exist, the expected tag of $\sid$ is undefined.
We define $\tagst(\exec)$ to denote the set of expected tags that
occur in the trace $\exec$. More formally, we have that:
\[
\tagst(\exec) = \underset{\sid}{\bigcup} \; {\{\tagt(\exec, \sid)\}}.
\]

Since a session is the execution of one role, it is likely that
several sessions will have the same expected tag. However, note that
sessions that correspond to the execution of the same role
(e.g. the $j$\textsuperscript{th} role) cannot have the same
expected tag since the tag will contain a fresh nonce at its
$j$\textsuperscript{th} position.

\begin{defi}[$\sameTagas(\exec, \sid)$]\label{def:sametagas}
  Let $\Pi$ be a $k$-party protocol and let $\exec$ be an execution trace (not
  necessarily valid) of $\widetilde{\Pi}$. We define $\sameTagas(\exec,\sid)$ to
  be the set of sessions sharing the same expected tag with the session~$\sid$,
  i.e.
  \[
  \sameTagas(\exec,\sid) =\!
  \left\{
    \begin{array}[c]{l}
      \!\!\{\sid\} \;\, \mbox{if $\tagt(\exec, \sid) = \bot$} \\
      \!\!\{sid' \mid \tagt(\exec, sid') = \tagt(\exec, sid)\} \;\, \mbox{otherwise}
    \end{array}
  \right.
  \]
\end{defi}

Our notion of well-formedness aims to ensure that each event that
occurs in a trace is tagged properly.  For this, we first define
$\Tags(\exec,\sid)$. This set corresponds to the tags that actually
occur in the events issued from the session $\sid$ in the execution
trace $\exec$.

\begin{defi}[$\Tags(\exec,\sid)$]
  Let $\Pi$ be a $k$-party protocol and $\exec = [\evt^{\sid_1}_{1};
  \dots; \evt^{\sid_\ell}_{\ell}] $ be an execution trace of
  $\widetilde{\Pi}$ which is $k$-tagged.  Let $\sid$ be a session
  identifier.
  We define the tags of a session~$\sid$ in $\exec$ as follows:
  \[
  \Tags(\exec,\sid) = \{u|_{1.1} ~|~u\in\encst(\evt^{\sid_j}_j) \mbox{
    for some } j \in \{1,\ldots,\ell\} \mbox{ such that } \sid_j = \sid\}.
  \]
\end{defi}

We define $\Tags(\exec)$ to denote the set of tags that occur in the
trace $\exec$. More formally, we have that
\[\Tags(\exec) = \underset{\sid}{\bigcup}\; \Tags(\exec, \sid).
\]

We are now able to define our notion of well-formed execution trace.

\begin{defi}[well-formed execution trace]\label{def:wfexec}
  Let $\Pi$ be a $k$-party protocol, and $\exec = [\evt_1^{\sid_1}; \ldots; \evt_\ell^{\sid_\ell}]$ be an execution trace associated to $\widetilde{\Pi}$. We say that~$\exec$ is \emph{well-formed} if:
  \begin{enumerate}
  \item\label{def:wfexec1} $\exec$ is $k$-tagged, i.e. for all $t\in \st(\exec)$, $t$ is $k$-tagged;
  \item\label{def:wfexec2} $\Tags(\exec,\sid) \subseteq \{\tagt(\exec,\sid)\}$ for every $\sid$;
  \item\label{def:wfexec3} For every $i$, we have that $\names(\evt_i^{\sid_i}) \subseteq \{n_t^{\epsilon,S} ~|~t \in T\} \cup \{n_y^{\sid} ~|~ y\in \Yvars \text{ and }\sid \in S\}$ where $S = \sameTagas(\exec,\sid_i)$.
  \end{enumerate}
\end{defi}

\noindent Intuitively, in a well-formed trace, the events of a session $\sid$
are $k$-tagged with the expected tag, i.e. the tag defined in the preamble of the session
$\sid$. Moreover, the nonces used in a session $\sid$ are those that
are generated in a session that used the same tag as $\sid$ (or they
come from the intruder).


\subsection{Our transformation of execution traces}
\label{subsec:1-step-transfo}

A valid execution trace is not necessarily well-formed. Our goal is to
show that we can however always transform an execution trace into a
well-formed execution trace. The main idea is to replace each subterm
that is not tagged in the expected way with a nonce known by the
attacker.  The difficulty will be to ensure that the resulting trace
is still a valid one (see Section~\ref{subsec:validity}) and still a
witness of the existence of an attack (see Section~\ref{subsec:sat}).

We first define our transformation on a term. For this we need to
introduce the notion of $\tagh$ 

\begin{defi}[$\tagh(\exec, t)$]\label{def:tagh} 
  Let $\Pi$ be a $k$-party protocol and $\exec$ be an execution trace (not necessarily valid) of $\widetilde{\Pi}$. We define the head tag of a term $t$ w.r.t. the trace $\exec$, denoted $\tagh(\exec, t)$.
  \[
  \tagh(\exec, t) = \left\{
    \begin{array}[c]{lr}
      \tau & \hspace{0.15cm}\text{if } t = \f(\pair{\tau}{u}, u_2, \dots, u_n)\in\encst(t) \\
      & \hfill\text{ and } \tau\in\tagst({\exec}) \\[2mm]
      \bot & \text{otherwise}
    \end{array}
  \right.
  \]
\end{defi}

Roughly, our transformation of a term proceeds as follows.  We replace
each cryptographic subterm which is not tagged properly with a nonce. We
also perform the same kind of replacement on nonces to ensure that
sessions that are tagged differently will not share any nonces.

\begin{defi}[$\overline{t}^{\exec,\sid}$]\label{def:abstract.term}
  Let $\Pi$ be a $k$-party protocol, $\exec$ be an execution trace
  (not necessarily valid) of $\widetilde{\Pi}$, $\sid$ be a session
  identifier and $\tau = \tagt(\exec, \sid)$.
  \begin{itemize}
    
  \item $\overline{n}^{\exec, \sid} = n^{\epsilon, S}_n$ if $n\in\mathcal{N}^\epsilon$ or if $\tau = \bot$, where $S = \sameTagas(\exec, \sid)$;

  \item $\overline{n_y^{\sid'}}^{\exec, \sid} = \left\{
      \begin{array}{l}
        n_y^{sid'} \mbox{ if }\sid' \in \sameTagas(\exec,\sid)\\
        \\
        n_{n_y^{sid'}}^{\epsilon,S} \mbox{ where } S =
        \sameTagas(\exec,\sid) \mbox{ otherwise ; }
      \end{array} 
      \right.$

  \item $\overline{a}^{\exec, \sid} = a$ if $a$ is the name of an agent;

  \item $\overline{\f(a_1, \dots, a_n)}^{\exec, \sid} = \f(a_1, \dots, a_n)$ for $\f \in \{\shk, \pub, \priv\}$

  \item $\overline{\langle u,v\rangle}^{\exec, \sid} = \langle \overline{u}^{\exec, \sid}, \overline{v}^{\exec, \sid} \rangle$;

  \item $\overline{\f(u_1, \dots, u_n)}^{\exec, \sid} = \left\{
      \begin{array}[c]{l}
        \f(\overline{u_1}^{\exec, \sid}, \dots, \overline{u_n}^{\exec, \sid}) \\
        \hfill \text{if } \tagh(\exec, \f(u_1,\dots, u_n)) = \tau \text{ and } \tau \neq \bot \\
        \\
        n^{\epsilon, S}_{\f(u_1, \dots, u_n)} \text{ where } S = \sameTagas(\exec, \sid) \text{ otherwise}
      \end{array}
    \right.$ 

    for any $\f \in \{\enc, \enca, \sign, \h\}$.
  \end{itemize}
\end{defi}

\noindent We extend our transformation on a trace in the expected way.

\begin{defi}[$\overline{\exec}$]\label{def:abstracted.trace} Let
  $\Pi$ be a protocol and $\exec = [\evt_1^{\sid_1}; \ldots;
  \evt_\ell^{\sid_\ell}]$ an execution trace (not necessarily valid)
  of $\widetilde{\Pi}$. We define $\overline{\exec} =
  \overline{\evt_1}^{\exec, \sid_1}; \ldots;
  \overline{\evt_\ell}^{\exec, \sid_\ell}$, where
  \[
  \overline{\evt}^{\exec, \sid} = \left\{
    \begin{array}[c]{ll}
      \Prd(\overline{u_1}^{\exec, \sid}, \dots, \overline{u_n}^{\exec, \sid}) &\;\; \text{if } \evt = \Prd(u_1, \dots, u_n) \\
      \snd(\overline{u}^{\exec, \sid}) &\;\; \text{if } \evt = \snd(u) \\
      \rcv(\overline{u}^{\exec, \sid}) & \;\;\text{if } \evt = \rcv(u)
    \end{array}
  \right.
  \]
\end{defi}

With this transformation, we still get a trace associated to the
  protocol under study. Moreover, the resulting execution trace is 
  well-formed. This is formally proved in Appendix~\ref{app:step-1} 
(Lemma~\ref{lem:concrete.trace} and Lemma~\ref{lem:trace-well-formed}).

\begin{prop}
\label{pro:both}
Let $\Pi$ be a $k$-party protocol, and $\exec$ be an execution trace
associated to~$\widetilde{\Pi}$ (not necessarily a valid one).  We
have that $\overline{\exec}$ is a well-formed execution trace (not
necessarily a valid one) associated to the protocol $\widetilde{\Pi}$.
\end{prop}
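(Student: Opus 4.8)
The plan is to establish the two halves of the statement separately: first that $\overline{\exec}$ is a genuine execution trace of $\widetilde{\Pi}$ (i.e. that it is a ground instance of some symbolic trace of $\widetilde{\Pi}$), and then that it satisfies the three clauses of Definition~\ref{def:wfexec}. As noted just above the statement, the first half is the content of Lemma~\ref{lem:concrete.trace} and the second is Lemma~\ref{lem:trace-well-formed}, both proved in Appendix~\ref{app:step-1}, so the body of this proof is essentially to invoke those two lemmas; what I would actually do is sketch why they hold.

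For the first half, I would argue that $\overline{\;\cdot\;}^{\exec,\sid}$, applied event by event, amounts to a substitution acting on the underlying symbolic trace. The key observation is that $\overline{\;\cdot\;}^{\exec,\sid}$ leaves agent names and long-term keys untouched and only rewrites names and improperly-tagged cryptographic subterms into fresh names from $\Nces_\epsilon$; since these replacements are performed uniformly on a given session $\sid$ (the map depends only on $\exec$ and $\sid$, and on $\sameTagas(\exec,\sid)$), and since fresh names are substituted consistently for equal subterms, one obtains for each session a substitution that, composed with the original one, still instantiates the same symbolic role. The structural clauses of Definition~\ref{def:abstract.term} (pairing is pushed through, the $\f$-case either recurses or abstracts) guarantee the result is well-defined on all terms, and the origination-type conditions on roles are not disturbed because abstraction only ever replaces a subterm by an atom — it never introduces new variables. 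Hence $\overline{\exec}$ is an execution trace of $\widetilde{\Pi}$, though not necessarily valid.

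For the second half, clause~\ref{def:wfexec1} ($k$-tagging) holds because every cryptographic subterm surviving in $\overline{\exec}$ has, by the $\f$-case of Definition~\ref{def:abstract.term}, head tag equal to $\tau=\tagt(\exec,\sid)\neq\bot$, which is a $k$-tag by construction of the preamble $\widetilde{\Pi}^{\init}$; any cryptographic subterm that fails this test is replaced wholesale by a name, so no ill-tagged cryptographic subterm remains. Clause~\ref{def:wfexec2} follows similarly: the only tags left inside $\encst(\overline{\evt}^{\exec,\sid})$ are the head tags just described, all equal to $\tagt(\exec,\sid)$, so $\Tags(\overline{\exec},\sid)\subseteq\{\tagt(\overline{\exec},\sid)\}$ — here one uses that the first $k$ communication events of a session are left essentially intact (they carry the preamble nonces), so $\tagt(\overline{\exec},\sid)=\tagt(\exec,\sid)$. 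Clause~\ref{def:wfexec3} is exactly what the name-cases of Definition~\ref{def:abstract.term} were designed to enforce: a name $n_y^{\sid'}$ is kept only when $\sid'\in\sameTagas(\exec,\sid)$, and is otherwise abstracted to some $n_t^{\epsilon,S}$ with $S=\sameTagas(\exec,\sid)$; attacker names are likewise sent into the $n_t^{\epsilon,S}$ family.

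The main obstacle I expect is the bookkeeping behind clause~\ref{def:wfexec2}, specifically checking that the expected tag is preserved, $\tagt(\overline{\exec},\sid)=\tagt(\exec,\sid)$. This requires knowing that the $k$ leading communication events of each session (those produced by $\widetilde{\Pi}^{\init}$) are of the form $\langle x_i^j,z_i^j\rangle\sigma$ and that abstraction acts on them as identity on the agent component and as a (possibly renaming) substitution on the nonce component — so that the tuple of surviving head messages still equals the tag, merely with some nonces renamed consistently across the session. One then has to confirm this renaming is compatible with the one implicitly used to instantiate $\tau_j$ inside the tagged events $[\seqevt^j]_{\tau_j}$, so that the head tags appearing inside $\encst$ really do coincide with $\tagt(\overline{\exec},\sid)$ rather than with the old $\tagt(\exec,\sid)$. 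Once that alignment is pinned down the rest is routine case analysis, and I would relegate the full details to the appendix as the paper does.
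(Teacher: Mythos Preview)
Your plan is correct and follows the paper's decomposition into Lemma~\ref{lem:concrete.trace} and Lemma~\ref{lem:trace-well-formed} exactly. One slip to flag: in your sketch of clause~\ref{def:wfexec2} you assert that the surviving head tags are ``all equal to $\tagt(\exec,\sid)$'' and that $\tagt(\overline{\exec},\sid)=\tagt(\exec,\sid)$. Neither holds in general --- the preamble messages may contain attacker nonces or nonces from alien sessions, and these \emph{are} rewritten by $\overline{\;\cdot\;}^{\exec,\sid}$. The correct statement (which you essentially arrive at in your ``obstacle'' paragraph) is that the surviving head tag is $\overline{\tagt(\exec,\sid)}^{\,\exec,\sid}$, and this coincides with $\tagt(\overline{\exec},\sid)$ precisely because the same transformation is applied to the preamble events and to the tag occurrences inside $\encst$; this consistency is what the factorisation $\overline{\exec}=\tr\,\overline{\sigma}$ in Lemma~\ref{lem:concrete.trace} gives you. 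So clause~\ref{def:wfexec2} holds not because the tags are fixed but because both sides move together. With that correction the argument goes through as you describe.
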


\subsection{Validity}
\label{subsec:validity}

Now, we show that the resulting execution trace, i.e. the one
obtained by applying our transformation $\overline{\; \cdot \;}$, is
still a valid one. In particular, we have to show that each term that
occurs in a receive event is deducible from the initial knowledge of
the attacker and the messages that have been sent so far.  For this,
we rely on the notion of simple proofs previously introduced
in~\cite{CD-fmsd09}.

\begin{defi}[simple proof]\label{def:simpleproof}
  Let $T_1 \subseteq T_2 \subseteq \dots \subseteq T_n$. 
 We say that a proof $\pi$ of $T_i \vdash u$ is {left-minimal} if,
 whenever there is a proof of $T_j \vdash u$ for some $j < i$, then
 $\pi$ is also a proof of $T_j \vdash u$. 
Then, we say that a proof~$\pi$ is simple if
  \begin{enumerate}
  \item \label{def:simpleproof:subproof} any subproof of $\pi$ is {left-minimal},
  \item \label{def:simpleproof:composition} a composition rule of the form $\prooftree u_1 \quad u_2 \justifies u \endprooftree$ is never followed by a decomposition rule leading to $u_1$ or $u_2$, and
  \item \label{def:simpleproof:decomposition} any term of the form
    $\pair{u_1}{u_2}$ obtained by application of a decomposition rule 
or {labelling a leaf} is directly followed by a projection rule.
  \end{enumerate}
\end{defi}

\begin{exa}
  Let $T_1 = \{n_1\}$ and $T_2 = \{n_1, \, \enc(\langle n_1, n_2
  \rangle, k), \, k\}$. We have ${T_2 \vdash \langle n_1,n_2
    \rangle}$ with the proof tree $\pi$ described below.  However, $\pi$ is not a
  simple proof of $T_2 \vdash \langle n_1, n_2\rangle$. Indeed, the
  term $\langle n_1, n_2\rangle$ has been obtained by an application
  of a decomposition rule. Thus, by
  Condition~(\ref{def:simpleproof:decomposition}) of
  Definition~\ref{def:simpleproof} we have to decompose it. A simple
  proof of $T_2 \vdash \langle n_1, n_2\rangle$ is the proof tree
  $\pi'$ described below.
 \[
\begin{array}{ccc}
\pi = \left\{ \;
  \begin{prooftree}
    \enc(\langle n_1, n_2 \rangle, k) \;\;\; k
    \justifies
    \langle n_1, n_2\rangle
  \end{prooftree}
\right.
  &
\;\;\;\;\;\;\;\;\;
  &
\pi' = \left\{ \;
  \begin{prooftree}
    n_1 \;\;\;\;
    \begin{prooftree}
      \begin{prooftree}
        \enc(\langle n_1, n_2 \rangle, k) \;\;\; k
        \justifies
        \langle n_1, n_2\rangle
      \end{prooftree}
      \justifies
      n_2
    \end{prooftree}
    \justifies
    \langle n_1, n_2 \rangle 
  \end{prooftree}
\right.
\end{array}
  \]
  
\end{exa}

As it was done in~\cite{CD-fmsd09} in a slightly different setting, we
can show that it is always possible to consider such a proof tree,
i.e. if there is a proof of $T_i \vdash u$, then there is a simple
proof of it (w.r.t. a sequence $T_1 \subseteq T_2 \subseteq \dots
\subseteq T_n$).
%
%
Given a simple proof $\pi$ of $T_i \vdash u$, we can also show a locality
lemma (by structural induction on $\pi$) allowing us to
characterize the terms that occur in such a proof tree.

\begin{lem}[locality]
  \label{lem:locality}
  Let $T_1 \subseteq T_2 \subseteq \dots \subseteq T_n$ be a set of terms and $u$ be a term such that $T_i \vdash u$. Let $\pi$ be a simple proof of $T_i \vdash u$. We have that $\pi$ only involves terms in $\st(T_i \cup \{u\}) \cup \Kcal_\epsilon \cup \Nces_\epsilon \cup \mathcal{A} \cup \{\pub(a) ~|~ a \in \mathcal{A}\}$. Moreover, if $\pi$ ends with an instance of a decomposition rule (or is reduced to a leaf), we have that $\pi$ only involves terms in $\st(T_i) \cup \Kcal_\epsilon \cup \Nces_\epsilon \cup \mathcal{A} \cup \{\pub(a) ~|~ a \in \mathcal{A}\}$.
\end{lem}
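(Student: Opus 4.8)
The plan is to prove both assertions simultaneously by structural induction on the simple proof~$\pi$, with a case split on the rule applied at its root. It is convenient to abbreviate $\mathcal{I} = \Kcal_\epsilon \cup \Nces_\epsilon \cup \A \cup \{\pub(a) \mid a \in \A\}$, the terms that Definition~\ref{def:deducibility} permits to label a leaf in addition to the elements of~$T_i$; the first assertion then says every term of~$\pi$ lies in $\st(T_i \cup \{u\}) \cup \mathcal{I}$ and the second (the strengthening valid when $\pi$ is a leaf or ends with a decomposition rule) in the smaller set $\st(T_i) \cup \mathcal{I}$. I expect that only Condition~(\ref{def:simpleproof:composition}) of Definition~\ref{def:simpleproof} is actually needed; Conditions~(\ref{def:simpleproof:subproof}) and~(\ref{def:simpleproof:decomposition}) should play no role.

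The easy cases come first. If $\pi$ is a leaf labelled~$u$, then $u \in T_i \cup \mathcal{I} \subseteq \st(T_i) \cup \mathcal{I}$ and both assertions hold. If the root of~$\pi$ is a composition rule, its conclusion is of the form $\f(v_1,\ldots,v_m)$ with $\f \in \{\langle\,\rangle,\enc,\enca,\sign,\h\}$, and each immediate premise~$v_j$ is a syntactic subterm of~$u$; thus $\st(T_i \cup \{v_j\}) \subseteq \st(T_i \cup \{u\})$, and applying the induction hypothesis (first assertion) to each premise's subproof, together with $u \in \st(T_i \cup \{u\})$, gives the first assertion, the second being vacuous here.

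The heart of the argument is the decomposition rules. Each has a \emph{principal} premise from which the conclusion~$u$ is extracted: the pair $\pair{u}{v}$ (or $\pair{v}{u}$) for a projection, $\enc(u,v)$ for symmetric decryption, $\enca(u,\pub(v))$ for asymmetric decryption, and $\sign(u,v)$ for the optional rule. I would first argue that the subproof~$\pi'$ of this principal premise cannot end with the composition rule producing that term---otherwise a composition would be immediately followed by a decomposition leading to one of its premises, violating Condition~(\ref{def:simpleproof:composition})---so $\pi'$ is a leaf or ends with a decomposition, and the induction hypothesis (second assertion) places every term of~$\pi'$ in $\st(T_i) \cup \mathcal{I}$. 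Since the principal premise is a pair, an encryption, or a signature, it is not in~$\mathcal{I}$, hence it lies in $\st(T_i)$; therefore so do all its subterms, in particular~$u$ and, in the two decryption cases, the key inside it. It remains to control the side premise carrying the decryption key: for symmetric decryption the key~$v$ is already a subterm of the principal premise, so $v \in \st(T_i)$ and the induction hypothesis (first assertion) on its subproof yields terms in $\st(T_i \cup \{v\}) \cup \mathcal{I} = \st(T_i) \cup \mathcal{I}$; for asymmetric decryption the side premise $\priv(v)$ is not the conclusion of any composition rule, so its subproof is again a leaf or ends with a decomposition and the induction hypothesis (second assertion) places all its terms---including $\priv(v)$ itself---in $\st(T_i) \cup \mathcal{I}$. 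Gathering everything, all terms of~$\pi$ lie in $\st(T_i) \cup \mathcal{I}$, which proves both assertions in this case.

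I expect the decomposition cases to be the only real obstacle: one must use the simplicity condition to force the principal-premise subproof to be a leaf or decomposition-ending so the stronger induction hypothesis applies, and one must notice that~$\mathcal{I}$ contains no pair, encryption, or signature term---this is precisely what lets one pass from $\st(T_i \cup \{u\})$ to the sharper $\st(T_i)$. The small extra argument for decryption keys (using that a private key is never produced by a composition rule) is routine but must not be forgotten.
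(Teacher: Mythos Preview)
Your proposal is correct and follows the same approach the paper indicates, namely structural induction on~$\pi$; the paper gives no further detail, and your case analysis (leaf, composition, decomposition with the principal-premise subproof forced to be leaf-or-decomposition by Condition~(\ref{def:simpleproof:composition})) is exactly what is needed. Your observation that only Condition~(\ref{def:simpleproof:composition}) is used is accurate, and your handling of the asymmetric-decryption key via the fact that $\priv(v)$ is never the conclusion of a composition rule is the right way to close that case.
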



Now, relying on this notion of simple proof, we can show that
deducibility is preserved by our transformation. This is the key lemma
to ensure the validity of the resulting trace.

\newsavebox{\lemdeducibilityofabstractedterms}
\sbox{\lemdeducibilityofabstractedterms}{\vbox{%
\begin{lem}\label{lem:deducibility.of.abstracted.terms}
  Let $\Pi$ be a $k$-party protocol and $\exec = [\evt_1^{\sid_1};
  \dots; \evt_\ell^{\sid_\ell}]$ be a valid execution trace of
  $\widetilde{\Pi}$, w.r.t.
some set $T_0$  of ground atoms. Let $i \in \{0, \ldots, \ell\}$ and~$t$ be a term such that  $\K(\exec_i) \cup T_0 \vdash t$. We have that  $\K(\overline{\exec}_i) \cup T_0 \vdash \overline{t}^{\exec, \sid}$ for any $\sid$.
\end{lem}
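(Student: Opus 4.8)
The plan is to reformulate the claim so that the induction goes through and then to argue by structural induction on a \emph{simple proof}. Fix~$i$ and recall that whenever $T_0 \cup \K(\exec_i) \vdash t$ there is a simple proof of this fact with respect to the chain $T_0 \cup \K(\exec_0) \subseteq \cdots \subseteq T_0 \cup \K(\exec_\ell)$ (the adaptation of the result of~\cite{CD-fmsd09} recalled just before Lemma~\ref{lem:locality}). I would then prove, by induction on such a simple proof~$\pi$, the statement: \emph{for every node of~$\pi$ labelled by a term~$w$ and for every session identifier~$\sid$, $T_0 \cup \K(\overline{\exec}_i) \vdash \overline{w}^{\exec,\sid}$}; taking $w=t$ gives the lemma. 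Quantifying over all~$\sid$ from the start is essential, because $\overline{\exec}$ records each sent message abstracted with respect to its \emph{sending} session, which is generally different from the session~$\sid$ at hand. It is also convenient to carry an outer induction on~$i$: a cryptographic subterm injected by the attacker into a receive variable originates from a strictly earlier message of~$\exec$, so its abstraction becomes available through the induction hypothesis on a shorter prefix.

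The base and composition cases are routine but require some care. For a \emph{leaf} labelled by~$w$: if $w\in\A\cup\Kcal_\epsilon\cup\sett{\pub(a)\mid a\in\A}$ then $\overline{w}^{\exec,\sid}=w$; if $w$ is a name, then by Definition~\ref{def:abstract.term} $\overline{w}^{\exec,\sid}$ is again a name of $\Nces_\epsilon$ or of $T_0$; in all these cases it is deducible. If $w\in\K(\exec_i)$, say $\snd(w)$ occurs in $\exec_i$ within session~$\sid'$, I would recurse on the structure of~$w$: pairs are peeled (the abstraction commutes); an ill-$\sid$-tagged ciphertext abstracts to a fresh name of $\Nces_\epsilon$; a well-$\sid$-tagged ciphertext stemming from the tagging skeleton of~$w$ has head tag $\tagt(\exec,\sid')$, so $\tagt(\exec,\sid)=\tagt(\exec,\sid')$ and its abstraction already occurs as a subterm of $\overline{w}^{\exec,\sid'}\in\K(\overline{\exec}_i)$, extractable by projections when it sits at plaintext position; and any remaining ciphertext was forwarded by the attacker from an earlier receive and is covered by the outer induction on~$i$. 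For the \emph{composition rules} one applies the induction hypothesis to the immediate subproofs and uses that $\overline{\;\cdot\;}^{\exec,\sid}$ commutes with pairing unconditionally and, on $\f(u_1,u_2)$ with $\f\in\sett{\enc,\enca,\sign,\h}$, either commutes (when the head tag equals $\tagt(\exec,\sid)$) or returns a fresh name of $\Nces_\epsilon$, deducible either way. For the \emph{projection rules}, condition~(\ref{def:simpleproof:decomposition}) of Definition~\ref{def:simpleproof} ensures the projected pair was obtained by a decomposition or labels a leaf, so the induction hypothesis applies and yields $\pair{\overline{u_1}^{\exec,\sid}}{\overline{u_2}^{\exec,\sid}}$, which is then projected.

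The decisive case is \emph{decryption} (and, analogously, opening a signature), say $\enca(u,\pub(v)),\ \priv(v)\vdash u$. By condition~(\ref{def:simpleproof:composition}) of Definition~\ref{def:simpleproof} the subproof deriving $\enca(u,\pub(v))$ does not end with a composition, hence it ends with a decomposition or is a leaf, and Lemma~\ref{lem:locality} then forces $\enca(u,\pub(v))\in\st(\K(\exec_i))$. If $\tagh(\exec,\enca(u,\pub(v)))=\tagt(\exec,\sid)\neq\bot$, the abstraction commutes, $\overline{\enca(u,\pub(v))}^{\exec,\sid}=\enca(\overline{u}^{\exec,\sid},\pub(v))$, and one decryption step on the abstracted premises produces $\overline{u}^{\exec,\sid}$. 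Otherwise $\overline{\enca(u,\pub(v))}^{\exec,\sid}$ is an inert fresh name and the induction hypothesis on that subproof is useless; this is the step I expect to be the main obstacle. Here the plan is to rebuild $\overline{u}^{\exec,\sid}$ directly, using: (i) that every cryptographic subterm of a $\widetilde{\Pi}$-trace is tagged with the expected tag of the session carrying it in its tagging skeleton, or else was placed by the attacker into a receive variable and hence comes from an earlier message (outer induction on~$i$); (ii) that the nonces exchanged in the preamble travel in clear, so their abstractions are always deducible by the transformed attacker; and (iii) a plaintext-position analysis in the spirit of Lemma~\ref{lem:plaintext} and Lemma~\ref{lem:plaintext.deducibility}. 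Together these let the transformed attacker reconstruct $\overline{u}^{\exec,\sid}$ from the agent names, the abstractions of the nonces, and the abstractions of the already-reachable cryptographic subterms — which is exactly where the uniform abstraction of ill-tagged ciphertexts together with the corresponding nonces, per class of same-tagged sessions, is needed.
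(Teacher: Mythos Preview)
Your induction scheme on $(i,\pi)$ and the observation that the statement must be proved for \emph{all} $\sid$ simultaneously are both right, and your treatment of composition rules matches the paper. The gap is in the decryption case, and it is the heart of the proof.

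You split on whether $\tagh(\exec,\enca(u,\pub(v)))=\tagt(\exec,\sid)$. In the ``bad'' branch you propose to rebuild $\overline{u}^{\exec,\sid}$ by hand from agent names, preamble nonces, and a plaintext-position analysis. This does not work in general: $\overline{u}^{\exec,\sid}$ may contain a cryptographic subterm $c$ that is \emph{well}-$\sid$-tagged (hence not collapsed to a fresh nonce) yet sits under the ill-$\sid$-tagged outer encryption, and nothing in your three ingredients lets the abstracted attacker produce $\overline{c}^{\exec,\sid}$; plaintext-position lemmas only control atoms, not ciphertexts. Your ``outer induction on $i$'' would apply only if $c$ were already deducible from a strict prefix of the original trace, which you have no reason to know at this point.

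The paper avoids this branch entirely. Instead of asking whether the ciphertext is well-tagged for the \emph{target} session $\sid$, it locates the \emph{first send} event (session $\sid_{k_0}$) in which $\enc(t_1,t_2)$ occurs as a subterm, and proves a Claim: $\overline{\enc(t_1,t_2)}^{\exec,\sid_{k_0}}$ is never a fresh nonce, i.e.\ equals $\enc(\overline{t_1}^{\exec,\sid_{k_0}},\overline{t_2}^{\exec,\sid_{k_0}})$. The Claim holds because either the ciphertext comes from the tagging skeleton of that send (hence carries $\tagt(\exec,\sid_{k_0})$), or it came through a received variable---but then by origination it already lived in a strictly earlier receive, so (by validity and locality) it is a subterm of $\K(\exec_{j-1})$ with $j<k_0$, contradicting minimality of $k_0$ together with simplicity of $\pi$ (a composition of the ciphertext followed by its decryption would violate condition~(\ref{def:simpleproof:composition}) of Definition~\ref{def:simpleproof}). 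With the Claim in hand one applies the induction hypothesis with $\sid_{k_0}$, decrypts once in the abstracted world to obtain $\overline{t_1}^{\exec,\sid_{k_0}}$, and only then converts from $\sid_{k_0}$ to the target $\sid$ by a short case analysis on $t$ (agents and keys are invariant; if $\overline{t}^{\exec,\sid}\in\Nces_\epsilon$ it is free; otherwise one shows $t$ is an alien subterm of the sent message, traces it back through origination, and invokes the outer induction on $i$). The missing idea in your proposal is precisely this switch to $\sid_{k_0}$ and the accompanying Claim; without it the ``ill-tagged'' branch cannot be closed.
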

}}

\noindent\usebox{\lemdeducibilityofabstractedterms}

\begin{proof}(sketch)
Let $\tr = [\evt\evt_1^{\sid_1}; \dots; \evt\evt_\ell^{\sid_\ell}]$ be the symbolic trace associated to $\exec$ and~$\sigma$ be the substitution such that $\dom(\sigma) = \vars(\tr)$ and $\exec = \tr\sigma$.
Let $i \in \{0, \ldots, \ell\}$.  Let $\pi$ be a simple proof of
$\K(\exec_i) \cup T_0 \vdash t$. We prove that $\K(\overline{\exec}_i)
\cup T_0 \vdash \overline{t}^{\exec, \sid}$ by induction on $(i,
\pi)$. If $i = 0$ and $\pi$ is a simple proof reduced to a {leaf}
(possibly followed by some projection rules), then we have that $T_0
\vdash t$, and $\pi$ is necessarily reduced to a {leaf} since $T_0$
only contains atomic terms. Let $\sid$ be a session identifier, we
have that $\overline{t}^{\exec, \sid} \in \{t\} \cup
\mathcal{N}_\epsilon$ {since $t$ is an atomic term}. This allows us to conclude that $T_0 \vdash
\overline{t}^{\exec, \sid}$. Now, we distinguish two cases depending
on the last rule of~$\pi$.  \medskip{}

\begin{itemize}
\item   \noindent\emph{The proof $\pi$ ends with an instance of a composition rule, i.e. $t= \f(t_1, \dots, t_n)$ for some $\f\in\{\pair{}{}, \enc, \enca, \sign, \h\}$ and some terms $t_1,\ \dots,\ t_n$.}

  According to Definition~\ref{def:abstract.term}, we have that
  $\overline{t}^{\exec, \sid} \in \mathcal{N}^\epsilon \cup
  \{\f(\overline{t_1}^{\exec, \sid}, \dots, \overline{t_n}^{\exec,
    \sid})\}$. If $\overline{t}^{\exec, \sid} \in
  \mathcal{N}_\epsilon$, we easily conclude that
  $\K(\overline{\exec}_i) \cup T_0 \vdash \overline{t}^{\exec,
    \sid}$. Otherwise, since $\pi$ ends with a composition rule, we
  have that $\K(\exec_i) \cup T_0 \vdash t_1$, \dots, $\K(\exec_i)
  \cup T_0 \vdash t_n$. Moreover, the simple proofs witnessing these
  facts are strict subproofs of $\pi$ that are also simple. 
Hence, we can apply our induction hypothesis 
and conclude that $\K(\overline{\exec}_i) \cup T_0 \vdash  \f(\overline{t_1}^{\exec, \sid}, \dots, \overline{t_n}^{\exec, \sid})$.
  
  \medskip{}
\item   \noindent\emph{The proof ends with the application of a
    decomposition rule (but not a projection) possibly followed by
    several applications of the projection rules until the resulting
    term is not a pair.} 

  We will here present the case of the symmetric decryption rule, but
  all the other decomposition rules (including the case {of a proof
  reduced to a leaf}) can be
  handled in a similar way. For some terms $t_1$ and $t_2$, the
  proof~$\pi$ is of the form
  \[
  \prooftree
  \prooftree
  \prooftree
  \prooftree
  \vdots
  \justifies
  \K(\exec_i) \cup T_0 \vdash \enc(t_1, t_2)
  \endprooftree
  \quad
  \prooftree
  \vdots
  \justifies
  \K(\exec_i) \cup T_0 \vdash t_2
  \endprooftree
  \justifies
  \K(\exec_i) \cup T_0 \vdash t_1
  \endprooftree
  \justifies
  \vdots
  \endprooftree
  \justifies
  \K(\exec_i) \cup T_0 \vdash t
  \endprooftree
  \]

  Let us first note that, by locality (Lemma~\ref{lem:locality}) of
  $\pi$ we know that $\enc(t_1, t_2) \in \st(\K(\exec_i)) \cup T_0
  \cup \Kcal_\epsilon \cup \Nces_\epsilon \cup \{\pub(a) ~|~ a \in
  \mathcal{A}\}$, and by atomicity of $T_0$, $\Nces_\epsilon$,
  $\Kcal_\epsilon$ and  $\{\pub(a) ~|~ a \in \mathcal{A}\}$, we know
  that $\enc(t_1, t_2) \in \st(\K({\exec_i}))$. (In case {of a proof
  reduced to a leaf}, and if there is no projection rule,  we may have that $t \in T_0$. In such a case, as in the base case, we have that $T_0 \vdash  \overline{t}^{\exec, \sid}$ and we easily conclude.)
  Hence,   there exists $k \le i$ such that  $\evt_k^{\sid_k} = \snd(u)$ and $\enc(t_1, t_2) \in \st(u)$. Let $k_0$ be the smallest such $k$ and $u_0$, $u_0'$ be such that $\evt^{\sid_{k_0}}_{k_0} = \snd(u_0)$ and $\evt\evt^{\sid_{k_0}}_{k_0} = \snd(u_0')$. Hence, we have that $u_0 = u_0'\sigma$.\\
  In order to prove the result, we first establish the
  following claim (proved in Appendix~\ref{sec:appvalidity}).

  \noindent{\bf Claim:} We have that $\overline{\enc(t_1, t_2)}^{\exec, \sid_{k_0}} = \enc(\overline{t_1}^{\exec, \sid_{k_0}}, \overline{t_2}^{\exec, \sid_{k_0}})$.

\noindent Now, relying on this claim and applying the induction
hypothesis, we have that:
  \begin{itemize}
  \item $\K(\overline{\exec}_i) \cup T_0 \vdash \enc(\overline{t_1}^{\exec, \sid_{k_0}}, \overline{t_2}^{\exec, \sid_{k_0}})$; and 

  \item $\K(\overline{\exec}_i) \cup T_0 \vdash \overline{t_2}^{\exec, \sid_{k_0}}$.
  \end{itemize}
  This allows us to deduce that $\K(\overline{\exec}_i) \cup T_0 \vdash \overline{t_1}^{\exec, \sid_{k_0}}$. 
  In order to establish that $\K(\overline{\exec}_i) \cup T_0 \vdash
  \overline{t}^{\exec,\sid}$, we need to distinguish two
 cases: 

  \medskip{}
  \noindent \emph{Case 1. $t \in \A$, $t = \pub(a)$ or $t = \f(a_1,
    \dots, a_n)$ for some $\f\in\{\shk, \priv\}$.}  { In such a case,
    we have that $\overline{t}^{\exec, \sid} = \overline{t}^{\exec,
      \sid_{k_0}} = t$. Hence, we have that $\K(\overline{\exec}_i)
    \cup T_0 \vdash \overline{t}^{\exec, \sid}$ by applying some
    projection rules on the proof of $\K(\overline{\exec}_i) \cup T_0
    \vdash \overline{t_1}^{\exec, \sid_{k_0}}$.  }


\smallskip{}

\noindent \emph{Case 2. $t \in \mathcal{N}$ or $t = \f(t'_1, \dots, t'_m)$ for some
  $\f\in\{\enc, \enca, \h, \sign\}$.}
{First, if $\overline{t}^{\exec,\sid}$ can be obtained by application
  of some projection rules on the proof of $\K(\overline{\exec}_i)
  \cup T_0 \vdash \overline{t_1}^{\exec, \sid_{k_0}}$, then we easily
  conclude. Otherwise, it means that the term $t$ is not abstracted in
  the same way in both cases. In such a case, we have that either
  $\overline{t}^{\exec,\sid} \in \Nces_\epsilon$ or
  $\overline{t}^{\exec,\sid_{k_0}} \in \Nces_\epsilon$. In the first
  case, we easily conclude. In the second case, i.e.
  $\overline{t}^{\exec,\sid_{k_0}} \in \Nces_\epsilon$ but
  $\overline{t}^{\exec,\sid} \not\in \Nces_\epsilon$, we can show that
  $t$ is a subterm of $u_0$ that either occurs as a component of $u_0$
  or in the term~$x\sigma$ for some $x \in \vars(u'_0)$. Actually, the
  first case is not possible since we have assumed that
  $\overline{t}^{\exec,\sid_{k_0}} \in \Nces_\epsilon$. Thus, only the
  second case remains. Thanks to the origination property, we know
  that~$t$ will occur in a previous receive event and we will be able
  to show that $t$ was deducible using a smaller prefix of the trace
  allowing us to conclude by applying our induction hypothesis.  }\qedhere
\end{itemize}

\end{proof}

\noindent Since our transformation preserves the  deducibility relation, we can
now prove the validity of the resulting trace by induction on the
length of the original trace.

\begin{prop}\label{prop:bar.valid}
  Let $\Pi$ be a $k$-party protocol and $\exec = [\evt^{\sid_1}_1;
  \ldots; \evt^{\sid_\ell}_\ell]$ be a valid execution trace
  associated to $\widetilde{\Pi}$, w.r.t. some initial intruder
  knowledge $T_0$. We have that $\overline{\exec}$ is a  {well-formed}
  and valid execution trace associated to $\widetilde{\Pi}$ w.r.t $T_0$.
\end{prop}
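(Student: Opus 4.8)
The plan is to proceed by induction on the length $\ell$ of the original execution trace $\exec$, combining two facts established earlier: Proposition~\ref{pro:both}, which gives us that $\overline{\exec}$ is a well-formed execution trace associated to $\widetilde{\Pi}$, and Lemma~\ref{lem:deducibility.of.abstracted.terms}, which tells us that our abstraction transformation preserves deducibility. Since well-formedness is already handled by Proposition~\ref{pro:both}, the real content here is to establish \emph{validity}, i.e. that for every receive event in $\overline{\exec}$ the received message is deducible from $T_0$ together with the messages already sent.

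First I would set up the notation: write $\exec = [\evt_1^{\sid_1}; \ldots; \evt_\ell^{\sid_\ell}]$ and $\overline{\exec} = [\overline{\evt_1}^{\exec,\sid_1}; \ldots; \overline{\evt_\ell}^{\exec,\sid_\ell}]$ as in Definition~\ref{def:abstracted.trace}. The base case $\ell = 0$ is immediate since the empty trace is vacuously valid. For the inductive step, one observes that $\overline{\exec}_{\ell-1}$ (the prefix consisting of the first $\ell-1$ abstracted events) is itself an abstraction of the valid prefix $\exec_{\ell-1}$; a small subtlety to address is that abstraction is computed relative to the \emph{whole} trace $\exec$ (the functions $\tagt$, $\tagst$, $\sameTagas$, $\tagh$ all refer to $\exec$), so one must note that $\overline{\exec}_{i} = \overline{(\exec_i)}$ in the relevant sense, or more precisely that the abstraction of each event $\evt_i$ depends only on $\exec$ and $\sid_i$ and is the same whether we view $\evt_i$ inside $\exec$ or inside $\exec_i$ — here I would simply invoke the induction hypothesis applied to the prefix with the same global trace parameter, which is legitimate since validity of $\overline{\exec}_{\ell-1}$ as a trace only concerns its receive events and their preceding sends, all of which lie within the first $\ell-1$ events. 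Then it remains to handle the last event $\overline{\evt_\ell}^{\exec,\sid_\ell}$: if it is a send or a status event there is nothing to check, and if $\evt_\ell = \rcv(m)$ then validity of $\exec$ gives $T_0 \cup \K(\exec_\ell) \vdash m$, and since $\K(\exec_\ell) = \K(\exec_{\ell-1})$ (a receive event adds nothing to the knowledge), in fact $T_0 \cup \K(\exec_{\ell-1}) \vdash m$. Applying Lemma~\ref{lem:deducibility.of.abstracted.terms} with $i = \ell-1$ and $t = m$ and $\sid = \sid_\ell$ yields $T_0 \cup \K(\overline{\exec}_{\ell-1}) \vdash \overline{m}^{\exec,\sid_\ell}$, and since $\overline{\evt_\ell}^{\exec,\sid_\ell} = \rcv(\overline{m}^{\exec,\sid_\ell})$ and $\K(\overline{\exec}_{\ell-1}) = \K(\overline{\exec}_\ell)$, this is exactly the validity condition for the last event.

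The main obstacle, and the place where all the real work has already been done, is Lemma~\ref{lem:deducibility.of.abstracted.terms}: once deducibility transfer is in hand, the proposition is a routine bookkeeping induction. The only genuine care needed at this level is the bookkeeping around the fact that the abstraction of a prefix coincides with the prefix of the abstraction — this is where one must be slightly pedantic, since $\overline{\,\cdot\,}$ is parametrized by the entire trace, but it follows directly from the per-event definition in Definition~\ref{def:abstracted.trace} together with the observation that $\tagt(\exec,\sid)$, $\sameTagas(\exec,\sid)$ and $\tagh(\exec,t)$ are fixed once $\exec$ is fixed. Finally I would note that the well-formedness claim of the proposition is simply a restatement of Proposition~\ref{pro:both}, so no additional argument is required for that part.
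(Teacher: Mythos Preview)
Your proposal is correct and follows essentially the same approach as the paper: invoke Proposition~\ref{pro:both} for well-formedness, then prove validity by induction on the length of the trace, with the only nontrivial case being a receive event handled via Lemma~\ref{lem:deducibility.of.abstracted.terms}. Your extra care about the prefix-of-abstraction versus abstraction-of-prefix issue and the observation $\K(\exec_\ell)=\K(\exec_{\ell-1})$ for a receive event are both sound and slightly more explicit than the paper's own write-up, but the argument is the same.
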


\begin{proof}
  First, according to Proposition~\ref{pro:both}, we know that
  $\overline{\exec}$ is an execution trace associated
  to~$\widetilde{\Pi}$ which is well-formed.
 It remains to establish its validity w.r.t.~$T_0$.  We show by
 induction on $i$ that for all $i\in\{1, \ldots, \ell\}$, $(\overline
 \exec)_i$ is a valid execution trace.  The base case, i.e. the empty
 trace $(\overline \exec)_i = []$, is trivially valid.  For the
 inductive step, we assume that $(\overline{\exec})_{\ell-1}$ is valid
 and we have to establish the validity of $\overline{\exec} =
 \overline{\exec}_\ell$.  
We distinguish~{2 cases} according to the nature of the last event in the trace.
  \medskip{}

  \noindent \emph{Case $\evt^{\sid_\ell}_\ell = \Prd(t_1, \dots, t_n)$
    {or  $\evt^{\sid_\ell}_\ell = \snd(t)$}.} By induction hypothesis, 
we know that $(\overline{\exec})_{\ell-1}$ is a valid execution trace, and  this is enough to conclude to the validity of $\overline{\exec}$.
  \medskip{}


  \noindent\emph{Case $\evt^{\sid_\ell}_\ell = \rcv(t)$.} By induction hypothesis, we know that $(\overline{\exec})_{\ell-1}$ is a valid execution trace. To conclude to the validity of~$\overline{\exec}$, we only need to establish that $\K((\overline{\exec})_{\ell-1}) \cup T_0 \vdash \overline{t}^{\exec, \sid_\ell}$. Since we know that  $\exec$ is a valid execution trace, we have that    $\K(\exec_{\ell-1}) \cup T_0 \vdash t$. Applying Lemma~\ref{lem:deducibility.of.abstracted.terms}, we conclude that $\K((\overline{\exec})_{\ell-1}) \cup T_0 \vdash \overline{t}^{\exec,\sid_\ell}$. This allows us to deduce that $\overline{\exec}$ is valid.
\end{proof}

\subsection{Satisfiability}
\label{subsec:sat}

The goal of this section is to show that the trace $\overline{\exec}$
resulting of the application of our transformation will still satisfy
the attack formula $\exists x_1. \ldots. \exists x_n. \phi$ under
study. To show the validity of such a formula on the trace
$\overline{\exec}$, we have to exhibit a substitution $\sigma'$ for
which $\langle \overline{\exec}, T_0 \rangle \models \phi\sigma'$. By
hypothesis, we know that $\langle \exec, T_0 \rangle \models
\phi\sigma$ for some $\sigma$.  Thus, the idea is to consider the
substitution $\sigma' = \{x_1 \mapsto
\overline{x_1{\sigma}}^{\exec,\sid_1}, \ldots, x_n \mapsto
\overline{x_n{\sigma}}^{\exec,\sid_n}\}$ where $\sid_1,
\ldots, \sid_n$ correspond to the sessions from which the terms
$x_1\sigma, \ldots, x_n\sigma$ come from.

\newsavebox{\probarattack}
\sbox{\probarattack}{\vbox{%
\begin{prop}
  \label{prop:bar.attack}
  Let $\Pi$ be a protocol, $\exec$ be an execution trace 
  of~$\widetilde{\Pi}$ w.r.t. some initial intruder
  knowledge $T_0$, and $\phi$ be an attack formula. We have that
  \begin{center}
    $\langle
    \exec, T_0\rangle \models \phi\ \Rightarrow\ \langle
    \overline{\exec}, T_0\rangle \models \phi$.
  \end{center}
\end{prop}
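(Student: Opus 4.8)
The plan is to strip off the existential prefix of $\phi=\exists x_1.\dots.\exists x_n.\phi'$ and argue by structural induction on the quantifier-free body $\phi'$. First I would fix, from $\langle\exec,T_0\rangle\models\phi$, a ground substitution $\sigma$ with domain $\sett{x_1,\dots,x_n}$ such that $\langle\exec,T_0\rangle\models\phi'\sigma$, and then build the witnessing substitution $\sigma'$ for $\overline{\exec}$. Following the idea sketched just before the statement, I set $\sigma'(x_i)=\overline{x_i\sigma}^{\exec,\sid(x_i)}$, where $\sid(x_i)$ is ``the session $x_i\sigma$ comes from'': by the restrictions of Definition~\ref{def:attack} (and the accompanying remark), each $x_i$ occurs in exactly one positive status event $\Prd_i(\vec u)$ of $\phi$; fixing a derivation of $\langle\exec,T_0\rangle\models\phi'\sigma$, if this status event is reached in the derivation it is satisfied at some prefix $\exec_m$, forcing the $m$-th event of $\exec$ to be $\Prd_i(\vec u\sigma)$, and I take $\sid(x_i)$ to be the session identifier of that event (if it is not reached, the value $\sigma'(x_i)$ only has to satisfy a $\learn$-atom or a $\Comp$-atom, and the freedom granted below makes the choice of $\sid(x_i)$ immaterial there). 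Alongside this I would record the facts about the abstraction $\overline{\;\cdot\;}^{\exec,\sid}$ that the induction needs: by Definition~\ref{def:abstracted.trace} it acts pointwise, so $\overline{\exec}$ has the length of $\exec$ and $\overline{\exec}_m$ is $\exec_m$ with each event abstracted with respect to the full trace $\exec$; it preserves predicate symbols, arities and the communication/status kind of an event; it leaves agents and long-term keys unchanged; and it is injective (the fresh names $n^{\epsilon,S}_t$ it creates are indexed by the abstracted term and never collide with existing atoms).

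Next I would prove, by structural induction on a subformula $\chi$ of $\phi'$ and for all $m\in\sett{0,\dots,\ell}$, the pair of polarity-indexed statements: if $\chi$ occurs positively in $\phi'$ and $\langle\exec_m,T_0\rangle\models\chi\sigma$ then $\langle\overline{\exec}_m,T_0\rangle\models\chi\sigma'$; and if $\chi$ occurs negatively in $\phi'$ and $\langle\overline{\exec}_m,T_0\rangle\models\chi\sigma'$ then $\langle\exec_m,T_0\rangle\models\chi\sigma$. The Boolean cases ($\true$, $\vee$, $\neg$) are routine once polarities are tracked — for $\neg\chi$ one applies the induction hypothesis of the opposite polarity to $\chi$ and contraposes. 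For $\learn(t)$, which by Definition~\ref{def:attack} occurs only positively and has $t$ atomic, $t\sigma'$ equals $\overline{t\sigma}^{\exec,\sid(x_i)}$ if $t=x_i$ and $t\sigma$ if $t$ is an agent, so $\K(\overline{\exec}_m)\cup T_0\vdash t\sigma'$ follows immediately from $\K(\exec_m)\cup T_0\vdash t\sigma$ by Lemma~\ref{lem:deducibility.of.abstracted.terms}, whose ``for any $\sid$'' clause is exactly what licenses the use of the home session $\sid(x_i)$. For $\Comp(u)$, $u\sigma=u\sigma'$ is the same agent and the truth value depends only on $T_0$, so both directions are immediate. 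For a status event $\Prd(u_1,\dots,u_r)$, truth at $\exec_m$ under $\sigma$ means the $m$-th event of $\exec$ is $\Prd(u_1\sigma,\dots,u_r\sigma)$, so the $m$-th event of $\overline{\exec}$ is $\Prd(\overline{u_1\sigma}^{\exec,\sid_m},\dots,\overline{u_r\sigma}^{\exec,\sid_m})$, and I must argue this equals $\Prd(u_1\sigma',\dots,u_r\sigma')$: each $u_l$ is an agent (fixed by the abstraction) or a variable $x_i$, and in the latter case $\Prd$ is the \emph{unique} positive status event housing $x_i$, so $\sid_m=\sid(x_i)$ and $\overline{u_l\sigma}^{\exec,\sid_m}=\overline{x_i\sigma}^{\exec,\sid(x_i)}=u_l\sigma'$; the negative direction additionally invokes injectivity and agent-fixing of $\overline{\;\cdot\;}^{\exec,\sid_m}$ so that a coordinate mismatch survives. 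For $\Diamond\psi$: positively, a past prefix witnessing $\psi$ for $\sigma$ is carried, through the induction hypothesis on $\psi$, to a past prefix witnessing $\psi$ for $\sigma'$; negatively, Definition~\ref{def:attack}(\ref{def:attack-cond4}) ensures every status event of $\psi$ occurs positively in $\psi$, hence negatively in $\phi'$, so exactly the negative-polarity induction hypotheses I need are available on the pieces of $\psi$. Instantiating with $\chi=\phi'$ and $m=\ell$ gives $\langle\overline{\exec},T_0\rangle\models\phi'\sigma'$, hence $\langle\overline{\exec},T_0\rangle\models\phi$.

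I expect the main obstacle to be the bookkeeping around the quantified variables: one must be confident that the single value $\sigma'(x_i)$, fixed through the one home session $\sid(x_i)$, is simultaneously correct at \emph{every} occurrence of $x_i$ in $\phi'$. This is precisely where the restrictions on attack formulas enter — uniqueness of the positive status event containing $x_i$ (the third condition of Definition~\ref{def:attack}), the ``for any $\sid$'' strength of Lemma~\ref{lem:deducibility.of.abstracted.terms} for $\learn$-atoms, the agent-fixing property for $\Comp$-atoms and for agent-typed arguments of status events, and the positivity constraint on status events inside negatively occurring $\Diamond$'s (Definition~\ref{def:attack}(\ref{def:attack-cond4})). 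The status-event base case and the polarity manipulations around $\neg$ and $\Diamond$ carry the bulk of the work; the rest is routine.
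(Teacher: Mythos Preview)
Your approach is essentially the same as the paper's: define the witnessing substitution $\sigma'$ by abstracting each $x_i\sigma$ with respect to the session where the unique positive status event containing $x_i$ is discharged, use Lemma~\ref{lem:deducibility.of.abstracted.terms} for $\learn$, the agent-fixing property for $\Comp$, and injectivity of $\overline{\;\cdot\;}$ (the paper's Lemma~\ref{lem:mu}) for the negative status-event case. The paper packages this via annotated formulas and an explicit function $\mu(\pi)$ extracted from a fixed proof tree $\pi$, and then transforms $\pi$ node by node into a proof tree for $\langle\overline{\exec},T_0\rangle\models\phi'\overline{\sigma}$.

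There is, however, a genuine gap in how you set up the induction. You quantify universally over $m\in\{0,\dots,\ell\}$ and then, in the positive status-event base case, claim ``$\Prd$ is the unique positive status event housing $x_i$, so $\sid_m=\sid(x_i)$''. Uniqueness of the \emph{syntactic} occurrence of $x_i$ in a positive status event (Definition~\ref{def:attack}, condition~3) does not imply uniqueness of the \emph{position} $m$ at which the ground event $\Prd(\vec u\sigma)$ holds in $\exec$: the same ground status event can be emitted by two distinct sessions $s_1,s_2$ (e.g.\ when its arguments are agent names or received terms), and then $\overline{u_l\sigma}^{\exec,s_1}$ and $\overline{u_l\sigma}^{\exec,s_2}$ may differ, so your ``for all $m$'' claim is simply false at one of the two positions. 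Your $\sid(x_i)$ was chosen from one particular leaf of a fixed derivation, but the universal quantifier over $m$ has severed the link to that derivation.

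The fix is exactly what the paper does: carry out the induction along the fixed proof tree $\pi$ rather than over all $m$, so that at the positive status-event leaf the index $m$ is \emph{by construction} the one that determined $\sid(x_i)$. Equivalently, state your positive-polarity invariant only for the pairs $(\chi,m)$ that actually label nodes of $\pi$. With that adjustment your argument goes through and coincides with the paper's.
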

}}

\noindent\usebox{\probarattack}

\noindent The proof is done by structural induction on the formula and its details can be found in Appendix~\ref{app:sat}. The
technically difficult part is to formally link each variable
existentially quantified in $\phi$ with the term it has been
substituted with in order to satisfy the formula.

\section{Second step: reducing the number of sessions}
\label{sec:2-step}

Now, our goal is to reduce the number of sessions that are involved in
an execution trace witnessing the existence of an attack in order to
match the bound announced in Theorem~\ref{theo:transfo1}: the attack
trace has to involved at most $\|\phi\|$ sessions of each role. The
idea will be to identify a set of sessions $S$ and to remove all the
events that do not originate from a session in $S$ according to the
formal definition stated below.

\begin{defi}[restriction of $\tr$ to $S$] 
  \label{def:exec.restriction}
  Let $\Pi$ be a protocol, $\exec = [\evt^{\sid_1}_1; \dots;
  \evt^{\sid_\ell}_\ell]$ be an execution of $\Pi$, w.r.t. some set $T_0$
  of ground atoms, and $S$ be a set of session identifiers. The restriction of
  $\exec$ to $S$ is defined as the trace $\exec|_S = [\evt^{\sid_{i_1}}_{i_1};
  \dots; \evt^{\sid_{i_h}}_{i_h}]$ satisfying the following: $i_1 < \ldots <
  i_h$ and for all $j\in \{1, \ldots, \ell\}$, there exists $k\in \{1, \ldots,
  h\}$ such that $j = i_k$ if and only if $\sid_j\in S$.
\end{defi}

Given a valid and well-formed execution $\exec$ and a set of
sessions~$S$, the goal of this section is to show that the restriction
$\exec|_S$ is a valid and well-formed execution. Since messages coming
from one session can be used to build a message for another session,
to prove such a result, it is important to require some conditions
on~$S$. Basically, we will consider a set $S$ that satisfies the
following requirement:

\begin{center}
  for all $\sid_1$ and $\sid_2$ such that 
$\sameTagas(\exec,
  \sid_1) = \sameTagas(\exec, \sid_2)$, \\
we have that $\sid_1\in S$ if and only if $\sid_2\in
  S$.
\end{center}

\noindent This means that sessions using the same tag should have the
same status w.r.t. the set~$S$.

In the following of this section we will first show that
\begin{enumerate}[label=(\roman*)]
\item such a restricted execution is still a \emph{valid} execution, and
\item that the restriction preserves \emph{satisfiability} of attack formulas.
\end{enumerate}


\subsection{Validity of the restriction}
\label{subsec:validity-restriction}

First, we show that in a well-formed and valid execution trace, terms
that occur in sessions that are tagged differently do not share any
name.
 

\newsavebox{\lemdisjointness}
\sbox{\lemdisjointness}{\vbox{%
\begin{lem}
\label{lem:disjointness}
  Let $\Pi$ be a $k$-party protocol, and $\exec = [\evt^{\sid_1}_1; \dots;
  \evt^{\sid_\ell}_\ell]$ be a well-formed valid execution of $\widetilde \Pi$
  w.r.t.  some set~$T_0$ of ground atoms. Let $\sess_1$ and
  $\sess_2$ be two session identifiers. We have that:
\[
\begin{array}{c}
\sameTagas(\exec, \sess_1) \not=
  \sameTagas(\exec, \sess_2) \\
\mbox{ implies } \\
\names(\exec, \sess_1) \cap \names(\exec,
  \sess_2) = \emptyset
\end{array}
\]
{where $\names(\exec,\sess) = \{u ~|~ u \in
\names(\evt^{\sid_j}_j)\mbox{ for some $1 \leq j \leq \ell$ such that $\sid_j = \sess$}\}$.}
\end{lem}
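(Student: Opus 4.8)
The plan is to prove the contrapositive: if some name $u$ occurs in events of both $\sess_1$ and $\sess_2$, then $\sameTagas(\exec,\sess_1) = \sameTagas(\exec,\sess_2)$. So suppose $u \in \names(\exec,\sess_1) \cap \names(\exec,\sess_2)$. The first thing I would do is observe that $u$ must be an actual name of $\mathcal{N}$ occurring in the trace, so by well-formedness (Condition~\ref{def:wfexec3} of Definition~\ref{def:wfexec}) applied to any event of $\sess_1$ containing $u$, we get $u \in \{n_t^{\epsilon,S_1} \mid t \in T\} \cup \{n_y^{\sid} \mid y \in \Yvars,\ \sid \in S_1\}$ where $S_1 = \sameTagas(\exec,\sess_1)$; and symmetrically $u$ lies in the analogous set for $S_2 = \sameTagas(\exec,\sess_2)$. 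The key point is that the naming conventions make these two families of names syntactically distinguishable and essentially injective in the relevant parameters, so matching $u$ against both descriptions forces the parameters to coincide.

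Concretely, I would split on the form of $u$. If $u = n_t^{\epsilon,S}$ is one of the abstraction names introduced by the transformation $\overline{\;\cdot\;}$, then the second coordinate of its superscript is literally the set of session identifiers, so $u \in \{n_t^{\epsilon,S_1}\mid\ldots\}$ forces the superscript-set to be $S_1$ and $u \in \{n_t^{\epsilon,S_2}\mid\ldots\}$ forces it to be $S_2$; hence $S_1 = S_2$, i.e.\ $\sameTagas(\exec,\sess_1) = \sameTagas(\exec,\sess_2)$. If instead $u = n_y^{\sid'}$ is a fresh nonce generated in a session $\sid'$ (per Definition~\ref{def:scenario}), then well-formedness gives $\sid' \in S_1$ and $\sid' \in S_2$. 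Since $\sid'$ is a single session and $\sameTagas$ partitions the sessions into blocks of those sharing the same expected tag, $\sid' \in S_1 = \sameTagas(\exec,\sess_1)$ means $\tagt(\exec,\sid') = \tagt(\exec,\sess_1)$ (or both are $\bot$ and then $S_1 = \{\sess_1\}$, forcing $\sid' = \sess_1$; similarly on the other side), and likewise $\tagt(\exec,\sid') = \tagt(\exec,\sess_2)$, so $\tagt(\exec,\sess_1) = \tagt(\exec,\sess_2)$ and therefore $\sameTagas(\exec,\sess_1) = \sameTagas(\exec,\sess_2)$. The case where one of $\tagt(\exec,\sess_i)$ equals $\bot$ needs a short separate argument, using that $\sameTagas$ then returns the singleton $\{\sess_i\}$, so $\sid'$ must equal $\sess_i$ and one chases the definitions back.

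I would also need one small bookkeeping step up front: justify that the two disjoint families of names $\{n_t^{\epsilon,S}\}$ and $\{n_y^{\sid}\}$ do not overlap (they are distinguished by whether the superscript mentions $\epsilon$), and that within the first family the map $(t,S)\mapsto n_t^{\epsilon,S}$ is injective in $S$, while within the second the map $(y,\sid)\mapsto n_y^{\sid}$ is injective in $\sid$ — both of which are built into the notational conventions fixed before Definition~\ref{def:wfexec}. Given that, the argument above is just a case analysis. The main obstacle I anticipate is not any of these cases individually but making the $\bot$ corner cases airtight: when an expected tag is undefined, $\sameTagas$ collapses to a singleton, and one has to be careful that "$\sid'$ belongs to both singletons" really does pin down $\sess_1 = \sid' = \sess_2$ rather than merely $\sess_1$ and $\sess_2$ each being equal to $\sid'$ in possibly vacuous ways — but in fact that chain of equalities is exactly what is needed, since it gives $\sess_1 = \sess_2$ and hence trivially $\sameTagas(\exec,\sess_1) = \sameTagas(\exec,\sess_2)$.
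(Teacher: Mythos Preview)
Your proposal is correct and follows essentially the same approach as the paper's proof: both argue by contradiction/contrapositive, invoke Condition~\ref{def:wfexec3} of well-formedness to pin down the shape of a shared name, and then do a case split on whether the name is of the form $n_t^{\epsilon,S}$ or $n_y^{\sid'}$, using the naming conventions to force $\sameTagas(\exec,\sess_1)=\sameTagas(\exec,\sess_2)$. Your treatment of the $\bot$ corner case is slightly more explicit than the paper's (which simply asserts that $\sid\in\sameTagas(\exec,\sess_1)$ implies $\sameTagas(\exec,\sid)=\sameTagas(\exec,\sess_1)$ without spelling out the singleton subcase), but the underlying argument is the same.
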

}}

\noindent\usebox{\lemdisjointness}

The goal of the next lemma is to show that deducibility is preserved when we
consider the trace $\exec|_S$. Note that the previous lemma allows us to ensure
that the terms we removed from the trace are ``sufficiently disjoint'' from the
ones we keep. This is important to ensure that deducibility is preserved in the
trace $\exec|_S$.


\newsavebox{\lemprojecteddeducibility}
\sbox{\lemprojecteddeducibility}{\vbox{%
\begin{lem}
\label{lem:projected.deducibility}
Let $\Pi$ be a $k$-party protocol, and $\exec = [\evt^{\sid_1}_1;
\dots; \evt^{\sid_\ell}_\ell]$ a well-formed valid execution of
$\widetilde \Pi$ w.r.t. some set~$T_0$ of ground atoms, and such
that {$T_0 \cup \K(\exec) \not\vdash k$ for any $k \in \lgAtom
  \smallsetminus (\Kcal_\epsilon \cup T_0)$} ($\exec$ does not reveal
any long term keys).  Let $S$ be a set of sessions such that:
\begin{quote}
  for all session identifiers $sess_1$ and $sess_2$ such that
  $\sameTagas(\exec, sess_1) = \sameTagas(\exec, sess_2)$, we have
  that $sess_1\in S$ if and only if $sess_2\in S$.
\end{quote}
For all term $t\in\st(\exec|_S)$ such that $T_0\cup \K(\exec)\vdash t$, we have that $T_0 \cup
\K(\exec|_S)\vdash t$.
\end{lem}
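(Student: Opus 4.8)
### Plan for the Proof of Lemma~\ref{lem:projected.deducibility}

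The plan is to proceed by induction on a simple proof $\pi$ witnessing $T_0 \cup \K(\exec) \vdash t$, exactly as in the proof of Lemma~\ref{lem:deducibility.of.abstracted.terms}. Fix such a $\pi$; by the results recalled after Definition~\ref{def:simpleproof} we may assume it is simple with respect to the increasing sequence of knowledge sets $T_0 \subseteq T_0\cup\K(\exec_1) \subseteq \dots \subseteq T_0\cup\K(\exec_\ell)$. The goal is to show $T_0 \cup \K(\exec|_S) \vdash t$ for every $t \in \st(\exec|_S)$ that was deducible from the full knowledge. First I would dispose of the base case: if $\pi$ is reduced to a leaf (possibly followed by projections), then either $t$ is an atom available to the intruder ($t \in T_0 \cup \A \cup \Nces_\epsilon \cup \Kcal_\epsilon \cup \{\pub(a)\}$), in which case $T_0 \cup \K(\exec|_S) \vdash t$ trivially, or $t \in \K(\exec)$ is sent in some event. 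In the latter case I need $t$ to be sent by an event that survives the restriction; this is where the hypothesis $t \in \st(\exec|_S)$ and well-formedness come in — I must argue the sending event belongs to a session in $S$.

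For the inductive step I would split on the last rule of $\pi$. If $\pi$ ends with a composition rule, $t = \f(t_1,\dots,t_n)$ and the immediate subproofs give $T_0\cup\K(\exec)\vdash t_i$; since $t \in \st(\exec|_S)$ implies each $t_i \in \st(\exec|_S)$, the induction hypothesis applies to each $t_i$ and we recompose. If $\pi$ ends with a decomposition rule (say symmetric decryption, the other cases being analogous), then by the locality lemma (Lemma~\ref{lem:locality}) the decomposed term, e.g. $\enc(t_1,t_2)$, lies in $\st(\K(\exec_i))$ — using atomicity of $T_0, \Nces_\epsilon, \Kcal_\epsilon$ — so there is some earliest send event $\evt_{k_0}^{\sid_{k_0}} = \snd(u_0)$ with $\enc(t_1,t_2) \in \st(u_0)$. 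The crux is to show that $\sid_{k_0} \in S$, so that this send event (and hence the subterm) is retained in $\exec|_S$; then the subproofs, which deduce $\enc(t_1,t_2)$ and the key $t_2$, can be handled by the induction hypothesis once I check those terms lie in $\st(\exec|_S)$, and a final decryption step recovers $t$.

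The main obstacle — and the reason Lemma~\ref{lem:disjointness} and the no-leaked-long-term-keys hypothesis are needed — is precisely establishing that the relevant subterms and sending sessions survive the restriction. The argument is: $t \in \st(\exec|_S)$ means $t$ occurs in an event of some session $\sess \in S$; by well-formedness (Definition~\ref{def:wfexec}, conditions~\ref{def:wfexec2} and~\ref{def:wfexec3}) the $k$-tags and the names appearing in that event are governed by $\sameTagas(\exec,\sess)$. Tracking how $\enc(t_1,t_2)$ could have been built and forwarded, and using that long-term keys are never revealed (so the intruder cannot forge a tagged encryption under an honest key), one shows any cryptographic subterm of a term reachable in $\exec|_S$ must itself originate from a session whose $\sameTagas$-class either equals $\sameTagas(\exec,\sess)$ or is a class entirely contained in $S$ by the closure hypothesis on $S$. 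Lemma~\ref{lem:disjointness} guarantees that names from classes outside $S$ cannot ``leak into'' the terms we keep, which is exactly what lets the induction go through with $\exec|_S$ in place of $\exec$. Once this containment bookkeeping is settled, the deduction-theoretic manipulations (recomposing, applying projections, final decryption) are routine and mirror Lemma~\ref{lem:deducibility.of.abstracted.terms}.
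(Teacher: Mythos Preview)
Your overall structure --- induction on a simple proof $\pi$, splitting on the last rule, invoking locality (Lemma~\ref{lem:locality}) in the decomposition case --- matches the paper's proof, and the composition case is fine since $t_i \in \st(t) \subseteq \st(\exec|_S)$.

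There is, however, a genuine gap in both the leaf case and the decomposition case. Your mechanism for showing that the relevant sending session lies in $S$ is name-sharing (via Lemma~\ref{lem:disjointness}) or tag-sharing, but this only works when the term under consideration carries a name outside $\Nces_\epsilon$. When $\names(t) \subseteq \Nces_\epsilon$ there is no honest name or session nonce to pin $t$ to a $\sameTagas$-class, and your argument that ``the sending event belongs to a session in $S$'' or that ``$\sid_{k_0} \in S$'' simply does not go through. The paper closes this gap with a preliminary \emph{Claim}: if $\names(t) \subseteq \Nces_\epsilon$ then $T_0 \vdash t$ outright. The Claim is proved by observing that (i) any cryptographic subterm of $t$ would, by well-formedness (conditions~\ref{def:wfexec1} and~\ref{def:wfexec2} of Definition~\ref{def:wfexec}), carry the tag $\tagt(\exec,\sid) \neq \bot$, which by construction of $\widetilde\Pi$ contains the session's own fresh nonce $n_v^{\sid} \notin \Nces_\epsilon$ --- so in fact $\encst(t) = \emptyset$ and $t$ is a tuple of atoms; and (ii) since $\exec$ reveals no long-term keys, those atoms cannot be $\priv(a)$ or $\shk(a,b)$ outside $T_0 \cup \Kcal_\epsilon$, hence every atomic component of $t$ is derivable from $T_0$ alone. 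This is precisely where the no-leaked-keys hypothesis enters --- not, as you write, to prevent the intruder from forging encryptions under honest keys.

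Once the Claim is in place, your variant of the decomposition step (show $\sid_{k_0} \in S$ via a shared non-$\Nces_\epsilon$ name in $t$, then apply the induction hypothesis to $\enc(t_1,t_2)$ and to $t_2$, both lying in $\st(\exec,\sid_{k_0}) \subseteq \st(\exec|_S)$, and finally decrypt and project) does work and is arguably cleaner than the paper's version, which applies the induction hypothesis to each premise $t_i$ separately after locating a session in $S$ containing it through a ``maximal premise'' $t_j$ that concentrates all the names. But without the Claim your induction does not close.
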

}}

\noindent\usebox{\lemprojecteddeducibility}

Now, relying on Lemma~\ref{lem:projected.deducibility}, we are able to show that the trace $\exec|_S$ is valid.


\newsavebox{\provalidityrestriction}
\sbox{\provalidityrestriction}{\vbox{%
\begin{prop}
  \label{pro:validity-restriction}
  Let $\Pi$ be a $k$-party protocol, and $\exec = [\evt^{\sid_1}_1;
  \dots; \evt^{\sid_\ell}_\ell]$ a well-formed valid execution of
  $\widetilde \Pi$ w.r.t. some set~$T_0$ of ground atoms, and
  such that {$T_0 \cup \K(\exec) \not\vdash k$ for any $k \in \lgAtom
    \smallsetminus (\Kcal_\epsilon \cup T_0)$} ($\exec$ does not
  reveal any long term keys). Let $S$ be a set of sessions such that:
  \begin{quote}
    for all session identifiers $\sess_1$ and $\sess_2$ such that $\sameTagas(\exec, \sess_1) = \sameTagas(\exec, \sess_2)$, we have that $\sess_1\in S$ if and only if $\sess_2\in S$.
  \end{quote}
  We have that  $\exec|_S$ is also a well-formed and valid execution of $\widetilde \Pi$ w.r.t.  $T_0.$
\end{prop}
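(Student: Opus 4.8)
The plan is to prove Proposition~\ref{pro:validity-restriction} by establishing the three defining conditions of a well-formed execution trace (Definition~\ref{def:wfexec}) for $\exec|_S$, together with its validity (Definition~\ref{def:validexec}). The well-formedness conditions are mostly inherited: since $\exec|_S$ is obtained from $\exec$ purely by deleting whole events, we have $\st(\exec|_S) \subseteq \st(\exec)$, so Condition~\ref{def:wfexec1} ($k$-taggedness) is immediate; similarly $\Tags(\exec|_S,\sid) \subseteq \Tags(\exec,\sid)$ and $\tagt(\exec|_S,\sid) = \tagt(\exec,\sid)$ whenever $\sid \in S$ (deleting events from \emph{other} sessions does not touch the first $k$ communication events of $\sid$), which gives Condition~\ref{def:wfexec2}. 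For Condition~\ref{def:wfexec3} one needs that $\sameTagas(\exec|_S,\sid_i) \subseteq \sameTagas(\exec,\sid_i)$ for the sessions $\sid_i$ surviving in $\exec|_S$; this again follows because expected tags of surviving sessions are unchanged, and the name-set constraint is then inherited.

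The substantive part is validity. Here I would note first that $\exec$ does not reveal any long-term key (this is a hypothesis of the proposition), which is precisely what is needed to invoke Lemma~\ref{lem:projected.deducibility}. I would then proceed by induction on the length of $\exec|_S$, or equivalently walk through $\exec$ and show that for each prefix the restriction stays valid. Consider the events of $\exec|_S$ in order; the only nontrivial case is a receive event $\evt_{i_k}^{\sid_{i_k}} = \rcv(t)$ with $\sid_{i_k} \in S$. Since $\exec$ is valid we have $T_0 \cup \K(\exec_{i_k - 1}) \vdash t$, and since this event occurs in $\exec$ we certainly have $T_0 \cup \K(\exec) \vdash t$. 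I also need $t \in \st(\exec|_S)$: this holds because $t$ literally occurs in the surviving event $\rcv(t)$. Now apply Lemma~\ref{lem:projected.deducibility} with the set $S$ (whose side condition is exactly the hypothesis of the proposition) to obtain $T_0 \cup \K(\exec|_S) \vdash t$. Finally I must replace $\K(\exec|_S)$ by $\K((\exec|_S)_{k-1})$, i.e.\ only the messages sent \emph{before} this receive in $\exec|_S$: this is fine because message order is preserved by restriction, so every $\snd$ that precedes $\rcv(t)$ in $\exec$ and survives, precedes it in $\exec|_S$ as well, and a send occurring \emph{after} $\rcv(t)$ in $\exec$ cannot have contributed to the deduction of $t$ from $\K(\exec_{i_k-1})$ in the first place; one has to be a little careful that Lemma~\ref{lem:projected.deducibility} is applied to the appropriate prefix rather than the whole trace, but since the lemma's statement concerns $\K(\exec)$ versus $\K(\exec|_S)$ one can instead apply it to the prefix $\exec_{i_k}$ and its restriction.

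The main obstacle I anticipate is this last bookkeeping point about prefixes: Lemma~\ref{lem:projected.deducibility} as stated talks about the full trace $\exec$, so to get a genuinely valid \emph{execution} (where each receive is justified by the messages sent strictly before it) I need to apply the lemma to each prefix $\exec_{i}$ separately, and check that (a) $\exec_i$ is itself well-formed and valid and does not reveal long-term keys, and (b) $(\exec_i)|_S = (\exec|_S)_{k}$ for the appropriate $k$. Point (a) is inherited from the corresponding facts about $\exec$ (a prefix of a valid, well-formed, non-revealing trace has the same properties — revealing a long-term key is a monotone condition, so a prefix reveals no more than the whole), and point (b) is a direct consequence of how restriction interacts with prefixing. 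Once these are in place the induction closes and we conclude that $\exec|_S$ is a well-formed and valid execution of $\widetilde{\Pi}$ w.r.t.\ $T_0$.
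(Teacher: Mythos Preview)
Your proposal is correct and follows the same approach as the paper: well-formedness of $\exec|_S$ is inherited (the paper dismisses this in one line), and validity is proved by induction on the length of $\exec|_S$, invoking Lemma~\ref{lem:projected.deducibility} at each receive event.

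On the bookkeeping point you flag, the paper takes a simpler route than your prefix argument. It applies Lemma~\ref{lem:projected.deducibility} only to the full trace $\exec$ (whose hypotheses are given), obtaining $T_0 \cup \K(\exec|_S) \vdash t$, and then observes that since the event being validated in the inductive step is the \emph{last} one of $\exec|_S$ and is a receive, one has $\K((\exec|_S)_{n-1}) = \K((\exec|_S)_n) = \K(\exec|_S)$; the earlier receives are already handled by the induction hypothesis. This sidesteps any need to re-instantiate Lemma~\ref{lem:projected.deducibility} at prefixes of $\exec$, and hence avoids checking that prefixes are themselves well-formed---a point where your claim ``a prefix of a valid, well-formed, non-revealing trace has the same properties'' would in fact need more care, since $\tagt$ and $\sameTagas$ are defined relative to the whole trace and can change when you truncate.
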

}}

\noindent\usebox{\provalidityrestriction}


\begin{proof}
  Let $1 \le i_1 < \dots < i_n \le \ell$ such that $\exec|_S =
  [\evt^{\sid_{i_1}}_{i_1}; \dots; \evt^{\sid_{i_n}}_{i_n}]$. We prove
  by induction on the length $n$ of $\exec|_S$, that $\exec|_S$ is a
  valid execution of $\widetilde \Pi$, w.r.t $T_0$.

  \begin{description}
  \item[Base case] If $n=0$ we have that $\exec|_S =[]$, and thus
    $\exec|_S$ is a valid execution of $\widetilde \Pi$ w.r.t.
    $T_0$.

  \item[Inductive case] By induction hypothesis, we know that
    \[[\evt^{\sid_{i_1}}_{i_1}; \dots;
    \evt^{\sid_{i_{n-1}}}_{i_{n-1}}]\] is a valid execution of
    $\widetilde \Pi$ w.r.t. $T_0$. If $\evt^{\sid_n}_n$ is a
    send or a status event, then \[\exec|_S = [\evt^{\sid_{i_1}}_{i_1};
    \dots; \evt^{\sid_{i_{n-1}}}_{i_{n-1}}; \evt^{\sid_{i_n}}_{i_n}]\]
    is a valid execution of $\widetilde \Pi$ w.r.t.  $T_0$ (see
    Definition~\ref{def:validexec}). On the other hand, if
    $\evt^{\sid_n}_n$ is a receive event, i.e.
    $\evt^{\sid_n}_n = \rcv(t)$, we need to show \[T_0 \cup
    \K([\evt^{\sid_{i_1}}_{i_1}; \dots;
    \evt^{\sid_{i_{n-1}}}_{i_{n-1}}]) \vdash t\] knowing that \[T_0 \cup
    \K([\evt^{\sid_{1}}_{1}; \dots; \evt^{\sid_{(i_n)-1}}_{(i_n)-1}])
    \vdash t\] which because $\evt_{i_n}$ is a reception event implies
    that $T_0 \cup \K(\exec) \vdash t$.  But then, according to
    Lemma~\ref{lem:projected.deducibility} we know that $T_0 \cup
    \K(\exec|_S) \vdash t$.  It suffices now to notice that by
    definition of $\K()$, because~$\evt_{i_n}$ is a reception event,
    we have that:
    \[\K([\evt^{\sid_{i_1}}_{i_1}; \dots;
    \evt^{\sid_{i_{n-1}}}_{i_{n-1}}]) = \K([\evt^{\sid_{i_1}}_{i_1};
    \dots; \evt^{\sid_{i_n}}_{i_n}]) = \K(\exec|_S).
    \] 
  \end{description}
  This concludes the proof that $\exec|_S$ is a valid execution of
  $\widetilde \Pi$ w.r.t. the initial intruder knowledge
  $T_0$. Finally, it is obvious that $\exec|_S$ satisfies the 3
  conditions of well-formedness (Definition~\ref{def:wfexec}), from
  the hypothesis that $\exec$ does.\qedhere
\end{proof}

\subsection{Satisfiability of the formula}

The way the set $S$ of sessions is chosen depends on the sessions that
are needed to satisfy the attack formula under study. We therefore
introduce the notion of \emph{witness sessions} which for a given
formula $\phi$ can be used to witness that $\phi$ holds.

\begin{defi}[witness sessions, $\Witness$]
\label{def:witness}
  Let $\Pi$ be a protocol, $\phi$ a closed quantifier-free formula of $\Logic$, and $T_0$ be a set of ground atoms.
  Let $\exec = [\evt_1^{\sid_1}; \dots; \evt_\ell^{\sid_\ell}]$ be a
  valid execution of $\Pi$ (w.r.t.~$T_0$) 
satisfying $\phi$, i.e. $\langle \exec, T_0\rangle \models
\phi$. We define the 
set of sessions $\Witness(\exec, \phi)$ witnessing that $\langle
\exec, T_0\rangle \models \phi$ by structural induction on $\phi$ as follows:
\begin{itemize}
\item   $\W(\exec, \neg\phi)  =  \W^-(\exec, \phi)$;
\item $\W(\exec,\true) = \W(\exec,\learn(t)) 
= \W(\exec,\Comp(t)) = \emptyset$;
\item $\W(\Q(t_1,\ldots,t_n)) = \{\sid_\ell\}$;
\item $\W(\exec,\Diamond \phi) = \W(\exec_i,\phi)$ where $i$ is such
  that $\langle \exec_i,T_0\rangle \models \phi$;
\item   $\W(\exec, \phi_1 \vee \phi_2)= \W(\exec,\phi_1)$ if $\langle
  \exec,T_0 \rangle \models \phi_1$ and $\W(\exec,\phi_2)$ otherwise;
\end{itemize}
where
\begin{itemize}
\item $\W^-(\exec, \neg\phi) = \W(\exec, \phi)$;
\item $\W^-(\exec,\true) = \W^-(\exec,\learn(t)) =
  \W^-(\exec,\Comp(t)) = \W^-(\exec,\Diamond\phi) = \emptyset$;
\item $\W^-(\Q(t_1,\ldots,t_n)) = \{\sid_\ell\}$  when $\length(\exec)
  > 0$ and $\emptyset$ otherwise;
\item $\W^-(\exec,\phi_1 \vee \phi_2) =  \W^-(\exec, \phi_1) \cup \W^-(\exec, \phi_2)$. 
\end{itemize}
\end{defi}

Intuitively, we keep in the trace the sessions that are needed to satisfy the
formula under study. Essentially, we have to keep those that are used to satisfy
the status events occurring in the formula.

\newsavebox{\lemprojsatisfiabilitybis}
\sbox{\lemprojsatisfiabilitybis}{\vbox{%
\begin{lem}\label{lem:projsatisfiabilitybis}
  Let $\Pi$ be a $k$-party protocol, and $\exec = [\evt_1^{\sid_1}; \dots;
  \evt_\ell^{\sid_\ell}]$ be a valid and
  well-formed execution of $\widetilde \Pi$ w.r.t. some set $T_0$ of
  ground atoms such that $T_0 \cup \K(\exec)
  \not\vdash k$ for any $k \in \lgAtom \smallsetminus (\Kcal_\epsilon
  \cup T_0)$.
Let $\phi = \exists
  x_1. \ldots. \exists x_n. \psi$ be an attack formula
of   $\Logic$, and $\sigma$
be a ground substitution such that $\langle \exec, T_0
\rangle \models \psi\sigma$.
Let $S$ be a set of session identifiers such
  that:
  \begin{enumerate}
\item $\Witness(\exec, \psi\sigma) \subseteq S$, and
\item $\forall \sess_1, \sess_2$ with
    $\tagt(\exec, \sess_1) = \tagt(\exec, \sess_2)$, we have that 
\begin{center}
$\sess_1\in S$ if and only if
    $\sess_2\in S$.
\end{center}
\end{enumerate}
  We have that $\exec|_S$ is an execution of $\widetilde \Pi$ that
  satisfies $\phi$, i.e.
  $\langle \exec|_S, T_0\rangle \models \phi$.
\end{lem}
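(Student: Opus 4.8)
The plan is to prove Lemma~\ref{lem:projsatisfiabilitybis} by establishing a sharper statement by structural induction on $\psi$, from which the lemma follows. The first step is to observe that $\exec|_S$ is indeed a valid and well-formed execution of $\widetilde{\Pi}$: this is exactly what Proposition~\ref{pro:validity-restriction} gives us, since the hypotheses of that proposition (well-formedness, validity, no revealed long-term keys, and the ``same-tag closure'' condition on $S$) all hold by assumption here. So the real content is the preservation of satisfaction. Since $\phi = \exists x_1.\dots.\exists x_n.\psi$ and we are given a ground substitution $\sigma$ with $\langle \exec, T_0\rangle \models \psi\sigma$, it suffices to show $\langle \exec|_S, T_0\rangle \models \psi\sigma$; then witnessing the existential quantifiers with the same ground terms gives $\langle \exec|_S, T_0\rangle \models \phi$.

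The core is therefore an induction establishing: \emph{for every subformula $\chi$ of $\psi$ (more precisely, every closed instance $\chi\sigma$) and every prefix $\exec'$ of $\exec$, if $\Witness(\exec', \chi\sigma) \subseteq S$ and $\langle \exec', T_0\rangle \models \chi\sigma$, then $\langle \exec'|_S, T_0\rangle \models \chi\sigma$; and dually, if $\Witness^-(\exec', \chi\sigma) \subseteq S$ and $\langle \exec', T_0\rangle \not\models \chi\sigma$, then $\langle \exec'|_S, T_0\rangle \not\models \chi\sigma$.} The two directions must be proved simultaneously because negation swaps them. I would go through the cases following Definition~\ref{def:witness}: for $\true$ both directions are immediate; for $\neg\chi$ we swap the two induction hypotheses using $\W(\exec,\neg\phi) = \W^-(\exec,\phi)$; for $\chi_1\vee\chi_2$ the positive case picks the disjunct that holds (and $\W$ keeps exactly that disjunct's witnesses), while the negative case needs both, matching $\W^-(\exec,\phi_1\vee\phi_2) = \W^-(\exec,\phi_1)\cup\W^-(\exec,\phi_2)$; for $\Diamond\chi$ the positive case uses that there is some $i$ with $\langle \exec_i, T_0\rangle \models \chi\sigma$ and $\W(\exec,\Diamond\chi) = \W(\exec_i,\chi)\subseteq S$, and we must check that $\exec_i|_S = (\exec|_S)_{i'}$ for the appropriate $i'$, i.e. restriction and prefixing commute up to reindexing; the negative case is trivial because $\Diamond$ cannot occur negatively in an attack formula's scope in a problematic way (Condition~\ref{def:attack-cond4}), and in any case $\W^-(\exec,\Diamond\phi) = \emptyset$, but we still need that $\langle \exec', T_0\rangle \not\models \Diamond\chi\sigma$ implies the same for $\exec'|_S$ --- which holds since $\exec'|_S$'s prefixes are among (the $S$-restrictions of) $\exec'$'s prefixes, so a witness on the restricted side would lift.

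The two delicate cases are $\learn(t)$ / $\Comp(t)$ and the status-event case $\Prd(t_1,\dots,t_n)$. For $\Comp(u)$, satisfaction depends only on $T_0$, not on the trace, so it is preserved trivially in both directions. For $\learn(m)$, satisfaction on the positive side says $T_0 \cup \K(\exec') \vdash m$; here $\W(\exec',\learn(t)) = \emptyset$ so $S$ carries no constraint from this subformula, but we still need deducibility to survive restriction. This is where we invoke Lemma~\ref{lem:projected.deducibility}: by Condition~\ref{def:attack-cond1} of Definition~\ref{def:attack} the term $m = t\sigma$ is built from atomic terms without names, hence after substitution it lies in $\st(\exec|_S)$ (its atoms are agent names, long-term keys, or nonces already present), so the lemma applies and yields $T_0 \cup \K(\exec'|_S) \vdash m$. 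The negative side for $\learn$ is where attack-formula conditions are essential: $\learn$ may only occur positively, so we never need to transfer non-deducibility, i.e. $\W^-(\exec',\learn(t)) = \emptyset$ is harmless. For the status-event case, $\langle \exec', T_0\rangle \models \Prd(t_1,\dots,t_n)$ means $\exec'$ ends with the event $\Prd(t_1,\dots,t_n)$, and $\W(\exec',\Prd(t_1,\dots,t_n)) = \{\sid_\ell\}$ where $\sid_\ell$ is the session of that last event; since $\sid_\ell \in \Witness \subseteq S$, the last event of $\exec'$ survives in $\exec'|_S$ and is still the last one, so $\langle \exec'|_S, T_0\rangle \models \Prd(t_1,\dots,t_n)$. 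The dual is symmetric using $\W^-$.

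I expect the main obstacle to be bookkeeping around prefixes and restriction commuting, together with the precise reindexing needed in the $\Diamond$ case, and ensuring that the status-event case genuinely keeps the witnessing event \emph{as the last event} of the restricted prefix (not merely present) --- this relies on the fact that $\Prd(\dots)$ is a status event with session $\sid_\ell$, on well-formedness, and on $S$ being closed under the same-tag relation so that removing other sessions does not disturb the tail. A secondary subtlety is making sure that instantiating the existentials of $\phi$ with the same ground terms on the restricted trace is legitimate, which is fine because the semantics of $\exists$ (Definition~\ref{def:validity}) quantifies over \emph{all} ground terms, independently of the trace. Once the strengthened induction is in place, the lemma is immediate: take $\chi = \psi$, $\exec' = \exec$, use hypothesis~(1) that $\Witness(\exec,\psi\sigma)\subseteq S$, conclude $\langle \exec|_S, T_0\rangle\models\psi\sigma$, and then $\langle \exec|_S,T_0\rangle\models\phi$.
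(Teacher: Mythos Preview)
Your overall induction scheme is close to what the paper does (via Lemma~\ref{lem:projsatisfiability}), but there is a genuine gap in the $\learn$ case. You claim that, because Condition~\ref{def:attack-cond1} of Definition~\ref{def:attack} forces $\st(\phi)\subseteq \A\cup\Xvars\cup\Yvars$, the instantiated term $m=t\sigma$ ``lies in $\st(\exec|_S)$''. This does not follow: the condition constrains $t$ \emph{before} substitution, but the existential quantifiers range over \emph{all} ground terms, so $t\sigma$ can be an arbitrary ground term, in particular one that only occurs in sessions outside $S$ (or nowhere in $\exec$ at all). Lemma~\ref{lem:projected.deducibility} then does not apply, because its hypothesis is precisely $t\in\st(\exec|_S)$. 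The paper's own proof sketch makes this failure explicit with the formula $\exists x.\,\learn(x)$: the variable $x$ need not occur in any status event, so nothing ties $x\sigma$ to a session in $S$.

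The paper's fix is an extra normalisation step (Lemma~\ref{lem:execterms.attack}): before running the induction, replace $\sigma$ by a substitution $\sigma'$ that maps any $x_j$ with $\sigma(x_j)\notin\st(\exec,S)\cup\A\cup\lgAtom\cup\Nces_\epsilon\cup\Kcal_\epsilon$ to a fresh intruder nonce $n^\epsilon_\epsilon$, and show $\langle\exec,T_0\rangle\models\psi\sigma'$. After this, every positive $\learn(t\sigma')$ has $t\sigma'$ either trivially deducible or a genuine subterm of $\exec|_S$, and your induction (essentially the paper's Lemma~\ref{lem:projsatisfiability}) goes through. Without this step, your argument for the $\learn$ case is unsupported; the rest of your case analysis is fine and matches the paper.
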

}}

\noindent\usebox{\lemprojsatisfiabilitybis}

\begin{proof} (sketch) The idea is to show that $\langle \exec|_S, T_0 \rangle
  \models \psi\sigma$. However, this result is wrong in general since the
  substitution $\sigma$ witnessing the fact that the attack formula $\phi$ is
  satisfiable can use some terms that only occur in events coming from sessions
  that are not in $S$.  Thus, the first step of the proof consists in showing
  that we can consider a substitution $\sigma$ that only involves subterms that
  occur in $\st(\exec|_S)$.  For instance, consider the formula $\exists
  x. \learn(x)$. Since, the variable $x$ does not occur in any status event, we
  cannot ensure that $x$ will be bound to a term coming from a session in~$S$.
  However, intuitively, we can replace such a term $x\sigma$ by a nonce in
  $\Nces_\epsilon$ still preserving the satisfiability of the attack formula.
  Now, we can assume w.l.o.g. that that for all $j\in\{1, \dots, n\}$,
  $\sigma(x_j) \in\st(\exec, S)\cup\A\cup\lgAtom \cup \Nces_\epsilon$.  Then, we
  proceed by induction on the length of the execution trace and the size of the
  formula, and we show that $\langle \exec|_S, T_0 \rangle \models
  \psi\sigma$. In other words, the attack formula is satisfiable and $\sigma$ is
  a witness of this fact.
\end{proof}

\newsavebox{\proattacktildepibis}
\sbox{\proattacktildepibis}{\vbox{%
\begin{prop}
  \label{pro:attacktildepibis}
  Let~$\Pi$ be a $k$-party protocol and $T_0$ be a finite set of
  ground atoms such that 
$\lgAtom(\Pi) \cap \plaintext(\Pi) \subseteq T_0 \cup
\Kcal_\epsilon$. 
Let $\exec$ be a valid and well-formed execution 
of $\widetilde{\Pi}$ w.r.t. $T_0$, and $\phi = \exists x_1. \ldots.
\exists x_n. \psi$ be an
 attack formula such that $\pair{\exec}{T_0} \models \psi\sigma$ for
 some ground substitution $\sigma$. 
We have that $\pair{\exec|_S}{T_0} \models \phi$ where $S = \{\sid\mid \exists \sid'\in\Witness(\exec, \psi\sigma) \text{ and } \sid\in\sameTagas(\exec, \sid')\}$.
\end{prop}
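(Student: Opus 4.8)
The plan is to obtain the statement by combining the two results already proved in this section---Proposition~\ref{pro:validity-restriction}, which says that restricting a valid well-formed execution to a tag-closed set of sessions again yields a valid well-formed execution, and Lemma~\ref{lem:projsatisfiabilitybis}, which says that such a restriction still satisfies the attack formula---both applied to the set $S$ appearing in the statement. To invoke either of them on $\exec$ and $S$, three facts are needed: (i) $\exec$ reveals no long-term key; (ii) $\Witness(\exec,\psi\sigma)\subseteq S$; and (iii) $S$ is closed under the equivalence ``having the same expected tag''. Fact (ii) is immediate from the definition of $S$, since $\sid'\in\sameTagas(\exec,\sid')$ always, and (iii) holds essentially by construction: $S$ is a union of $\sameTagas$-classes, so two sessions for which $\sameTagas(\exec,\cdot)$ returns the same set---in particular, two distinct sessions with the same defined expected tag---are either both in $S$ or both outside it; sessions with an undefined expected tag form singleton $\sameTagas$-classes, and for those one uses that in $\widetilde{\Pi}$ every status event occurs after the $k$-event initialization phase, so that a session entered into $\Witness$ in order to satisfy a positive status event necessarily has a defined expected tag. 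Thus the real work is fact (i), which is the only place where the hypothesis $\lgAtom(\Pi)\cap\plaintext(\Pi)\subseteq T_0\cup\Kcal_\epsilon$ enters.

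To establish (i), I would show that if $k\in\lgAtom\smallsetminus(\Kcal_\epsilon\cup T_0)$ then $T_0\cup\K(\exec)\not\vdash k$. Assume the contrary. As $k$ is an atom, $\plaintext(k)=\{k\}$, so Lemma~\ref{lem:plaintext.deducibility} forces $k\in\plaintext(T_0)\cup\plaintext(\K(\exec))\cup\A\cup\Nces_\epsilon\cup\Kcal_\epsilon\cup\{\pub(a)\mid a\in\A\}$; since $k$ is a long-term key and $k\notin T_0\cup\Kcal_\epsilon$, this leaves $k\in\plaintext(\K(\exec))$. Because $\K(\exec)\subseteq\{u\mid\snd(u)\in\exec\}$, Lemma~\ref{lem:plaintext} applied to $\widetilde{\Pi}$ then gives $k\in\plaintext(\tr)\cup T_0\cup\Nces_\epsilon\cup\Kcal_\epsilon\cup\A\cup\{\pub(a)\mid a\in\A\}$, where $\tr$ is the symbolic trace of $\widetilde{\Pi}$ underlying $\exec$, hence $k\in\plaintext(\tr)$. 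Each term of $\tr$ is an instance $\evt\sigma_{r,\sid}$ of an event of $\widetilde{\Pi}(r)$, and $\sigma_{r,\sid}$ only renames variables (nonce variables to fresh names, other non-agent variables to fresh variables, agent variables to agents), so a long-term key in $\plaintext(\tr)$ must be $\alpha\sigma_{r,\sid}$ for a long-term atom $\alpha$ occurring in plaintext position in $\widetilde{\Pi}(r)$. Finally I would check that the transformation never puts a long-term atom in a new plaintext position: the preamble $\widetilde{\Pi}^{\init}$ only sends pairs of agents and of the fresh variables $z^j_i$, and the tagging operation only wraps ciphertexts, signatures and hashes with the tags $\tau_j$---which are built from agents and the variables $z^j_i$ only---without disturbing any key position; hence $\lgAtom(\widetilde{\Pi})\cap\plaintext(\widetilde{\Pi})\subseteq\lgAtom(\Pi)\cap\plaintext(\Pi)\subseteq T_0\cup\Kcal_\epsilon$. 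Therefore $\alpha\in T_0\cup\Kcal_\epsilon$, so $\alpha$ is ground, whence $k=\alpha\sigma_{r,\sid}=\alpha\in T_0\cup\Kcal_\epsilon$, contradicting the choice of $k$.

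With (i), (ii) and (iii) in hand the proof finishes at once: Proposition~\ref{pro:validity-restriction} yields that $\exec|_S$ is a valid well-formed execution of $\widetilde{\Pi}$ w.r.t.~$T_0$, and Lemma~\ref{lem:projsatisfiabilitybis} yields $\langle\exec|_S,T_0\rangle\models\phi$, which is the claim. The step I expect to be the main obstacle is (i): it is the only point at which the static assumption on $\plaintext(\Pi)$ must be converted into the dynamic property that $\exec$ leaks no long-term key, and doing so requires chaining the two plaintext-propagation lemmas and then verifying in detail that neither the initialization phase nor the tagging operation ever moves a long-term key into plaintext position. It is probably cleanest to isolate this as a separate lemma.
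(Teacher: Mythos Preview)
Your proof is correct and follows essentially the same route as the paper: establish that $\exec$ reveals no long-term key via Lemmas~\ref{lem:plaintext.deducibility} and~\ref{lem:plaintext} together with the hypothesis on $\lgAtom(\Pi)\cap\plaintext(\Pi)$, check that $S$ satisfies the two closure conditions, and then invoke Lemma~\ref{lem:projsatisfiabilitybis}. Two minor remarks: the appeal to Proposition~\ref{pro:validity-restriction} is superfluous here, since the statement only asks for $\langle\exec|_S,T_0\rangle\models\phi$ and Lemma~\ref{lem:projsatisfiabilitybis} already delivers that (validity of $\exec|_S$ is only needed later, in the proof of Theorem~\ref{theo:transfo1}); and your careful treatment of the $\bot$-tag case in~(iii) goes beyond what the paper does---the paper simply asserts the closure conditions ``by hypothesis'' on~$S$ without discussing the undefined-tag corner case.
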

}}

\noindent\usebox{\proattacktildepibis}

\begin{proof}
 Let $\exec$ be a valid and well-formed execution of $\widetilde\Pi$
 w.r.t. $T_0$ such that $\langle \exec, T_0\rangle \models
 \phi$.

\medskip{}

\noindent \emph{Claim:  $T_0 \cup \K(\exec)
  \not\vdash k$ for any $k \in \lgAtom \smallsetminus (\Kcal_\epsilon
  \cup T_0)$.} Assume that there exists $k \in
  \lgAtom$ such that $T_0 \cup \K(\exec)
  \vdash k$. Using Lemma~\ref{lem:plaintext.deducibility}, we
  obtain that $k\in \plaintext(\exec) \cup T_0 \cup \Kcal_\epsilon$,
  and relying on Lemma~\ref{lem:plaintext}, we conclude 
that $k\in \plaintext(\tr) \cup T_0 \cup \mathcal{K}_\epsilon$ where
$\tr$ is the symbolic trace underlying $\exec$. 
Now, by construction of $\tr$, if $k\in \plaintext(\tr)$, then there
exists $k'\in \plaintext(\Pi)$ such that $k = k'\sigma$ for some
$\sigma: \Xvars \rightarrow \A$. Hence, we have that $k' \in
\lgAtom(\Pi) \cup \plaintext(\Pi)$. Thanks to our hypothesis, we
conclude that $k'\in T_0\cup
\Kcal_\epsilon$, and thus $k'=k\in T_0\cup\Kcal_\epsilon$, which
concludes the proof of the claim.

\medskip{}

By hypothesis, we have that  $\langle \exec, T_0\rangle \models
\psi\sigma$ for some ground substitution $\sigma$. Moreover, by
 hypothesis, we have that:

  \begin{enumerate}
\item $\Witness(\exec, \psi\sigma) \subseteq S$, and
\item $\forall \sess_1, \sess_2$ with
    $\tagt(\exec, \sess_1) = \tagt(\exec, \sess_2)$, we have that 
\begin{center}
$\sess_1\in S$ if and only if
    $\sess_2\in S$.
\end{center}
\end{enumerate}

Hence, we can apply Lemma~\ref{lem:projsatisfiabilitybis} to conclude
that $ \langle \exec|_S, T_0\rangle \models \phi$.
\end{proof}

\section{Main results}
\label{sec:3-step}

In this section, we put the pieces together and prove Theorem~\ref{theo:transfo1}, the main result that was stated in Section~\ref{subsec:result}. We also prove Corollary~\ref{cor:secrecy} which allows us to obtain slightly stronger results for particular security properties.

To prove our main result, we first need to bound the number of sessions that are needed to witness the satisfiability of the attack formula under study.  This is the purpose of the following lemma that can be proved by induction on the structure of $\phi$.

\begin{lem}
\label{lem:boundingWitness}
Let $\Pi$ be a protocol, $\phi$ a closed quantifier-free formula of $\Logic$, and $T_0$ be set of ground atoms.
  Let $\exec = [\evt_1^{\sid_1}; \dots; \evt_\ell^{\sid_\ell}]$ be a valid execution of $\Pi$ (w.r.t.~$T_0$) satisfying $\phi$, i.e. $\langle \exec, T_0\rangle \models \phi$. We have that
$|\Witness(\exec, \phi)|\ \le\ \|\phi\|$.
\end{lem}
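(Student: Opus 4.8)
The plan is to prove by structural induction on the quantifier-free formula $\phi$ the slightly stronger statement that \emph{both} $|\Witness(\exec,\phi)| \le \|\phi\|$ whenever $\langle\exec,T_0\rangle\models\phi$, \emph{and} $|\W^-(\exec,\phi)| \le \|\phi\|^-$ whenever $\langle\exec,T_0\rangle\not\models\phi$. The two bounds have to be proved simultaneously because the definitions of $\W$ and $\W^-$ are mutually recursive through the negation case: $\W(\exec,\neg\phi)=\W^-(\exec,\phi)$ and $\W^-(\exec,\neg\phi)=\W(\exec,\phi)$, and correspondingly $\|\neg\phi\| = \|\phi\|^-$ and $\|\neg\phi\|^- = \|\phi\|$. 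So I would state the induction hypothesis as the conjunction of these two claims and carry them through together. Note that since $\phi$ is quantifier-free, the $\exists$ clause of Definition~\ref{def:witness} never arises, so only the cases $\true$, $\Prd(t_1,\dots,t_n)$, $\learn(t)$, $\Comp(t)$, $\neg\phi$, $\phi_1\vee\phi_2$, and $\Diamond\psi$ need treatment.

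First I would dispatch the base cases. For $\true$, $\learn(t)$, and $\Comp(t)$ we have $\W = \emptyset$ and $\|\cdot\| = 0$ (and similarly $\W^- = \emptyset$, $\|\cdot\|^- = 0$, with the one subtlety that $\W^-(\exec,\Diamond\phi)=\emptyset$ and $\|\Diamond\phi\|^-=0$ as well); for a status event $\Prd(t_1,\dots,t_n)$ we have $\W = \{\sid_\ell\}$ (and $\W^-\subseteq\{\sid_\ell\}$) while $\|\Prd(t_1,\dots,t_n)\| = \|\Prd(t_1,\dots,t_n)\|^- = 1$, so a singleton set is bounded by $1$ as required. Then the inductive step. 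The negation case is immediate from the identities above together with the induction hypothesis applied with the roles of $\W$ and $\W^-$ swapped (using that $\langle\exec,T_0\rangle\models\neg\phi$ iff $\langle\exec,T_0\rangle\not\models\phi$). For the disjunction case, when $\langle\exec,T_0\rangle\models\phi_1\vee\phi_2$ the set $\W(\exec,\phi_1\vee\phi_2)$ equals either $\W(\exec,\phi_1)$ or $\W(\exec,\phi_2)$ depending on which disjunct holds, so its cardinality is at most $\max\{\|\phi_1\|,\|\phi_2\|\} = \|\phi_1\vee\phi_2\|$; when it fails, $\W^-(\exec,\phi_1\vee\phi_2) = \W^-(\exec,\phi_1)\cup\W^-(\exec,\phi_2)$, so by the induction hypothesis (both disjuncts fail) its cardinality is at most $\|\phi_1\|^- + \|\phi_2\|^- = \|\phi_1\vee\phi_2\|^-$. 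Finally, for $\Diamond\psi$: if $\langle\exec,T_0\rangle\models\Diamond\psi$, pick $i$ with $\langle\exec_i,T_0\rangle\models\psi$; then $\W(\exec,\Diamond\psi) = \W(\exec_i,\psi)$, and the induction hypothesis applied to $\exec_i$ and $\psi$ gives $|\W(\exec_i,\psi)|\le\|\psi\| = \|\Diamond\psi\|$. If $\langle\exec,T_0\rangle\not\models\Diamond\psi$, then $\W^-(\exec,\Diamond\psi)=\emptyset$ and $\|\Diamond\psi\|^- = 0$, done. The desired lemma is then the first half of the strengthened statement.

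I expect no serious obstacle here — the proof is a routine structural induction — but the one point requiring care is the bookkeeping to keep the two mutually recursive claims synchronized, i.e. making sure the $\|\cdot\|^-$ measure is the right companion measure so that the negation and disjunction cases line up. The $\Diamond$ case for $\W^-$ is the place where the asymmetry between $\|\Diamond\phi\| = \|\phi\|$ and $\|\Diamond\phi\|^- = 0$ matters, and this is consistent because $\W^-(\exec,\Diamond\phi)$ is defined to be $\emptyset$; it is worth double-checking that the definition of $\W^-$ indeed assigns $\emptyset$ to $\Diamond\phi$ regardless of whether the formula holds, which it does. Once that is confirmed the induction goes through cleanly.
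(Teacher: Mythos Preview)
Your proposal is correct and follows exactly the approach the paper indicates (the paper merely states that the lemma ``can be proved by induction on the structure of $\phi$'' without giving details). Your strengthening of the induction hypothesis to simultaneously bound $|\W(\exec,\phi)|$ by $\|\phi\|$ and $|\W^-(\exec,\phi)|$ by $\|\phi\|^-$ is precisely what is needed to make the mutual recursion through negation go through, and your case analysis matches the clauses of Definitions~\ref{def:sizePhi} and~\ref{def:witness}.
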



\subsection{Proof of Theorem~\ref{theo:transfo1}}

Now, we prove our main theorem.

\noindent\usebox{\theotransfoun}

\begin{proof}
  Let $\exec$ be a valid execution of $\widetilde\Pi$ w.r.t. $T_0$ such that $\langle \exec, T_0\rangle \models \phi$.  By Proposition~\ref{prop:bar.valid} we have that $\overline{\exec}$ is a valid well-formed execution of~$\widetilde\Pi$ w.r.t.~$T_0$, and according to Proposition~\ref{prop:bar.attack}, we have that $\langle \overline{\exec}, T_0\rangle \models \phi$.  By definition on an attack formula, we have that $\phi = \exists x_1. \ldots. \exists x_n. \psi$ and we deduce that there exists $\sigma$ such that $\langle \overline{\exec}, T_0 \rangle \models \psi\sigma$.

  Let $S = \{\sid\mid \exists \sid'\in\Witness(\overline{\exec}, \psi\sigma) \text{ and } \sid\in\sameTagas(\overline{\exec}, \sid')\}$.  Now, by Proposition~\ref{pro:validity-restriction}, we have that $\overline{\exec}|_S$ is also a well-formed and valid execution of $\widetilde \Pi$ w.r.t. $T_0$; and according to Proposition~\ref{pro:attacktildepibis}, we know that $\pair{\overline{\exec}|_S}{T_0} \models \phi$.  Finally, Lemma~\ref{lem:boundingWitness} tells us that $|\Witness(\overline{\exec}, \psi\sigma)| \le \|\phi\sigma\| = \|\phi\|$.
  But because by construction of $\widetilde{\Pi}$ (and hence of all of its symbolic traces), in every execution of $\widetilde{\Pi}$ all sessions of the same role are tagged differently (each session introduces its own nonce making them different), $S$ must contain at most $\|\phi\|$ sessions of each role. This allows us to conclude that $\overline{\exec}|_S$ is an attack that involves at most $\|\phi\|$ sessions of each role.  \end{proof}

\subsection{Secrecy, aliveness and weak agreement.}
For several classical security properties we are actually able to obtain a slightly stronger result and only consider one \emph{honest} session of each role. As we will see below this is a direct corollary from the proof of the main theorem.

\noindent\usebox{\corsecrecy}

\begin{proof}
  We only detail the proof in the case of secrecy. The case of aliveness and weak agreement {are} treated similarly.  Let $\overline{\phi_\Sec} = \exists x_1. \ldots. \exists x_n. \exists y. \overline{\psi_\Sec}$.  Following the proof of Theorem~\ref{theo:transfo1}, we can show that $\langle \overline{\exec}, T_0 \rangle \models \overline{\psi_\Sec}\sigma$ for some substitution $\sigma$.

  Let $S = \{\sid\mid \exists \sid'\in\Witness(\overline{\exec}, \overline{\psi_\Sec}\sigma) \text{ and } \sid\in\sameTagas(\overline{\exec}, \sid')\}$.  We have that $\Witness(\overline{\exec},\overline{\psi_\Sec}\sigma) = 1$, and thus the set $S$ contains at most \emph{one} session of each role.  To conclude, we have to show that $S$ only contains \emph{honest} sessions.  By definition of $\Witness$, we know that $\Witness(\overline{\exec},\overline{\psi_\Sec}\sigma) = \{\sid_0\}$ for some $\sid_0$ such that the status event $\Secret(x_1\sigma,\ldots,x_k\sigma,y\sigma)$ is issued from the session $\sid_0$ and we have that $\langle \overline{\exec},T_0 \rangle \models \Nocomp(x_1\sigma) \wedge \ldots \wedge \Nocomp(x_k\sigma)$.  Hence, we have that $\sid_0$ is an honest session.

We have that  $S = \{\sid\mid \exists \sid'\in\Witness(\overline{\exec}, \overline{\psi_\Sec}\sigma) \text{ and } \sid\in\sameTagas(\overline{\exec}, \sid')\}$ which means that 
  $S = \{\sid\mid \sid\in\sameTagas(\overline{\exec}, \sid_0)\}$. Since the names of the agents that are involved in a session occur in the tag, we know that all the sessions in $S$ are honest. This allows us to conclude.
\end{proof}


\section{Conclusion}
\label{sec:conclu}

In this paper we present a transformation which guarantees that
attacks on transformed protocols only require a number of sessions
which is a function of the security property under
study. {We prove this result for a class of security
  properties that includes secrecy and several flavors of
  authentication. Our logic for specifying security properties does
  not allow one to express injective authentication properties
  (e.g. injective agreement, matching conversations, etc.) but we
  believe that both the logic and our reduction result could be
  extended to this setting.}

A challenging topic for future research is to
obtain more fine-grained characterizations of decidable classes of protocols for
an unbounded number of sessions. The new insights gained by our work seem to be
a good starting point to extract the conditions needed to reduce the security
for an unbounded number of sessions to a finite number of sessions.

\section*{Acknowledgments}
This work has been partially supported by the projects JCJC VIP
ANR-11-JS02-006
and ERC grant agreement no 258865, project ProSecure.


\bibliographystyle{abbrv}
\bibliography{reference}

\begin{thebibliography}{10}

\bibitem{arapinis-these2008}
M.~Arapinis.
\newblock {\em S{\'e}curit{\'e} des protocoles cryptographiques~:
  d{\'e}cidabilit{\'e} et r{\'e}sultats de r{\'e}duction}.
\newblock Th{\`e}se de doctorat, Universit{\'e} Paris~12, Cr{\'e}teil, France,
  Nov. 2008.

\bibitem{ADK-lpar08}
M.~Arapinis, S.~Delaune, and S.~Kremer.
\newblock From one session to many: Dynamic tags for security protocols.
\newblock In I.~Cervesato, H.~Veith, and A.~Voronkov, editors, {\em
  {P}roceedings of the 15th {I}nternational {C}onference on {L}ogic for
  {P}rogramming, {A}rtificial {I}ntelligence, and {R}easoning ({LPAR}'08)},
  volume 5330 of {\em Lecture Notes in Artificial Intelligence}, pages
  128--142, Doha, Qatar, 2008. Springer.

\bibitem{AD-fsttcs07}
M.~Arapinis and M.~Duflot.
\newblock Bounding messages for free in security protocols.
\newblock In {\em {P}roc.\ 27th {C}onference on {F}oundations of {S}oftware
  {T}echnology and {T}heoretical {C}omputer {S}cience ({FST\&TCS}'07)}, volume
  4855 of {\em LNCS}, pages 376--387. Springer, 2007.

\bibitem{avispa2005}
A.~Armando et~al.
\newblock The {A}vispa tool for the automated validation of internet security
  protocols and applications.
\newblock In {\em {P}roc.\ 17th International Conference on Computer Aided
  Verification ({CAV}'05)}, volume 3576 of {\em LNCS}, pages 281--285.
  Springer, 2005.

\bibitem{cryptoeprint:2004:006}
B.~Barak, Y.~Lindell, and T.~Rabin.
\newblock Protocol initialization for the framework of universal composability.
\newblock Cryptology ePrint Archive, Report 2004/006, 2004.

\bibitem{beauquier07}
D.~Beauquier and F.~Gauche.
\newblock How to guarantee secrecy for cryptographic protocols.
\newblock {\em CoRR}, abs/cs/0703140, 2007.

\bibitem{BellareCK98}
M.~Bellare, R.~Canetti, and H.~Krawczyk.
\newblock A modular approach to the design and analysis of authentication and
  key exchange protocols (extended abstract).
\newblock In {\em Proc.\ 30th Annual ACM Symposium on the Theory of Computing
  (STOC'98)}, pages 419--428. ACM Press, 1998.

\bibitem{blanchet01}
B.~Blanchet.
\newblock An efficient cryptographic protocol verifier based on prolog rules.
\newblock In {\em Proc.\ 14th Computer Security Foundations Workshop
  ({CSFW}'01)}, pages 82--96, Cape Breton (Canada), 2001. IEEE Comp. Soc.
  Press.

\bibitem{BlanchetPodelskiFOSSACS03}
B.~Blanchet and A.~Podelski.
\newblock Verification of cryptographic protocols: Tagging enforces
  termination.
\newblock In {\em Proc.\ Foundations of Software Science and Computation
  Structures (FoSSaCS'03)}, volume 2620 of {\em LNCS}, pages 136--152.
  Springer, 2003.

\bibitem{CanettiFOCS01}
R.~Canetti.
\newblock Universally composable security: A new paradigm for cryptographic
  protocols.
\newblock In {\em Proc.\ 42nd Annual Symposium on Foundations of Computer
  Science ({FOCS}'01)}, pages 136--145, Las Vegas (Nevada, USA), 2001. IEEE
  Comp. Soc.

\bibitem{CDK-fsttcs11}
C.~Chevalier, S.~Delaune, and S.~Kremer.
\newblock Transforming password protocols to compose.
\newblock In S.~Chakraborty and A.~Kumar, editors, {\em {P}roceedings of the
  31st {C}onference on {F}oundations of {S}oftware {T}echnology and
  {T}heoretical {C}omputer {S}cience ({FSTTCS}'11)}, Leibniz International
  Proceedings in Informatics, pages 204--216. Leibniz-Zentrum f{\"u}r
  Informatik, 2011.

\bibitem{CC-csf10}
{\c{S}}.~Ciob{\^a}c{\u{a}} and V.~Cortier.
\newblock Protocol composition for arbitrary primitives.
\newblock In {\em {P}roceedings of the 23rd {IEEE} {C}omputer {S}ecurity
  {F}oundations {S}ymposium ({CSF}'10)}, pages 322--336. {IEEE} Computer
  Society Press, July 2010.

\bibitem{ClarkJacob97}
J.~Clark and J.~Jacob.
\newblock A survey of authentication protocol literature.
\newblock \url{http://www.cs.york.ac.uk/~jac/papers/drareviewps.ps}, 1997.

\bibitem{esop03journal}
H.~Comon-Lundh and V.~Cortier.
\newblock Security properties: two agents are sufficient.
\newblock {\em Science of Computer gramming}, 50(1-3):51--71, March 2004.

\bibitem{CCZ10}
H.~Comon-Lundh, V.~Cortier, and E.~Z\u{a}linescu.
\newblock Deciding security properties for cryptographic protocols. application
  to key cycles.
\newblock {\em ACM Trans. Comput. Logic}, 11(2):1--42, 2010.

\bibitem{TheseCorin06}
R.~Corin.
\newblock {\em Analysis Models for Security Protocols}.
\newblock PhD thesis, University of Twente, 2006.

\bibitem{CD-fmsd09}
V.~Cortier and S.~Delaune.
\newblock Safely composing security protocols.
\newblock {\em Formal Methods in System Design}, 34(1):1--36, feb 2009.

\bibitem{CortierWZ07}
V.~Cortier, B.~Warinschi, and E.~Z\u{a}linescu.
\newblock Synthesizing secure protocols.
\newblock In {\em Proc.\ 12th European Symposium On Research In Computer
  Security ({ESORICS}'07)}, volume 4734 of {\em LNCS}, pages 406--421.
  Springer, 2007.

\bibitem{dolev82crypto}
D.~Dolev, S.~Even, and R.~M. Karp.
\newblock On the security of ping-pong protocols.
\newblock In {\em Proc.\ Advances in Cryptology ({CRYPTO}'82)}, pages 177--186,
  1982.

\bibitem{dolev81security}
D.~Dolev and A.~C. Yao.
\newblock On the security of public key protocols.
\newblock In {\em Proc.\ of the 22nd Symposium on Foundations of Computer
  Science ({FOCS}'81)}, pages 350--357. IEEE Comp. Soc. Press, 1981.

\bibitem{durgin99undecidability}
N.~Durgin, P.~Lincoln, J.~Mitchell, and A.~Scedrov.
\newblock Undecidability of bounded security protocols.
\newblock In {\em Proc.\ Workshop on Formal Methods and Security Protocols},
  1999.

\bibitem{HeatherLoweScchneider00}
J.~Heather, G.~Lowe, and S.~Schneider.
\newblock How to prevent type flaw attacks on security protocols.
\newblock In {\em Proc.\ 13th Computer Security Foundations Workshop
  (CSFW'01)}, pages 255--268. IEEE Comp. Soc. Press, 2000.

\bibitem{KatzY03}
J.~Katz and M.~Yung.
\newblock Scalable protocols for authenticated group key exchange.
\newblock In {\em Proc.\ 23rd Annual International Cryptology Conference
  (CRYPTO'03)}, volume 2729 of {\em LNCS}, pages 110--125. Springer, 2003.

\bibitem{lowe96breaking}
G.~Lowe.
\newblock Breaking and fixing the {N}eedham-{S}chroeder public-key protocol
  using {FDR}.
\newblock In {\em Proc.\ 2nd International Workshop on Tools and Algorithms for
  the Construction and Analysis of Systems ({TACAS}'96)}, volume 1055 of {\em
  LNCS}, pages 147--166, Berlin (Germany), 1996. Springer.

\bibitem{Lowe97}
G.~Lowe.
\newblock A hierarchy of authentication specifications.
\newblock In {\em CSFW '97: Proceedings of the 10th IEEE workshop on Computer
  Security Foundations}, page~31, Washington, DC, USA, 1997. IEEE Computer
  Society.

\bibitem{Lowejournaltowards}
G.~Lowe.
\newblock Towards a completeness result for model checking of security
  protocols.
\newblock {\em Journal of Computer Security}, 7(1), 1999.

\bibitem{NS78}
R.~Needham and M.~Schroeder.
\newblock Using encryption for authentication in large networks of computers.
\newblock {\em Communication of the ACM}, 21(12):993--999, 1978.

\bibitem{RS03}
R.~Ramanujam and S.~P. Suresh.
\newblock Tagging makes secrecy decidable for unbounded nonces as well.
\newblock In {\em Proc.\ 23rd Conference on Foundations of Software Technology
  and Theoretical Computer Science (FST\&TCS'03)}, volume 2914 of {\em LNCS},
  pages 363--374. Springer, 2003.

\bibitem{RSjournal05}
R.~Ramanujam and S.~P. Suresh.
\newblock Decidability of context-explicit security protocols.
\newblock {\em Journal of Computer Security}, 13(1):135--165, 2005.

\bibitem{RT03TCS}
M.~Rusinowitch and M.~Turuani.
\newblock Protocol insecurity with finite number of sessions and composed keys
  is {NP}-complete.
\newblock {\em Theoretical Computer Science}, 299(1-3):451--475, 2003.

\end{thebibliography}

\appendix

\section{Proofs of Section~\ref{subsec:1-step-transfo}}
\label{app:step-1}


In this section, we show that our transformation maps an execution trace $\exec$
to a well-formed execution trace $\overline{\exec}$. The resulting execution
trace $\overline{\exec}$ is still a trace associated to the protocol
$\widetilde{\Pi}$ under study.

\newsavebox{\lemconcretetrace}
\sbox{\lemconcretetrace}{\vbox{%
\begin{lem}\label{lem:concrete.trace} 
  Let $\Pi$ be a $k$-party protocol, and $\exec$ be an execution trace
  associated to $\widetilde{\Pi}$ (not necessarily a valid one).  We have that
  $\overline{\exec}$ is an execution trace (not necessarily a valid one)
  associated to the protocol $\widetilde{\Pi}$
\end{lem}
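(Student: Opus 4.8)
The plan is to show that $\overline{\exec}$ is, up to the choice of fresh names, the instance of the \emph{same} symbolic trace that $\exec$ itself instantiates, under a new ground substitution. So first I would fix a scenario $\sce$ for $\widetilde{\Pi}$, a function $\alpha\colon\N\to\A^k$, the associated symbolic trace $\tr=[\evt\evt_1^{\sid_1};\dots;\evt\evt_\ell^{\sid_\ell}]$, and a ground substitution $\sigma$ with $\dom(\sigma)=\vars(\tr)$ and $\exec=\tr\sigma$; these exist because $\exec$ is an execution trace of $\widetilde{\Pi}$. Recall (Definition~\ref{def:scenario}) that $\evt\evt_i^{\sid_i}=(\evt^{r_i}_{q_i})\sigma_{r_i,\sid_i}$, so every term of $\evt\evt_i^{\sid_i}$ is built, in the syntactic shape imposed by the role $\widetilde{\Pi}(r_i)$, from agent names, from the fresh names that $\sigma_{r_i,\sid_i}$ assigns to the $\nu$-bound variables of $\widetilde{\Pi}(r_i)$ (including the preamble nonce $z^{r_i}_{r_i}$), and from the renamed variables $z^{\sid_i}$; moreover, by Definitions~\ref{def:tagging} and~\ref{def:transfo}, every cryptographic subterm inside a phase-$2$ event of $\sid_i$ has first argument $\pair{\tau_{r_i}\sigma_{r_i,\sid_i}}{\cdot}$, and $\tau_{r_i}\sigma_{r_i,\sid_i}\sigma$ is precisely the $k$-tuple $\tagt(\exec,\sid_i)$ of preamble messages of $\sid_i$ in $\exec$.

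Next I would build a symbolic trace $\tr'$ of $\widetilde{\Pi}$, for the same scenario $\sce$ and the same $\alpha$, together with a ground substitution $\sigma'$, such that $\overline{\exec}=\tr'\sigma'$. For $\tr'$ I keep the scenario and the agent instantiation (legitimate, since $\overline{\,\cdot\,}^{\exec,\sid}$ fixes agent names and leaves the sequence of events intact), and I let $\tr'$ differ from $\tr$ at most in the choice of fresh nonces --- a choice the construction leaves free up to freshness --- namely for sessions $\sid$ with $\tagt(\exec,\sid)=\bot$, where $\overline{\,\cdot\,}^{\exec,\sid}$ may relabel the session's own nonces to (fresh) names in $\Nces_\epsilon\subseteq\Nces$; for all other sessions $\tr'$ agrees with $\tr$. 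For $\sigma'$ I put $\sigma'(z^{\sid})=\overline{z^{\sid}\sigma}^{\exec,\sid}$ on each renamed variable $z^{\sid}$, which is well defined and ground since each such variable has a unique home session $\sid$. It then remains to verify $\overline{\exec}=\tr'\sigma'$, which --- both sides having the same event structure --- reduces to checking, for each $i$, that $\overline{\evt_i}^{\exec,\sid_i}$ equals the $i$-th event of $\tr'$ instantiated by $\sigma'$, where $\evt_i=\evt\evt_i^{\sid_i}\sigma$.

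I would prove this event-wise equality by structural induction on the subterms $w$ of $\evt\evt_i^{\sid_i}$, establishing $\overline{w\sigma}^{\exec,\sid_i}=w\theta$, where $\theta$ fixes agents, maps each nonce of $\sid_i$ to its chosen image in $\tr'$, and maps each renamed variable $z^{\sid_i}$ to $\sigma'(z^{\sid_i})$. Agents, renamed variables, and nonces of $\sid_i$ follow directly from Definition~\ref{def:abstract.term}, using $\sid_i\in\sameTagas(\exec,\sid_i)$; pairs are immediate from the recursive clause. The one delicate case is a cryptographic subterm $w$ inside a phase-$2$ event of $\sid_i$: then $w\sigma$ has first argument $\pair{\tagt(\exec,\sid_i)}{\cdot}$ with $\tagt(\exec,\sid_i)\in\tagst(\exec)$, and $\tagt(\exec,\sid_i)\neq\bot$ (the occurrence of a phase-$2$ event of $\sid_i$ in $\exec$ forces the $k$ preamble communication events of $\sid_i$ to have occurred); hence $\tagh(\exec,w\sigma)=\tagt(\exec,\sid_i)$, so the last clause of Definition~\ref{def:abstract.term} \emph{recurses through} $w\sigma$ instead of abstracting it to a nonce, and the induction hypothesis (applied to the tag and to the remaining arguments) closes the case. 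Reassembling the events gives $\overline{\exec}=\tr'\sigma'$ with $\tr'$ a symbolic trace of $\widetilde{\Pi}$, hence $\overline{\exec}$ is an execution trace of $\widetilde{\Pi}$.

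I expect the main obstacle to be this last induction, and within it the step showing that every cryptographic subterm of a session's phase-$2$ events carries exactly that session's expected tag --- so that $\overline{\,\cdot\,}$ descends through it and preserves the cryptographic structure rather than collapsing it to a nonce --- together with the bookkeeping ensuring that the freshly introduced attacker names $n^{\epsilon,S}_t$ can legitimately stand for fresh session nonces (for the sessions with $\tagt(\exec,\sid)=\bot$) and for images of renamed variables in $\tr'$.
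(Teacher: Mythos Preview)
Your approach is essentially the paper's: exhibit $\overline{\exec}$ as an instance of a symbolic trace of $\widetilde{\Pi}$ under a new ground substitution $\overline\sigma(x)=\overline{x\sigma}^{\exec,\sid}$ (for $x\in\vars(\tr,\sid)$), and prove $\overline{u\sigma}^{\exec,\sid_i}=u\overline\sigma$ by structural induction on $u\in\st(\tr,\sid_i)$, the crucial step being exactly the one you single out --- that every cryptographic subterm of a phase-2 event of $\sid_i$ has head tag $\tagt(\exec,\sid_i)\neq\bot$, so Definition~\ref{def:abstract.term} recurses rather than abstracts.

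The one place you are more careful than the paper's sketch is your allowance for $\tr'\neq\tr$ in the choice of fresh nonces for sessions with $\tagt(\exec,\sid)=\bot$; the paper simply keeps the same $\tr$ and writes $\overline{\exec}=\tr\overline\sigma$. Whether your extra bookkeeping is strictly needed depends on how one reads the overlapping first two clauses of Definition~\ref{def:abstract.term}: if the second clause (for $n_y^{\sid'}$) takes precedence over the ``$\tau=\bot$'' branch of the first, then a session's own preamble nonce $n_{z_j^j}^{\sid}$ is fixed by $\overline{\,\cdot\,}^{\exec,\sid}$ (since $\sid\in\sameTagas(\exec,\sid)$ always) and the paper's simpler claim goes through; under the stricter reading your adjustment is exactly what is required, and it is harmless in either case since Definition~\ref{def:scenario} only asks the names to be fresh in $\mathcal N\supseteq\Nces_\epsilon$.
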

}}

\noindent\usebox{\lemconcretetrace}

\begin{proof}(sketch) 
  Let $\tr = [\evt_1^{\sid_1}; \ldots; \evt_\ell^{\sid_\ell}]$ be the symbolic
  trace of $\widetilde \Pi$, and $\sigma$ be the ground substitution such that
  $\dom(\sigma) = \vars(\tr)$ and $\exec = \tr\sigma$. Let $\overline\sigma$ be
  such that:
  \begin{itemize} 
  \item $\dom(\overline \sigma) = \dom(\sigma)$, and
  \item 
    $\overline\sigma(x) = \overline{x\sigma}^{\exec,\sid}$, where $x\in\vars(\tr,
    \sid)$.
  \end{itemize}
  Clearly, we have that $\overline\sigma$ is a ground substitution. It remains
  to establish that $\overline{\exec} = \tr\overline{\sigma}$ so that the
  execution $\exec$ will rely on the same scenario than $\exec$.

  By definition $\overline{\exec} = [\overline{\evt^{\sid_1}_1\sigma}^{\exec,
    \sid_1}; \ldots; \overline{\evt^{\sid_\ell}_\ell\sigma}^{\exec,
    \sid_\ell}]$. Let $i\in \{1, \ldots, \ell\}$, then we have that
  $\evt_i^{\sid_i} =\rcv(u)$ (or $\evt_i^{\sid_i} = \snd(u)$, or
  $\evt_i^{\sid_i} = \Q(u_1,\ldots,u_n)$). Since the three cases can be handled
  in a similar way, we consider here the case where $\evt_i^{\sid_i}
  =\rcv(u)$. By definition, we have that
  $\overline{\evt^{\sid_i}_i\sigma}^{\exec, \sid_i} =
  \rcv(\overline{u\sigma}^{\exec, \sid_i})$, and we prove by structural
  induction on $u'\in\st(u)$ that $\overline{u'\sigma}^{\exec, \sid_i} =
  u'\overline\sigma$. 
  Finally, from this we conclude that $\overline{u\sigma}^{\exec, \sid_i} =
  u\overline\sigma$, and thus that $\overline{\evt^{\sid_i}_i\sigma}^{\exec,
    \sid_i} = \rcv(\overline{u\sigma}^{\exec, \sid_i}) = \rcv(u \overline\sigma)
  = \evt_i^{\sid_i}\overline\sigma$. By definition, this brings us to
  $\overline\exec = \tr\overline\sigma$.  \qedhere
\end{proof}


\newsavebox{\lemtracewellformed}
\sbox{\lemtracewellformed}{\vbox{%
\begin{lem}
\label{lem:trace-well-formed}
  Let $\Pi$ be a $k$-party protocol, and $\exec$ be an execution trace associated to $\widetilde{\Pi}$ (not necessarily a valid one). We have that $\overline{\exec}$ is well-formed.
\end{lem}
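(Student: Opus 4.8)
The plan is to invoke Lemma~\ref{lem:concrete.trace} to know that $\overline{\exec}$ is an execution trace of $\widetilde{\Pi}$, and then to check the three conditions of Definition~\ref{def:wfexec} for $\overline{\exec}$ one at a time. Two preliminary facts will be used repeatedly. First, $\overline{\;\cdot\;}$ sends receptions to receptions, transmissions to transmissions and status events to status events, so every session $\sid$ has the same number of communication events in $\overline{\exec}$ as in $\exec$; hence $\tagt(\overline{\exec},\sid)=\bot$ iff $\tagt(\exec,\sid)=\bot$, and when these tags are defined, $\tagt(\overline{\exec},\sid)=\overline{\tagt(\exec,\sid)}^{\exec,\sid}$ because pairing distributes through the abstraction (Definition~\ref{def:abstract.term}). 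Second, inspecting Definition~\ref{def:abstract.term} shows that $\overline{t}^{\exec,\sid}$ depends on $\sid$ only through the pair $\bigl(\tagt(\exec,\sid),\,\sameTagas(\exec,\sid)\bigr)$.

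For conditions~\ref{def:wfexec1} and~\ref{def:wfexec2} I would first prove, by structural induction on a term $t$, that every cryptographic subterm of $\overline{t}^{\exec,\sid}$ has the form $\overline{s}^{\exec,\sid}$ for some $s\in\encst(t)$ with $\tagh(\exec,s)=\tagt(\exec,\sid)\neq\bot$; indeed a cryptographic term whose head tag is not $\tagt(\exec,\sid)$---in particular every cryptographic term when $\tagt(\exec,\sid)=\bot$---is abstracted into an atomic intruder name, and the other cases are atomic or recurse structurally. Now fix such a subterm $\overline{s}^{\exec,\sid}$ occurring in an event of $\overline{\exec}$ annotated with $\sid$. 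By Definition~\ref{def:tagh}, $s=\f(\pair{\tau}{u},\dots)$ with $\tau=\tagt(\exec,\sid)$, hence $(\overline{s}^{\exec,\sid})|_{1.1}=\overline{\tau}^{\exec,\sid}$. By construction of $\widetilde{\Pi}$ (Definition~\ref{def:transfo}) the first $k$ communication events of any role are those of $\widetilde{\Pi}^{\init}$, whose $i$-th message is $\pair{x_i^j}{z_i^j}$, so instantiating a symbolic trace (Definition~\ref{def:scenario}) makes every expected tag of the shape $\langle\pair{a_1}{v_1},\dots,\pair{a_k}{v_k}\rangle$ with each $a_i\in\A$; hence $\overline{\tau}^{\exec,\sid}$ is again a $k$-tag, and by the first preliminary fact it equals $\tagt(\overline{\exec},\sid)$. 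This shows simultaneously that $\overline{\exec}$ is $k$-tagged (condition~\ref{def:wfexec1}) and that $\Tags(\overline{\exec},\sid)\subseteq\{\tagt(\overline{\exec},\sid)\}$ for every $\sid$ (condition~\ref{def:wfexec2}; the sub-case $\tagt(\exec,\sid)=\bot$ is immediate since then no cryptographic subterm survives in the events of $\sid$).

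For condition~\ref{def:wfexec3}, reading off Definition~\ref{def:abstract.term} shows that the only names $\overline{\;\cdot\;}^{\exec,\sid_i}$ can introduce in an event annotated with $\sid_i$ are intruder names $n^{\epsilon,S_i}_t$ with $S_i=\sameTagas(\exec,\sid_i)$ and session nonces $n^{\sid'}_y$ with $\sid'\in\sameTagas(\exec,\sid_i)$---namely the nonces of sessions sharing $\sid_i$'s tag, the only ones it leaves untouched. Thus condition~\ref{def:wfexec3} follows once we establish $\sameTagas(\overline{\exec},\sid)=\sameTagas(\exec,\sid)$ for every $\sid$. The inclusion $\sameTagas(\exec,\sid)\subseteq\sameTagas(\overline{\exec},\sid)$ is immediate from the second preliminary fact, since two sessions in the same $\sameTagas$-class of $\exec$ undergo the same abstraction and therefore receive the same (defined or undefined) expected tag in $\overline{\exec}$.

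The hard part is the reverse inclusion $\sameTagas(\overline{\exec},\sid)\subseteq\sameTagas(\exec,\sid)$: a priori two sessions carrying distinct expected tags in $\exec$ could be assigned the same abstracted tag in $\overline{\exec}$. I would rule this out using the fresh nonce each session puts into its own tag. If $\sid$ executes role $r$ and $\tagt(\exec,\sid)\neq\bot$, the $r$-th component of $\tagt(\exec,\sid)$ is $\pair{a}{n_{z^r_r}^{\sid}}$, where $n_{z^r_r}^{\sid}$ is the fresh nonce generated by $\widetilde{\Pi}(r)$ in session $\sid$; since $\sid\in\sameTagas(\exec,\sid)$, this nonce is kept by $\overline{\;\cdot\;}^{\exec,\sid}$, so $n_{z^r_r}^{\sid}\in\names\bigl(\tagt(\overline{\exec},\sid)\bigr)$. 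On the other hand, by inspection of Definition~\ref{def:abstract.term}, the nonce $n_{z^r_r}^{\sid}$---which differs from every intruder name $n^{\epsilon,S}_t$---can occur in $\overline{\tagt(\exec,\sid')}^{\exec,\sid'}$ only if it already occurs in $\tagt(\exec,\sid')$ and $\sid\in\sameTagas(\exec,\sid')$. Consequently, $\tagt(\overline{\exec},\sid)=\tagt(\overline{\exec},\sid')\neq\bot$ forces $\sid\in\sameTagas(\exec,\sid')$, i.e.\ $\tagt(\exec,\sid)=\tagt(\exec,\sid')$, which is the desired inclusion (the $\bot$ cases being immediate from the first preliminary fact). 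With $\sameTagas(\overline{\exec},\cdot)=\sameTagas(\exec,\cdot)$ in hand, the names identified above land exactly in the set prescribed by condition~\ref{def:wfexec3}, and $\overline{\exec}$ is well-formed.
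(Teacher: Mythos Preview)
Your proposal is correct and follows essentially the same approach as the paper's proof sketch: check the three conditions of Definition~\ref{def:wfexec} for each event of $\overline{\exec}$ via a structural induction on the underlying term. Your treatment is in fact more thorough on one point the paper glosses over: the sketch verifies condition~\ref{def:wfexec3} with $S=\sameTagas(\exec,\sid_i)$, tacitly identifying this with $\sameTagas(\overline{\exec},\sid_i)$, whereas you explicitly establish $\sameTagas(\overline{\exec},\cdot)=\sameTagas(\exec,\cdot)$ using the fresh nonce $n^{\sid}_{z^r_r}$ that each session inserts into its own tag.
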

}}

\noindent\usebox{\lemtracewellformed}

\begin{proof}(sketch)
  Let $\exec = \evt_1^{\sid_1}, \ldots, \evt_\ell^{\sid_\ell}$. Let $i\in\{1, \dots, \ell\}$, we show that:
  \begin{enumerate}
  \item $\overline{\evt^{\sid_i}_i}^{\exec, \sid_i}$ is $k$-tagged;
  \item $\Tags(\overline{\evt^{\sid_i}_i}^{\exec, \sid_i}) \subseteq \{\tagt(\overline{\exec},\sid_i)\}$;
  \item $\names(\overline{\evt^{\sid_i}_i}^{\exec, \sid_i}) \subseteq
    \{n_t^{\epsilon,S} ~|~t \in T\} \cup \{n^{\sid}_{{y}} ~|~
    \sid \in S \mbox{ {and $y \in \Yvars$}}\}$, \\where $S = \sameTagas(\exec,\sid_i)$.
  \end{enumerate}

\noindent Let $\evt^{\sid_i}_i = \rcv(u)$ for some term $u$. The cases where $\evt^{\sid_i}_i = \snd(u)$ or $\evt^{\sid_i}_i = \Q(u_1, \ldots,u_n)$ can be done in a similar way. We have that $\overline{\evt^{\sid_i}_i}^{\exec, \sid_i} = \rcv(\overline{u}^{\exec, \sid_i})$ and we prove by structural induction on $u'\in\st(u)$ that:
  \begin{enumerate}
  \item $\overline{u'}^{\exec, \sid_i}$ is $k$-tagged;
  \item $\Tags(\overline{u'}^{\exec, \sid_i}) \subseteq \{\tagt(\overline{\exec},\sid_i)\}$;
  \item $\names(\overline{u'}^{\exec, \sid_i}) \subseteq \{n_t^{\epsilon,S} ~|~t \in T\} \cup \{n^{\sid}_{{y}} ~|~ \sid \in S \mbox{ {and $y \in \Yvars$}}\}$, \\where $S = \sameTagas(\exec,\sid_i)$.
  \end{enumerate}
  And from this we derive that $\overline{u}^{\exec, \sid_i}$ satisfies the three conditions of well-formedness, and thus so is $\overline{\evt^{\sid_i}_i}^{\exec, \sid_i}$ for all $i\in \{1, \ldots, \ell\}$, which in turn implies by definition that $\overline{\exec}$ satisfies the three conditions of well-formedness and is thus well-formed.
\end{proof}

\section{Technical proofs about alien subterms}
\label{sec:app-alien}

We introduce the notion of alien subterms and we show that they satisfy some
good properties. Later on, we will see that those alien subterms correspond to
the subterms that are abstracted by our transformation $\overline{\; \cdot \;}$
and we will use the properties established on them to prove the validity of the
trace obtained after transformation.

\begin{defi}[$\stal(\exec, \tau, t)$]\label{dfn:stalien}
  Let $\Pi$ be a $k$-party protocol and $\exec = [\evt_1^{\sid_1}; \dots;
  \evt_\ell^{\sid_\ell}]$ be an execution trace (not necessarily valid) of
  $\widetilde{\Pi}$. We define the alien subterms of a term~$t$ w.r.t.
  the execution~$\exec$ and the active tag~$\tau$, denoted $\stal(\exec, \tau,
  t)$, as follows:
  \begin{itemize}
    
  \item $\stal(\exec, \tau, n) = \{n\}$ if $n\in\Nces_\epsilon$

  \item $\stal(\exec, \tau, n^\sid_y) = \left\{
      \begin{array}[c]{ll}
        \emptyset & \text{if } \tagt(\exec, \sid) = \tau \text{ and } \tau \neq \bot \\
        \{n^\sid_y\} & \;\; \text{otherwise}
      \end{array}
    \right.$

  \item $\stal(\exec, \tau, a) = \emptyset$ if $a$ is an agent name

  \item $\stal(\exec, \tau, \f(a_1, \dots, a_n)) = \emptyset$ if $\f \in \{\shk, \pub, \priv\}$

  \item $\stal(\exec, \tau, \pair{u}{v}) = \stal(\exec, \tau, u) \cup \stal(\exec, \tau, v)$

  \item $\stal(\exec, \tau, \f(u_1, \dots, u_n)) = \left\{
      \begin{array}[c]{l}
        \stal(\exec, \tau, u_1) \cup \dots \cup \stal(\exec, \tau, u_n) \\
        \text{if } \tagh(\exec, \f(u_1, \dots, u_n)) = \tau \text{ and }\tau \neq\bot \\
        \\
        \{\f(u_1, \dots, u_n)\} \cup \underset{i\in\{1, \dots, n\}}{\bigcup} \stal(\exec,\tau' , u_i) \\
        \text{otherwise where $\tau' = \tagh(\exec, \f(u_1, \dots, u_n))$} \\
      \end{array}
    \right.$ if $\f\in\{\enc, \enca, \sign, \h\}$.
  \end{itemize}
  
\end{defi}
We define $\stal(\exec, t) = \stal(\exec, \bot, t)$, and extend this notion to
sets of terms in the obvious way, i.e. $\stal(\exec, T) = \underset{t\in
  T}{\bigcup}\stal(\exec, t)$.

\begin{defi}[$\vars(\exec,\sid)$, $\st(\exec,\sid)$, $\names(\exec,\sid)$]
  Let $\Pi$ be a $k$-party protocol and $\exec$ be an execution trace of
  $\widetilde{\Pi}$. Let $\sid$ be a session identifier, and $ [\evt^{\sid}_{1};
  \dots; \evt^{\sid}_{h}] \stackrel{\mathsf{def}}{=} \exec|_{\{\sid\}}$. We
  define the variables, subterms, and names in $\exec$ of a session~$\sid$ as
  follows:
  \[
  \begin{array}{rcl}
    \vars(\exec,\sid)  & = & \{x~|~x\in\vars(\evt^\sid_j) \mbox{ for some } j \in \{1,\ldots,h\}\} \\
    \st(\exec,\sid)    & = & \{u~|~u\in\st(\evt^\sid_j) \mbox{ for some } j \in \{1,\ldots,h\}\} \\
    \names(\exec,\sid) & = & \{u~|~u\in\names(\evt^\sid_j) \mbox{ for some } j \in \{1,\ldots,h\}\}.
  \end{array}
  \]
\end{defi}

{Since we do not tag the pairing function symbol, this function symbol has a
  special status. We denote by $\comp(t)$ the components of a term $t$. This
  notion is formally defined as follows:}

\begin{defi}[$\comp(t)$]\label{dfn:comp}
  Let $t$ be a term, the set of components of $t$ is:
  \[
  \comp(t) = \left\{
    \begin{array}[c]{lr}
      \comp(u) \cup \comp(v) & \hspace{1cm}\text{if } t = \pair{u}{v} \\
      \{t\}                  & \text{otherwise}.
    \end{array}
  \right.
  \]
\end{defi}

\newsavebox{\lemstaliengeneral}
\sbox{\lemstaliengeneral}{\vbox{%
    \begin{lem}
      \label{lem:staliengeneral}
      Let $\Pi$ be a $k$-party protocol, $\exec$ be an execution trace of
      $\widetilde{\Pi}$, and $t$ be a term. For all $k$-{tags}~$\tau$, we have
      that:
      \begin{enumerate}
      \item $\stal(\exec, \tau, t)\ =\ \underset{t'\in\comp(t)}{\bigcup}
        \stal(\exec, \tau, t')$; \label{lemitem:stalien.comp}
      \item $\stal(\exec, \tau, t)\ \subseteq\ \stal(\exec,
        t)$; \label{lemitem:tau.in.bot}
      \item $\stal(\exec, t)\ \subseteq\ \comp(t) \cup \stal(\exec, \tau,
        t)$. \label{lemitem:stal.0}
      \end{enumerate}
    \end{lem}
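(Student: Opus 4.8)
The plan is to prove all three statements simultaneously by structural induction on the term~$t$; none of them needs an induction on the execution trace, only on the syntax of~$t$.

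For the first statement, the only case that is not immediate is $t = \pair{u}{v}$: there $\comp(t) = \comp(u)\cup\comp(v)$ (Definition~\ref{dfn:comp}) and, by the clause for pairs in Definition~\ref{dfn:stalien}, $\stal(\exec,\tau,t) = \stal(\exec,\tau,u)\cup\stal(\exec,\tau,v)$, so the induction hypothesis applied to $u$ and to $v$ yields the equality. When $t$ is not a pair, $\comp(t) = \{t\}$ and there is nothing to prove. This argument is insensitive to whether $\tau$ is a genuine $k$-tag or $\bot$, so in particular $\stal(\exec,t) = \bigcup_{t'\in\comp(t)}\stal(\exec,t')$.

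The key observation for the second and third statements is that $\bot$ always falls into the ``otherwise'' subcase of the clause for $\f(u_1,\dots,u_n)$ with $\f\in\{\enc,\enca,\sign,\h\}$, because the alternative guard of that clause explicitly requires the active tag to be distinct from $\bot$; moreover $\tagh(\exec,\cdot)$ does not depend on the active tag at all (Definition~\ref{def:tagh}). Writing $\tau_0 = \tagh(\exec,\f(u_1,\dots,u_n))$, we therefore always have $\stal(\exec,\f(u_1,\dots,u_n)) = \{\f(u_1,\dots,u_n)\}\cup\bigcup_i\stal(\exec,\tau_0,u_i)$, while $\comp(\f(u_1,\dots,u_n)) = \{\f(u_1,\dots,u_n)\}$. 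Now fix an arbitrary $k$-tag~$\tau$. If $\tau = \tau_0 \neq \bot$, then $\stal(\exec,\tau,\f(u_1,\dots,u_n)) = \bigcup_i\stal(\exec,\tau,u_i)$, so the second statement reads $\bigcup_i\stal(\exec,\tau,u_i) \subseteq \{\f(u_1,\dots,u_n)\}\cup\bigcup_i\stal(\exec,\tau,u_i)$ and the third reads $\{\f(u_1,\dots,u_n)\}\cup\bigcup_i\stal(\exec,\tau,u_i) \subseteq \{\f(u_1,\dots,u_n)\}\cup\bigcup_i\stal(\exec,\tau,u_i)$, both obvious. Otherwise the defining expressions for $\stal(\exec,\tau,\f(u_1,\dots,u_n))$ and $\stal(\exec,\f(u_1,\dots,u_n))$ coincide, both being $\{\f(u_1,\dots,u_n)\}\cup\bigcup_i\stal(\exec,\tau_0,u_i)$, and again there is nothing to prove. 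So the cryptographic case needs no induction hypothesis; the hypothesis is used only for pairs, where $\stal(\exec,\tau,\pair{u}{v}) = \stal(\exec,\tau,u)\cup\stal(\exec,\tau,v)$, $\stal(\exec,\pair{u}{v}) = \stal(\exec,u)\cup\stal(\exec,v)$ and $\comp(\pair{u}{v}) = \comp(u)\cup\comp(v)$ reduce both inclusions to the corresponding inclusions for $u$ and for $v$. The remaining base cases are routine: if $t\in\Nces_\epsilon$, or $t$ is an agent name, or $t = \f(a_1,\dots,a_n)$ with $\f\in\{\shk,\pub,\priv\}$, the value of $\stal$ does not depend on the active tag; and for $t = n^{\sid}_y$ we have $\stal(\exec,\tau,n^{\sid}_y) \subseteq \{n^{\sid}_y\} = \stal(\exec,n^{\sid}_y) = \comp(n^{\sid}_y)$.

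I do not anticipate a real obstacle: the argument is entirely mechanical. The one point that requires care is the bookkeeping in the cryptographic clause --- deciding precisely when $\bot$ and an arbitrary active tag $\tau$ select the same subcase of the definition of $\stal$, and exploiting that the head-tag function $\tagh$ is computed independently of the active tag, so that the recursive calls in the ``otherwise'' subcase are made with the same tag $\tau_0$ in both computations. Once that is settled, the second and third statements fall out, the $\comp(t)$ in the third statement accounting exactly for the singleton $\{\f(u_1,\dots,u_n)\}$ that the $\bot$-computation contributes but that the $\tau$-computation omits when it lies in the first case above.
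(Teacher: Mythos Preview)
Your proposal is correct and follows essentially the same approach as the paper, which simply states that each statement is proved by induction on the depth of~$t$ (the details are omitted in the paper). The only minor difference is organisational: the paper treats the three items separately, whereas you handle items~2 and~3 in a single case analysis and make the pleasant observation that the cryptographic case $\f\in\{\enc,\enca,\sign,\h\}$ requires no appeal to the induction hypothesis, because both the $\tau$- and the $\bot$-computation recurse with the same head tag~$\tau_0$.
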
}}

\noindent\usebox{\lemstaliengeneral}

\begin{proof}
  \noindent We prove each statement separately by induction on the depth of~$t$.
  \qedhere
\end{proof}


\newsavebox{\lemstal}
\sbox{\lemstal}{\vbox{%
\begin{lem}\label{lem:stal}
  Let $\Pi$ be a $k$-party protocol, $\exec$ be an execution trace of $\widetilde{\Pi}$, and $u$ be a term. For any $v  \in \st(u)$, we have that $\stal(\exec, v) \subseteq \stal(\exec, u) \cup \comp(v)$.
\end{lem}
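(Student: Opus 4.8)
The plan is to establish, by well-founded induction on the term $u$, the following strengthening in which the active tag is kept as a parameter: for every term $u$, every $v \in \st(u)$, and every $\tau$ that is either $\bot$ or a $k$-tag,
\[
\stal(\exec, \tau, v)\ \subseteq\ \stal(\exec, \tau, u) \cup \comp(v).
\]
Specialising to $\tau = \bot$ then gives the lemma, since $\stal(\exec, \cdot) = \stal(\exec, \bot, \cdot)$.

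If $v = u$ the inclusion is immediate, so assume $v$ is a strict subterm of $u$; then $u$ is not an atom. If $u$ is of the form $\pub(a)$, $\priv(a)$ or $\shk(a,b)$, every strict subterm of $u$ is an agent name, hence $\stal(\exec, \tau, v) = \emptyset$ and we are done. If $u = \pair{u_1}{u_2}$, then $v \in \st(u_i)$ for some $i \in \{1,2\}$; since $\stal(\exec, \tau, u) = \stal(\exec, \tau, u_1) \cup \stal(\exec, \tau, u_2)$, applying the induction hypothesis to $u_i$ with the tag $\tau$ yields $\stal(\exec, \tau, v) \subseteq \stal(\exec, \tau, u_i) \cup \comp(v) \subseteq \stal(\exec, \tau, u) \cup \comp(v)$.

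The remaining, and only substantial, case is $u = \f(u_1, \dots, u_n)$ with $\f \in \{\enc, \enca, \sign, \h\}$; write $\tau' = \tagh(\exec, u)$, and note $v \in \st(u_i)$ for some $i$. If $\tau' = \tau$ and $\tau \neq \bot$, Definition~\ref{dfn:stalien} gives $\stal(\exec, \tau, u) = \bigcup_j \stal(\exec, \tau, u_j)$, and the induction hypothesis on $u_i$ with the tag $\tau$ directly yields $\stal(\exec, \tau, v) \subseteq \stal(\exec, \tau, u_i) \cup \comp(v) \subseteq \stal(\exec, \tau, u) \cup \comp(v)$. Otherwise, Definition~\ref{dfn:stalien} takes its second alternative: $\stal(\exec, \tau, u) = \{u\} \cup \bigcup_j \stal(\exec, \tau', u_j)$, so in particular $\stal(\exec, \tau', u_i) \subseteq \stal(\exec, \tau, u)$. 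I would then chain three facts: $\stal(\exec, \tau, v) \subseteq \stal(\exec, v)$ by Lemma~\ref{lem:staliengeneral}(\ref{lemitem:tau.in.bot}); $\stal(\exec, v) \subseteq \comp(v) \cup \stal(\exec, \tau', v)$ by Lemma~\ref{lem:staliengeneral}(\ref{lemitem:stal.0}) applied with the tag $\tau'$; and $\stal(\exec, \tau', v) \subseteq \stal(\exec, \tau', u_i) \cup \comp(v)$ by the induction hypothesis on $u_i$ with the tag $\tau'$ --- the first two being vacuously true whenever the relevant tag is $\bot$. Combining, $\stal(\exec, \tau, v) \subseteq \comp(v) \cup \stal(\exec, \tau', u_i) \subseteq \stal(\exec, \tau, u) \cup \comp(v)$, which closes the induction.

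I expect this last case to be the delicate point. When the path from $u$ to $v$ traverses a cryptographic construction whose tag differs from the active one, $\stal$ both records that construction as alien and resets the active tag to its head tag; consequently the alien subterms of $v$ as they arise inside $\stal(\exec, u)$ are computed relative to $\tau'$, not relative to $\bot$. A naive induction that did not track the active tag would fail to reconcile these and would leave a spurious set $\comp(u_i)$ to absorb; carrying $\tau$ through the statement, and using parts (\ref{lemitem:tau.in.bot}) and (\ref{lemitem:stal.0}) of Lemma~\ref{lem:staliengeneral} to bridge $\tau$, $\bot$ and $\tau'$, is what makes the argument go through without any explicit reasoning about where $v$ sits inside $u$.
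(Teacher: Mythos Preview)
Your proof is correct and follows essentially the same approach as the paper: both strengthen the statement by tracking the active tag through an induction on $u$, and both rely on items~(\ref{lemitem:tau.in.bot}) and~(\ref{lemitem:stal.0}) of Lemma~\ref{lem:staliengeneral} to bridge between different tags. The one difference worth noting is in the form of the strengthening. The paper proves the auxiliary
\[
\forall \tau'\ \exists \tau\ \ \stal(\exec,\tau,v)\subseteq\stal(\exec,\tau',u),
\]
where the existential $\tau$ in effect records the tag context in which $v$ sits inside $u$; the applications of Lemma~\ref{lem:staliengeneral} are postponed to the end, after the induction. You instead prove
\[
\forall \tau\ \ \stal(\exec,\tau,v)\subseteq\stal(\exec,\tau,u)\cup\comp(v),
\]
keeping the same $\tau$ on both sides but carrying $\comp(v)$ through the induction, and you invoke Lemma~\ref{lem:staliengeneral} inside the inductive step to pass from $\tau$ to $\tau'$. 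Either packaging works; the paper's is slightly cleaner in the auxiliary (no $\comp(v)$), while yours avoids the existential and makes the final specialisation to $\tau=\bot$ immediate.
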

}}

\noindent\usebox{\lemstal}

\begin{proof} We first need to establish the following result:
  \[
  \forall \tau'\ \exists\ \tau\ \stal(\exec, \tau, v) \subseteq \stal(\exec, \tau', u).
  \]

  If $v = u$ then we can choose $\tau=\tau'$ to prove what we want. Otherwise, we have that $v \neq u$, and we prove the result by induction on the depth of $u$, and for this we distinguish three cases:
     
  \smallskip{}
  \noindent\emph{Case $u = \f(u_1,\ldots,u_n)$ for some $\f \in \{\pub,\priv,\shk\}$.} Then we have that ${\stal(v) = \emptyset}$ for any $v \in \st(u)$. This allows us to easily conclude.

  \smallskip{}
  \noindent\emph{Case $u = \pair{u_1}{u_2}.$} In that case $v\in\st(u_1)$ or $v\in\st(u_2)$. Suppose $v\in\st(u_1)$ and let $\tau'$ be a tag. By induction hypothesis, we have that there exists $\tau$ such that $\stal(\exec, \tau, v) \subseteq \stal(\exec, \tau', u_1)$. By Definition~\ref{dfn:stalien}, we have that $\stal(\exec, \tau', u_1) \subseteq \stal(\exec, \tau', u)$. Hence, we easily conclude. The case where $v\in\st(u_2)$ can be handled in a similar way. 
  
  \smallskip{}
  \noindent\emph{Case $u = \f(u_1, \dots, u_n)$ for some $\f\in\{\enc, \enca, \sign, \h\}$.} In that case, we have that $v\in\st(u_{i_0})$ for some $i_0\in\{1, \dots, n\}$. Let $\tau'$ be a $k$-tag. According to Definition~\ref{dfn:stalien}, we have that $\underset{i\in\{1, \dots, n\}}{\bigcup} \stal(\exec, \tau'', u_i) \subseteq \stal(\exec, \tau', u)$ where $\tau'' = \tagh(\exec, u)$. Moreover, by induction hypothesis, we know  that there exists $\tau$ such that $\stal(\exec, \tau, v) \subseteq \stal(\exec, \tau'', u_{i_0})$. Hence, we deduce that $\stal(\exec, \tau, v) \subseteq \stal(\exec, \tau', u)$.
     
  \medskip{}

  This allows us to conclude that $\forall \tau'\ \exists \tau,\ \stal(\exec, \tau, v) \subseteq \stal(\exec, \tau', u)$. We have shown that $\forall \tau'\ \stal(\exec, \tau', u) \subseteq \stal(\exec, u)$ (see Lemma~\ref{lem:staliengeneral} - Item~\ref{lemitem:tau.in.bot}). Hence, we can infer that there exists $\tau$ such that $\stal(\exec, \tau, v) \subseteq \stal(\exec, u)$. Hence, we have that:

  $
  \begin{array}[b]{rcll}
    \stal(\exec, v) &\subseteq& \stal(\exec, \tau, v) \cup \comp(v)& \;\; \text{(Lemma~\ref{lem:staliengeneral} - Item~\ref{lemitem:stal.0})}\\
    &\subseteq& \stal(\exec, u) \cup \comp(v) &
  \end{array}
  $\qedhere
\end{proof}



\newsavebox{\lemstalT}
\sbox{\lemstalT}{\vbox{%
\begin{lem}\label{lem:stalT}
  Let $\Pi$ be a $k$-party protocol and $\exec$ be an execution trace of $\widetilde{\Pi}$. Let $T$ be a set of terms such that  $T \vdash v$ for any $v \in \stal(\exec, T)$, and $t$ be a term such that $T \vdash t$. We have $T \vdash u$ for any $u \in \stal(\exec, t)$.
\end{lem}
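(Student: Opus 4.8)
The plan is to argue by induction on the structure of a proof tree $\pi$ witnessing $T \vdash t$ in the sense of Definition~\ref{def:deducibility}; note that $\stal(\exec,t)$ stands for $\stal(\exec,\bot,t)$, and that the hypothesis on $T$ --- that $T \vdash v$ for every $v \in \stal(\exec, T)$ --- stays fixed, so the induction hypothesis may be invoked on every strict subproof of $\pi$. For the base case $\pi$ is a single leaf, hence $t \in T \cup \A \cup \Nces_\epsilon \cup \Kcal_\epsilon \cup \{\pub(a) \mid a \in \A\}$. If $t \in T$, then $\stal(\exec, t) \subseteq \stal(\exec, T)$ and we conclude by the hypothesis on $T$. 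If $t \in \Nces_\epsilon$, then $\stal(\exec, t) = \{t\}$ and $T \vdash t$ trivially. In all remaining cases $t$ is an agent, a public key $\pub(a)$, or a long-term key $\priv(\epsilon)$ or $\shk(a,\epsilon)$, so $\stal(\exec, t) = \emptyset$ by Definition~\ref{dfn:stalien} and there is nothing to prove.

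For the inductive step I would split on the last rule of $\pi$; the composition and projection rules are routine. If $\pi$ ends with a pairing rule, $t = \pair{u_1}{u_2}$, then $\stal(\exec, t) = \stal(\exec, u_1) \cup \stal(\exec, u_2)$ directly from Definition~\ref{dfn:stalien}, and the induction hypothesis on the two subproofs suffices. If $\pi$ ends with a projection yielding $t$ from a pair $\pair{u}{v}$ (so $t$ is $u$ or $v$), then $\stal(\exec, t) \subseteq \stal(\exec, u) \cup \stal(\exec, v) = \stal(\exec, \pair{u}{v})$, and the induction hypothesis on the subproof of $T \vdash \pair{u}{v}$ closes the case. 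If $\pi$ ends with an encryption, signature, or hash composition, $t = \f(u_1, \ldots, u_n)$ with $\f \in \{\enc, \enca, \sign, \h\}$, then, unfolding Definition~\ref{dfn:stalien} at the active tag $\bot$, $\stal(\exec, t) = \{t\} \cup \bigcup_i \stal(\exec, \tau', u_i)$ where $\tau' = \tagh(\exec, t)$; here $T \vdash t$ since it is the conclusion of $\pi$, and $\stal(\exec, \tau', u_i) \subseteq \stal(\exec, u_i)$ by item~\ref{lemitem:tau.in.bot} of Lemma~\ref{lem:staliengeneral} (trivially so when $\tau' = \bot$), so the induction hypothesis applied to the $u_i$ finishes this case.

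The real work, and where I expect the main obstacle to lie, is the decomposition rules --- symmetric and asymmetric decryption, and the optional signature-extraction rule --- because $\stal$ is \emph{not} monotone for the subterm ordering: the alien subterms of the recovered plaintext $t$ are computed with active tag $\bot$, whereas applying the induction hypothesis to the subproof of the ciphertext only delivers the alien subterms of $t$ computed with the \emph{head tag} of that ciphertext. Concretely, if $\pi$ ends with the rule $\enc(t,v),\; v \vdash t$, the induction hypothesis on the subproof of $T \vdash \enc(t,v)$ gives $T \vdash w$ for all $w \in \stal(\exec, \enc(t,v))$, and unfolding Definition~\ref{dfn:stalien} at active tag $\bot$ yields $\stal(\exec, \enc(t,v)) = \{\enc(t,v)\} \cup \stal(\exec, \tau', t) \cup \stal(\exec, \tau', v)$ with $\tau' = \tagh(\exec, \enc(t,v))$; in particular $T \vdash w$ for all $w \in \stal(\exec, \tau', t)$. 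To bridge the gap I would invoke item~\ref{lemitem:stal.0} of Lemma~\ref{lem:staliengeneral} (again trivial when $\tau' = \bot$), namely $\stal(\exec, t) \subseteq \comp(t) \cup \stal(\exec, \tau', t)$, together with the observation that, since $\pi$ already proves $T \vdash t$, extending $\pi$ by projection rules yields $T \vdash c$ for every $c \in \comp(t)$; combining these two facts gives $T \vdash w$ for all $w \in \stal(\exec, t)$. The asymmetric-decryption and signature-extraction cases are handled identically, with $\enc(t,v)$ replaced by $\enca(t, \pub(v))$, respectively $\sign(t,v)$; in the asymmetric case $\stal(\exec, \tau', \pub(v)) = \emptyset$, so the set produced by the induction hypothesis only shrinks, which is harmless. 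The crux is precisely this last point: to recover the components of $t$ one must use the whole proof tree $\pi$, not merely the hypothesis on $\stal(\exec, T)$.
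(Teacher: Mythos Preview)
Your proof is correct and follows the same overall structure as the paper's: induction on a proof tree $\pi$ of $T \vdash t$, with the composition and projection cases handled routinely and the non-projection decomposition rules requiring the key observation that $\stal(\exec,t) \subseteq \stal(\exec,\f(t,t')) \cup \comp(t)$, after which the induction hypothesis handles the first part and projections from $\pi$ handle the components.

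The only real difference is in how that inclusion is obtained for the decomposition case. The paper invokes Lemma~\ref{lem:stal} (which states $\stal(\exec,v) \subseteq \stal(\exec,u) \cup \comp(v)$ for any $v \in \st(u)$) with $u = \f(t,t')$ and $v = t$. You instead unfold Definition~\ref{dfn:stalien} at active tag $\bot$ to read off $\stal(\exec,\tau',t) \subseteq \stal(\exec,\f(t,t'))$ and then apply item~\ref{lemitem:stal.0} of Lemma~\ref{lem:staliengeneral} to get $\stal(\exec,t) \subseteq \comp(t) \cup \stal(\exec,\tau',t)$. Both routes yield the same inclusion; yours is marginally more direct since Lemma~\ref{lem:stal} is itself proved via item~\ref{lemitem:stal.0}. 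As a side note, your base case is slightly more careful than the paper's in singling out $t \in \Nces_\epsilon$ (where $\stal(\exec,t) = \{t\}$ rather than $\emptyset$), though the conclusion is immediate either way.
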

}}

\noindent\usebox{\lemstalT}

\begin{proof}
  Let $u\in\stal(\exec, t)$. We prove that $T\vdash u$ by induction on $\pi$, a prooftree witnessing the fact that $T \vdash t$.  If $\pi$ is reduced to a leaf then we have that $t\in T \cup \A \cup \Kcal_\epsilon \cup \Nces_\epsilon\cup \{\pub(a) ~|~ a \in \A\}$. Actually, if $t \in \A \cup \Kcal_\epsilon \cup \Nces_\epsilon \cup \{\pub(a) ~|~ a \in \A\}$, then $\stal(\exec,t) = \emptyset$, leading to a contradiction. Hence, we have that $t \in T$, and thus $u \in \stal(\exec,T)$.  We can thus conclude by hypothesis that $T \vdash u$.
  \smallskip{}
  
  Otherwise, we proceed by case analysis on the last rule used in the proof~$\pi$.
  \smallskip{}
  
  \noindent\emph{Case 1: the last rule is a composition rule.} Then $t = \f(t_1, \dots, t_n)$ for some terms $t_1, \dots, t_n$ and some $\f\in\{\pair{}{}, \enc, \enca, \sign, \h\}$. Let $\pi_1, \ldots, \pi_n$ be the direct subproofs of $\pi$. We have that $\pi_i$ is a proof of $T \vdash t_i$ for $i\in \{1,\ldots, n\}$. According to Definition~\ref{dfn:stalien} of alien subterms, $\stal(\exec, t) \subseteq \{t\} \cup \underset{i\in\{1, \dots, n\}}{\bigcup} \stal(\exec, \tau, t_i)$ for some $\tau$, and by Lemma~\ref{lem:staliengeneral} (Item~\ref{lemitem:tau.in.bot}) we can thus infer that 
  \[
  \stal(\exec, t) \subseteq \{t\} \cup \underset{i\in\{1, \dots, n\}}{\bigcup} \stal(\exec, t_i).
  \]
  If $u = t$, then by hypothesis we know that $T\vdash u$. On the other hand , if $u \in \stal(\exec, t_i)$ for some $i\in\{1, \dots, n\}$, then we  conclude by applying our induction hypothesis on $\pi_i$. In both cases, we have that $T \vdash u$.

  \smallskip{}
  \noindent\emph{Case 2: the last rule is a projection rule.} Then $t = t_{i_0}$ for some terms $t_1$, $t_2$, and some $i_0\in\{1, 2\}$. Let $\pi'$ be the direct subproof of $\pi$. We have that $\pi$ is a proof of $T \vdash \pair{t_1}{t_2}$. According to Definition~\ref{dfn:stalien} of alien subterms, $\stal(\exec, t) \subseteq \stal(\exec, \pair{t_1}{t_2})$, i.e.  $u\in \stal(\exec, \pair{t_1}{t_2})$. We can thus conclude by applying our induction hypothesis on $\pi'$ that $T \vdash u$. 
  \smallskip{}

  \noindent\emph{Case 3: the last rule is another decomposition rule.} In such a case, there exists $t'$ such that one of the direct subproofs of $\pi$ is labeled with $\f(t,t')$. Let $\pi'$ be such a proof. Thanks to Lemma~\ref{lem:stal} we know that either $u \in \stal(\exec, \f(t, t'))$ or $u \in \comp(t)$. In the first case, we can conclude by applying our induction hypothesis on $\pi'$  that $T \vdash u$. In the second case, we know that by application of the projection rules one can derive $u$ from $t$, hence $T \vdash u$. 
\end{proof}



\newsavebox{\lemstalun}
\sbox{\lemstalun}{\vbox{%
\begin{lem}\label{lem:stal.1}
  Let $\Pi$ be a $k$-party protocol and $\exec$ be an execution trace of $\widetilde{\Pi}$ associated to the symbolic trace $\tr = [\evt_1^{\sid_1}; \dots; \evt_\ell^{\sid_\ell}]$. Let $\sigma$ be the substitution such that $\dom(\sigma) = \vars(\tr)$ and $\exec = \tr\sigma$.
  \[
  \forall t \in \st(\tr, \sid)\ \ \ \stal(\exec, t\sigma)\ \subseteq\ \comp(t\sigma)\ \cup\ \underset{x\in \vars(t)}{\bigcup} \stal(\exec, \tau, x\sigma)
  \]
  where $\tau = \tagt(\exec, \sid)$.
\end{lem}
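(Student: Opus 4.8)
The plan is to deduce the statement from the sharper auxiliary claim
\[
\stal(\exec, \tau, t\sigma)\ \subseteq\ \bigcup_{x \in \vars(t)} \stal(\exec, \tau, x\sigma)
\qquad\mbox{for every }\ t \in \st(\tr, \sid),
\]
since item~\ref{lemitem:stal.0} of Lemma~\ref{lem:staliengeneral}, applied with the $k$-tag $\tau$, already gives $\stal(\exec, t\sigma) \subseteq \comp(t\sigma) \cup \stal(\exec, \tau, t\sigma)$. Before attacking the auxiliary claim I would record two facts about the symbolic trace $\tr$ of the transformed protocol. Write $r$ for the role of session $\sid$, and recall from Definition~\ref{def:transfo} that $\widetilde{\Pi}(r)$ begins with the $k$ communication events of the preamble --- each being $\rcv(u_i^r)$ or $\snd(u_i^r)$ with $u_i^r = \pair{x_i^r}{z_i^r}$ --- followed by $\lbrack\seqevt^r\rbrack_{\tau_r}$, where $\tau_r = \langle u_1^r,\dots,u_k^r\rangle$. \emph{Fact 1:} either session $\sid$ has executed fewer than $k$ events in $\exec$, in which case $\tau = \tagt(\exec,\sid) = \bot$ and every element of $\st(\tr,\sid)$ is a subterm of a preamble message $u_i^r\sigma_{r,\sid}$, hence is an agent name, the freshly generated nonce $n^{\sid}_{z^r_r}$, a receive variable $(z_i^r)^{\sid}$, or a pair of such terms; or session $\sid$ has completed the preamble, in which case $\tau = \tau_r\sigma_{r,\sid}\sigma$, whose components are of the form $\pair{a_i}{\cdot}$, so $\tau$ is a $k$-tag. \emph{Fact 2:} since $k$-tagging with $\tau_r$ inserts \emph{the same} tag $\tau_r$ at position $1.1$ of every cryptographic subterm (an easy induction on the term being tagged), and the preamble events have no cryptographic subterms, every cryptographically headed $t \in \st(\tr,\sid)$ has the form $\f(\pair{\tau_r\sigma_{r,\sid}}{u_1},u_2)$ or $\h(\pair{\tau_r\sigma_{r,\sid}}{u_1})$; hence $t\sigma$ equals $\f(\pair{\tau}{u_1\sigma},u_2\sigma)$ or $\h(\pair{\tau}{u_1\sigma})$, and since $\tau = \tagt(\exec,\sid) \in \tagst(\exec)\smallsetminus\{\bot\}$ we get $\tagh(\exec, t\sigma) = \tau$.

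The proof then splits on whether $\tau = \bot$. If $\tau = \bot$, the lemma is immediate from Fact~1 by inspection: agents and long-term keys contribute no alien subterms, $\stal(\exec, n^{\sid}_{z^r_r}) = \{n^{\sid}_{z^r_r}\} = \comp(n^{\sid}_{z^r_r})$, each receive variable $(z_i^r)^{\sid}$ is exactly the variable of the $t$ at hand, and pairs reduce to their components. If $\tau \neq \bot$, I would prove the auxiliary claim by structural induction on $t \in \st(\tr,\sid)$. Atoms are trivial: agents and long-term keys give $\emptyset$; a variable $x$ gives $\stal(\exec,\tau,x\sigma)$, which lies in the union; and a native name is of the form $n^{\sid}_y$, for which $\stal(\exec,\tau,n^{\sid}_y) = \emptyset$ because $\tagt(\exec,\sid) = \tau \neq \bot$. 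Pairs follow from the recursive clause of Definition~\ref{dfn:stalien} and the induction hypothesis. For a cryptographic $t = \f(u_1,u_2)$ (or $\h(u_1)$), Fact~2 gives $\tagh(\exec,t\sigma) = \tau \neq \bot$, so by Definition~\ref{dfn:stalien} $\stal(\exec,\tau,t\sigma)$ is the union of $\stal(\exec,\tau,\cdot)$ over the arguments of $t\sigma$; the key argument $u_2\sigma$ is handled by the induction hypothesis (its variables are among those of $t$), while the first argument $\pair{\tau}{u_1\sigma}$ contributes $\stal(\exec,\tau,\tau) \cup \stal(\exec,\tau,u_1\sigma)$ --- the latter by induction, and $\stal(\exec,\tau,\tau)$ being the union over the components $\pair{a_i}{v_i}$ of $\tau$ of the sets $\stal(\exec,\tau,v_i)$, where $v_r = n^{\sid}_{z^r_r}$ yields $\emptyset$ (again since $\tagt(\exec,\sid) = \tau$) and each $v_i = (z_i^r)^{\sid}\sigma$ with $i \neq r$ has $(z_i^r)^{\sid} \in \vars(t)$. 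Every summand therefore lies in $\bigcup_{x \in \vars(t)} \stal(\exec,\tau,x\sigma)$, which closes the induction; combined with item~\ref{lemitem:stal.0} of Lemma~\ref{lem:staliengeneral} this yields the lemma.

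The main obstacle is Fact~2: one has to verify carefully that $k$-tagging really puts \emph{the same} tag $\tau_r$ at every cryptographic position --- the definition of ``$k$-tagged'' only asserts that each such position carries \emph{some} $k$-tag --- and, more delicately, that after the substitution $\sigma_{r,\sid}$ (which instantiates the parameters) and the attacker substitution $\sigma$, this tag coincides with the expected tag $\tau = \tagt(\exec,\sid)$ read off from the preamble events, so that $\tagh(\exec,t\sigma)$ returns $\tau$ rather than $\bot$ and the ``good'' branch of Definition~\ref{dfn:stalien} is taken. Everything else is routine bookkeeping with Definition~\ref{dfn:stalien} once these identifications are pinned down.
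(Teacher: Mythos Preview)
Your proof is correct and, in the cryptographic case, a little slicker than the paper's. Both arguments rest on the same structural fact you call Fact~2 --- that every $u \in \encst(\tr,\sid)$ satisfies $\tagh(\exec,u\sigma)=\tau$ --- which the paper invokes with the phrase ``by construction of $\tr$''. The difference lies in how the induction is packaged. The paper runs the induction directly on the stated inclusion $\stal(\exec,t\sigma)\subseteq \comp(t\sigma)\cup\bigcup_x \stal(\exec,\tau,x\sigma)$; in the cryptographic step this forces a detour through item~\ref{lemitem:tau.in.bot} of Lemma~\ref{lem:staliengeneral} (passing from $\stal(\exec,\tau,t_i\sigma)$ up to $\stal(\exec,t_i\sigma)$ so the induction hypothesis applies), which then introduces spurious $\comp(t_i\sigma)$ terms that must be eliminated by a separate case analysis on the symbolic components $v\in\comp(t_i)$. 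Your auxiliary claim $\stal(\exec,\tau,t\sigma)\subseteq \bigcup_x \stal(\exec,\tau,x\sigma)$ keeps the active tag $\tau$ fixed throughout the induction, so in the cryptographic step (where $\tagh=\tau$) the definition gives $\stal(\exec,\tau,t\sigma)=\bigcup_i \stal(\exec,\tau,t_i\sigma)$ with no $\{t\sigma\}$ term and no switch between $\stal(\exec,\cdot)$ and $\stal(\exec,\tau,\cdot)$; the induction hypothesis then applies cleanly, and the single application of item~\ref{lemitem:stal.0} at the end absorbs the top-level component. What you buy is the absence of the paper's elimination step; what the paper buys is not having to treat the tag's own contribution $\stal(\exec,\tau,\tau)$ explicitly, though as you note that reduces immediately to the receive variables $(z_i^r)^{\sid}$, which lie in $\vars(t)$. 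Both are fine; yours is arguably the cleaner decomposition.
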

}}

\noindent\usebox{\lemstalun}

\begin{proof}
  Let $t \in\st(\tr, \sid)$ and  $\tau = \tagt(\exec, \sid)$. We show by structural induction on~$t$ that 
  \[
  \stal(\exec, t\sigma)\ \subseteq\ \comp(t\sigma)\ \cup\ \underset{x\in \vars(t)}{\bigcup} \stal(\exec, \tau, x\sigma)
  \]

  \noindent We distinguish several cases.
  \smallskip{}
  
  \noindent\emph{Case $t \in \mathcal{Y}$.} In such a case, we can easily conclude thanks to Lemma~\ref{lem:staliengeneral} (Item~\ref{lemitem:stal.0}). Indeed, we have that:
  \[
  \stal(\exec, t\sigma)  \subseteq  \comp(t\sigma) \cup \stal(\exec, \tau, t\sigma)  =   \comp(t\sigma) \cup \underset{x\in \vars(t)}{\bigcup} \stal(\exec, \tau, x\sigma)
  \]
  
  \noindent\emph{Case $t \in \mathcal{N}$.} Then  $t\sigma = t$, $\stal(\exec, t\sigma) = \{t\sigma\}$, and $\comp(t\sigma) = \{t\sigma\}$. Thus, we have that:
  \[ 
  \stal(\exec, t\sigma) = \{t\sigma\} \subseteq \comp(t\sigma)\ \cup\ \underset{x\in \vars(t)}{\bigcup} \stal(\exec, \tau, x\sigma)
  \]
  
  \noindent\emph{Case $t\in\A$ or $t = \f(a_1, \dots, a_n)$ for some $\f\in\{\shk, \pub, \priv\}$.}  In such a case, $\vars(t)=\emptyset$ and thus $\stal(\exec, t\sigma) = \emptyset$. This allows us to conclude.
  \smallskip{}
  
  \noindent\emph{Case $t = \pair{t_1}{t_2}$.} Then we have that $\stal(\exec, t\sigma) = \stal(\exec, t_1\sigma) \cup \stal(\exec, t_2\sigma)$. Applying our induction hypothesis, we deduce that  \\[2mm]
  \null \hfill $\stal(\exec, t_i\sigma)\ \subseteq\ \comp(t_i\sigma)\ \cup\ \underset{x\in \vars(t_i)}{\bigcup} \stal(\exec, \tau, x\sigma)$ for $i \in \{1,2\}$. \hfill\null
  
  Hence, we conclude that $\stal(\exec, t\sigma)\ \subseteq\ \comp(t\sigma)\ \cup\ \underset{x\in \vars(t)}{\bigcup} \stal(\exec, \tau, x\sigma)$ 
  \smallskip{}
  
  \noindent\emph{Case $t = \f(t_1, \dots, t_n)$ for some $\f\in\{\enc, \enca, \sign, \h\}$.} Let $\tau' = \tagh(\exec, t\sigma)$. We have that $\stal(\exec, t\sigma) = \{t\sigma\} \cup \underset{i\in\{1, \dots, n\}}{\bigcup} \stal(\exec, \tau', t_i\sigma)$. By construction of $\tr$, for all subterms $u\in\encst(\tr, \sid)$, $\tagh(\exec, u\sigma) = \tau$, thus ${\tau' = \tau}$. Thanks to Lemma~\ref{lem:staliengeneral} (Item~\ref{lemitem:tau.in.bot}), we have that $\stal(\exec, t\sigma) \subseteq \{t\sigma\} \cup \underset{i\in\{1, \dots, n\}}{\bigcup} \stal(\exec, t_i\sigma)$. We have that $\comp(t\sigma) = \{t\sigma\}$ and  thanks to our  induction hypothesis we have for each $i\in\{1, \dots, n\}$ the following inclusion
  \[
  \begin{array}[c]{l}
    \stal(\exec, t_i\sigma)\ \subseteq\ \comp(t_i\sigma)\ \cup\ \underset{x\in \vars(t_i)}{\bigcup} \stal(\exec, \tau, x\sigma)
  \end{array}
  \]
  Thus, $\stal(\exec, t\sigma) \subseteq \comp(t\sigma) \ \cup\ \underset{i\in\{1, \dots, n\}}{\bigcup} \comp(t_i\sigma)\ \cup\ \underset{x\in \vars(t)}{\bigcup} \stal(\exec, \tau, x\sigma)$.
  
  Now, in order to conclude, it remains to show that for all $u\in\stal(\exec, t\sigma)$, if $ u\in \underset{i\in\{1, \dots, n\}}{\bigcup} \comp(t_i\sigma)$ then there exists $x\in \vars(t)$ such that $u\in \stal(\exec, \tau, x\sigma)$. First, we notice the following:
  \[
  \begin{array}[c]{rcll}
    \stal(\exec, t\sigma) & {=} & \{t\sigma\} \cup \underset{i\in\{1, \dots, n\}}{\bigcup} \stal(\exec, \tau', t_i\sigma)& \text{(Definition~\ref{dfn:stalien})} \\
    
    & {=} & \{t\sigma\} \cup \underset{i\in\{1, \dots, n\}}{\bigcup} \underset{w\in\comp(t_i\sigma)}{\bigcup} \stal(\exec, \tau', w) & \text{(Lemma~\ref{lem:staliengeneral})} \\ 
    
    & {=} & \{t\sigma\} \cup \underset{i\in\{1, \dots, n\}}{\bigcup} \underset{v\in\comp(t_i)}{\bigcup} \underset{w\in\comp(v\sigma)}{\bigcup} \stal(\exec, \tau', w) & \text{(Definition~\ref{dfn:comp})} \\
    
    & {=} & \{t\sigma\} \cup \underset{i\in\{1, \dots, n\}}{\bigcup} \underset{v\in\comp(t_i)}{\bigcup} \stal(\exec, \tau', v\sigma) & \text{(Lemma~\ref{lem:staliengeneral})}
  \end{array}
  \]
  Let $i\in\{1, \dots, n\}$ be such that $u \in \stal(\exec, t\sigma)$ and $u \in \comp(t_i\sigma)$. In that case, according to the equation stated above, there exists $j \in \{1,\ldots,n\}$ such that $v\in \comp(t_j)$ and $u\in\stal(\exec, \tau', v\sigma)$. We now proceed by case analysis on~$v$:
  
  \begin{itemize}
  \item \emph{Case $v \in \A$ or $v = \f(a_1,\ldots,a_n)$ for some $f \in \{\pub,\priv,\shk\}$.} In such a case, we have that $\stal(\exec,\tau',v\sigma) = \emptyset$. Thus, this case in not possible.
    
  \item \emph{Case $v \in \mathcal{N}$.} In such a case, we have that $u = v$ and by construction of $\tr$ we have that $v    = n^\sid_y$ for some variable~$y$. Since, $\tau =\tau'$, we have that $\stal(\exec,\tau',v\sigma) = \emptyset$. Thus, this case is not possible.
    
  \item \emph{Case $v = \mathsf{g}(v_1, \dots, v_m)$ for some $\mathsf{g}\in\{\enc, \enca, \sign, \h\}$.}  In such a case, we have that $u = v\sigma$ and by construction of $\tr$ we know that $\tagh(\exec, v\sigma) = \tagh(\exec, t\sigma) = \tau (=\tau')$.  Hence, we deduce that $v\sigma \not\in \stal(\exec,\tau',v\sigma)$, and thus $u \not\in \stal(\exec,\tau',v\sigma)$ leading again to a contradiction.
    
  \item \emph{Case $v$ is a variable.} In such a case, we have that $v \in \vars(t_j) \subseteq \vars(t)$. Hence, we have the expected conclusion.
  \end{itemize}
  
  \smallskip{}
  Altogether, this allows us to conclude that

  $
  \hspace{2cm}\begin{array}[b]{lcl}
    \stal(\exec, t\sigma) & \subseteq & \comp(t\sigma)\ \cup\ \underset{x\in \vars(t)}{\bigcup} \stal(\exec, \tau, x\sigma).
  \end{array}
  $\qedhere
\end{proof}


%

\newsavebox{\lemstalienorigination}
\sbox{\lemstalienorigination}{\vbox{%
\begin{lem}\label{lem:stalien.origination}
  Let $\Pi$ be a $k$-party protocol and $\exec = [\evt_1^{\sid_1}; \dots;
  \evt_\ell^{\sid_\ell}]$ be an execution trace of $\widetilde{\Pi}$,
  w.r.t. some set $T_0$ of ground atoms, associated to the symbolic trace
  $\tr = [\evt\evt_1^{\sid_1}; \dots; \evt\evt_\ell^{\sid_\ell}]$. Let $\sigma$
  be the substitution such that $\dom(\sigma) =\vars(\tr)$ and $\exec =
  \tr\sigma$. Let $\sid$ be a session identifier, $x$ be a variable in
  $\vars(\tr, \sid)$, $\tau =\tagt(\exec, \sid)$, and $u \in \st(\tr)$ such that
  $x \in \vars(u)$. We have that $\stal(\exec, \tau, x\sigma) \subseteq
  \stal(\exec, \tau, u\sigma)$.
\end{lem}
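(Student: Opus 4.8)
The plan is to establish the inclusion by structural induction on the term~$u$, after a preliminary reduction to the case $u \in \st(\tr,\sid)$. First I would note that the hypotheses already fix where $u$ lives: by construction of symbolic traces the variables introduced in distinct sessions are pairwise disjoint, so from $x \in \vars(u)$ and $x \in \vars(\tr,\sid)$ it follows that $u$ is a subterm of an event annotated with~$\sid$, that is, $u \in \st(\tr,\sid)$. This reduction is what lets us control the shape of the cryptographic subterms of $u$ in the main case.

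I would then induct on $u \in \st(\tr,\sid)$ along the grammar of terms. If $u$ is an atom, then $x \in \vars(u)$ forces $u = x$ (a symbolic trace has no agent variables, and names and long-term keys carry no variables), so the inclusion is an equality. If $u = \pair{u_1}{u_2}$, then $x$ occurs in one component $u_i \in \st(\tr,\sid)$, and the induction hypothesis together with the identity $\stal(\exec,\tau,\pair{u_1\sigma}{u_2\sigma}) = \stal(\exec,\tau,u_1\sigma) \cup \stal(\exec,\tau,u_2\sigma)$ of Definition~\ref{dfn:stalien} closes the case. If $u = \f(\ldots)$ with $\f \in \{\pub,\priv,\shk\}$ the case is vacuous, since such a subterm of $\tr$ has no variables. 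The remaining case is $u = \f(u_1,\ldots,u_n) \in \encst(\tr,\sid)$ with $\f \in \{\enc,\enca,\sign,\h\}$. Here I would use the construction of $\widetilde{\Pi}$: the only cryptographic subterms occurring in $\tr$ come from the tagged copies $[\seqevt^j]_{\tau_j}$ of the original roles (with~$j$ the role played by~$\sid$), so by Definitions~\ref{def:transfo} and~\ref{def:tagging} every $v \in \encst(\tr,\sid)$ is $k$-tagged and the tag it carries, after instantiation, is the tag that~$\sid$ fixes during its preamble; moreover the tagged copies occur strictly after the $k$ communication events of $\widetilde{\Pi}^{\init}(j)$, hence a session containing any cryptographic subterm has already completed its preamble, so $\tagt(\exec,\sid) \neq \bot$ and $u\sigma$ carries the tag $\tagt(\exec,\sid)$ at its head. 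Consequently $\tagh(\exec,u\sigma) = \tagt(\exec,\sid) = \tau \neq \bot$ (this is the observation already used in the proof of Lemma~\ref{lem:stal.1}), so $\stal(\exec,\tau,u\sigma)$ is computed through the first branch of Definition~\ref{dfn:stalien} and equals $\bigcup_i \stal(\exec,\tau,u_i\sigma)$. Since $x \in \vars(u)$, $x$ occurs in some $u_{i_0} \in \st(\tr,\sid)$, and the induction hypothesis gives $\stal(\exec,\tau,x\sigma) \subseteq \stal(\exec,\tau,u_{i_0}\sigma) \subseteq \stal(\exec,\tau,u\sigma)$. The sub-case in which $x$ lies inside the tag $\tau_j$ itself, i.e.~$x$ is one of the preamble variables $z^j_i$, needs no separate treatment: the tag is the first component of $u_1$, and the pair case descends into it.

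The main obstacle is exactly this cryptographic case, and the one delicate point is the behaviour of $\stal$ at a cryptographic term: we must guarantee that after substitution the tag carried by $u\sigma$ is \emph{precisely} $\tagt(\exec,\sid)$, so that the recursion in Definition~\ref{dfn:stalien} stays with the active tag $\tau$ rather than switching to some $\tau' \neq \tau$. This is secured by two facts about the transformation — that the preamble tag is planted verbatim into every encryption, signature and hash of the second phase, and that any session reaching such a tagged subterm has necessarily finished its preamble, so that its expected tag is well defined and not~$\bot$.
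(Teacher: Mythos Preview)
Your proposal is correct and follows essentially the same approach as the paper's own proof: both reduce to $u \in \st(\tr,\sid)$ via disjointness of session variables, then proceed by structural induction on~$u$ with the same case split, using in the cryptographic case that $\tagh(\exec,u\sigma) = \tagt(\exec,\sid)$ by construction of~$\widetilde{\Pi}$. Your write-up is in fact more explicit than the paper's on why the preamble must be complete (so $\tau \neq \bot$) whenever a cryptographic subterm of $\tr$ in session~$\sid$ is reached; the paper simply asserts ``by construction of~$\tr$'' at that point.
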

}}

\noindent\usebox{\lemstalienorigination}

\begin{proof}
  Let $u$ be a subterm of $\tr$ such that $x\in\vars(u)$. We prove the result by structural induction on~$u$. First, note that by construction of $\tr$, we have that $\vars(\tr, \sid')\cap \vars(\tr, \sid'') = \emptyset$ when $\sid' \neq \sid''$. Hence,  for all $i \in \{1, \dots, \ell\}$ if $x\in\vars(\evt\evt_i^{\sid_i})$, then $\sid_i = \sid$; and thus, for all $i\in\{1, \dots, \ell\}$ such that $u \in \st(\evt\evt^{\sid_i}_i)$, we know that $\sid_i = \sid$. Now, since $x\in\vars(u)$, we know that $u$ is not  ground, and we only need to consider the three following cases: 
  \smallskip{}

  \noindent\emph{Case $u \in \mathcal{Y}$.} In this case $u = x$, and the result trivially holds.
  \smallskip{}

  \noindent\emph{Case $u = \pair{u_1}{u_2}$ for some terms $u_1$ and $u_2$.} In that case, $x\in\vars(u_1)$ or $x\in\vars(u_2)$. Assume that $x \in \vars(u_1)$. The other case can be handled in a similar way. By induction hypothesis, we know that $\stal(\exec, \tau, x\sigma) \subseteq \stal(\exec, \tau, u_1\sigma)$ and we have that $\stal(\exec, \tau, u_i\sigma) \subseteq \stal(\exec, \tau, u\sigma)$. Combining these two we easily conclude.
  \smallskip{}

  \noindent\emph{Case $u = \f(u_1, \dots, u_n)$ for some $\f\in\{\enc, \enca, \sign, \h\}$ and some terms $u_1, \dots, u_n$.} In that case $x\in \vars(u_i)$ for some $i\in\{1, \dots, n\}$. Let $j\in \{1, \dots, n\}$ such that $x\in \vars(u_j)$. Now, by construction of $\tr$, we know that $\tagh(\exec, u\sigma) =  \tagt(\exec, \sid)$, hence we have that $\stal(\exec, \tau, u_j\sigma) \subseteq \stal(\exec, \tau, u\sigma)$. Applying our induction hypothesis on $u_j$, we deduce that $\stal(\exec, \tau, x\sigma) \subseteq \stal(\exec, \tau, u_j\sigma)$. This allows us to conclude.
\end{proof}


Now, we can show that the alien subterms that occur in a valid trace are deducible.

\newsavebox{\lemtracestalien}
\sbox{\lemtracestalien}{\vbox{%
\begin{lem}\label{lem:trace.stalien}
  Let $\Pi$ be a $k$-party protocol and $\exec = [\evt_1^{\sid_1}; \dots; \evt_\ell^{\sid_\ell}]$ be  an execution trace of $\widetilde{\Pi}$ that is valid w.r.t. some  set $T_0$ of ground atoms. Let $i \in \{0, \ldots, \ell\}$. We have that $\K(\exec_i) \cup T_0 \vdash u$ for any $u \in \stal(\exec, \K(\exec_i) \cup T_0)$.
\end{lem}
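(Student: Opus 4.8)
The plan is to proceed by induction on~$i$, the length of the prefix of the trace under consideration. For the base case $i=0$ we have $\K(\exec_0) = \emptyset$, so $\stal(\exec, T_0)$ is at stake; but $T_0$ is a set of ground atoms, and by Definition~\ref{dfn:stalien} the set of alien subterms of an atom is either empty (for agents, public keys, long-term keys) or the singleton containing the atom itself (for names in $\Nces_\epsilon$ or for instantiated nonces $n^\sid_y$). In either case any $u \in \stal(\exec, T_0)$ satisfies $u \in T_0$, hence $T_0 \vdash u$ trivially. For the inductive step, suppose the result holds for $\exec_{i-1}$ and consider $\exec_i$. If the $i$-th event is a receive or status event, then $\K(\exec_i) = \K(\exec_{i-1})$ and there is nothing to prove. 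So the interesting case is $\evt_i = \snd(m)$ for some term $m$, where $\K(\exec_i) = \K(\exec_{i-1}) \cup \{m\}$, and we need to handle $u \in \stal(\exec, m)$ (the other elements of $\stal(\exec, \K(\exec_i) \cup T_0)$ being covered by the induction hypothesis on $\exec_{i-1}$, noting $\stal$ distributes over unions).

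The heart of the argument is thus: show that for $\evt_i = \snd(m)$, every $u \in \stal(\exec, m)$ is deducible from $\K(\exec_i) \cup T_0$. Here I would invoke the structural results already proved about alien subterms. Since $m = u'\sigma$ for $u' \in \st(\tr,\sid_i)$ where $\tr$ is the symbolic trace and $\sigma$ the grounding substitution, Lemma~\ref{lem:stal.1} gives
\[
\stal(\exec, m) \subseteq \comp(m) \cup \bigcup_{x \in \vars(u')} \stal(\exec, \tau, x\sigma)
\]
with $\tau = \tagt(\exec,\sid_i)$. The components in $\comp(m)$ are all deducible from $m$ alone by projection rules, hence from $\K(\exec_i) \cup T_0$. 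For the remaining part, each $x \in \vars(u')$ is a variable occurring in a send event; by the origination property (Condition~\ref{def:role:or} of Definition~\ref{def:role}, transported to symbolic traces), $x$ occurs in some earlier receive event $\evt\evt_j$ with $j < i$, say $x \in \vars(v')$ where $\rcv(v'\sigma)$ is the $j$-th event and $v'\sigma \in \K(\exec_{j})$ — wait, more precisely $v'\sigma$ was received, so by validity $\K(\exec_{j-1}) \cup T_0 \vdash v'\sigma$. Then Lemma~\ref{lem:stalien.origination} gives $\stal(\exec,\tau,x\sigma) \subseteq \stal(\exec,\tau,v'\sigma) \subseteq \stal(\exec, v'\sigma)$ (the last inclusion by Lemma~\ref{lem:staliengeneral}, Item~\ref{lemitem:tau.in.bot}). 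Applying the induction hypothesis at index $j-1 < i$ (together with Lemma~\ref{lem:stalT}, which propagates deducibility of alien subterms through a deduction: since $\K(\exec_{j-1}) \cup T_0 \vdash v'\sigma$ and, by induction, every alien subterm of $\K(\exec_{j-1}) \cup T_0$ is already deducible from it, we get that every alien subterm of $v'\sigma$ is deducible from $\K(\exec_{j-1}) \cup T_0$, a fortiori from $\K(\exec_i) \cup T_0$) finishes the case.

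The main obstacle I anticipate is getting the bookkeeping of indices and the interplay between Lemma~\ref{lem:stalT} and the induction hypothesis exactly right: Lemma~\ref{lem:stalT} has as a hypothesis that $T \vdash v$ for all $v \in \stal(\exec,T)$, and one must verify that this hypothesis is met by $T = \K(\exec_{j-1}) \cup T_0$ — which is precisely the statement of the lemma at a strictly smaller index, so the induction is well-founded, but the dependency needs to be stated carefully. A secondary subtlety is that $u'$, being the argument of a send event, may also contain parameters (agent variables) and role-generated nonces; for those $x\sigma$ is either an agent name or a fresh nonce $n^{\sid_i}_y$ with $\tagt(\exec,\sid_i) = \tau$, so $\stal(\exec,\tau,x\sigma) = \emptyset$ in the nonce case and is empty in the agent case, contributing nothing — this should be remarked but causes no real difficulty. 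Apart from these, the proof is a routine assembly of the preceding alien-subterm lemmas.
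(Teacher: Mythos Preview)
Your proposal is correct and follows essentially the same route as the paper's proof: induction on~$i$, trivial base case, only the send case is interesting, then Lemma~\ref{lem:stal.1} for the decomposition into $\comp(m)$ and the $\stal(\exec,\tau,x\sigma)$ pieces, origination to locate an earlier receive event, Lemma~\ref{lem:stalien.origination} plus Lemma~\ref{lem:staliengeneral}(\ref{lemitem:tau.in.bot}) to push the alien subterms into $\stal(\exec,v'\sigma)$, and finally the induction hypothesis at index $j-1$ combined with Lemma~\ref{lem:stalT} and validity of $\exec$. One small remark: your secondary subtlety about parameters and role-generated nonces is a non-issue, since in the symbolic trace $\tr$ those have already been substituted away (Definition~\ref{def:scenario}), so $\vars(u')$ contains only the fresh variables $z^{\sid}$, all of which are covered by the origination property.
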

}}

\noindent\usebox{\lemtracestalien}

\begin{proof}
  Let $\tr = [\evt\evt_1^{\sid_1}; \dots; \evt\evt_\ell^{\sid_\ell}]$ be the symbolic trace associated to~$\exec$. Let $\sigma$ be the substitution such that $\dom(\sigma) = \vars(\tr)$ and $\exec = \tr\sigma$. We prove the result by induction on~$i$. The base case, where $i =0$ is obvious since $\K(\exec_i) = \emptyset$ and $\stal(\exec, T_0) \subseteq T_0$. Now, to deal with the inductive case, we distinguish three cases depending on the nature of the last event in $\exec_i$.
  \smallskip{}

  \noindent\emph{Case $\evt^{\sid_i}_i = \Prd(t_1, \dots, t_n)$.} Then, we have that $\K(\exec_i) = \K(\exec_{i-1})$ and thus that $\stal(\exec,\K(\exec_i) \cup T_0) = \stal(\exec,\K(\exec_{i-1}) \cup T_0)$. Thanks to our induction hypothesis, we know that $\K(\exec_{i-1}) \cup T_0 \vdash \stal(\exec, \K(\exec_{i-1})\cup T_0))$, thus we easily conclude.
  \smallskip{}

  \noindent\emph{Case $\evt^{\sid_i}_i = \rcv(t)$.} This case is similar to the previous one.
  \smallskip{}

  \noindent\emph{Case $\evt^{\sid_i}_i = \snd(t)$.} In such a case, we have that $\evt\evt^{\sid_i}_i = \snd(t')$ for some term~$t'$ such that $t = t'\sigma$. Let $u\in\stal(\exec, \K(\exec_i) \cup T_0))$. The only case for which we can not easily conclude by applying our induction hypothesis is when  $u \in \stal(\exec, t)$. So, assume that $u \in \stal(\exec,t)$. According to Lemma~\ref{lem:stal.1}, $u \in \comp(t'\sigma) \cup \underset{x\in\vars(t')}{\bigcup} \stal(\exec, \tau, x\sigma)$ where $\tau = \tagt(\exec, \sid_i)$. We distinguish two cases:
  \begin{enumerate}
  \item  $u \in \comp(t'\sigma)$. We have that $t'\sigma = t \in \K(\exec_{i})$ and thus $\K(\exec_i) \cup T_0 \vdash u$.
    
  \item $u \in \stal(\exec, \tau, x\sigma)$ for some $x \in \vars(t')$ and $u\not\in \comp(t'\sigma)$. By the origination property we know that there exists $j<i$ such that $\sid_{j} = \sid_i$, $\evt\evt^{\sid_j}_j = \rcv(v')$ with $x\in\vars(v')$, and thus that $x\sigma\in\st(v'\sigma)$. By Lemma~\ref{lem:stalien.origination}, we deduce that $u \in \stal(\exec, \tau, v'\sigma)$, and thanks to Lemma~\ref{lem:staliengeneral} (Item~\ref{lemitem:tau.in.bot}), we have that $u \in \stal(\exec, v'\sigma)$. We can then apply our induction hypothesis in order to deduce that $\K(\exec_{j-1}) \cup T_0 \vdash w$ for any $w \in \stal(\exec, \K(\exec_{j-1})\cup T_0)$, and because $\exec$ is a valid trace, we have also that $\K(\exec_{j-1})\cup T_0 \vdash v'\sigma$. Thus, according Lemma~\ref{lem:stalT}, we deduce that $\K(\exec_{j-1}) \cup T_0 \vdash w$ for any $w \in \stal(\exec, v'\sigma)$. In particular, we conclude that $\K(\exec_{j-1}) \cup T_0 \vdash u$.\qedhere
  \end{enumerate}
\end{proof}

\section{Proofs of section~\ref{subsec:validity}}
\label{sec:appvalidity}

In order to show the validity of the resulting trace, we first characterize the subterms that are abstracted by our transformation.
Actually, we can show that those subterms are alien subterms, 
and thus they enjoy the properties established in Appendix~\ref{sec:app-alien}.


\newsavebox{\lemabstractedimpliesalien}
\sbox{\lemabstractedimpliesalien}{\vbox{%
\begin{lem}\label{lem:abstracted.implies.alien}
  Let $\Pi$ be a $k$-party protocol, $\exec$ be an execution trace of $\widetilde{\Pi}$, $t$ be a term and $p$ be a position. If there exists $\sid$ such that $(\overline{t}^{\exec, \sid})|_p\in\mathcal{N}_\epsilon$, then  we have that $t|_p \in \stal(\exec, t)$.
\end{lem}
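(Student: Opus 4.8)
The plan is to prove, by structural induction on $t$ with the session identifier $\sid$ fixed, the following strengthening of the statement (setting $\tau := \tagt(\exec,\sid)$; note that $\overline{\cdot}^{\exec,\sid}$ never changes $\sid$, hence never changes $\tau$, as it descends into subterms):
\[
\text{if } (\overline{t}^{\exec,\sid})|_p \in \Nces_\epsilon \text{ then } t|_p \in \stal(\exec,\tau,t).
\]
The lemma then follows immediately, since $\stal(\exec,\tau,t) \subseteq \stal(\exec,t)$ by Lemma~\ref{lem:staliengeneral}(\ref{lemitem:tau.in.bot}). Working with the active tag $\tau = \tagt(\exec,\sid)$ rather than $\bot$ in the induction hypothesis is the one idea needed: the abstraction $\overline{\cdot}^{\exec,\sid}$ uses $\tau$ uniformly as its ``active tag'', so the induction closes only if $\stal$ is invoked with that same tag.

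First I would dispatch the easy cases, which follow the case split of Definition~\ref{def:abstract.term}. When $t$ is an agent name or $t = \f(a_1,\dots,a_n)$ for $\f\in\{\shk,\pub,\priv\}$, the term $\overline{t}^{\exec,\sid}$ contains no name from $\Nces_\epsilon$, so the hypothesis is never met (and $\stal(\exec,\tau,t)=\emptyset$). When $t$ is a name, only $p=\varepsilon$ is relevant and a direct comparison of Definitions~\ref{def:abstract.term} and~\ref{dfn:stalien} shows $\overline{t}^{\exec,\sid}\in\Nces_\epsilon$ iff $t\in\stal(\exec,\tau,t)$ --- e.g. for $t = n_y^{\sid'}$ both conditions amount to ``$\tau=\bot$ or $\tagt(\exec,\sid')\neq\tau$''. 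When $t=\langle u,v\rangle$, a nonempty position is $1.p'$ or $2.p'$ and, since $\overline{\langle u,v\rangle}^{\exec,\sid}=\langle\overline{u}^{\exec,\sid},\overline{v}^{\exec,\sid}\rangle$, the induction hypothesis on $u$ (resp.\ $v$) gives $t|_p\in\stal(\exec,\tau,u)\cup\stal(\exec,\tau,v)=\stal(\exec,\tau,t)$.

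The main work is the case $t = \f(u_1,\dots,u_n)$ with $\f\in\{\enc,\enca,\sign,\h\}$, and the key point is that $\overline{\cdot}^{\exec,\sid}$ and $\stal(\exec,\tau,\cdot)$ make the \emph{same} branching choice at this node: both recurse into the $u_i$ while keeping $\tau$ exactly when $\tagh(\exec,\f(u_1,\dots,u_n))=\tau$ and $\tau\neq\bot$, and both stop otherwise. If $\tagh(\exec,\f(u_1,\dots,u_n))=\tau\neq\bot$, then $\overline{t}^{\exec,\sid}=\f(\overline{u_1}^{\exec,\sid},\dots,\overline{u_n}^{\exec,\sid})$: position $\varepsilon$ contributes nothing and a position $i.p'$ reduces, by the induction hypothesis on $u_i$, to $t|_{i.p'}=u_i|_{p'}\in\stal(\exec,\tau,u_i)\subseteq\bigcup_j\stal(\exec,\tau,u_j)=\stal(\exec,\tau,t)$. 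Otherwise $\overline{t}^{\exec,\sid}=n^{\epsilon,S}_t$ is the only occurrence of an attacker name, so $p=\varepsilon$, and the matching ``otherwise'' clause of Definition~\ref{dfn:stalien} gives $t\in\{t\}\cup\bigcup_i\stal(\exec,\tagh(\exec,t),u_i)=\stal(\exec,\tau,t)$. I expect the only delicate point to be the bookkeeping in this cryptographic case --- verifying that the side conditions of the two definitions involving $\tagh$, $\tagt(\exec,\sid)$ and $\bot$ coincide exactly --- which is routine once the strengthened statement is set up correctly; recognizing that one must strengthen to active tag $\tau$ is really the whole content of the proof.
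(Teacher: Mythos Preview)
Your proof is correct and takes a genuinely different route from the paper's. The paper argues by induction on the position $p$ and works directly with $\stal(\exec,t)=\stal(\exec,\bot,t)$; in the cryptographic case this forces an appeal to Lemma~\ref{lem:stal} (to pass from $\stal(\exec,t_{i_0})$ to $\stal(\exec,t)\cup\comp(t_{i_0})$) followed by a separate case analysis showing that whenever $t_{i_0}|_q\in\comp(t_{i_0})$ one still has $t_{i_0}|_q\in\stal(\exec,\tau,t_{i_0}|_q)\subseteq\stal(\exec,t)$. Your strengthening to the active tag $\tau=\tagt(\exec,\sid)$ makes the recursive structure of $\overline{\cdot}^{\exec,\sid}$ and $\stal(\exec,\tau,\cdot)$ match exactly, so the induction closes immediately in each case and you never need Lemma~\ref{lem:stal} or the $\comp$ detour. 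What you lose is nothing; what you gain is a shorter, more transparent argument --- the paper's extra work is precisely the price of proving the weaker statement directly rather than via your strengthened invariant.
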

}}
\noindent\usebox{\lemabstractedimpliesalien}

\begin{proof}
  We will prove by induction on~$p$ that $t|_p \in \stal(\exec, t)$.
  
  \smallskip{}
  \noindent\emph{Base case $p = \epsilon$.} In that case, according to Definition~\ref{def:abstract.term}, either $t\in\Nces$, or $t = \f(t_1, \dots, t_n)$ for some $\f\in\{\enc, \enca, \sign, \h\}$. If $t \in \mathcal{N}$, we have that $\stal(\exec, t) = \{t\}$, and thus $t|_p = t|_\epsilon \in \stal(\exec, t)$. Otherwise, i.e. $t = \f(t_1, \dots, t_n)$ for some $\f\in\{\enc, \enca, \sign, \h\}$, then we have that
  \[
  \stal(\exec, t) = \{t\} \cup \underset{i\in\{1, \dots, n\}}{\bigcup} \stal(\exec, \tau, t_i)
  \]
  where $\tau = \tagh(\exec, t)$. Hence, we have that $t|_p = t|_\epsilon \in \stal(\exec, t)$. 

  \smallskip{}
  \noindent\emph{Inductive case $p = i_0\cdot q$.} First, note that~$t$ cannot
  be a long-term key, i.e. $t$ is not a term of the form $\pub(t')$,
  $\priv(t')$ or $\shk(t_1,t_2)$. Indeed, in such a case, $(\overline{t}^{\exec,
    \sid})|_p\not\in\mathcal{N}_\epsilon$ for any $\sid$. This would contradict
  one of our hypothesis. Thus, two cases remain:

  \smallskip{}
  
  \noindent {\bf Case $t = \pair{t_1}{t_2}$.} By Definition~\ref{def:abstract.term}, $\overline{t}^{\exec, \sid} = \pair{\overline{t_1}^{\exec, \sid}}{\overline{t_2}^{\exec, sid}}$. Suppose $i_0 = 1$. Then $(\overline{t_1}^{\exec, \sid})|_q \in \mathcal{N}_\epsilon$, and thanks to our induction hypothesis we can derive that $t_1|_q \in \stal(\exec, t_1)$. We have that $\stal(\exec, t) = \stal(\exec, t_1) \cup \stal(\exec, t_2)$, thus $t_1|_q \in \stal(\exec, t)$. Finally, since $t_1|_q = t|_{1\cdot q} = t|_p$ we can conclude that $t|_p \in \stal(\exec, t)$. The case where $i_0 = 2$ can be done in a  similar way.

  \smallskip{}
  \noindent {\bf Case $t = \f(t_1, \dots, t_n)$ for some $\f\in\{\enc, \enca, \sign, \h\}$.} Since     $ (\overline{t}^{\exec, \sid})|_{i_{0}.q}\in\Nces_\epsilon$, we know that  $(\overline{t}^{\exec, \sid})|_\epsilon\not\in\Nces_\epsilon$. Hence, by Definition~\ref{def:abstract.term}, we have that
  \begin{itemize}
  \item $\overline{t}^{\exec, \sid} = f(\overline{t_1}^{\exec, \sid}, \dots, \overline{t_n}^{\exec, \sid})$;
  \item  $\tagh(\exec, t) = \tagt(\exec, \sid) \not= \bot$; and
  \item $ (\overline{t}^{\exec, \sid})|_{{i_0}.q} = (\overline{t_{i_0}}^{\exec, \sid})|_q$.
  \end{itemize}
  Hence, we have that  $(\overline{t_{i_0}}^{\exec, \sid})|_q \in \Nces_\epsilon$. Thanks to our induction hypothesis, we deduce that  $t_{i_0}|_q \in \stal(\exec, t_{i_0})$. Applying Lemma~\ref{lem:stal}, we conclude that  $t_{i_0}|_q \in \comp(t_{i_0}) \cup \stal(\exec, t)$.
 
  \smallskip{}

  In order to conclude, it is sufficient to show that if $t_{i_0}|_q \in \comp(t_{i_0})$, then we also have that $t_{i_0}|_q \in \stal(\exec, t)$. Assume that $t_{i_0}|_q \in \comp(t_{i_0})$. First, let $\tau = \tagh(\exec, t)$, thanks to Lemma~\ref{lem:staliengeneral} (item~\ref{lemitem:stalien.comp}), we have that:
  \[
  \stal(\exec, t) = \{t\} \cup \underset{j\in\{1, \dots, n\}}{\bigcup} \stal(\exec, \tau, t_j) = \{t\} \cup \bigcup_{i=1}^{n} \underset{t'_j\in\comp(t_j)}{\bigcup} \stal(\exec, \tau, t'_j)
  \]
  Hence, we have that $\stal(\exec, \tau, (t_{i_0})|_q) \subseteq \stal(\exec, t)$. To conclude, it is hence enough to show that $t_{i_0}|_q \in \stal(\exec, \tau, t_{i_0}|_q)$. Since $(\overline{t}^{\exec, \sid})|_p = (\overline{t}^{\exec, \sid})|_{{i_0}.q} = (\overline{t_{i_0}}^{\exec, \sid})|_q \in \Nces_\epsilon$, we need to distinguish three cases:
      
  \begin{itemize}
  \item \emph{Case $(t_{i_0})|_q \in \Nces_\epsilon$.} In such a case, we have that $\stal(\exec, \tau, (t_{i_0})|_q) = \{(t_{i_0})|_q)\}$.

  \item \emph{Case $(t_{i_0})|_q \in \Nces\smallsetminus\Nces_\epsilon$.} In such a case, we have that $(t_{i_0})|_q = n^{\sid'}_y$ for some $sid'\not\in\sameTagas(\exec, \sid)$, thus  $\stal(\exec, \tau, (t_{i_0})|_q) = \{(t_{i_0})|_q)\}$.

  \item \emph{Case $(t_{i_0})|_q = \mathsf{g}(u_1, \dots, u_m)$ for some $\mathsf{g}\in\{\enc, \enca, \sign, \h\}$.} In such a case, we have that $\tagh(t_{i_0}|_q)  \neq \tau$, and thus 

    $
    \stal(\exec, \tau, (t_{i_0})|_q) = \{(t_{i_0})|_q)\} \cup \underset{j\in\{1, \dots, m\}}{\bigcup} \stal(\exec, \tagh(t_{i_0}|_q), u_j).
    $
\qedhere
  \end{itemize}
\end{proof}

We show that our transformation preserves disequalities even if terms
are  not abstracted using the same session identifier. 
This result can be proved by structural induction on $\overline{m}^{\exec, \sid}$. 

\newsavebox{\lemmu}
\sbox{\lemmu}{\vbox{%
\begin{lem}\label{lem:mu}
  Let $\Pi$ be a $k$-party protocol and $\exec = [\evt^{\sid_1}_1; \ldots; \evt^{\sid_\ell}_\ell]$ be a valid execution trace of $\widetilde{\Pi}$, w.r.t. some initial intruder knowledge $T_0$. Let $m$ and $m'$ be two terms such that $m \neq m'$, and  $\sid$, $\sid'$ be two session identifiers. We have that $\overline{m}^{\exec, \sid}\not= \overline{m'}^{\exec, \sid'}$.
\end{lem}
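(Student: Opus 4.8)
*Let $\Pi$ be a $k$-party protocol and $\exec$ a valid execution trace of $\widetilde{\Pi}$ w.r.t. $T_0$. If $m \neq m'$ then $\overline{m}^{\exec,\sid} \neq \overline{m'}^{\exec,\sid'}$ for any session identifiers $\sid,\sid'$.*

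The plan is to proceed by structural induction on $\overline{m}^{\exec,\sid}$ (equivalently, a simultaneous induction on the depth of $m$ and $m'$), carefully tracking which clause of Definition~\ref{def:abstract.term} was used to produce $\overline{m}^{\exec,\sid}$ and $\overline{m'}^{\exec,\sid'}$. First I would observe that the abstracted terms fall into disjoint syntactic shape classes according to their topmost symbol: a name in $\Nces_\epsilon$ of the special form $n^{\epsilon,S}_t$, a name of the form $n^{\sid''}_y$, an agent name $a$, a long-term key $\f(a_1,\dots,a_n)$ with $\f\in\{\shk,\pub,\priv\}$, a pair $\langle\cdot,\cdot\rangle$, or a cryptographic term $\f(u_1,\dots,u_n)$ with $\f\in\{\enc,\enca,\sign,\h\}$. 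Apart from the fresh names $n^{\epsilon,S}_t$, which can only be produced by the first, second, or sixth clause, every other clause is ``shape-preserving'' in that the topmost structure of $\overline{m}^{\exec,\sid}$ mirrors that of $m$. So if $m$ and $m'$ have different topmost symbols and neither is abstracted to a fresh name, the images differ immediately; the remaining cases are where at least one image is a fresh name, and where $m,m'$ share the same topmost symbol.

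For the shape-preserving matching cases ($m = \langle u_1,u_2\rangle$, $m'=\langle u'_1,u'_2\rangle$, and similarly when both are cryptographic terms not abstracted away, or both are long-term keys, or both agents), I would peel off the top symbol and apply the induction hypothesis to a component on which $m$ and $m'$ disagree — noting that if $m\neq m'$ and they have the same top symbol, some argument position must differ. The agent and long-term-key clauses are immediate since they are the identity. For the fresh-name cases the key point is that a name $n^{\epsilon,S}_t$ carries $t$ (the abstracted subterm) as an explicit subscript, and by the standing assumption these names ``are not used anywhere else'' — in particular they do not occur inside any $m$ before transformation and are injective in the pair $(S,t)$. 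So $\overline{m}^{\exec,\sid} = n^{\epsilon,S}_t$ and $\overline{m'}^{\exec,\sid'} = n^{\epsilon,S'}_{t'}$ are equal only if $S=S'$ and $t = t'$, and $t,t'$ are literally the original (pre-transformation) subterms $m,m'$ — hence $m = m'$, contradiction. Similarly $\overline{m}^{\exec,\sid}=n^{\epsilon,S}_t$ and $\overline{m'}^{\exec,\sid'}=n^{\sid''}_y$ cannot be equal since the former has an $\epsilon$ superscript reserved for abstraction names and $n^{\sid''}_y$ is a genuine session nonce, which by assumption is disjoint from the abstraction names; and the case $\overline{n^\sid_y}^{\exec,\sid_1}$ versus $\overline{n^{\sid'}_{y'}}^{\exec,\sid_2}$ reduces, via the second clause, either to comparing $n^\sid_y$ with $n^{\sid'}_{y'}$ (equal only if $\sid=\sid'$ and $y=y'$, i.e. $m=m'$) or to a comparison involving an $n^{\epsilon,S}_{\cdot}$ name handled as above.

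The main obstacle I expect is the case where one of $m,m'$ is a cryptographic term $\f(u_1,\dots,u_n)$ that \emph{is} abstracted to a fresh name $n^{\epsilon,S}_{\f(u_1,\dots,u_n)}$ while the other, say $m'$, is \emph{not} abstracted and stays a genuine $\f$-term $\f(\overline{u'_1}^{\exec,\sid'},\dots,\overline{u'_n}^{\exec,\sid'})$ — one must rule out the coincidence that the original $m'$ happens to \emph{be} a fresh abstraction name of the exact form $n^{\epsilon,S}_{m}$. This is precisely where the disjointness hypothesis on the names $\{n^{\epsilon,S}_t\}$ is used: these names do not occur in the execution trace before transformation, so no original subterm $m'$ can equal such a name, and no two clauses can produce the same fresh name from genuinely different source terms. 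I would state this reliance on disjointness explicitly at the start of the proof and then the remaining bookkeeping is routine case analysis following Definition~\ref{def:abstract.term} clause by clause.
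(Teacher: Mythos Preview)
Your approach is correct and matches the paper's: the paper states only that the result ``can be proved by structural induction on $\overline{m}^{\exec,\sid}$'' without giving details, and your case analysis along the clauses of Definition~\ref{def:abstract.term}, exploiting that the fresh names $n^{\epsilon,S}_t$ record the original term $t$ injectively in their index and are assumed disjoint from all other names, is exactly the intended argument. One minor remark: in your ``main obstacle'' paragraph you set up the situation where $m'$ is a cryptographic term \emph{not} abstracted away, in which case $\overline{m'}^{\exec,\sid'}$ is a genuine $\f$-headed term and automatically differs from any name, so the worry about $m'$ coinciding with a fresh abstraction name does not arise there; that concern belongs instead to the case where $m'$ is itself a name, which you already handle correctly via the injectivity of $(S,t)\mapsto n^{\epsilon,S}_t$.
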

}}

\noindent\usebox{\lemmu}

\noindent\usebox{\lemdeducibilityofabstractedterms}

\begin{proof}
  Let $\tr = [\evt\evt_1^{\sid_1}; \dots; \evt\evt_\ell^{\sid_\ell}]$ be the symbolic trace associated to $\exec$ and~$\sigma$ be the substitution such that $\dom(\sigma) = \vars(\tr)$ and $\exec = \tr\sigma$.
  Let $i \in \{0, \ldots, \ell\}$.  Let $\pi$ be a simple proof of
  $\K(\exec_i) \cup T_0 \vdash t$. We prove that
  $\K(\overline{\exec}_i) \cup T_0 \vdash \overline{t}^{\exec, \sid}$
  by induction on $(i, \pi)$. If $i = 0$ and $\pi$ is a simple proof
  reduced to {a leaf} (possibly followed by some projection
  rules), then we have that $T_0 \vdash t$, and $\pi$ is necessarily
  reduced to {a leaf} since $T_0$ only contains atomic terms. Let $\sid$ be a session identifier, we have that $\overline{t}^{\exec, \sid} \in \{t\} \cup \mathcal{N}_\epsilon$. This allows us to conclude that $T_0 \vdash \overline{t}^{\exec, \sid}$ Now, we distinguish several cases depending on the last rule of~$\pi$.
  \smallskip{}

  \noindent\emph{The proof $\pi$ ends with an instance of a composition rule, i.e. $t= \f(t_1, \dots, t_n)$ for some $\f\in\{\pair{}{}, \enc, \enca, \sign, \h\}$ and some terms $t_1,\ \dots,\ t_n$.}

  According to Definition~\ref{def:abstract.term}, we have that $\overline{t}^{\exec, \sid} \in \mathcal{N}_\epsilon \cup \{\f(\overline{t_1}^{\exec, \sid}, \dots, \overline{t_n}^{\exec, \sid})\}$. If $\overline{t}^{\exec, \sid} \in \mathcal{N}_\epsilon$, we easily conclude that $\K(\overline{\exec}_i) \cup T_0 \vdash \overline{t}^{\exec, \sid}$. Otherwise, since $\pi$ ends with a composition rule, we have that $\K(\exec_i) \cup T_0 \vdash t_1$, \dots, $\K(\exec_i) \cup T_0 \vdash t_n$. Moreover, the simple proofs witnessing these facts are strict subproofs of $\pi$ that are also simple. Hence, we can apply our induction hypothesis in order to conclude that $\K(\overline{\exec}_i) \cup T_0 \vdash \overline{t_1}^{\exec, \sid}$, \dots, $\K(\overline{\exec}_i) \cup T_0 \vdash \overline{t_n}^{\exec, \sid}$. This allows us to conclude that $\K(\overline{\exec}_i) \cup T_0 \vdash  \f(\overline{t_1}^{\exec, \sid}, \dots, \overline{t_n}^{\exec, \sid})$.
  
  \smallskip{}
  \noindent\emph{The proof ends with the application of a
    decomposition rule (but not a projection) possibly followed by
    several applications of the projection rules until the resulting
    term is not a pair.} We will here present the case of the
  symmetric decryption rule, but all the other decomposition rules
  (including the case where the proof is reduced to a leaf) can be handled in a similar way. For some terms $t_1$ and $t_2$, the proof~$\pi$ is of the form
  \[
  \prooftree
  \prooftree
  \prooftree
  \prooftree
  \vdots
  \justifies
  \K(\exec_i) \cup T_0 \vdash \enc(t_1, t_2)
  \endprooftree
  \quad
  \prooftree
  \vdots
  \justifies
  \K(\exec_i) \cup T_0 \vdash t_2
  \endprooftree
  \justifies
  \K(\exec_i) \cup T_0 \vdash t_1
  \endprooftree
  \justifies
  \vdots
  \endprooftree
  \justifies
  \K(\exec_i) \cup T_0 \vdash t
  \endprooftree
  \]

  Let us first note that, by locality (Lemma~\ref{lem:locality}) and
  by simplicity of $\pi$ we know that $\enc(t_1, t_2) \in
  \st(\K(\exec_i)) \cup T_0 \cup \Kcal_\epsilon \cup \Nces_\epsilon
  \cup \{\pub(a) ~|~ a \in \mathcal{A}\}$, and by atomicity of $T_0$,
  $\Nces_\epsilon$, $\Kcal_\epsilon$ and  $\{\pub(a) ~|~ a \in
  \mathcal{A}\}$, we know that $\enc(t_1, t_2) \in
  \st(\K(\exec))$. (In case of a {proof reduced to a leaf}, and if there is no projection rule,  we may have that $t \in T_0$. In such a case, as in the base case, we have that $T_0 \vdash  \overline{t}^{\exec, \sid}$ and we easily conclude.)
  Hence,   there exists $k \le i$ such that  $\evt_k^{\sid_k} = \snd(u)$ and $\enc(t_1, t_2) \in \st(u)$. Let $k_0$ be the smallest such $k$ and $u_0$, $u_0'$ be such that $\evt^{\sid_{k_0}}_{k_0} = \snd(u_0)$ and $\evt\evt^{\sid_{k_0}}_{k_0} = \snd(u_0')$. Hence, we have that $u_0 = u_0'\sigma$.
  \smallskip{}

  \noindent In order to prove the result, we first establish the following claim.
  \smallskip{}

  \noindent{\bf Claim:} We have that $\overline{\enc(t_1, t_2)}^{\exec, \sid_{k_0}} = \enc(\overline{t_1}^{\exec, \sid_{k_0}}, \overline{t_2}^{\exec, \sid_{k_0}})$.
  \smallskip{}
 
  Assume by contradiction, that this equality does not hold.

  \smallskip{}
  %
First, we have that $\enc(\overline{t_1}^{\exec, \sid_{k_0}}, \overline{t_2}^{\exec, \sid_{k_0}}) \not\in \st(\overline{u_0}^{\exec, \sid_{k_0}})$. Indeed,  for having $\enc(\overline{t_1}^{\exec, \sid_{k_0}}, \overline{t_2}^{\exec, \sid_{k_0}}) \in \st(\overline{u_0}^{\exec, \sid_{k_0}})$, there must exist $v\in \st(u_0)$ such that $\overline{v}^{\exec, \sid_{k_0}} = \enc(\overline{t_1}^{\exec, \sid_{k_0}}, \overline{t_2}^{\exec, \sid_{k_0}})$. But this would imply that $v = \enc(t'_1, t'_2)$ for some terms $t'_1, t'_2$ such that $\overline{t'_1}^{\exec, \sid_{k_0}} = \overline{t_1}^{\exec, \sid_{k_0}}$ and $\overline{t'_2}^{\exec, \sid_{k_0}} = \overline{t_2}^{\exec, \sid_{k_0}}$. However this would in turn imply according to Lemma~\ref{lem:mu} that $t'_1 = t_1$ and $t'_2 = t_2$. In other words we would have $v = \enc(t_1, t_2)\in \st(u_0)$ but with $\overline{v}^{\exec, \sid_{k_0}} = \enc(\overline{t_1}^{\exec, \sid_{k_0}}, \overline{t_2}^{\exec, \sid_{k_0}})$ which would contradict our hypothesis. Hence, necessarily $\enc(\overline{t_1}^{\exec, \sid_{k_0}}, \overline{t_2}^{\exec, \sid_{k_0}}) \not\in \st(\overline{u_0}^{\exec, \sid_{k_0}})$.

  Now since $\enc(\overline{t_1}^{\exec, \sid_{k_0}}, \overline{t_2}^{\exec, \sid_{k_0}}) \not\in \st(\overline{u_0}^{\exec, \sid_{k_0}})$, while $\enc(t_1, t_2)\in \st(u_0)$, there must exist a position $p$ (smaller or equal to the position where $\enc(t_1, t_2)$ occurs in $u_0$) such that $(\overline{u_0}^{\exec, \sid_{k_0}})|_p \in \Nces_\epsilon$ and $\enc(t_1, t_2) = u_0|_p$. Hence, Lemma~\ref{lem:abstracted.implies.alien} tells us that ${u_0}|_p \in \stal(\exec, u_0)$. Thanks to Lemma~\ref{lem:stal.1} and Lemma~\ref{lem:staliengeneral} (Item~\ref{lemitem:tau.in.bot}), we conclude that:
  \[
  {u_0}|_p \in \comp(u_0) \cup \underset{x\in\vars(u_0')}{\bigcup} \stal(\exec, x\sigma)
  \]

  \noindent We now distinguish two cases and show that each case leads us to a contradiction.

  \noindent\emph{Case 1:} ${u_0}|_p \in \comp(u_0) \smallsetminus \underset{x\in\vars(u_0')}{\bigcup} \stal(\exec, x\sigma)$. In such a case, there exists $u_0''\in\comp(u_0')$ such that $u_0|_p \in \comp(u_0''\sigma)$. But because we are considering the case where ${u_0}|_p \not\in \underset{x\in\vars(u_0')}{\bigcup} \stal(\exec, x\sigma)$, it must be that ${u_0}|_p = u''_0\sigma$. Now, by construction of $\tr$, it must be that $\tagt(\exec, \sid_{k_0}) = \tagh(\exec, u_0''\sigma)$, and thus $\overline{(u_0)|_p}^{\exec, \sid_{k_0}} \not\in \Nces_\epsilon$. Finally, because $(u_0)|_p\in\comp(u_0)$, we have that $ (\overline{u_0}^{\exec, \sid_{k_0}})|_p = \overline{(u_0)|_p}^{\exec, \sid_{k_0}}$. However, this equality is not possible since we have shown that $(\overline{u_0}^{\exec, \sid_{k_0}})|_p \in \Nces_\epsilon$ whereas  $\overline{(u_0)|_p}^{\exec, \sid_{k_0}} \not\in \Nces_\epsilon$. Hence, we obtain a contradiction.
  \smallskip{}

  \noindent\emph{Case 2:} ${u_0}|_p \in \underset{x\in\vars(u_0')}{\bigcup} \stal(\exec, x\sigma)$. In such a case, there exists $x \in \vars(u'_0)$ such that ${u_0}|_p\in\stal(\exec, x\sigma)$  and  $\enc(t_1, t_2)\in\st(x\sigma)$. Thanks to the origination property (see Definition~\ref{def:role} - Condition~\ref{def:role:or}), we know that there exists $j < k_0$ such that $\evt\evt_j^{\sid_j} = \rcv(v')$ and $x \in \vars(v')$. Hence, we have that $\enc(t_1,t_2) \in \st(v'\sigma)$. Since $\exec$ is a valid trace, we have that $\K(\exec_{j-1}) \cup T_0 \vdash v'\sigma$.
  
  Let $\pi'$ be a simple proof of $\K(\exec_{j-1}) \cup T_0 \vdash v'\sigma$, and $\pi''$ be a minimal subproof of $\pi'$ whose root is labeled with a term  $t'$  such that $\enc(t_1, t_2) \in \st(t')$. By locality of $\pi'$ (Lemma~\ref{lem:locality}), and because $\enc(t_1, t_2) \not\in \st(\K(\exec_{j-1}))$ (remember here that we choose $k_0$ such that for all $j<k_0$, we have that $\enc(t_1, t_2) \not\in \st(\K(\exec_{j-1}))$), we know that $\pi''$ ends with a composition rule. Unless $t' = \enc(t_1,t_2)$, this contradicts the minimality of $\pi''$. Hence, we have that $t' =\enc(t_1,t_2)$ and $\pi''$ is a simple proof of $\enc(t_1,t_2)$ whose last rule is a composition. Actually, since $\enc(t_1, t_2) \not\in  \st(\K(\exec_{j-1}) \cup T_0)$, any simple proof of $\enc(t_1,t_2)$ ends with a composition. This will contradict the fact that $\pi$ is a simple proof of $t$.
  \smallskip{}

  This allows us to conclude the proof of the claim. 
  \smallskip{}

  \noindent Now, by relying on our claim and by applying our induction hypothesis, we have that:
  \begin{itemize}
  \item $\K(\overline{\exec}_i) \cup T_0 \vdash \enc(\overline{t_1}^{\exec, \sid_{k_0}}, \overline{t_2}^{\exec, \sid_{k_0}})$; and 

  \item $\K(\overline{\exec}_i) \cup T_0 \vdash \overline{t_2}^{\exec, \sid_{k_0}}$.
  \end{itemize}
  This allows us to deduce that $\K(\overline{\exec}_i) \cup T_0 \vdash \overline{t_1}^{\exec, \sid_{k_0}}$. 
  \smallskip{}

  In order to establish that $\K(\overline{\exec}_i) \cup T_0 \vdash \overline{t}^{\exec,\sid}$, we need to distinguish several cases: 

  \begin{description}

    
      
\item[Case $t \in \A$, $t = \pub(a)$ or  $t = \f(a_1, \dots, a_n)$ for some $\f\in\{\shk, \priv\}$] \hfill

\noindent In such a case, we have that  $\overline{t}^{\exec, \sid} =
\overline{t}^{\exec, \sid_{k_0}} = t$. Hence, we have that
$\K(\overline{\exec}_i) \cup T_0 \vdash \overline{t}^{\exec, \sid}$ by
applying some projection rules on the proof of $\K(\overline{\exec}_i) \cup T_0 \vdash \overline{t_1}^{\exec, \sid_{k_0}}$. 
    
  \item[Case $t \in \mathcal{N}$ or $t = \f(t'_1, \dots, t'_m)$ for some $\f\in\{\enc, \enca, \h, \sign\}$] \hfill

\noindent If $\overline{t}^{\exec, \sid}\in \comp(\overline{t_1}^{\exec, \sid_{k_0}})$, then we easily conclude that $\K(\overline{\exec}_i)\cup T_0 \vdash \overline{t}^{\exec, \sid}$ since we have established that $\K(\overline{\exec}_i)\cup T_0 \vdash \overline{t_1}^{\exec, \sid_{k_0}}$. Otherwise, we have that $\overline{t}^{\exec, \sid}\not\in \comp(\overline{t_1}^{\exec, \sid_{k_0}})$. In that case, and according to Definition~\ref{def:abstract.term} and Lemma~\ref{lem:staliengeneral} (item~\ref{lemitem:stalien.comp}), either $\overline{t}^{\exec, \sid} \in \mathcal{N}_\epsilon$ or $\overline{t}^{\exec, \sid_{k_0}} \in \mathcal{N}_\epsilon$. In the first case, we trivially conclude. In the second case, i.e. $\overline{t}^{\exec, \sid} \not\in \mathcal{N}_\epsilon$ but $\overline{t}^{\exec, \sid_{k_0}} \in \mathcal{N}_\epsilon$, we have that $t\in\stal(\exec, \enc(t_1, t_2))$ (thanks to Lemma~\ref{lem:abstracted.implies.alien}. Since $\enc(t_1,t_2) \in \st(u_0)$, we deduce that $t\in\stal(\exec,u_0) \cup \comp(\enc(t_1, t_2))$ by applying Lemma~\ref{lem:stal}. Now, since $t \neq \enc(t_1,t_2)$, we deduce that $t \in \stal(\exec,u_0)$. Thus, applying Lemma~\ref{lem:stal.1}, we have that 
    \[
    t\in \stal(\exec, u_0) \subseteq \comp(u_0) \cup \underset{x\in\vars(u_0')}{\bigcup} \stal(\exec, \tau, x\sigma)
    \]
    \noindent where $\tau = \tagt(\exec, \sid_{k_0})$.

    Assume that $t \in \comp(u_0)$ and $t  \not\in \stal(\exec,\tau,x\sigma)$ for any $x \in \vars(u'_0)$. In such a case, we have that there exists $t' \in \comp(u'_0)$ such that $t \in \comp(t'\sigma)$ and we know that $t'
    \not\in \vars(u'_0)$. Hence $t$ is either a nonce and we have that $t=t'$. Moreover, we know that $t' = {n_y}^{\sid_{k_0}}$ for some $y$ (by construction of $\tr$. In such a case, $\overline{t}^{\exec,\sid_{k_0}} \not \in \Nces_\epsilon$. This leads us to a contradiction. Otherwise $t$ is an encrypted term and we have that $t = t'\sigma$ and again by construction of $\tr$, we have that $\overline{t'\sigma}^{\exec,\sid_{k_0}} \not\in \Nces_\epsilon$, leading us to a contradiction. Hence, we know that this case is not possible.
 
    Hence, we have that $t \in \stal(\exec,\tau,x\sigma)$ for some $x \in \vars(u'_0)$. Thanks to the origination property, we know that there exists $j < k_0$ such that $\sid_j = \sid_{k_0}$, $\evt\evt_j^{\sid_j} = \rcv(v')$ with $x \in \vars(v')$. Hence, we have that $x\sigma \in \st(v'\sigma)$. Then, applying Lemma~\ref{lem:stalien.origination}, we deduce that   $t \in \stal(\exec, \tau, v'\sigma)$, and thanks to Lemma~\ref{lem:staliengeneral} (item~\ref{lemitem:tau.in.bot}), we have that $t\in\stal(\exec,v'\sigma)$.
  
    Now, according to Lemma~\ref{lem:trace.stalien}, we know that $\K(\exec_{j-1}) \cup T_0 \vdash w$ for any $w \in \stal(\exec, \K(\exec_{j-1})\cup T_0)$. Since $\exec$ is a valid trace, we have  that $\K(\exec_{j-1})\cup T_0 \vdash v'\sigma$. Applying Lemma~\ref{lem:stalT}, we deduce that $\K(\exec_{j-1}) \cup T_0 \vdash w$ for any $w \in \stal(\exec, v)$. In particular, we have that $\K(\exec_{j-1}) \cup T_0 \vdash t$ and we conclude by relying on our induction hypothesis.\qedhere
  \end{description}

\end{proof}


\section{Proofs of Section~\ref{subsec:sat}}
\label{app:sat}

In order to prove Proposition~\ref{prop:bar.attack} we will annotate formulas.  For the sake of homogeneity, we chose to annotate each term that occurs in the formula even though it would have been sufficient to only annotate variables. Moreover, we state the definition for a general formula, but in our setting, terms that occur in a formula are either names or variables.

\begin{defi}\label{dfn:1} (annotated formula)
Given a formula $\phi$, we define its annotated version $\Annotate(\phi)$ as follows:
\[
\begin{array}{l}
  \Annotate(\true)             =\true \\
  \Annotate(\neg \phi)        = \neg\Annotate(\phi) \\ 
  \Annotate(\learn(t))           =  \learn(t^t) \\
  \Annotate(\Comp(u))            =  \Comp(u^u) 
\end{array}
\;\;\;
\begin{array}{l}
 \Annotate(\Q(t_1, \dots, t_n))  =  \Q(t_1^{t_1}, \dots, t_n^{t_n}) \\
  \Annotate(\phi_1 \vee \phi_2)  =  \Annotate(\phi_1) \vee \Annotate(\phi_2) \\
  \Annotate(\Diamond\phi)        =  \Diamond \Annotate(\phi) \\
  \Annotate(\exists x. \phi)     =  \exists x. \Annotate(\phi)
\end{array}
\]
\end{defi}

We emphasize that those annotations are syntactic decorations that do not interfere in the semantics of the formulas. 
We also suppose that these annotations are not affected by substitutions, i.e., when $x$ is a variable annotated with $a$, $(x^a)\sigma = (x\sigma)^a$.
Relying on this notion of annotated formulas, we are now able to link each variable that occurs in $\phi$ with the term it has been substituted with in order to satisfy the formula.  More precisely, we only need to know the session identifiers from which those terms are issued. The idea is that these sessions are important to satisfy the attack formula whereas the other ones could be discarded from the execution trace.

\begin{defi}\label{dfn:mu}
Let $\phi$ be an attack formula  and $\psi$ its annotated version, i.e. $\psi = \Annotate(\phi)$.
Let $\Pi$ be a protocol, and 
$\exec = [\evt_1^{\sid_1}; \ldots; \evt_\ell^{\sid_\ell}]$ be an execution trace (not necessarily valid) 
of $\Pi$ w.r.t. some initial intruder knowledge $T_0$ and such that $\langle \exec, T_0\rangle \models \psi$. 
Let $\pi$ be a proof tree witnessing the fact that $\langle \exec, T_0\rangle\models \psi$. 
We define $\mu(\pi)$ as described in Figure~\ref{fig:mu}.
\end{defi}



\begin{figure}[h]
\[
\begin{array}{lcl}
  \mu\left(
    \prooftree \justifies \langle \exec_i, T_0\rangle \models \true\endprooftree
  \right) \; = \; \emptyset & \hspace{1cm}&
  \mu\left(
    \prooftree\justifies \langle \exec_i, T_0\rangle \models \learn(t^u) \endprooftree
  \right)\; = \; \emptyset \\
  \\
  \mu\left(
    \prooftree \justifies \langle \exec_i, T_0\rangle \models \Comp(u^v) \endprooftree
  \right)\; = \;  \emptyset 
&&
  \mu\left(
    \prooftree \justifies \langle \exec_i, T_0\rangle \models \neg\Comp(u^v)\endprooftree
  \right)
\; = \; \emptyset \\ \\
\multicolumn{3}{l}{
  \mu\left(
    \prooftree \justifies \langle \exec_i, T_0\rangle \models \neg\Q(t_1^{u_1}, \dots, t_n^{u_n})\endprooftree
  \right) \; = \; \emptyset}
\\ \\
\multicolumn{3}{l}{  \mu\left(
    \prooftree \justifies \langle \exec_i, T_0\rangle \models \Q(t_1^{u_1}, \dots, t_n^{u_n}) \endprooftree
  \right) \; = \; \{(u'_{1}, \sid_i); ..., (u'_{m}; \sid_i)\}} \\[-3mm]
\multicolumn{3}{r}{\hspace{2cm}\text{ where } \vars(\{u_1, \dots, u_n\}) = \{u'_{1}, \dots, u'_{m}\}} \\[2mm]
 \multicolumn{3}{l}{ \mu\left(
    \prooftree \prooftree \pi'\justifies \langle \exec_i, T_0\rangle \models \psi_j\endprooftree\justifies \langle \exec_i, T_0\rangle \models \psi_1 \vee \psi_2 \endprooftree
  \right) \;  = \; 
  \mu\left(
    \prooftree \pi'\justifies \langle \exec_i, T_0\rangle \models \psi_j\endprooftree
  \right) \text{ with } j\in\{1, 2\}}\\
  \\
 \multicolumn{3}{l}{ \mu\left(
    \prooftree \prooftree \pi_1\justifies \langle \exec_i, T_0\rangle \models \neg\psi_1 \endprooftree\prooftree \pi_2\justifies \langle \exec_i, T_0\rangle \models \neg\psi_2 \endprooftree\justifies \langle \exec_i, T_0\rangle \models \neg(\psi_1 \vee \psi_2) \endprooftree
  \right)\; = \;  \bigcup_{j \in \{1,2\}}
  \mu\left(
    \prooftree \pi_j\justifies \langle \exec_i, T_0\rangle \models \neg\psi_j \endprooftree\right) }
 \\
  \\
\multicolumn{3}{l}{
  \mu\left(
    \prooftree \prooftree \pi'\justifies \langle \exec_j, T_0\rangle \models \psi\endprooftree\justifies \langle \exec_i, T_0\rangle \models \Diamond \psi\endprooftree
  \right) \; = \; 
  \mu\left(
\prooftree {\pi'} \justifies 
    \langle \exec_j, T_0\rangle \models \psi \endprooftree
  \right) \text{ where $j \le i$}}\\
  \\
\multicolumn{3}{l}{
  \mu\left(
    \prooftree \prooftree \pi_1\justifies \langle \exec_1, T_0\rangle \models \neg\psi\endprooftree  \dots \prooftree \pi_i\justifies \langle \exec_i, T_0\rangle \models \neg\psi\endprooftree\justifies \langle \exec_i, T_0\rangle \models \neg(\Diamond \psi)\endprooftree
  \right) \; = \; \underset{i\in\{1, \dots, i\}}{\bigcup} 
  \mu\left(
    \prooftree \pi_i\justifies \langle \exec_i, T_0\rangle \models \neg\psi\endprooftree
  \right)
} \\
  \\
\multicolumn{3}{l}{
  \mu\left(
    \prooftree \prooftree \pi' \justifies \langle \exec_i, T_0\rangle \models \psi\{t/x\}\endprooftree\justifies \langle \exec_i, T_0\rangle \models \exists x. \psi\endprooftree
  \right) \;  = \; 
  \mu\left(
    \prooftree \pi' \justifies \langle \exec_i, T_0\rangle \models \psi\{t/x\}\endprooftree
  \right) }\\
  \\
\multicolumn{3}{l}{
  \mu\left(
    \prooftree \prooftree \pi'\justifies \langle \exec_i, T_0\rangle \models \psi\endprooftree\justifies \langle \exec_i, T_0\rangle \models \neg \neg\psi\endprooftree
  \right) \; = \; 
  \mu\left(
    \prooftree \pi'\justifies \langle \exec_i, T_0\rangle \models \psi\endprooftree
  \right)}
\end{array}
\]
  
\caption{Definition of the function $\mu$}
\label{fig:mu}
\end{figure}

Intuitively, $\mu(\pi)$ maps variables occurring positively in a status event in the attack formula $\phi$ to session identifiers.  Note also that since by definition of an attack formula each variable occurs at most once in a positive status event and by Condition~\ref{def:attack-cond4} of Definition~\ref{def:attack}, we have that $\mu(\pi)$ is actually a function.

\noindent\usebox{\probarattack}

\begin{proof}
Let $\exec = [\evt^{\sid_1}_1, \ldots, \evt^{\sid_\ell}_\ell]$ for some $\ell$, and some session identifiers $\sid_1, \ldots, \sid_\ell$.
 By definition of an attack formula, $\phi$ is of the form
  \[
  \phi = \exists x_1. \dots. \exists x_n. \psi
  \]
  for some quantifier-free formula $\psi$. Now, according to the semantics of $\Logic$, 
$\langle \exec, T_0\rangle \models \phi$ implies that there exists $n$ ground terms $m_1$, \dots, $m_n$ 
such that there exists a proof $\pi$ of $\langle \exec, T_0\rangle \models \phi$ of the form:
  \[
  \pi\ =\ 
  \prooftree 
  \prooftree
  \dots
  \justifies
  \langle \exec, T_0\rangle \models \psi^a\sigma
  \endprooftree
  \justifies 
  \langle \exec, T_0\rangle \models \phi^a
  \endprooftree
  \]
  where $\sigma = \{x_1 \mapsto m_1, \dots, x_n \mapsto m_n\}$ and $\phi^a = \exists x_1. \dots. \exists x_n. \psi^a = \Annotate(\phi)$.
Let $\overline{\sigma}=  \{x_1 \mapsto \overline{m_1}^{\exec, \sid'_1}, \dots, x_n \mapsto \overline{m_n}^{\exec,\sid'_n}\}$ where
$\sid'_j =  \mu(\pi)(x_j)$ when $x_j \in \dom(\mu(\pi))$ and $0$ otherwise.

Note that {all except the last two nodes of} $\pi$ are labeled with $\langle \exec_i, T_0 \rangle \models \psi'\sigma$ where $i \leq \length(\exec)$ and~$\psi'$ is smaller than $\psi$. Thus, the proof tree is finite.
Moreover, by definition of $\mu$, we have that any leaf of $\pi$ of the form $\langle \exec_i, T_0 \rangle \models \Q(u_1,\ldots, u_k)\sigma$ is such that $\mu(\pi)(x) =\sid_i$ for any $x \in \vars(\{u_1,\ldots,u_k\})$.
We prove that the proof tree obtained from $\pi$ by replacing each node labeled with 
 $\langle \exec_i, T_0 \rangle \models \psi'\sigma$ by  $\langle \exec_i,T_0\rangle \models \psi'\overline{\sigma}$ is a (valid) proof tree witnessing the fact that $\langle \overline{\exec},T_0 \rangle \models \psi^a\overline{\sigma}$.

 \smallskip{}

\noindent  \emph{Base cases: the leaves of the proof tree $\pi$.} 
In such a case, we 
have $\langle \exec_i, T_0 \rangle \models \psi_0\sigma$ for a formula $\psi_0$ of the form $\true$, $\Comp(x)$, $\neg \Comp(x)$, $\learn(u_0)$, $\Q(u_1,\ldots,u_k)$, or $\neg Q(u_1,\ldots, u_k)$.

\begin{itemize}
\item $\psi_0 = \true$: in such a case, we easily conclude.
\item $\psi_0 = \Comp(x)$ (resp. $\neg \Comp(x)$): in such a case, we have that $\Comp(x\sigma) = \Comp(x\overline{\sigma})$ since $\overline{a}^{\exec,\sid} = a$ for any agent name $a$ and any $\sid$, and since the semantics of $\Comp$ does not rely on the execution trace, we can also easily conclude in this case.

\item $\psi_0 = \learn(u_0)$: in such a case, by definition of an attack formula, we know that $u_0$ is either an agent name (in such a case, we easily conclude) 
or a variable in $\{x_1,\ldots, x_n\}$. Let $j$ be such that $u_0 = x_j$.
By hypothesis, we have that $T_0 \cup \K(\exec_i) \vdash u_0\sigma$.
According to Lemma~\ref{lem:deducibility.of.abstracted.terms}, we know that $T_0 \cup \K(\overline{\exec}_i) \vdash \overline{u_0\sigma}^{\exec,\sid}$ for any $\sid$, and thus in particular for $\sid'_j$.
Actually, we have that $\overline{x_j\sigma}^{\exec,\sid'_j} = x_j\overline{\sigma} (= \overline{m_j}^{\exec,\sid'_j})$, and this allows us to conclude that 
$\langle \overline{\exec}_i, T_0 \rangle \models \learn(u_0)\overline{\sigma}$.

\item $\psi_0 = \Q(u_1,\ldots, u_k)$: in such a case, we know that each $u_j$ is either an agent name or a variable, and we have that
$u_j\sigma = t_j$ for any $j \in \{1,\ldots,k\}$ where
$\evt^{\sid_i}_i = Q(t_1,\ldots,t_k)$.  By definition of $\mu$, we have that either $u_j$ is an agent name or $u_j$ is a variable and $\mu(\pi)(u_j) = \sid_i$. In order to conclude that $\langle \overline{\exec}_i, T_0 \rangle \models \Q(u_1, \ldots, u_k){\overline\sigma}$, we have to show that $\overline{t_j}^{\exec,\sid_i} = u_j\overline{\sigma}$.  Let $j \in \{1,\ldots,k\}$. By hypothesis, we have that $u_j\sigma = t_j$, and thus $\overline{u_j\sigma}^{\exec,\sid_i} = \overline{t_j}^{\exec,\sid_i}$.  We distinguish two cases. Either $u_j$ is an agent name, and we have that $\overline{u_j\sigma}^{\exec,\sid_i} = u_j = u_j\overline{\sigma}$. Otherwise, $u_j$ is a variable, and we also have that $\overline{u_j\sigma}^{\exec,\sid_i} = u_j\overline{\sigma}$ since by definition of $\mu$, we have that $\mu(\pi)(u_j) = \sid_i$.

\item $\psi_0 = \neg \Q(u_1,\ldots,u_k)$: in such a case, we know that each $u_j$ is either an agent name or a variable, and we have that either $\exec_i = []$ or $\Q(u_1,\ldots,u_n)\sigma \neq \evt^{\sid_i}_i$. In the first case, we have that $\overline{\exec}_i = []$ and we easily conclude. From now on, assume that $\Q(u_1,\ldots,u_n)\sigma \neq \evt^{\sid_i}_i$. If $\evt^{\sid_i}_i \neq \Q(t_1,\ldots,t_k)$ for any terms $t_1, \ldots, t_k$, then it is easy to see that
 $\overline{\evt^{\sid_i}_i}^{\exec,\sid_i} \neq \Q(u_1,\ldots,u_k)\overline{\sigma}$ and this allows us to conclude. 
Now, assume that $\evt^{\sid_i}_i = \Q(t_1,\ldots,t_k)$ for some terms $t_1,\ldots,t_k$. In such a case, there exists $j \in \{1, \ldots, k\}$ such that $u_j\sigma \neq t_j$.
Using Lemma~\ref{lem:mu}, we deduce that $\overline{u_j\sigma}^{\exec,\mu(u_j)} \neq \overline{t_j}^{\exec,\sid_i}$, and by definition of $\mu$  we have that
$u_j\overline{\sigma} = \overline{u_j\sigma}^{\exec,\mu(u_j)}$. This allows us to conclude that 
$\Q(u_1,\ldots, u_k)\overline{\sigma} \neq \overline{\Q(t_1,\ldots,t_k)}^{\exec,\sid_i}$, and thus $\langle \overline{\exec}_i,T_0 \rangle \models \psi_0\overline{\sigma}$.
\end{itemize}

\noindent \emph{Inductive cases.}
In such a case, we have that $\langle \exec_i, T_0 \rangle \models \psi_0\sigma$ for a formula $\psi_0$ of the form $\neg \neg \psi'_0$, $\psi_1 \vee \psi_2$, $\neg (\psi_1 \vee \psi_2)$, $\Diamond \psi'_0$, or $\neg \Diamond \psi'_0$.

\begin{itemize}
\item $\psi_0 = \neg \neg \psi'_0$: in such a case, we have that $\langle \exec_i,T_0 \rangle \models \psi'_0\sigma$, and using our induction hypothesis we conclude that 
$\langle \overline{\exec}_i,T_0 \rangle \models \psi'_0\overline{\sigma}$, and thus $\langle \overline{\exec}_i,T_0 \rangle \models \neg \neg \psi'_0\overline{\sigma} = \psi_0\overline{\sigma}$.

\item $\psi_0 = \psi_1 \vee \psi_2$: in such a case, we have that $\langle \exec_i,T_0 \rangle \models \psi_j\sigma$ for some $j \in \{1,2\}$, and using our induction hypothesis we conclude that $\langle \overline{\exec}_i,T_0 \rangle \models \psi_j\overline{\sigma}$, and thus $\langle \overline{\exec}_i,T_0 \rangle \models (\psi_1 \vee \psi_2)\overline{\sigma} = \psi_0\overline{\sigma}$.

\item $\psi_0 = \neg (\psi_1 \vee \psi_2)$:  in such a case, we have that $\langle \exec_i,T_0 \rangle \models \neg \psi'_j\sigma$ with $j \in \{1,2\}$, and using our induction hypothesis we conclude that $\langle \overline{\exec}_i,T_0 \rangle \models \neg \psi_j\overline{\sigma}$ with $j \in \{1,2\}$, and thus $\langle \overline{\exec}_i,T_0 \rangle \models \neg (\psi_1 \vee \psi_2)\overline{\sigma} = \psi_0\overline{\sigma}$.

\item $\psi_0 = \Diamond \psi'_0$: in such a case, we have that $\langle \exec_j,T_0 \rangle \models \psi'_0\sigma$ for some $j \le i$, and using our induction hypothesis, we conclude that $\langle \overline{\exec_j},T_0 \rangle \models \psi'_0\overline{\sigma}$, and thus   $\langle \overline{\exec}_i,T_0 \rangle \models \Diamond \psi'_0\overline{\sigma} = \psi_0\overline{\sigma}$.

\item $\psi_0 = \neg \Diamond \psi'_0$: in such a case, we have that  $\langle \exec_j,T_0 \rangle \models \neg \psi'_0\sigma$ for any $j \in \{1, \ldots, j\}$, and using our induction hypothesis, we conclude that $\langle \overline{\exec_j},T_0 \rangle \models \neg \psi'_0\overline{\sigma}$,  and thus   $\langle \overline{\exec}_i,T_0 \rangle \models \neg \Diamond \psi'_0\overline{\sigma} = \psi_0\overline{\sigma}$.\qedhere
\end{itemize}
\end{proof}

\section{Proofs of Section~\ref{sec:2-step}}

This appendix contains the proofs of Section~\ref{sec:2-step}.  Actually,
Section~\ref{subsec:app-validity-restriction} contains the proofs related to the
validity of the resulting trace $\exec|_S$ whereas
Section~\ref{subsec:app-sat-formula} contains those related to the
satisfiability of the attack formula.

\subsection{Validity of the resulting trace}
\label{subsec:app-validity-restriction}

In order to preserve the validity of the resulting trace, it is important to
show that sessions that are not tagged in the same way cannot share any
name. This is the purpose of the following lemma.

\noindent\usebox{\lemdisjointness}

\begin{proof}
  Let $\sess_1$ and $\sess_2$ be two sessions and $n$ be a name such that:
  \begin{itemize}
  \item $\sameTagas(\exec, \sess_1) \not= \sameTagas(\exec, \sess_2)$; and
  \item $n\in \names(\exec, sess_1) \cap \names(\exec, sess_2)$.
  \end{itemize}
  Let $S = \sameTagas(\exec, \sess_1)$.  According to
  Condition~\ref{def:wfexec3} of well-formedness (Definition~\ref{def:wfexec}),
  $n\in\names(\exec, sess_1)$ implies that either $n$ is of the form
  $n^{\epsilon, S}_t$ or of the form $n^{\sid}_t$ for some term $t$ and session
  identifier $\sid\in S$. We treat these two cases separately:
  \begin{description}
  \item[Case $n = n^{\epsilon,S}_t$] According to Condition~\ref{def:wfexec3} of
    well-formedness (Definition~\ref{def:wfexec}), we obtain $n^{\epsilon,S}_t
    \in\names(\exec, \sess_2)$ implies that $S = \sameTagas(\exec,
    \sess_2)$. But this contradicts the hypothesis $\sameTagas(\exec, \sess_1)
    \not= \sameTagas(\exec, \sess_2)$.

  \item [Case $n = n^{\sid}_t$] In that case, $\sid\in S$ and $\sameTagas(\exec,
    \sid) = \sameTagas(\exec, \sess_1)$. Now, according to
    Condition~\ref{def:wfexec3} of well-formedness
    (Definition~\ref{def:wfexec}), we have that $n^{\sid}_t \in\names(\exec,
    \sess_2)$ implies that $\sid \in \sameTagas(\exec, \sess_2)$. However, this
    means that $\sameTagas(\exec, \sid) = \sameTagas(\exec, \sess_2)$ which
    contradicts our hypothesis.
  \end{description}
  By contradiction we conclude that $\names(\exec, \sess_1) \cap \names(\exec,
  \sess_2) = \emptyset$.\qedhere
\end{proof}

Now, provided that $S$ and $t$ satisfy some conditions, we show that a term $t$
that was deducible from $\exec$ will still be deducible from $\exec|_S$.

\noindent\usebox{\lemprojecteddeducibility}

\begin{proof}
  Let $\sid \in S$, $t\in \st(\exec,\sid)$, and $\pi$ be a simple proof of
  $T_0\cup \K(\exec)\vdash t$.
  We prove this result by structural induction on~$\pi$. But, we first need to
  establish the following preliminary result {(still under the
  hypotheses stated in Lemma~\ref{lem:projected.deducibility})}.

  \smallskip{}

  \noindent{\bf Claim.}
  If $\names(t) \subseteq \Nces_\epsilon$ then $T_0\vdash t$.  \smallskip{}

  \noindent \emph{Proof of the claim.} Let us suppose that there exists $u \in
  \encst(t)$. Because $\exec$ is well-formed, we know by
  Conditions~\ref{def:wfexec1} and~\ref{def:wfexec2} of well-formedness
  (Definition~\ref{def:wfexec}) that $t$ is $k$-tagged and thus that $u =
  \f(\langle \tau, u_1 \rangle, \dots, u_n)$ with $\tau = \tagt(\exec, \sid)
  \not=\bot$. Now, according to the definition of a symbolic trace
  (Definition~\ref{def:scenario}) and of our protocol transformation
  (Definition~\ref{def:transfo}), we know that there exists
  $n_v^{\sid}\in\names(\tau)\subseteq\names(u)\subseteq \names(t)$, which
  contradicts the hypothesis that $\names(t) \subseteq \Nces_\epsilon$. Thus it
  must be that $\encst(t) = \emptyset$, and hence, $t$ must be a tuple of atoms,
  i.e. a tuple of terms in $\A \cup T_0 \cup \Nces_\epsilon \cup \Kcal_\epsilon
  \cup \{\pub(a) \mid a\in \A\} \cup \{\priv(a), \shk(a, b) \mid a, b\in
  \A\}$. Now, because we only consider executions that do not reveal any
  long-term decryption keys, we necessarily have that the atomic subterms of $t$
  are in $\A \cup T_0 \cup \Nces_\epsilon \cup \Kcal_\epsilon \cup \{\pub(a)
  \mid a\in \A\}$. This implies according to Definition~\ref{def:deducibility},
  that any atomic subterm of $t$ is deducible from $T_0$. Finally, since $t$ is
  a tuple of deducible terms, $t$ can be deduced by application of the pairing
  rule, and thus $T_0 \vdash t$.

  \smallskip{}

  We now proceed with our induction
  \begin{description}
  \item[Base case: $\pi$ is reduced to a leaf] In that case, $t\in \A \cup T_0
    \cup \Nces_\epsilon \cup \Kcal_\epsilon \cup \{\pub(a) \mid a\in \A\} \cup
    \K(\exec)$. If $\names(t) \subseteq \Nces_\epsilon$, then by the above claim
    we have that $T_0\vdash t$, and thus $T_0 \cup \K(\exec|_S)\vdash t$. Let us
    now suppose that there exists $n^{\sid'}_v \not \in \Nces_\epsilon$. In that
    case, $n^{\sid'}_v \in \names(\exec, \sid)$ and $t\in\K(\exec)$, i.e.  there
    exists $i \in \{1, \dots, \ell\}$, such that $\evt_i^{\sid_i} =
    \snd(t)$. Thus, $n^{\sid'}_v \in \names(\exec, \sid_i)$ and hence
    $\names(\exec, \sid) \cap \names(\exec, \sid_i) \not=\emptyset$. This,
    according to Lemma~\ref{lem:disjointness}, implies that
    \[
    \sameTagas(\exec, sid_i)\ =\ \sameTagas(\exec, sid)
    \]
    By hypothesis on $S$, we have $sid_i\in S$, and by definition we have
    $\evt^{\sid_i}_i\in \exec|_S$, which implies that $t\in \K(\exec|_S)$. We
    can thus conclude that $T_0 \cup \K(\exec|_S) \vdash t$. \\

  \item[Inductive case] In that case we need to distinguish two cases according
    to the last rule applied in the proof $\pi$.

    \begin{description}
    \item[Case 1 -- the last rule is a composition rule] We have that the term
      $t$ is of the form $\f(t_1, \dots, t_n)$, and the derivation $T_0 \cup
      \K(\exec) \vdash t$ is of the form
      \[
      \prooftree 
      T_0\cup \K(\exec) \vdash t_1 \quad \dots \quad 
      T_0\cup \K(\exec) \vdash t_n \justifies T_0\cup \K(\exec) \vdash \f(t_1, \dots, t_n)
      \endprooftree
      \]
      For all $i\in \{1, \dots, n\}$, $t_i\in\st(t)\subseteq \st(\exec, \sid)$,
      and by induction hypothesis $T_0 \cup \K(\exec|_S) \vdash t_i$. We can
      thus conclude that by application of the corresponding composition
      rule. We have that:
      \[
      \prooftree 
      T_0\cup \K(\exec|_S) \vdash t_1 \quad \dots \quad 
      T_0\cup \K(\exec|_S) \vdash t_n \justifies T_0\cup \K(\exec|_S) \vdash f(t_1, \dots, t_n)
      \endprooftree
      \]
    \item[Case 2 -- the last rule is a decomposition rule] We have that the
      proof tree witnessing $T_0 \cup \K(\exec) \vdash t$ is of the form
      \[
      \prooftree 
      T_0\cup \K(\exec) \vdash t_1 \quad \dots \quad 
      T_0\cup \K(\exec) \vdash t_n \justifies T_0\cup \K(\exec) \vdash t
      \endprooftree
      \]
      If $\names(t) \subseteq \Nces_\epsilon$, then we have seen that $T_0\vdash
      t$, and thus we conclude.  Now, assume that there exists $n^{sid'}_v \in
      (\names(t)\smallsetminus \Nces_\epsilon) \subseteq \names(\exec,
      \sid)$. By Definition of a symbolic trace and of an execution trace (see
      Definition~\ref{def:scenario}), $n^{\sid'}_v \in \names(\exec,
      \sid')$. Thus $\names(\exec, \sid) \cap \names(\exec, \sid') \not=
      \emptyset$, and thanks to Lemma~\ref{lem:disjointness}, we have that:
      $\sameTagas(\exec, \sid)\ =\ \sameTagas(\exec, \sid')$.

      \smallskip{}

      We need to prove that for all $i\in\{1, \dots, n\}$, $T_0 \cup
      \K(\exec|_S) \vdash t_i$.  Since $\pi$ is minimal, we know by locality
      (Lemma~\ref{lem:locality}) that $t_i \in \st(T_0 \cup \Nces_\epsilon \cup
      \Kcal_\epsilon \cup \K(\exec)) \cup \A \cup \{\pub(a)~|~a \in \A\}$. We
      consider two cases:

      If $\names(t_i) \subseteq \Nces_\epsilon$, then we have already
      established that $T_0\vdash t_i$, and thus $T_0 \cup \K(\exec|_S)\vdash
      t_i$.

      Otherwise, there exists $n^{sid''}_w \in (\names(t_i)\smallsetminus
      \Nces_\epsilon)$. In that case, $t_i \in\st(\K(\exec))$, i.e. there exists
      $k\in\{1, \dots, \ell\}$ such that $t_i\in \st(\evt_k^{\sid_k}) \subseteq
      \st(\exec, \sid_k)$; and thus $n^{\sid''}_w \in \names(\exec,
      \sid_k)$. Moreover, by Definition of a symbolic trace and of an execution
      trace (see Definition~\ref{def:scenario}), $n^{\sid''}_w \in \names(\exec,
      \sid'')$. Hence, we have that $\names(\exec, \sid'') \cap \names(\exec,
      \sid_k) \not= \emptyset$, which according to Lemma~\ref{lem:disjointness}
      implies that
      \[
      \sameTagas(\exec, \sid'')\ =\ \sameTagas(\exec, \sid_k).
      \]
      By inspection of the decomposition rules, we note that there must exist
      $j\in\{1, \dots, n\}$, such that for all $i\in\{1, \dots, n\}$, $\names(t)
      \cup \names(t_i) \subseteq \names(t_j)$, and therefore $n^{\sid'}_v,
      n^{\sid''}_w \in (\names(t_j) \smallsetminus \Nces_\epsilon)$. Moreover,
      we have that $t_j\in \st(\K(\exec))$, i.e. there exists $h\in \{1, \dots,
      \ell\}$ such that $t_j \in \st(\evt_j^{\sid_h}) \subseteq \st(\exec,
      \sid_h)$. Hence, $n^{\sid'}_v, n^{\sid''}_w \in \names(\exec, \sid_h)$,
      which according to Lemma~\ref{lem:disjointness} implies
      \[
      \begin{array}{c}
        \sameTagas(\exec, \sid')\ =\ \sameTagas(\exec, \sid_h) \\
        \sameTagas(\exec, \sid'')\ =\ \sameTagas(\exec, \sid_h)
      \end{array}
      \]
      We therefore can infer that
      \[
      \sameTagas(\exec, \sid)\ =\ \sameTagas(\exec, \sid_k).
      \]
      and by hypothesis on $S$ that $\sid_k\in S$. We have thus demonstrated
      that $t_i\in \st(\exec, sid_k)$ with $\sid_k \in S$, which according to
      our induction hypothesis implies $T_0 \cup \K(\exec|_S) \vdash t_i$.

      Since for all $i \in \{1, \dots, n\}$, $T_0 \cup \K(\exec|_S) \vdash t_i$,
      we can conclude by application of the corresponding decomposition rule
      that:

      $
      \hspace{2cm}
      \begin{array}[b]{c}
        \prooftree 
        T_0\cup \K(\exec|_S) \vdash t_1 \quad \dots \quad 
        T_0\cup \K(\exec|_S) \vdash t_n
        \justifies 
        T_0\cup \K(\exec|_S) \vdash t
        \endprooftree
      \end{array}
      $\qedhere
    \end{description}
  \end{description}
\end{proof}


\subsection{Satisfiability of the formula}
\label{subsec:app-sat-formula}

\newsavebox{\lemprojsatisfiability} 
\sbox{\lemprojsatisfiability}{\vbox{%
    \begin{lem}\label{lem:projsatisfiability}
      Let $\Pi$ be a $k$-party protocol, $\phi$ a closed quantifier-free formula
      in $\Logic$, and $\exec = [\evt_1^{\sid_1}; \dots; \evt_\ell^{\sid_\ell}]$
      be a well-formed valid execution of $\widetilde \Pi$ that satisfies
      $\phi$, w.r.t. some set $T_0$ of ground atoms. Moreover, we assume
      that $T_0 \cup \K(\exec) \not\vdash k$ for any $k \in \lgAtom
      \smallsetminus (\Kcal_\epsilon \cup T_0)$ ($\exec$ does not reveal any
      long term keys).  Let $S$ be a set of session identifiers such that:
      \begin{enumerate}
      \item for all $\learn(t)$ that occurs positively in $\phi$ such that
        $t\not\in \A \cup \lgAtom$, there exists $\sid \in S$ such that $t\in
        \st(\exec, \sid)$,
      \item $\W(\exec, \phi) \subseteq S$, and
      \item $\forall \sess_1, \sess_2$ with $\tagt(\exec, \sess_1) =
        \tagt(\exec, \sess_2)$, we have that
        \begin{center}
          $\sess_1\in S$ if and only if $\sess_2\in S$.
        \end{center}
  \end{enumerate}
  We have that $\exec|_S$ is an execution of $\widetilde \Pi$ that satisfies
  $\phi$, i.e.  $\langle \exec|_S, T_0\rangle \models \phi$.
\end{lem}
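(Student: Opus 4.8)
The plan is first to dispose of the easy half of the conclusion: the fact that $\exec|_S$ is an execution of $\widetilde\Pi$ is exactly Proposition~\ref{pro:validity-restriction}, whose hypotheses are met here because $\exec$ is well-formed, valid, and reveals no long-term key, and because condition~(3) (closure of $S$ under equality of $\tagt(\exec,\cdot)$) implies the closure under equality of $\sameTagas(\exec,\cdot)$ that Proposition~\ref{pro:validity-restriction} asks for --- two sessions with equal $\sameTagas$ are either equal or share a non-$\bot$ expected tag. It then remains to prove $\langle\exec|_S,T_0\rangle\models\phi$. Throughout I will use the elementary fact that restriction commutes with taking prefixes: for every $i$ one has $\exec_i|_S=(\exec|_S)_{i'}$, where $i'$ is the number of the first $i$ events of $\exec$ whose session lies in $S$, and conversely every prefix of $\exec|_S$ arises this way; in particular, if the last event of such a prefix $(\exec|_S)_{i'}$ sits at position $p$ in $\exec$ (so $\sid_p\in S$), then $(\exec|_S)_{i'}$ and $\exec_p$ have the same last event.

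Second, I would isolate the $\psi$-layer. Since a $\psi$-formula contains no occurrence of $\learn$, $\Comp$ or $\Diamond$, a routine induction shows that whether $\langle\exec',T_0\rangle\models\psi$ holds depends only on the last event of $\exec'$ (or on $\exec'$ being empty). A companion mutual induction on $\psi$ then shows that both $\W(\exec',\psi)$ and $\W^-(\exec',\psi)$ are subsets of the singleton containing the session identifier of the last event of $\exec'$, and that if $\langle\exec',T_0\rangle\models\psi$ with $\exec'$ non-empty and $\W(\exec',\psi)=\emptyset$ then already $\langle[\,],T_0\rangle\models\psi$ (dually for $\W^-$ and $\not\models$).

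Third comes the core argument: a structural induction on a subformula $\phi'$ of $\phi$, viewing $\phi$ as a boolean combination (via the outer $\neg$ and $\vee$) of atoms that are either $\learn(t)$, $\Comp(u)$, $\Diamond\psi$, or a whole $\psi$-subformula. For every admissible prefix $\exec_i$ (always $i=\ell$ except inside a $\Diamond$) I would prove the two polarised statements: if $\langle\exec_i,T_0\rangle\models\phi'$ and $\W(\exec_i,\phi')\subseteq S$ then $\langle\exec_i|_S,T_0\rangle\models\phi'$; and if $\langle\exec_i,T_0\rangle\not\models\phi'$ and $\W^-(\exec_i,\phi')\subseteq S$ then $\langle\exec_i|_S,T_0\rangle\not\models\phi'$. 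A $\psi$-subformula is handled wholesale by the sub-lemma of the previous paragraph. The defining equations of $\W$ and $\W^-$ propagate the inclusion $\subseteq S$ exactly to the sub-goals needed, with $\neg$ interchanging the two statements and $\vee$ selecting the satisfied disjunct (positive case) or keeping both (negative case). The atom $\Comp(u)$ is trace-independent, hence preserved in both directions. For $\learn(t)$ (which, never occurring below a $\Diamond$, is evaluated at $\exec$ itself) the negative direction is monotonicity of $\vdash$ via $\K(\exec|_S)\subseteq\K(\exec)$; the positive direction invokes Lemma~\ref{lem:projected.deducibility} once $t\in\st(\exec|_S)$ is known, which holds because condition~(1) supplies $\sid\in S$ with $t\in\st(\exec,\sid)$ when $t\notin\A\cup\lgAtom$, while for $t\in\A$ deducibility is immediate and for $t\in\lgAtom$ the no-leak hypothesis forces $t\in\Kcal_\epsilon\cup T_0$, so $T_0\vdash t$.

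The remaining case, $\Diamond\psi$, is the main obstacle. Its negative direction is easy: a prefix of $\exec|_S$ satisfying $\psi$ is of the form $(\exec|_S)_{i'}$, hence shares its last event with a genuine prefix $\exec_p$ of $\exec$, which by the $\psi$-layer sub-lemma would then satisfy $\psi$, contradicting $\langle\exec,T_0\rangle\not\models\Diamond\psi$. For the positive direction, take the prefix $\exec_i$ used in the definition of $\W(\exec,\Diamond\psi)=\W(\exec_i,\psi)$, which by hypothesis is contained in $S$. If $\langle[\,],T_0\rangle\models\psi$ we conclude with the empty prefix of $\exec|_S$; otherwise the $\psi$-layer sub-lemma forces $\W(\exec_i,\psi)$ to be the non-empty singleton $\{\sid_i\}$, so $\sid_i\in S$, the last event of $\exec_i$ survives in $\exec_i|_S=(\exec|_S)_{i'}$, and by the ``last event only'' property $\langle(\exec|_S)_{i'},T_0\rangle\models\psi$, whence $\langle\exec|_S,T_0\rangle\models\Diamond\psi$. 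Instantiating the induction with $\phi'=\phi$, $i=\ell$, and using hypotheses~(1) and~(2), we obtain $\langle\exec|_S,T_0\rangle\models\phi$. The delicate point, as this last step shows, is exactly making the $\Diamond$ witness survive the restriction, which is why the fine structure of $\W$ on $\psi$-formulas --- in particular the $\W=\emptyset$ case and the resulting choice between keeping the witness session and falling back to the empty prefix --- must be analysed carefully.
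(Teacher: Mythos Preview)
Your approach is correct and takes a genuinely different route from the paper's. The paper proves the lemma by a lexicographic induction on the pair $(\length(\exec),\size(\phi))$, which lets it treat $\phi=\neg\Diamond\psi$ by peeling off the last event of $\exec$ (decreasing the first component) and recursing on $\neg\Diamond\psi$ for $\exec_{\ell-1}$ and on $\neg\psi$ for $\exec$; the base case $\length(\exec)=0$ is then trivial. You instead isolate the $\psi$-layer once and for all via the ``depends only on the last event'' observation, then run a purely structural induction on the outer boolean structure of~$\phi$, handling both $\Diamond\psi$ and $\neg\Diamond\psi$ directly without any induction on trace length. Both approaches feed on the same ingredients (Lemma~\ref{lem:projected.deducibility} for $\learn$, the no-leak hypothesis for $t\in\lgAtom$, the witness-session bookkeeping via $\W$ and $\W^-$), but yours packages the $\psi$-layer reasoning into a separate reusable sub-lemma rather than threading it through the main induction, which makes the two $\Diamond$ cases pleasantly symmetric.

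One imprecision to tighten: your $\psi$-layer sub-lemma, as stated, only concludes $\langle[\,],T_0\rangle\models\psi$ from $\W(\exec',\psi)=\emptyset$. That is enough inside the $\Diamond\psi$ case (where you may pick the empty prefix of $\exec|_S$ as witness), but for a bare $\psi$-atom occurring at the outer level you must conclude $\langle\exec|_S,T_0\rangle\models\psi$ with $\exec|_S$ possibly non-empty and with a last event different from that of $\exec$. The remedy is to strengthen the sub-lemma to: if $\langle\exec',T_0\rangle\models\psi$ with $\exec'$ non-empty and $\W(\exec',\psi)=\emptyset$, then $\langle\exec'',T_0\rangle\models\psi$ for \emph{every} trace $\exec''$ (and dually for $\W^-$ and $\not\models$). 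This stronger form follows from the very same mutual induction you sketch, since on $\psi$-formulas the only way $\W$ can vanish on a non-empty trace is through $\true$ or through a satisfied disjunct whose own $\W$ vanishes, and the $\Q$-atoms always contribute $\{\sid_\ell\}$.
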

}}

\noindent\usebox{\lemprojsatisfiability}

\begin{proof}
  We prove this by induction on {$(\ell,\size(\phi))$} using the lexicographic ordering. Here,
  {$\ell$} denotes the length (i.e. number of events) of the trace $\exec$ and {$\size(\phi)$} is
  the size of $\phi$ (i.e. number of symbols that occur in $\phi$ without
  counting the symbol $\neg$ and after {elimination} of double negation, i.e.,
  $\neg \neg \psi$ is rewritten in $\psi$).

  \smallskip{}
  
  \noindent We need to distinguish several base cases.
  \begin{description}
  \item[Case $|\exec| = 0$] In that case $\exec|_S = \exec$, and thus by
    hypothesis if $\langle \exec, T_0 \rangle \models \phi$, then also $\langle
    \exec|_S, T_0 \rangle \models \phi$.
  \item[Case $\phi = \true$ (resp. $\phi = \neg \true$)] In such a case, we have
    that $\langle \exec|_S, T_0 \rangle \models \phi$. The case where $\phi =
    \neg \true$ is impossible.
  \item[Case $\phi = \Q(t_1, \dots, t_n)$] If $\langle \exec, T_0 \rangle
    \models \phi$, then $\evt^{\sid_\ell}_\ell = \Q(t_1, \dots, t_n)$, and
    $\W(\exec, \phi) = \{\sid_\ell\} \subseteq S$. By
    Definition~\ref{def:exec.restriction}, $\exec|_S$ ends with the event
    $\Q(t_1,\ldots, t_n)$. We can thus conclude that $\langle \exec|_S, T_0
    \rangle \models \phi$.
  \item[Case $\phi = \neg\Q(t_1,\ldots,t_n)$] If $\langle \exec, T_0 \rangle
    \models \neg \Q(t_1,\ldots,t_n)$, we have that $\evt^{\sid_\ell}_\ell
    \not= \Q(t_1, \dots, t_n)$, and $\W(\exec, \phi) = \{\sid_\ell\} \subseteq
    S$ (note that we have already considered the case where $\exec = []$, and
    thus now we assume that $\exec \neq []$).  We have that $\exec|_S$ does not
    end with $\Q(t_1, \dots, t_n)$. We can thus conclude that $\langle \exec|_S,
    T_0 \rangle \models \neg\Q(t_1,\ldots,t_n)$, i.e.  $\langle \exec|_S, T_0
    \rangle \models \phi$.
  \item[Case $\phi = \learn(t)$] If $\langle \exec, T_0 \rangle \models \phi$,
    then $T_0 \cup \K(\exec) \vdash t$. If $t\in \A \cup \lgAtom$, since $\exec$
    doesn't reveal any long-term decryption key, $T_0 \vdash t$, and thus $T_0
    \cup \K(\exec|_S) \vdash t$. If $t\not\in \A \cup \lgAtom$, then by
    hypothesis we know there exists $\sid\in S$ such that $t\in \st(\exec,
    \sid)$. According to Lemma~\ref{lem:projected.deducibility}, since by
    hypothesis $\sid\in S$, $T_0 \cup \K(\exec|_S) \vdash t$. Hence, we can
    conclude that $\langle \exec|_S, T_0 \rangle \models \phi$.
  \item[Case $\phi = \neg \learn(t)$] If $\langle \exec, T_0 \rangle \models
    \neg \learn(t)$, then $T_0 \cup \K(\exec) \not\vdash t$. But since $T_0 \cup
    \K(\exec|_S) \subseteq T_0 \cup \K(\exec)$, it is also the case that $T_0
    \cup \K(\exec|_S) \not\vdash t$, and thus that $\langle \exec|_S, T_0
    \rangle \models \neg\learn(t)$.
  \item[Case $\phi = \Comp(u)$] If $\langle \exec, T_0 \rangle
      \models \Comp(u)$, then we have that $T_0 \vdash \priv(u)$ or $T_0 \vdash
      \shk(u,v)$ for some $v \neq \epsilon$. Hence, we also have that $\langle
      \exec|_S, T_0 \rangle \models \Comp(t)$.

  \item[Case $\phi = \neg\Comp(u)$] If $\langle \exec, T_0 \rangle
      \models \neg\Comp(u)$, then we have that $T_0 \not \vdash \priv(u)$ and
      $T_0 \not \vdash \shk(u,v)$ for all $v \neq \epsilon$. Hence, we also have
      that $\langle \exec|_S, T_0 \rangle \models \neg\Comp(t)$.
  \end{description}

  \smallskip{}

  \noindent We distinguish several inductive cases ($|\exec| > 1$ and
  {$\size(\phi)>1$}).

  \begin{description}
  \item[Case $\phi = \phi_1 \vee \phi_2$] If $\langle \exec, T_0 \rangle \models
    \phi$ then $\langle \exec, T_0 \rangle \models \phi_1$ or else $\langle \exec,
    T_0 \rangle \models \phi_2$. Assume that $\langle \exec, T_0 \rangle \models
    \phi_1$ (the other case can be done in a similar way).  It is easy to see
    that the three conditions needed to apply our inductive hypothesis are
    fulfilled.  We can thus apply our inductive hypothesis on $\phi_1$ to
    conclude that $\langle \exec|_S, T_0 \rangle \models \phi_1$, and thus that
    $\langle \exec|_S, T_0 \rangle \models \phi_1 \vee \phi_2$.
  \item[Case $\phi = \neg(\phi_1 \vee \phi_2)$] If $\langle \exec, T_0 \rangle
    \models \neg(\phi_1 \vee \phi_2)$, then $\langle \exec, T_0 \rangle \models
    \neg \phi_1$ and $\langle \exec, T_0 \rangle \models \neg\phi_2$. Again, the
    three conditions needed to apply our inductive hypothesis are full-filled.
    We can thus apply our inductive hypothesis to conclude that $\langle
    \exec|_S, T_0 \rangle \models \neg \phi_1$ and $\langle \exec|_S, T_0
    \rangle \models \neg \phi_2$, and thus $\langle \exec|_S, T_0 \rangle
    \models \neg(\phi_1 \vee \phi_2)$.
  \item[Case $\phi = \Diamond \psi$] If $\langle \exec, T_0 \rangle \models
    \phi$, then we know that there exists $i\in \{1, \dots, \ell\}$ such that
    ${\langle \exec_i, T_0 \rangle \models \psi}$ and $\W(\exec, \phi) =
    \W(\exec_i, \psi)$.
    \begin{itemize}
    \item Let $\learn(t)$ be a subformula that occurs positively in $\psi$ such
      that $t\not\in \A\cup \lgAtom$. Then, by definition, $\learn(t)$ also
      occurs positively in $\phi$, and thus by hypothesis, there exists $\sid
      \in S$ such that $t\in\st(\exec, \sid)$.
    \item We have that $\W(\exec, \phi) = \W(\exec_i, \psi)$, and by hypothesis
      $\W(\exec, \phi) \subseteq S$. Thus $\W(\exec_i, \psi) \subseteq S$.

    \item By hypothesis, $S$ satisfies: for all $sess_1$ and
      $sess_2$ with $\tagt(\exec, sess_1) = \tagt(\exec, sess_2)$,
      $sess_1\in S$ if and only if $sess_2\in S$.
    \end{itemize}
    The three conditions are fulfilled, we can thus apply our inductive
    hypothesis to conclude that $\exec_i|_S$ also satisfies $\psi$,
    i.e. $\langle \exec_i|_S, T_0 \rangle \models \psi$. But then there exists
    $j$ such that $\exec_i|_S = (\exec|_S)_j$, and thus such that $\langle
    (\exec|_S)_j, T_0 \rangle \models \psi$, which according to the semantics of
    $\Logic$ gives us $\exec|_S$ satisfies $\Diamond \psi$, i.e.  $\langle
    \exec|_S, T_0 \rangle \models \Diamond \psi$.
  \item[Case $\phi = \neg \Diamond \psi$] If $\langle \exec, T_0 \rangle \models
    \neg \Diamond \psi$, then according to the semantics of $\Logic$, we have
    that $\langle \exec_{\ell-1}, T_0 \rangle \models \neg\Diamond \psi$ and
    $\langle \exec, T_0 \rangle \models \neg\psi$.
    \begin{itemize}
    \item In the syntax of $\Logic$, {see Definition~\ref{def:syntax}}, $\learn(t)$ must not occur under a modality,
      so the first condition is trivially fulfilled.
    \item By definition, $\W(\exec, \phi) = \emptyset \subseteq S$.
    \item By hypothesis, $S$ satisfies: for all $sess_1$ and
      $sess_2$ with $\tagt(\exec, sess_1) = \tagt(\exec, sess_2)$,
      $sess_1\in S$ if and only if $sess_2\in S$.
    \end{itemize}
    We apply our inductive hypothesis and conclude that $\langle
    (\exec_{\ell-1})|_S, T_0 \rangle \models \neg\Diamond \psi$.  Now, we
    distinguish two cases: either $\sid_\ell \in S$ or $\sid_\ell \not\in S$. In
    the first case, we can also apply our inductive hypothesis on $\langle
    \exec, T_0 \rangle \models \neg\psi$ (note that $\W(\exec,\psi) \subseteq
    \{\sid_\ell\} \subseteq S$ {since $\psi$ is from the
      restricted syntax according to Definition~\ref{def:syntax}}) and conclude that $\langle \exec|_S, T_0 \rangle
    \models \neg\psi$. This allows us to conclude that $\langle \exec|_S, T_0
    \rangle \models \neg\Diamond\psi$. In the second case, we have that
    $\exec|_S = \exec_{\ell -1}|_S$, and thus conclude that $\langle
    \exec|_S, T_0 \rangle \models \neg \Diamond \psi$.\qedhere
  \end{description}
\end{proof}

    \begin{lem}
      \label{lem:execterms.attack}
      Let $\Pi$ be a $k$-party protocol, $\exec = [\evt_1^{\sid_1}; \dots;
      \evt_\ell^{\sid_\ell}]$ be a valid execution of $\widetilde\Pi$
      w.r.t. some set $T_0$ of ground atoms, $\phi = \exists
      x_1. \ldots. \exists x_n. \psi$ be an attack formula of $\Logic$ (see
      Definition~\ref{def:attack}), $\sigma = \{x_1 \mapsto m_1, \dots,
      x_n\mapsto m_n\}$ be a ground substitution, $S$ be a set of session
      identifiers such that $\W(\exec, \psi\sigma) \subseteq S$, and
      $n^\epsilon_\epsilon\in \Nces_\epsilon$ be an intruder nonce not appearing
      in $\exec$.
      If $\langle \exec, T_0\rangle \models \psi\sigma$ then we have that
      $\langle \exec, T_0\rangle \models \psi\sigma'$ where for all $j\in\{1,
      \dots, n\}$
      \[
      \sigma'(x_j) = \left\{
        \begin{array}{lr}
          n^\epsilon_\epsilon & $if $ \sigma(x_j)\not\in\st(\exec, S)\cup\A\cup\lgAtom \cup \Nces_\epsilon\cup \Kcal_\epsilon \\
          \\
          \sigma(x_j)      & $otherwise$
        \end{array}
      \right.
      \]
    \end{lem}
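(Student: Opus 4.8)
The plan is to proceed by structural induction on the quantifier-free formula $\psi$, strengthening the statement so that the induction goes through at all subformulas and at all prefixes of $\exec$. Concretely, I would prove: for every subformula $\psi'$ of $\psi$ and every prefix $\exec_i$, if $\langle \exec_i, T_0\rangle \models \psi'\sigma$ then $\langle \exec_i, T_0\rangle \models \psi'\sigma'$, where $\sigma'$ is obtained from $\sigma$ by replacing, for each $x_j$ whose image $\sigma(x_j)$ does not lie in $\st(\exec,S)\cup\A\cup\lgAtom\cup\Nces_\epsilon\cup\Kcal_\epsilon$, that image by the fresh intruder nonce $n^\epsilon_\epsilon$. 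The key structural facts to exploit are the syntactic restrictions in Definition~\ref{def:attack}: (i) each $x_j$ occurs at most once in a positive status event, (ii) $\learn(t)$ occurs only positively, and (iii) under a negatively occurring $\Diamond$, only status events appear positively.

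First I would treat the base cases. For $\psi' = \Q(t_1,\dots,t_m)$ occurring positively: if $\langle \exec_i, T_0\rangle \models \Q(t_1,\dots,t_m)\sigma$, then $\exec_i$ ends with this event and its terms come from the last session $\sid_i$, so $\sid_i \in \W(\exec_i,\psi'\sigma) \subseteq S$; hence every $x_j$ occurring among the $t$'s satisfies $\sigma(x_j) \in \st(\exec,\sid_i) \subseteq \st(\exec,S)$, so $\sigma'(x_j)=\sigma(x_j)$ and nothing changes. For $\psi' = \neg\Q(\dots)$: here $\sigma$ being a \emph{dis}equality witness, I must check that the replacement does not accidentally make the equation hold. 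Since the fresh nonce $n^\epsilon_\epsilon$ does not occur in $\exec$, replacing any $\sigma(x_j)$ by it cannot create a match with the term appearing in $\exec_i$; and the other images are unchanged, so the disequality persists. For $\psi' = \learn(t)$ (positive): using Lemma~\ref{lem:deducibility.of.abstracted.terms}-style reasoning, or more simply the observation that $T_0\cup\K(\exec_i) \vdash t\sigma$ together with locality, any variable image used in deducing $t\sigma$ is in $\st(\exec_i)\cup\A\cup\lgAtom\cup\Nces_\epsilon\cup\Kcal_\epsilon$; but I only need that the substituted term is still deducible, and replacing an image by a deducible intruder nonce preserves deducibility — the subtle point is that the \emph{replaced} $x_j$ (the one not in $\st(\exec,S)\cup\cdots$) must itself not have been essential, which follows because it was not in $\st(\exec_i)$ and hence could be abstracted by an intruder nonce (this is the same replacement argument sketched after Lemma~\ref{lem:projsatisfiabilitybis} for the formula $\exists x.\learn(x)$). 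The cases $\Comp(u)$ and $\neg\Comp(u)$ are trivial since $u\in\A\cup\Xvars$ and agent images are never abstracted.

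For the inductive cases, $\vee$, $\neg(\cdot\vee\cdot)$, and $\neg\neg$ follow immediately by distributing the induction hypothesis, using that $\W$ of a disjunction is contained in the union of the pieces and that positive/negative polarity is tracked correctly. The case $\psi' = \Diamond\chi$ occurring positively: there is $j\le i$ with $\langle\exec_j,T_0\rangle\models\chi\sigma$ and $\W(\exec_j,\chi\sigma)=\W(\exec_i,\psi'\sigma)\subseteq S$, so I apply the induction hypothesis at $\exec_j$. The case $\psi' = \neg\Diamond\chi$ occurring negatively is where condition~(iii) of Definition~\ref{def:attack} is crucial: inside such a $\chi$ only status events occur positively, and by condition~(i) each variable occurs in at most one such, so no $x_j$ whose image is abstracted can appear in a positive status event of $\chi$; hence when $\langle\exec_j,T_0\rangle\not\models\chi\sigma$ for all $j\le i$, the same holds with $\sigma'$ — any negatively occurring status event or $\learn$ is insensitive to the abstraction, and positively occurring ones don't mention the abstracted variables.

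The main obstacle I expect is the $\learn(t)$ case together with the $\neg\Q$ case, i.e. making precise that abstracting the ``useless'' images by a single fresh nonce neither breaks a positive deducibility goal nor accidentally satisfies a negative equality goal. The deducibility direction needs the observation (already available from the locality lemma, Lemma~\ref{lem:locality}) that if $\sigma(x_j)\notin\st(\exec)\cup\A\cup\lgAtom\cup\Nces_\epsilon\cup\Kcal_\epsilon$ then $x_j$ cannot have been a strict subterm needed inside any deduction from $T_0\cup\K(\exec)$, so replacing it is harmless; the disequality direction needs only freshness of $n^\epsilon_\epsilon$. Once these two points are nailed down, the rest is a routine polarity-tracking induction, and I would write the proof in the same two-layered ($\W$ versus $\W^-$) style as Lemma~\ref{lem:projsatisfiabilitybis}.
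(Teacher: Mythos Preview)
Your overall induction scheme (over all subformulas and all prefixes, tracking polarity) is essentially the paper's approach, which proceeds by lexicographic induction on $(\ell,\size(\psi))$ and strengthens the hypothesis so that the condition $\W(\exec_p,\psi\sigma)\subseteq S$ is only required when a status event occurs positively in $\psi$. The case analysis and the handling of $\neg\Diamond$ via Condition~(\ref{def:attack-cond4}) of Definition~\ref{def:attack} are also the same.

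Where you go astray is the $\learn(t)$ case. You treat $t$ as if it could be a compound term containing several $x_j$'s and then try to argue, via locality, that the abstracted images are ``not essential'' to the deduction. That argument is shaky as stated: locality (Lemma~\ref{lem:locality}) only tells you that a simple proof of $T_0\cup\K(\exec_i)\vdash t\sigma$ uses terms in $\st(T_0\cup\K(\exec_i)\cup\{t\sigma\})$, which does \emph{not} force the images $\sigma(x_j)$ occurring inside $t\sigma$ to lie in $\st(\exec_i)$; and the claim ``replacing an image by a deducible intruder nonce preserves deducibility'' is false in general (think of $t=\enc(x_j,k)$ obtained by decomposition with $k$ not deducible). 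The paper avoids all of this by invoking Condition~(\ref{def:attack-cond1}) of Definition~\ref{def:attack}: every subterm of the attack formula is an atom, so $t$ is either an agent name or a single variable $x_j$. Hence $t\sigma'\in\{t\sigma,\ n^\epsilon_\epsilon\}$, and both are trivially deducible. The same atomicity condition is what makes your $\neg\Q(t_1,\dots,t_m)$ case clean: each $t_k\sigma'\in\{t_k\sigma,\ n^\epsilon_\epsilon\}$, and freshness of $n^\epsilon_\epsilon$ rules out any accidental match. Once you use Condition~(\ref{def:attack-cond1}), the two ``main obstacles'' you flag disappear, and the proof is a routine polarity-tracking induction exactly as you outline.
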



\begin{proof}
  We prove this result by induction on {$(\ell,\size(\psi))$} using the lexicographic
  ordering where {$\ell$} denotes the length of the trace $\exec$, and $\size(\psi)$ the size
  of $\psi$ (i.e. number of symbols that occur in $\psi$ without counting the
  symbol $\neg$ and after elimination of double negation, i.e., $\neg \neg
  \psi$ is rewritten in $\psi$). Actually, we {strengthen} the induction hypothesis by
  only requiring the hypothesis $\W(\exec_p, \psi\sigma) \subseteq S$ when some
  status event occurs positively in $\psi\sigma$.\smallskip

\noindent Base case {($\size(\psi) = 1$)}: We distinguish several base cases.


  \begin{itemize}
  \item \emph{Case $\psi'= \true$.} In that case $\psi\sigma = \psi\sigma' =
    \true$ and we easily conclude.
  \item \emph{Case $\psi'= \neg \true$.} This case is impossible since such a
    formula is not satisfiable.
    \item \emph{Case $\psi = \Q(t_1, \dots, t_h)$.}  In that case, we have
      that 
      $\psi\sigma = \Q(t_1\sigma, \dots, t_h\sigma)$, and $\evt_p^{\sid_p} =
      \Q(t_1\sigma, \dots, t_h\sigma)$ with $\sid_p\in S$. But then,
      by Definition of $\sigma'$, we have that  $\evt_p^{\sid_p} =
      \Q(t_1\sigma, \dots, t_h\sigma) = \Q(t_1\sigma', \dots,
      t_h\sigma')$, and we conclude that $\langle \exec_p, T_o
      \rangle \models \psi\sigma'$.
    \item \emph{Case $\psi = \neg\Q(t_1, \dots, t_h)$.} In that case,
      $\psi\sigma = \Q(t_1\sigma, \dots, t_h\sigma)$, and either $\exec_p =[]$
      or $\evt_p^{\sid_p} \not= \Q(t_1\sigma, \dots, t_h\sigma)$.  In the first
      case, according to the semantics of our logic $\Logic$, we conclude that
      $\langle \exec_p, T_0\rangle \models \psi\sigma'\ (=\neg\Q(t_1\sigma',
      \dots, t_h\sigma'))$.  In the second case, i.e.  $\evt_p^{\sid_p} \not=
      \Q(t_1\sigma, \dots, t_h\sigma)$, by Definition of $\sigma'$, we have that
      $t_k\sigma' \in \{t_k\sigma, n^\epsilon_\epsilon\}$ for any $k \in \{1,
      \ldots, h\}$.  Hence, we have that $\evt_p^{\sid_p} \not= \Q(t_1\sigma',
      \dots, t_h\sigma')$, and thus $\langle \exec_p, T_0\rangle \models
      \psi\sigma'\ (=\neg\Q(t_1\sigma', \dots, t_h\sigma'))$.
    \item \emph{Case $\psi = \learn(t)$.} In that case $t\sigma' \in \{t\sigma,
      n^\epsilon_\epsilon\}$ (thanks to Condition~\ref{def:attack-cond1} of
      Definition~\ref{def:attack}), then we know by hypothesis that
      $\K(\exec_p)\cup T_0 \vdash t\sigma'$ and thus, we conclude.
    \item \emph{Case $\psi = \neg\learn(t)$.} This case cannot occur because
      $\psi$ satisfies the conditions of an attack formula
      ({see Definition~\ref{def:attack}}), and in particular no
      $\learn(u)$ appears negatively in $\psi$.
    \item \emph{Case $\psi = \Comp(t)$ or $\neg\Comp(t)$.} In that case, we have
      that $t\sigma\in\A$. By construction, we have that $t\sigma = t\sigma'$,
      and this allows us to conclude.
    \end{itemize}

    \smallskip{}

    \noindent We now distinguish several inductive cases.
    \begin{itemize}
    \item \emph{Case $\psi = \psi_1\vee \psi_2$.} Assume that $\langle \exec_p,
      T_0\rangle \models \psi_1\sigma$.  The case where $\langle \exec_p,
      T_0\rangle \not\models \psi_1\sigma$ but $\langle \exec_p, T_0\rangle
      \models \psi_2\sigma$ can be proved in a similar way.  By definition, we
      have that $\W(\exec_p, \psi_1\sigma) = \W(\exec_p, \psi\sigma) \subseteq
      S$. We can thus apply our inductive hypothesis to conclude that $\langle
      \exec_p, T_0\rangle \models \psi_1\sigma'$ and thus $\langle \exec_p,
      T_0\rangle \models \psi\sigma'\ (= \psi_1\sigma' \vee \psi_2\sigma')$.
    \item \emph{Case $\psi = \neg(\psi_1\vee \psi_2)$.} In that case, $\langle
      \exec_p, T_0\rangle \models \neg\psi_1\sigma$ and $\langle \exec_p,
      T_0\rangle \models \neg\psi_2\sigma$. By definition, we have that:
      \[\W(\exec_p,
      \neg\psi_1\sigma) \cup \W(\exec_p, \neg\psi_2\sigma)= \W(\exec_p,
      \psi\sigma) \subseteq S.
      \]
      By applying our inductive hypothesis, we obtain that $\langle \exec_p,
      T_0\rangle \models \neg \psi_j\sigma'$ for $j\in\{1, 2\}$. This allows us
      to conclude that $\langle \exec_p, T_0\rangle \models \psi\sigma'\ (=
      \neg(\psi_1\sigma' \vee \psi_2\sigma'))$.
    \item \emph{Case $\psi = \Diamond\psi'$.} In that case, according to the
      semantics of our logic $\Logic$, there exists $j\le i$ such that $\langle
      \exec_j, T_0\rangle \models \psi'\sigma$, and thus by inductive hypothesis
      we know that $\langle \exec_j, T_0\rangle \models \psi'\sigma'$. Hence, we
      have that $\langle \exec_p, T_0\rangle \models \psi\sigma'\ (=
      \Diamond\psi'\sigma')$.

    \item \emph{Case $\psi = \neg\Diamond\psi'$.} In that case, according to the
      semantics of our logic, we have that $\langle \exec_{p-1}, T_0\rangle
      \models \psi\sigma$ and $\langle \exec_p, T_0\rangle \models
      \neg\psi'\sigma$. By inductive hypothesis we know that $\langle
      \exec_{p-1}, T_0\rangle \models \psi\sigma'$. Note that, by definition of
      an attack formula  ({see Definition~\ref{def:attack}}), there is no positive status event in
      $\psi\sigma$. Moreover, using our inductive hypothesis, we obtain that
      $\langle \exec_p, T_0\rangle \models \neg\psi'\sigma'$ (note that, again,
      by definition of an attack formula, we know that there is no positive
      status event in $\neg \psi'\sigma$). This allows us to conclude that
      $\langle \exec_p, T_0\rangle \models \psi\sigma'\
      =(\neg\Diamond\psi'\sigma')$.\qedhere
    \end{itemize}
\end{proof}

\noindent\usebox{\lemprojsatisfiabilitybis}

\begin{proof}
  First, we apply Lemma~\ref{lem:execterms.attack} to ensure that the
  substitution $\sigma$ witnessing the fact that the attack formula $\phi$ is
  satisfiable only uses atomic terms and subterms that occur in $\st(\exec, S)$.
  Hence, thanks to this lemma, we can assume w.l.o.g. that for all $j\in\{1,
  \dots, n\}$, $\sigma(x_j) \in\st(\exec, S)\cup\A\cup\lgAtom \cup
  \Nces_\epsilon \cup \Kcal_\epsilon$. Then, we apply
  Lemma~\ref{lem:projsatisfiability} in order to conclude.
\end{proof}

\end{document}